\newcommand{\sh}{\cellcolor{gray!33}}
\newtheorem{lemma}{Lemma}
\newtheorem{theorem}{Theorem}
\newtheorem{assumption}{Assumption}
\title{Slow Movers in Panel Data\thanks{
We have benefited from very useful comments by Cheng Hsiao, M. Hashem Pesaran, Valentin Verdier, and participants at the University of North Carolina at Chapel Hill and the University of Southern California. We are responsible for all the remaining errors.}
} 
\author{Yuya Sasaki\thanks{Department of Economics, Vanderbilt University, 2301 Vanderbilt Place, Nashville, TN 37235; Email: yuya.sasaki@vanderbilt.edu} \and Takuya Ura\thanks{Department of Economics, University of California, Davis, One Shields Avenue, Davis, CA 95616; Email: takura@ucdavis.edu}}
\begin{document}
\date{}
\maketitle
\begin{abstract}
Panel data often contain stayers (units with no within-variations) and \textit{slow movers} (units with little within-variations). In the presence of many slow movers, conventional econometric methods can fail to work. We propose a novel method of robust inference for the average partial effects in correlated random coefficient models robustly across various distributions of within-variations, including the cases with many stayers and/or many slow movers in a unified manner. In addition to this robustness property, our proposed method entails smaller biases and hence improves accuracy in inference compared to existing alternatives. Simulation studies demonstrate our theoretical claims about these properties:  the conventional 95\% confidence interval covers the true parameter value with 37-93\% frequencies, whereas our proposed one achieves 93-96\% coverage frequencies. 
\begin{description} 
\item[] Keywords: bias reduction, correlated random coefficient, panel data, robustness, slow movers.
\item[] JEL Classification Codes:  C14, C23, C33
\end{description}
\end{abstract}

\section{Introduction}

Panel datasets used in economics often contain many individuals/families/firms with no or little within-variations. 
We refer to individuals with no within-variation as stayers, and individuals with little within-variation as \textit{slow movers}.\footnote{We thank Hashem Pesaran for providing us with the terminology of slow movers. We define the terms in Section \ref{sec:overview}.} 
Figure \ref{fig:PSID_17_19} illustrates an example for the log total family income in the U.S. Panel Survey of Income Dynamics (PSID).
In this figure, about 3.19\% of the sample are stayers, and the large spike of the kernel density plot also indicates that there are many slow movers.\footnote{This kernel density plot does not use the information about the mass of $D=0$.}
Many slow movers appear in various fields of economics, e.g., labor economics, international trade, and political economy. 

\begin{figure}
	\centering
	\scalebox{0.20}{
		\includegraphics{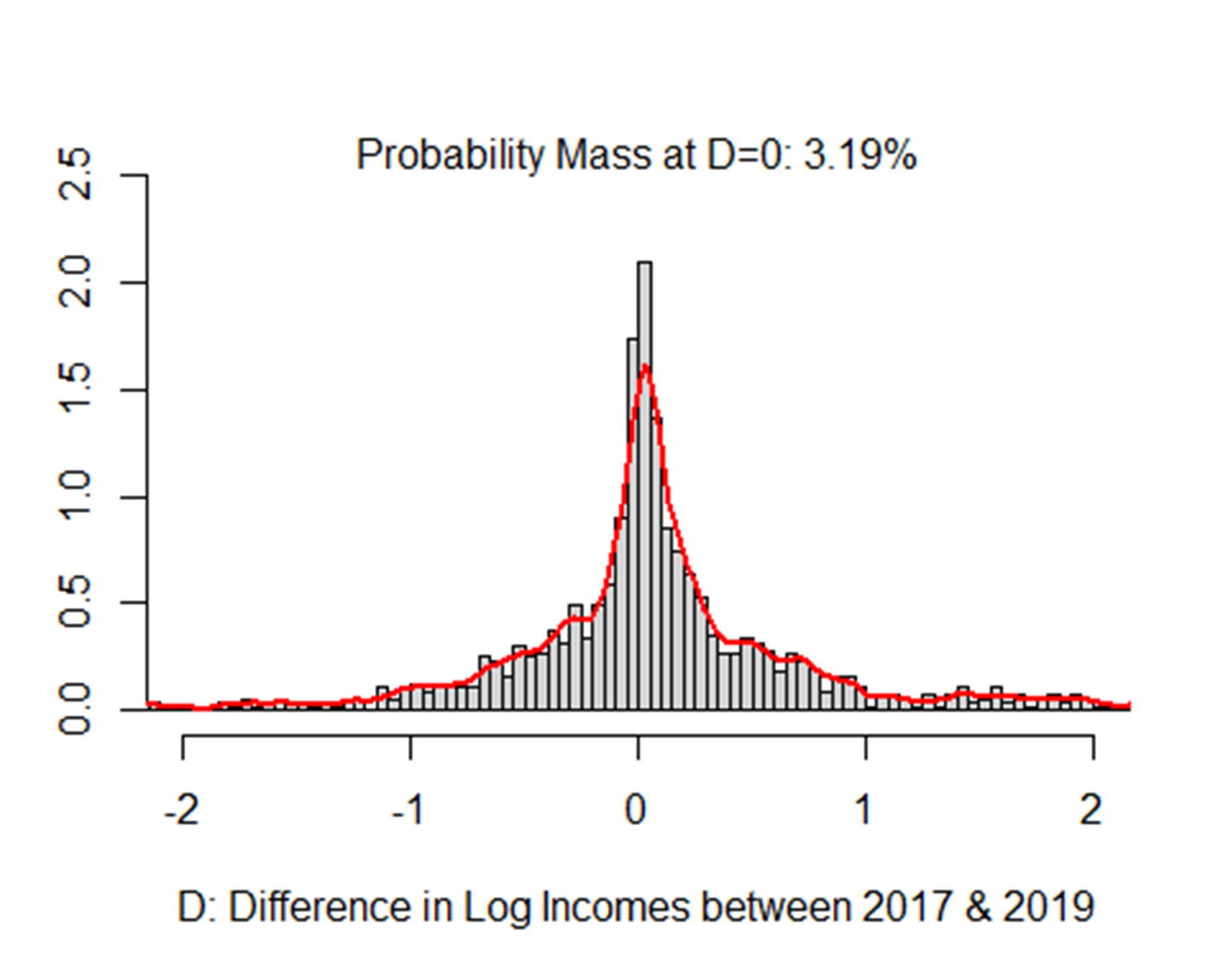}
	}
	\caption{Empirical distribution of the difference $D$ in log total family incomes between 2017 and 2019 in the U.S. Panel Survey of Income Dynamics (PSID). The chart displays a histogram and a kernel density plot. About 3.19\% of the sample has the exactly zero difference ($D=0$). The displayed kernel density plot does not use the information about the mass with $D=0$, and it therefore represents the continuous part of the distribution of $D$. The sample consists of 1,127 households with positive total family income less than or equal to 25K U.S. dollars in both 2017 and 2019.}
	\label{fig:PSID_17_19}
\end{figure}

In this paper, we highlight issues concerning slow movers in the context of correlated random coefficient models in short panels \citep{chamberlain:1992}. 
\cite{chamberlain:1992} establishes a $\sqrt{N}$-consistent estimation of the average partial effects in regular cases. 
\citet*{graham/powell:2012} further investigate the identification and estimation for the average partial effects under more general settings in the presence of slow movers. 
In their analysis, slow movers play two crucial roles. 
The existence of slow movers can allow us to identify the time trends in the average partial effects. 
On the other hand, their existence can at the same time also induce irregular identification of important parameters such as the average partial effects. 

Equipped with technologies from the literature on heavy-tailed distributions, we revisit the issues regarding slow movers. 
Notably, we provide a novel method of inference that is robustly valid across various distributional patterns concerning stayers and slow movers. 
Our proposed inference is valid for a large class of distributions with a possible probability mass of stayers (i.e., many stayers) and a possibly unbounded density of within-variations (i.e., many  slow movers). 
The large class includes the case in Figure \ref{fig:PSID_17_19}, as well as all the eight cases illustrated in Figure \ref{fig:eight_cases} (a)--(h), among others. 
In the existing literature, the method of \citet[][Sec. 2]{graham/powell:2012} focuses on the case of Figure \ref{fig:eight_cases} (a) -- see Section \ref{sec:case_determinant_zero} ahead for details -- and the method of \citet[][Sec. 3.2]{graham/powell:2012} focuses on the case of Figure \ref{fig:eight_cases} (b) -- see Section \ref{sec:case_determinant_nonzero} ahead for details. 
The existing literature, including \cite{graham/powell:2012}, does not consider the other six cases, (c)--(h), in Figure \ref{fig:eight_cases}. 
The method proposed in this paper, on the other hand, covers all of the eight cases in Figure \ref{fig:eight_cases} (a)--(h) in a unified manner.
We use Karamata's theorem \citep[Theorem 2.1]{resnick2007heavy} from the literature on heavy-tailed distributions as a main technical device to analyze such a possible unboundedness, local asymmetry, or non-smoothness of the density.   
Extensive simulation studies demonstrate that our proposed approach significantly improves both estimation bias and coverage frequencies compared to existing methods, as formally predicted by our theory.
In our simulation studies, our proposed 95\% confidence interval covers the true parameter value with 93-96\% frequencies, whereas the conventional method provides 37-93\% coverage frequencies. 

\begin{figure}[htp]
\centering
\setlength{\unitlength}{0.6cm}
\begin{tabular}{cc}
\begin{picture}(10,7)
	\put(0,6.5){(a)}
	\multiput(1,1)(1,0){1}{\vector(0,1){5}}
	\multiput(1,1)(1,0){1}{\vector(1,0){8}}
	\put(9,0.5){$D$}
	
	\linethickness{0.25mm}
	\qbezier(1.5,2)(3,3)(5,3)
	\qbezier(5,3)(7,3)(8.5,2)
	\put(8.6,2){$f_D$}

	\linethickness{0.1mm}
	\curvedashes[1.0mm]{1,1}
	\qbezier(5,1)(5,2.5)(5,6)
	\put(4.9,0.5){$0$}
\end{picture}
&
\begin{picture}(10,7)
	\put(0,6.5){(b)}
	\multiput(1,1)(1,0){1}{\vector(0,1){5}}
	\multiput(1,1)(1,0){1}{\vector(1,0){8}}
	\put(9,0.5){$D$}
	
	\linethickness{0.25mm}
	\qbezier(1.5,2)(3,3)(5,3)
	\qbezier(5,3)(7,3)(8.5,2)
	\put(8.6,2){$f_D$}
	
	\put(5,5.2){\circle*{0.25}}
	\put(5.25,5){$P(D=0)>0$}

	\linethickness{0.1mm}
	\curvedashes[1.0mm]{1,1}
	\qbezier(5,1)(5,2.5)(5,6)
	\put(4.9,0.5){$0$}
\end{picture}
\\
\begin{picture}(10,7)
	\put(0,6.5){(c)}
	\multiput(1,1)(1,0){1}{\vector(0,1){5}}
	\multiput(1,1)(1,0){1}{\vector(1,0){8}}
	\put(9,0.5){$D$}
	
	\linethickness{0.25mm}
	\qbezier(1.5,1.5)(4,1.5)(5,4)
	\qbezier(5,4)(6,1.5)(8.5,1.5)
	\put(8.6,1.5){$f_D$}

	\linethickness{0.1mm}
	\curvedashes[1.0mm]{1,1}
	\qbezier(5,1)(5,2.5)(5,6)
	\put(4.9,0.5){$0$}
\end{picture}
&
\begin{picture}(10,7)
	\put(0,6.5){(d)}
	\multiput(1,1)(1,0){1}{\vector(0,1){5}}
	\multiput(1,1)(1,0){1}{\vector(1,0){8}}
	\put(9,0.5){$D$}
	
	\linethickness{0.25mm}
	\qbezier(1.5,1.5)(4,1.5)(5,4)
	\qbezier(5,4)(6,1.5)(8.5,1.5)
	\put(8.6,1.5){$f_D$}
	
	\put(5,5.2){\circle*{0.25}}
	\put(5.25,5){$P(D=0)>0$}

	\linethickness{0.1mm}
	\curvedashes[1.0mm]{1,1}
	\qbezier(5,1)(5,2.5)(5,6)
	\put(4.9,0.5){$0$}
\end{picture}
\\
\begin{picture}(10,7)
	\put(0,6.5){(e)}
	\multiput(1,1)(1,0){1}{\vector(0,1){5}}
	\multiput(1,1)(1,0){1}{\vector(1,0){8}}
	\put(9,0.5){$D$}
	
	\linethickness{0.25mm}
	\qbezier(1.5,1.5)(4,1.5)(5,3)
	\qbezier(5,4)(6,1.5)(8.5,1.5)
	\put(8.6,1.5){$f_D$}

	\linethickness{0.1mm}
	\curvedashes[1.0mm]{1,1}
	\qbezier(5,1)(5,2.5)(5,6)
	\put(4.9,0.5){$0$}
\end{picture}
&
\begin{picture}(10,7)
	\put(0,6.5){(f)}
	\multiput(1,1)(1,0){1}{\vector(0,1){5}}
	\multiput(1,1)(1,0){1}{\vector(1,0){8}}
	\put(9,0.5){$D$}
	
	\linethickness{0.25mm}
	\qbezier(1.5,1.5)(4,1.5)(5,3)
	\qbezier(5,4)(6,1.5)(8.5,1.5)
	\put(8.6,1.5){$f_D$}
	
	\put(5,5.2){\circle*{0.25}}
	\put(5.25,5){$P(D=0)>0$}

	\linethickness{0.1mm}
	\curvedashes[1.0mm]{1,1}
	\qbezier(5,1)(5,2.5)(5,6)
	\put(4.9,0.5){$0$}
\end{picture}
\\
\begin{picture}(10,7)
	\put(0,6.5){(g)}
	\multiput(1,1)(1,0){1}{\vector(0,1){5}}
	\multiput(1,1)(1,0){1}{\vector(1,0){8}}
	\put(9,0.5){$D$}
	
	\linethickness{0.25mm}
	\qbezier(1.5,1.1)(4.5,1.1)(4.8,6)
	\qbezier(5.2,6)(5.5,1.1)(8.5,1.1)
	\put(8.6,1.2){$f_D$}

	\linethickness{0.1mm}
	\curvedashes[1.0mm]{1,1}
	\qbezier(5,1)(5,2.5)(5,6)
	\put(4.9,0.5){$0$}
\end{picture}
&
\begin{picture}(10,7)
	\put(0,6.5){(h)}
	\multiput(1,1)(1,0){1}{\vector(0,1){5}}
	\multiput(1,1)(1,0){1}{\vector(1,0){8}}
	\put(9,0.5){$D$}
	
	\linethickness{0.25mm}
	\qbezier(1.5,1.1)(4.5,1.1)(4.8,6)
	\qbezier(5.2,6)(5.5,1.1)(8.5,1.1)
	\put(8.6,1.2){$f_D$}
	
	\put(5,5.2){\circle*{0.25}}
	\put(5.75,5.5){$P(D=0)>0$}

	\linethickness{0.1mm}
	\qbezier(5,5.2)(5.35,5.45)(5.7,5.7)
	\curvedashes[1.0mm]{1,1}
	\qbezier(5,1)(5,2.5)(5,6)
	\put(4.9,0.5){$0$}
\end{picture}
\end{tabular}
\caption{Illustration of eight distribution models to which the method of robust inference proposed in this paper robustly applies in a unified manner: 
(a) a smooth, bounded, and locally symmetric distribution of $D$ with no mass of $D=0$; 
(b) a smooth, bounded, and locally symmetric distribution of $D$ with a mass of $D=0$;
(c) a non-smooth, bounded, and locally symmetric distribution of $D$ with no mass of $D=0$; 
(d) a non-smooth, bounded, and locally symmetric distribution of $D$ with a mass of $D=0$;
(e) a non-smooth, bounded, and locally asymmetric distribution of $D$ with no mass of $D=0$; 
(f) a non-smooth, bounded, and locally asymmetric distribution of $D$ with a mass of $D=0$;
(g) an unbounded distribution of $D$ with no mass of $D=0$; 
and
(h) an unbounded distribution of $D$ with a mass of $D=0$.
Densities of the absolutely continuous part of the distribution is represented by solid curves, while the probability mass of the singular part is represented by a black circle.
}
\label{fig:eight_cases}
\end{figure}
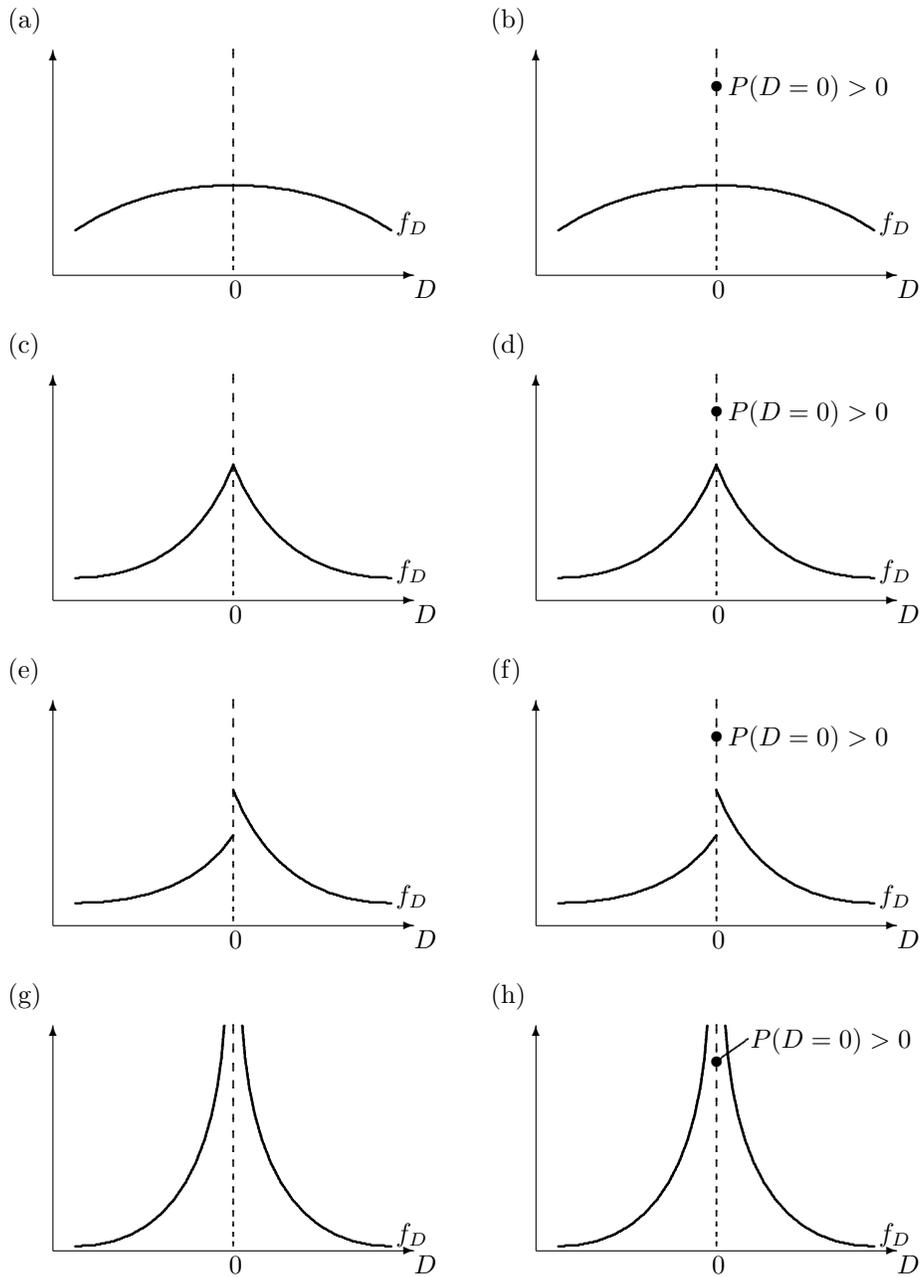


A natural question is whether there is a cost of this robustness property by our unified approach.
In particular, one may wonder whether we lose anything as a result of using our estimator when the underlying distribution in fact does not contain many slow movers. 
Fortunately, we can rather enjoy a gain than incur a cost even in such a case, and this feature is a positive by-product of the proposed estimator.
Specifically, in this case, the proposed estimator can even reduce a finite-sample bias, compared to the estimator proposed by \cite{graham/powell:2012}.   
Therefore, it allows for more accurate statistical inference than the existing estimator, leading to the aforementioned superior coverage frequencies by our method (93-96\%) over the conventional one (37-93\%). 

This paper is related to several branches of the econometrics literature.
Random coefficient panel models have been extensively studied at least since \citet*{swamy1970efficient} -- see \citet[][Sec. 6.2]{hsiao2014analysis} and \cite{hsiao/pesaran:2008} for a review.
Among others, we particularly consider \textit{correlated} random coefficient panel models studied by \citet*{chamberlain:1982,chamberlain:1992}, \citet*{wooldridge2005fixed}, \citet*{murtazashvili2008fixed}, \citet*{arellano2012identifying}, \citet*{graham/powell:2012}, \citet*{fernandez2013panel}, \citet*{bryan/hahn/poirier/powell:2017}, and \citet*{verdier2020average}.
Also related is the literature on nonseparable panel models with correlated random effects, such as \citet*{altonji2005cross}, \citet*{hoderlein/white:2012}, \citet*{chernozhukov2013average}, and \citet*{chernozhukov2015nonparametric}.
Finally, the debiasing feature of our estimator is also related to the bias correction methods \citep*[e.g.,][]{calonico2014robust} in kernel estimation -- also see bias bound approaches \citep*[e.g.,][]{armstrong2020simple,schennach2020bias} to nonparametric inference with kernel methods.

The paper proceeds as follows. 
Section \ref{sec:review} introduces the model, and briefly reviews the identification and estimation strategies in the existing literature. 
Section \ref{sec:overview} proposes the method of unified estimation and robust inference. 
Section \ref{sec:main} provides theoretical results for the proposed method. 
We present Monte Carlo simulations in Section \ref{sec:simulation}. 
Sections \ref{sec:review}--\ref{sec:simulation} focus on the case in which the number ($T$) of time periods is equal to the number ($p$) of regressors.
Section \ref{section:T>p} presents an extension to the case where $T$ is larger than $p$.
Section \ref{sec:conclusion} concludes. 
The appendix collects proofs of the main theorems, auxiliary lemmas and their proofs.

\section{Correlated Random Coefficient Panel Data Models}\label{sec:review}
Following \cite{chamberlain:1992}, we consider the correlated random coefficient panel data model
\begin{equation}\label{eq:panel_model}
Y_t=\mathbf{X}_t'b_t(A,U_t)
\end{equation}
for $t=1,\ldots,T$, where $Y_t$ denotes a scalar outcome, $\mathbf{X}_t$ denotes a $p$-dimensional column  vector of regressors, $A$ denotes a fixed effect, and $U_t$ denotes an idiosyncratic error. 
The parameter of interest is the average partial effect,  $E[b_t(A,U_t)]$, i.e., the effect of an exogenous change in $\mathbf{X}_t$ on $Y_t$.
With these notations, the panel model \eqref{eq:panel_model} is equipped with the following assumption.

\begin{assumption}\label{assn:GP1.1}
(i) $b_t(A,U_t)=b^\ast(A,U_t)+d_t(U_{2t})$ for $t=1,\ldots,T$ and $U_t=(U_{1t},U_{2t})$.
(ii) Given $(\mathbf{X}_1,\ldots,\mathbf{X}_T,A)$, the random variables $U_t$ and $U_s$ have the same conditional distribution. 
(iii) $U_{2t}$ and $(\mathbf{X}_1,\ldots,\mathbf{X}_T,A)$ are independent. 
(iv) $E[b_t(A,U_t)\mid \mathbf{X}_1,\ldots,\mathbf{X}_T]$ exists for $t=1,\ldots,T$.
(v) $E[d_1(U_{21})]=0$.
\end{assumption}

This assumption is standard in the literature \citep[e.g.,][]{chamberlain:1992,graham/powell:2012}. 
Under this assumption, the parameter of interest, $E[b_t(A,U_t)]$, can be decomposed as $E[b_t(A,U_t)]=E[b_1(A,U_1)]+E[d_t(U_{2t})]$. 
The first part $E[b_1(A,U_1)]$ is the average partial effect at $t=1$, and the second part $E[d_t(U_{2t})]$ is the mean time-shift from the first period to the $t$-th period in the average partial effect.
Therefore, the identification for $E[b_t(A,U_t)]$ boils down to the identification for $(\boldsymbol{\beta}',\boldsymbol{\delta}')'$, where $\boldsymbol{\beta}=E[b_1(A,U_1)]$ and $\boldsymbol{\delta}= (E[d_2(U_{22})]',\ldots,E[d_T(U_{2T})]')'$. 

\cite{graham/powell:2012} provide the identification of $(\boldsymbol{\beta}',\boldsymbol{\delta}')'$. 
We focus on the case of $T=p$ in which $E[b_t(A,U_t)]$ cannot be regularly identified. (We discuss an extension to the case of $T>p$ in Section \ref{section:T>p}.) 
Assumption \ref{assn:GP1.1} and the panel model \eqref{eq:panel_model} imply  
\begin{equation}\label{eq:base_ID}
E[\mathbf{X}^\ast\mathbf{Y}\mid\mathbf{X}]=\mathbf{X}^\ast\mathbf{W}\boldsymbol{\delta}+D\boldsymbol{\beta}(\mathbf{X})\mbox{ with }\boldsymbol{\beta}=E[\boldsymbol{\beta}(\mathbf{X})],
\end{equation}
where $\mathbf{Y}=(Y_1,\ldots,Y_T)'$, $\mathbf{X}=(\mathbf{X}_1,\ldots,\mathbf{X}_T)'$, $D=\mathrm{det}(\mathbf{X})$, $\boldsymbol{\beta}(\mathbf{X})= E[b^\ast(A,U_t)|\mathbf{X}]$, $\mathbf{X}^\ast$ is the adjoint of $\mathbf{X}$, and 
$$
\mathbf{W}=
\left(
\begin{array}{ccc}
{0}_p'&&{0}_p'\\
\mathbf{X}_2'&&{0}_p'\\
&\ddots&\\
{0}_p'&&\mathbf{X}_T'
\end{array}
\right).
$$
Based on the conditional moment restriction \eqref{eq:base_ID}, $\boldsymbol{\delta}$ can be point-identified via 
\begin{equation}\label{eq:delta_identification}
\boldsymbol{\delta}=E[(\mathbf{X}^\ast\mathbf{W})'\mathbf{X}^\ast\mathbf{W}\mid D=0]^{-1}E[(\mathbf{X}^\ast\mathbf{W})'\mathbf{X}^\ast\mathbf{Y}\mid D=0].
\end{equation}
The identification analysis for $\boldsymbol{\beta}$ in \cite{graham/powell:2012} branches into two separate cases depending on the probability of the event $D=0$.
The first case, $P(D=0)=0$, represented by Figure \ref{fig:eight_cases} (a), is presented in Section \ref{sec:case_determinant_zero}.
The second case, $P(D=0) \ne 0$, represented by Figure \ref{fig:eight_cases} (b), is presented in Section \ref{sec:case_determinant_nonzero}.

\subsection{Identification and Estimation under $P(D=0)=0$}\label{sec:case_determinant_zero}
When $P(D=0)=0$, the average partial effect $\boldsymbol{\beta}$ is equal to $E[\boldsymbol{\beta}(\mathbf{X})\mid D\ne 0]$. Under Assumption \ref{assn:GP1.1}, \citet[][Proposition 1.2]{graham/powell:2012} propose to estimate $\boldsymbol{\beta}$ by
\begin{equation}\label{eq:beta_hat_m}
\boldsymbol{\hat\beta}^M
=
\frac{E_N[\boldsymbol{1}\{|D|>h_N\}D^{-1}\mathbf{X}^\ast(\mathbf{Y}-\mathbf{W}\boldsymbol{\hat \delta})]}{E_N[\boldsymbol{1}\{|D|>h_N\}]},
\end{equation}
where 
$\boldsymbol{\hat\delta}$ is an estimator of $\boldsymbol{\delta}$, $E_N$ is the sample mean operator and $h_N>0$ is a tuning parameter that converges to $0$ as $N\rightarrow\infty$.\footnote{Although the framework is different, $\boldsymbol{\hat\beta}^M$ can be regarded as the mean group estimator (cf. \citealp{pesaran1995estimating,hsiao_pesaran_tahmiscioglu_1999}). We can interpret $D^{-1}\mathbf{X}^\ast(\mathbf{Y}-\mathbf{W}\boldsymbol{\hat \delta})$ as the individual-specific estimator for $\boldsymbol{\beta}$.} 
Here, the superscript `$M$'  in $\boldsymbol{\hat\beta}^M$ is used to emphasize that it estimates the average partial effect for the `movers.'
It is shown that the bias of $\boldsymbol{\hat\beta}^M$ is $O(h_N)$, i.e., $E[\boldsymbol{\hat\beta}^M]-\boldsymbol{\beta}=O(h_N)$, when $P(D=0)=0$ and $\phi$ is bounded.
This is why \citet[][Assumption 2.5]{graham/powell:2012} use a bandwidth with rate $N^{-1/3}$ in their empirical application, which is smaller than the standard one-dimensional MSE optimal rate $N^{-1/5}$.
This approach is effective in models with smooth density of $D$ without a mass of stayers represented by Figure \ref{fig:eight_cases} (a).

\subsection{Identification and Estimation under $P(D=0)>0$}\label{sec:case_determinant_nonzero}
\citet[][Section 3.2]{graham/powell:2012} separately consider the case with a point mass of stayers, i.e., $P(D=0)>0$. 
They derive the following constructive identifying formula: 
\begin{equation}
\boldsymbol{\beta}
=
P(D\ne 0)E[\boldsymbol{\beta}(\mathbf{X})\mid D\ne 0]+P(D=0)\mathbf{m}^{(1)}(0),
\label{eq:mixture}
\end{equation}
where $\mathbf{m}^{(k)}(u)$ is the $k$-th derivative of the function $\mathbf{m}(u)$ defined by 
\begin{align}\label{eq:def_m}
\mathbf{m}(u)= E[\mathbf{X}^\ast(\mathbf{Y}-\mathbf{W}\boldsymbol{\delta})\mid D=u].
\end{align}
Based on \eqref{eq:mixture}, they estimate the average partial effect $\boldsymbol{\beta}$  by the mixture estimator
\begin{equation}\label{eq:beta_hat}
\boldsymbol{\hat\beta}_1
=
E_N[\boldsymbol{1}\{|D|>h_N\}]\boldsymbol{\hat\beta}^M+E_N[\boldsymbol{1}\{|D|\leq h_N\}]\mathbf{\hat{m}}^{(1)}(0)
\end{equation}
of the average marginal effect $\boldsymbol{\beta}$, where 
$\mathbf{\hat{m}}^{(1)}(0)$ is an estimator for $\mathbf{m}^{(1)}(0)$.
This approach is effective in models with smooth density of $D$ with a mass of stayers represented by Figure \ref{fig:eight_cases} (b).

\section{The Robust Method of Estimation and Inference}\label{sec:overview}

As discussed in the previous section, the existing literature proposes two separate estimation procedures for $\boldsymbol{\beta}$ depending on whether the probability of $D=0$ is zero or not.
For the case of $P(D=0)=0$ (Section \ref{sec:case_determinant_zero}) as represented by Figure \ref{fig:eight_cases} (a), the estimator $\boldsymbol{\hat\beta}^M$ given in \eqref{eq:beta_hat_m} is consistent for $\boldsymbol{\beta}$.
In contrast, it is necessary to estimate the mixture \eqref{eq:mixture} of the marginal effects of movers and stayers for the case in which there is a point mass of stayers, i.e., $P(D=0)>0$ (Section \ref{sec:case_determinant_nonzero}) as represented by Figure \ref{fig:eight_cases} (b).
The existing literature, including \cite{graham/powell:2012}, does not cover the cases illustrated in Figure \ref{fig:eight_cases} (c)--(h).

Our goal is to provide a method that covers various patterns regarding the distribution of $D$.
There can be a positive point mass of $D=0$ (e.g., Figure \ref{fig:eight_cases} (b), (d), (f), \& (h)), and the density of $D$ given $D\ne 0$ can be non-smooth (e.g., Figure \ref{fig:eight_cases} (c)--(h)), unbounded (e.g., Figure \ref{fig:eight_cases} (g)--(h)), or locally asymmetric (e.g., Figure \ref{fig:eight_cases} (e)--(f)). 
Our proposed estimator introduced below accounts for different characteristics of three groups:  the stayers (the individual with $D=0$), the slow movers (the individual with $0<|D|\leq h_N$), and the movers (the individual with $|D|>h_N$).

\subsection{The Estimator for $\boldsymbol{\beta}$}\label{sec:proposed_estimator}

In this section, we introduce a novel estimator for $\boldsymbol{\beta}$ which is robustly valid for a large class of distributions in a unified (rather than separate) manner. 
The large class of distributions includes not only the two cases in  \cite{graham/powell:2012}  but also all the cases illustrated in Figure \ref{fig:eight_cases} (a)--(h). 
Notably, we allow for $P(D=0)=0$ and $P(D=0)>0$, and the density $\phi$ can be unbounded or non-smooth around zero. 
For an integer $L\geq 1$, our proposed estimator $\boldsymbol{\hat\beta}_L$ of $\boldsymbol{\beta}$ is
\begin{equation}\label{eq:proposed_estimation}
\boldsymbol{\hat\beta}_L
=
E_N[\boldsymbol{1}\{|D|>h_N\}D^{-1}\mathbf{X}^\ast(\mathbf{Y}-\mathbf{W}\boldsymbol{\hat \delta})]+\boldsymbol{\hat\gamma}\mathbf{\hat{h}},\mbox{ where }
\end{equation}
$\mathbf{\hat{h}}=E_N[\boldsymbol{1}\{|D|\leq h_N\}(\begin{array}{cccc}1&\cdots&D^{L-1}\end{array})']$, 
$\mathbf{e}_1=(\begin{array}{cccc}1&0&\cdots&0\end{array})'$, 
$\mathbf{D}_{0:L}=\boldsymbol{1}\{|D|\leq h_N\}(\begin{array}{cccc}1&D&\cdots&D^L\end{array})'$, 
$\mathbf{D}_{1:L}=\boldsymbol{1}\{|D|\leq h_N\}(\begin{array}{cccc}D&\cdots&D^L\end{array})'$, 
$$
\boldsymbol{\hat\delta}
=
E_N[(\mathbf{D}_{0:L}'E_N[\mathbf{D}_{0:L}\mathbf{D}_{0:L}']^{-1}\mathbf{e}_1)(\mathbf{X}^\ast\mathbf{W})'\mathbf{X}^\ast\mathbf{W}]^{-1}E_N[(\mathbf{D}_{0:L}'E_N[\mathbf{D}_{0:L}\mathbf{D}_{0:L}']^{-1}\mathbf{e}_1)(\mathbf{X}^\ast\mathbf{W})'\mathbf{X}^\ast\mathbf{Y}],$$ 
and 
$\boldsymbol{\hat\gamma}=E_N\left[\mathbf{X}^\ast(\mathbf{Y}-\mathbf{W}\boldsymbol{\hat\delta})\mathbf{D}_{1:L}'\right]E_N\left[\mathbf{D}_{1:L}\mathbf{D}_{1:L}'\right]^{-1}$.
This proposed estimator $\boldsymbol{\hat\beta}_L$ is shown to be consistent and asymptotically normal for the large class of distributions in a unified manner.

\subsubsection{Intuitions behind the Proposed Estimator}\label{sec:proposed_estimator2}
The estimator $\boldsymbol{\hat\delta}$ for $\boldsymbol{\delta}$ is based on local polynomial regressions on $D$. 
To estimate $\boldsymbol{\delta}$ based on \eqref{eq:delta_identification}, we need to estimate the conditional expectation of $Z$ given $D=0$ for every entry $Z$ of the matrix $(\mathbf{X}^\ast\mathbf{W})'\mathbf{X}^\ast[\mathbf{Y},\mathbf{W}]$. 
When we use the $L$-th order local polynomial regression of $Z$ on $D$ at $D=0$, the estimator for $E[Z\mid D=0]$ is given by $\mathbf{e}_1'\boldsymbol{\hat\kappa}$, where $\boldsymbol{\hat\kappa}$ is the $(L+1)$-dimensional column vector defined by  $$
\arg\min_{\boldsymbol{\kappa}}
E_N\left[\boldsymbol{1}\{|D|\leq h_N\}\left(Z-\left(\begin{array}{cccc}1&D&\cdots&D^L\end{array}\right)'\boldsymbol{\kappa}\right)^2\right].
$$ 
We can derive $\mathbf{e}_1'\boldsymbol{\hat\kappa}=\mathbf{e}_1'E_N[\mathbf{D}_{0:L}\mathbf{D}_{0:L}']^{-1}E_N[\mathbf{D}_{0:L}Z]=E_N[(\mathbf{D}_{0:L}'E_N[\mathbf{D}_{0:L}\mathbf{D}_{0:L}']^{-1}\mathbf{e}_1)Z]$ as an estimator for $E[Z\mid D=0]$ by the first-order condition.
If we construct such an estimator for $E[Z\mid D=0]$ for every entry $Z$ of the matrix $(\mathbf{X}^\ast\mathbf{W})'\mathbf{X}^\ast[\mathbf{Y},\mathbf{W}]$, then the resulting estimator for $\boldsymbol{\delta}$ is $\boldsymbol{\hat\delta}$.
Similarly, the estimator $\boldsymbol{\hat\gamma}$ is the $L$-th order local polynomial regression of $\mathbf{X}^\ast(\mathbf{Y}-\mathbf{W}\boldsymbol{\hat\delta})$ on $D$ without an intercept.\footnote{We do not need to include a constant term here because $E[\mathbf{X}^\ast(\mathbf{Y}-\mathbf{W}\boldsymbol{\delta})\mid D=0]=0$ always holds by \eqref{eq:base_ID}.}

The proposed estimator $\boldsymbol{\hat\beta}_L$ for $\boldsymbol{\beta}$ can be rewritten as 
\begin{equation}\label{eq:proposed_estimation2}
\boldsymbol{\hat\beta}_L
=
E_N[\boldsymbol{1}\{|D|>h_N\}]\boldsymbol{\hat\beta}^M+\sum_{l=1}^{L}\frac{\mathbf{\hat{m}}^{(l)}(0)}{l!}E_N[D^{l-1}\boldsymbol{1}\{|D|\leq h_N\}],
\end{equation}
where $\boldsymbol{\hat\beta}^M$ is defined in \eqref{eq:beta_hat_m} and $(\begin{array}{cccc}\mathbf{\hat{m}}^{(1)}(0)/1!&\cdots&\mathbf{\hat{m}}^{(L)}(0)/L!\end{array}) := \boldsymbol{\hat\gamma}$. 
We remark that this estimator $\boldsymbol{\hat\beta}_L$ becomes the mixture estimator $\boldsymbol{\hat\beta}_1$ in \eqref{eq:beta_hat} in the special case of $L=1$.
This observation implies that our proposed estimator is effective even when there is a point mass of stayers (i.e., $P(D=0)>0$), and the same turns out to hold true for $L\geq 1$ as well.

\subsubsection{Bias Reduction}\label{sec:proposed_estimator3}

In addition to the robustness and unifying feature, there is a bias advantage of considering the proposed estimator $\boldsymbol{\hat\beta}_L$ over $\boldsymbol{\hat\beta}^M$. 
The proposed estimator uses the information about slow movers. 
Even when $\boldsymbol{\hat\beta}^M$ is consistent, the information about slow movers is still useful for the purpose of correcting the bias of $\boldsymbol{\hat\beta}^M$, and thus for the objective of improving the accuracy of estimation and inference over the conventional estimator $\boldsymbol{\hat\beta}^M$. 
More precisely, the proposed estimator $\boldsymbol{\hat\beta}_L$ has the bias of smaller order than that of $\boldsymbol{\hat\beta}^M$. 
In fact, the conventional estimator $\boldsymbol{\hat\beta}^M$ in \eqref{eq:beta_hat_m} captures only  $E[\boldsymbol{\beta}(\mathbf{X})\mid |D|>h_N]$ in the approximation
$$
\boldsymbol{\beta}
=
E[\boldsymbol{\beta}(\mathbf{X})\boldsymbol{1}\{|D|>h_N\}]+O\left(P(|D|\leq h_N)\right)
=
P(|D|>h_N)E[\boldsymbol{\beta}(\mathbf{X})\mid |D|>h_N]+O\left(P(|D|\leq h_N)\right).
$$
Consequently, $\boldsymbol{\hat\beta}^M$ generates a bias of order $o(1)$ as an estimator for $\boldsymbol{\beta}$ when the fraction of stayers is zero, i.e., $P(D=0)=0$. 
In contrast, based on \eqref{eq:proposed_estimation} and \eqref{eq:proposed_estimation2}, the estimator $\boldsymbol{\hat\beta}_L$ proposed in this paper estimates its population counterpart $\mathbf{b}_L$ defined by  
$$
\mathbf{b}_L=E[\boldsymbol{\beta}(\mathbf{X})\boldsymbol{1}\{|D|>h_N\}]+\boldsymbol{\gamma}\mathbf{h}=E[\boldsymbol{\beta}(\mathbf{X})\boldsymbol{1}\{|D|>h_N\}]+\sum_{l=1}^{L}\frac{\mathbf{{m}}^{(l)}(0)}{l!}E[D^{l-1}\boldsymbol{1}\{|D|\leq h_N\}],
$$
where $\mathbf{h}=E[\boldsymbol{1}\{|D|\leq h_N\}(\begin{array}{cccc}1&\cdots&D^{L-1}\end{array})']$ and $\boldsymbol{\gamma}=(\begin{array}{cccc}\mathbf{m}^{(1)}(0)/1!&\cdots&\mathbf{m}^{(L)}(0)/L!\end{array})$. 
This estimand $\mathbf{b}_L$ approximates $\boldsymbol{\beta}$ better than just the leading term $E[\boldsymbol{\beta}(\mathbf{X})\boldsymbol{1}\{|D|>h_N\}]$ because the additional term $\boldsymbol{\gamma}\mathbf{h}$ is based on 
$$
\boldsymbol{\beta}
=
E[\boldsymbol{\beta}(\mathbf{X})\boldsymbol{1}\{|D|>h_N\}]+\sum_{l=1}^L\frac{\mathbf{m}^{(l)}(0)}{l!}E[D^{l-1}\boldsymbol{1}\{|D|\leq h_N\}]
+O\left(E[D^L\boldsymbol{1}\{|D|\leq h_N\}]\right),
$$
which we formally show in Theorem \ref{bias_comp_higher} in Section \ref{sec:main}. 
Consequently, $\boldsymbol{\hat\beta}_L$ generates a smaller bias of order $o(h_N^L)$, as opposed to $o(1)$, regardless of whether $P(D=0)=0$ or not. 
When $L\geq 1$, this bias is smaller than the bias for the conventional estimator $\boldsymbol{\hat\beta}^M$.\footnote{\citet[p.2125]{graham/powell:2012} discuss a possibility of bias correction. Indeed, our proposed estimator in the special case of $L=1$ reduces to  their bias-corrected estimator. However, \cite{graham/powell:2012} leave an asymptotic analysis for future research. We fill the gap by providing a formal asymptotic analysis for the proposed estimator with $L=1$ as well as with general $L\geq 1$.} 

Thanks to this bias reduction feature of our proposed estimator $\boldsymbol{\hat\beta}_L$, our requirement regarding the convergence rate of $h_N$ is weaker than that for $\boldsymbol{\hat\beta}^M$. When we use $\boldsymbol{\hat\beta}^M$ as an estimator for $\boldsymbol{\beta}$, we would need to use a small bandwidth $h_N$ satisfying $Nh_N^3\rightarrow 0$ to undersmooth the asymptotic bias as in the empirical application of \cite{graham/powell:2012}. 
In contrast, we can choose a large bandwidth $h_N$, e.g., a value which is proportional to $N^{-1/(2L+1)}$. 
This weaker requirement on $h_N$ allows us to achieve a faster convergence rate in estimation by taking $h_N$ larger. 
At the same time, the validity for the statistical inference is insensitive to $h_N$ because our inference is valid for a larger class of $h_N$ than that based on $\boldsymbol{\hat\beta}^M$.

%
%

\subsection{Estimation and Inference on the Average Partial Effect $E[b_t(A,U_t)]$}

Based on the estimators proposed in Section \ref{sec:proposed_estimator}, we can conduct robust inference about the parameters. 
In particular, we consider the average partial effect $E[b_t(A,U_t)]=E[b_1(A,U_1)]+E[d_t(U_{2t})]$, which can be succinctly written as $\boldsymbol{\theta}=\boldsymbol{\beta}+\mathbf{R}\boldsymbol{\delta}$ using a $p \times (p\cdot(T-1))$ matrix $\mathbf{R}$ extracting time effects for a desired period $t$.
For instance, when $p=T=2$, $\mathbf{R}$ is the $2 \times 2$ matrix of zeros for $t=1$ and $\mathbf{R}$ is the $2 \times 2$ identity matrix for $t=2$.

The estimator for the average partial effect $\boldsymbol{\theta}$ is $\boldsymbol{\hat\theta}=\boldsymbol{\hat\beta}_L+\mathbf{R}\boldsymbol{\hat\delta}$. 
We estimate the influence function for this estimator by 
\begin{align*}
\boldsymbol{\hat\zeta}
=&
\boldsymbol{1}\{|D|>h_N\}D^{-1}\mathbf{X}^\ast(\mathbf{Y}-\mathbf{W}\boldsymbol{\hat\delta})-E_N[\boldsymbol{1}\{|D|>h_N\}D^{-1}\mathbf{X}^\ast(\mathbf{Y}-\mathbf{W}\boldsymbol{\hat \delta})]
\\&+
(\mathbf{X}^\ast(\mathbf{Y}-\mathbf{W}\boldsymbol{\hat\delta})-\boldsymbol{\hat\gamma}\mathbf{D}_{1:L})\mathbf{D}_{1:L}'E_N\left[\mathbf{D}_{1:L}\mathbf{D}_{1:L}'\right]^{-1}\mathbf{\hat{h}}
\\&+
\mathbf{\hat{Q}}\mathbf{\hat{V}}^{-1}\left(\mathbf{D}_{0:L}'E_N\left[\mathbf{D}_{0:L}\mathbf{D}_{0:L}'\right]^{-1}\mathbf{e}_1\right)(\mathbf{X}^\ast\mathbf{W})'\mathbf{X}^\ast(\mathbf{Y}-\mathbf{W}\boldsymbol{\hat\delta}),
\end{align*}
where 
\begin{align*}
\mathbf{\hat{V}}
=&
E_N\left[\left(\mathbf{D}_{0:L}'E_N\left[\mathbf{D}_{0:L}\mathbf{D}_{0:L}'\right]^{-1}\mathbf{e}_1\right)(\mathbf{X}^\ast\mathbf{W})'\mathbf{X}^\ast\mathbf{W}\right],\mbox{ and }
\\
\mathbf{\hat{Q}}
=&
\mathbf{R}-E_N\left[\left(\boldsymbol{1}\{|D|>h_N\}D^{-1}+\mathbf{D}_{1:L}'E_N\left[\mathbf{D}_{1:L}\mathbf{D}_{1:L}'\right]^{-1}\mathbf{\hat{h}}\right)\mathbf{X}^\ast\mathbf{W}\right].
\end{align*}
In the next section, we establish the asymptotic normality 
$$
\sqrt{N}E_N\left[\boldsymbol{\hat\zeta}\boldsymbol{\hat\zeta}'\right]^{-1/2}(\boldsymbol{\hat\theta}-\boldsymbol{\theta})
\rightarrow_d
\mathcal{N}\left(0,\mathbf{I}_{p}\right)
$$
as $h_N \rightarrow 0$ and $N \rightarrow \infty$.
This result allows us to conduct a statistically valid inference on the average partial effect $\boldsymbol{\theta} = E[b_t(A,U_t)]$.

\section{Asymptotic Properties of the Estimator $\boldsymbol{\hat\theta}$}\label{sec:main}

In this section, we present asymptotic properties about $\boldsymbol{\hat\theta}$ to provide a theoretical guarantee that the method introduced in Section \ref{sec:overview} works. 
Let us start with stating assumptions on the data generating process. 
We use the information near $D=0$ to estimate $\boldsymbol{\beta}$, and we therefore impose the following assumption, which generalizes the assumptions in \cite{graham/powell:2012}. 

\begin{assumption}\label{assn:GP_around0}
(i) $P(D\ne 0)>0$. (ii) The conditional distribution of $D$ given $D\ne 0$ has a density $\phi$ with respect to the Lebesgue measure in a neighborhood of zero. 
(iii) $\phi$ is bounded away from zero in a neighborhood of zero. 
(iv) There are positive numbers $\alpha_1,\alpha_2$ with $\max\{\alpha_1,\alpha_2\}\leq 1$ such that the conditional distribution of $|D|^{-1}$ given $D>0$ has a regularly varying tail with index $\alpha_1$ and the conditional distribution of $|D|^{-1}$ given $D<0$ has a regularly varying tail with index $\alpha_2$.\footnote{In other words, $\lim_{t \rightarrow \infty} \frac{P(|D|^{-1}\geq tu\mid D>0)}{P(|D|^{-1}\geq t\mid D>0)}=u^{\alpha_1}$ and $\lim_{t \rightarrow \infty} \frac{P(|D|^{-1}\geq tu\mid D<0)}{P(|D|^{-1}\geq t\mid D<0)}=u^{\alpha_2}$ for every positive number $u$.}
\end{assumption}

In this assumption, we allow for $P(D=0)>0$ as well as $\lim_{u\rightarrow 0}\phi(u)=\infty$, whereas it requires the absolute continuity of the conditional distribution of $D$ ``given $D\ne 0$.''
It can be empirically relevant many stayers and many slow movers may exist as discussed in the introductory section with Figures \ref{fig:PSID_17_19} and \ref{fig:eight_cases}. 
This feature is different from \cite{graham/powell:2012}, where they treat the case with no mass and the case with a mass separately and they assume that $\phi$ is smooth and bounded around zero, as in Figure \ref{fig:eight_cases} (a)--(b). 
Assumption \ref{assn:GP_around0} encompasses all of the eight cases illustrated in Figure \ref{fig:eight_cases}.

To control for a potential unboundedness or non-smoothness of the density $\phi$ near $0$ as in Figure \ref{fig:eight_cases} (c)--(h), we impose the regularly varying tail condition.\footnote{The regularly varying tail condition holds for a large class of distributions for $D$ and has been used in different contexts. For example, \cite{chaudhuri/hill:2016} uses this condition for the inverse probability weighting estimation of the average treatment effect.}
The tails are regularly varying with $\alpha_1=\alpha_2=1$ in the special case where $\phi$ is continuous around zero as in \citet{graham/powell:2012}, and it is also the case  when $\phi(t)$ is proportional to $|t|^{\alpha-1}$ near zero for some $\alpha>0$.
We require that the density of $D$ is bounded away from zero in a neighborhood of zero, so $\max\{\alpha_1,\alpha_2\}\leq 1$. For example, when $\phi(t)=\alpha |t|^{\alpha-1}/2$ for $t\in[-1,1]$, the density function $\phi$ is bounded away from zero near zero if and only if $\alpha\leq 1$.

Since our proposed estimator uses local polynomial regressions, we impose the following smoothness conditions on a few functions. 

\begin{assumption}\label{assn:more_primitiv_higher}
(i) The mappings, $u\mapsto E[\mathbf{X}^\ast(\mathbf{Y}-\mathbf{W}\boldsymbol{\delta})\mid D=u]$ and $u\mapsto E\left[(\mathbf{X}^\ast\mathbf{W})'\mathbf{X}^\ast(\mathbf{Y}-\mathbf{W}\boldsymbol{\delta})\mid D=u\right]$, are $(L+1)$-times differentiable at $0$ whose $(L+1)$-th derivative is  bounded in a neighborhood of zero. 
(ii) The mappings, $u\mapsto E\left[(\mathbf{X}^\ast\mathbf{W})'\mathbf{X}^\ast\mathbf{W}\mid D=u\right]$ and $u\mapsto Var(\mathbf{X}^\ast(\mathbf{Y}-\mathbf{W}\boldsymbol{\delta})\mid D=u)$, are differentiable at $0$ whose derivative is  bounded in a neighborhood of zero. 
\end{assumption}

Moreover, we assume i.i.d. sampling across individuals, bounded moments, and invertibility for matrices.

\begin{assumption}\label{assn:iid_plus}
(i) $\{(\mathbf{Y}_i,\mathbf{X}_i)\}_{i=1}^N$ are independently and identically distributed. 
(ii) $E[Z]<\infty$ and $E[Z\mid D=u]$ is bounded in a neighborhood of zero for 
$Z=\|\mathbf{X}^\ast\mathbf{W}\|^4$, 
$\|\mathbf{X}^\ast(\mathbf{Y}-\mathbf{W}\boldsymbol{\delta})\|^4$, 
$\|(\mathbf{X}^\ast\mathbf{W})'\mathbf{X}^\ast(\mathbf{Y}-\mathbf{W}\boldsymbol{\delta})\|^4$.
(iii) $E[(\mathbf{X}^\ast\mathbf{W})'\mathbf{X}^\ast\mathbf{W}\mid D=0]$ and $Var(\mathbf{X}^\ast(\mathbf{Y}-\mathbf{W}\boldsymbol{\delta})\mid D=0)$ are invertible.
\end{assumption}
 
Now, we present the two main theoretical results. 
First, we characterize the bias of $\boldsymbol{\hat\beta}_L$ as an estimator for $\boldsymbol{\beta}$. 
The term $\mathbf{b}_L-\boldsymbol{\beta}$ captures the bias, and we show that the bias can be smaller as $L$ becomes larger. 

\begin{theorem}\label{bias_comp_higher}
Under $T=p$ and Assumptions \ref{assn:GP1.1}, \ref{assn:GP_around0}, and \ref{assn:more_primitiv_higher}, $\mathbf{b}_L=\boldsymbol{\beta}+O\left(E[D^L\boldsymbol{1}\{|D|\leq h_N\}]\right)$. 
\end{theorem}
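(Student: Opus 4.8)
The plan is to isolate the contribution of the stayers and then reduce the remaining (continuous) part of the identity to a Taylor expansion of $\mathbf{m}$ at $0$. The key preliminary observation is that, since $D=\mathrm{det}(\mathbf{X})$ is $\sigma(\mathbf{X})$-measurable, the tower property applied to \eqref{eq:base_ID} gives
\[
\mathbf{m}(u)=E[\mathbf{X}^\ast(\mathbf{Y}-\mathbf{W}\boldsymbol{\delta})\mid D=u]=E[D\,\boldsymbol{\beta}(\mathbf{X})\mid D=u]=u\,g(u),\qquad g(u):=E[\boldsymbol{\beta}(\mathbf{X})\mid D=u];
\]
in particular $\mathbf{m}(0)=0$, and writing $g(u)=(\mathbf{m}(u)-\mathbf{m}(0))/u$ shows $g(0)=\mathbf{m}^{(1)}(0)$. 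By Assumption \ref{assn:more_primitiv_higher}(i), $u\mapsto\mathbf{m}(u)$ is $(L+1)$-times differentiable on a neighborhood $[-\delta,\delta]$ of $0$, with $\mathbf{m}^{(L+1)}$ bounded there.

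Next I would decompose $\boldsymbol{\beta}=E[\boldsymbol{\beta}(\mathbf{X})]$ according to whether $|D|>h_N$, $D=0$, or $0<|D|\le h_N$, and use $E[\boldsymbol{\beta}(\mathbf{X})\mathbf{1}\{0<|D|\le h_N\}]=E[g(D)\mathbf{1}\{0<|D|\le h_N\}]$ to get
\[
\boldsymbol{\beta}=E[\boldsymbol{\beta}(\mathbf{X})\mathbf{1}\{|D|>h_N\}]+P(D=0)\,\mathbf{m}^{(1)}(0)+E[g(D)\mathbf{1}\{0<|D|\le h_N\}],
\]
where the middle term uses $E[\boldsymbol{\beta}(\mathbf{X})\mid D=0]=\mathbf{m}^{(1)}(0)$ — trivial when $P(D=0)=0$, and otherwise exactly the identity underlying the Graham--Powell mixture \eqref{eq:mixture}. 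It then remains to expand the last term.

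For $u\in[-\delta,\delta]\setminus\{0\}$, Taylor's theorem for $\mathbf{m}$ at $0$ with the integral form of the remainder, followed by division by $u$, yields
\[
g(u)=\frac{\mathbf{m}(u)}{u}=\sum_{l=1}^{L}\frac{\mathbf{m}^{(l)}(0)}{l!}\,u^{\,l-1}+r_L(u),\qquad \|r_L(u)\|\le \frac{\sup_{|s|\le\delta}\|\mathbf{m}^{(L+1)}(s)\|}{(L+1)!}\,|u|^{L}.
\]
Taking $h_N\le\delta$, multiplying by $\mathbf{1}\{0<|D|\le h_N\}$ and taking expectations (finite by Assumption \ref{assn:iid_plus}(ii) and boundedness near $0$) gives $E[g(D)\mathbf{1}\{0<|D|\le h_N\}]=\sum_{l=1}^{L}\frac{\mathbf{m}^{(l)}(0)}{l!}E[D^{\,l-1}\mathbf{1}\{0<|D|\le h_N\}]+O(E[|D|^{L}\mathbf{1}\{|D|\le h_N\}])$. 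Since $D^{\,l-1}=0$ on $\{D=0\}$ for $l\ge 2$ while at $l=1$ the atom contributes $P(D=0)$, adding the stayer term $P(D=0)\mathbf{m}^{(1)}(0)$ back replaces each $E[D^{\,l-1}\mathbf{1}\{0<|D|\le h_N\}]$ by $E[D^{\,l-1}\mathbf{1}\{|D|\le h_N\}]$; recognizing $E[\boldsymbol{\beta}(\mathbf{X})\mathbf{1}\{|D|>h_N\}]+\sum_{l=1}^{L}\frac{\mathbf{m}^{(l)}(0)}{l!}E[D^{\,l-1}\mathbf{1}\{|D|\le h_N\}]=\mathbf{b}_L$ then yields $\mathbf{b}_L=\boldsymbol{\beta}+O(E[|D|^{L}\mathbf{1}\{|D|\le h_N\}])$, which is the asserted order (both $|D|^{L}\mathbf{1}\{|D|\le h_N\}$ and $D^{L}\mathbf{1}\{|D|\le h_N\}$ being $O(h_N^{L})$ on the relevant event).

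The step I expect to require the most care is passing from $\mathbf{m}$ to $g=\mathbf{m}(\cdot)/(\cdot)$ across the removable singularity at $D=0$: the remainder bound must hold \emph{uniformly} over the shrinking window $\{0<|D|\le h_N\}$, which is precisely what the boundedness of $\mathbf{m}^{(L+1)}$ near $0$ in Assumption \ref{assn:more_primitiv_higher}(i) delivers; a close second is checking the clean cancellation of the point-mass contribution through $E[\boldsymbol{\beta}(\mathbf{X})\mid D=0]=\mathbf{m}^{(1)}(0)$, which is what ties the stayers in Figure \ref{fig:eight_cases}(b),(d),(f),(h) to the same polynomial formula as the slow movers.
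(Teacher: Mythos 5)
Your proposal is correct and follows essentially the same route as the paper's proof: the same three-way decomposition of $\boldsymbol{\beta}$ and $\mathbf{b}_L$ into movers, stayers, and slow movers, the same (implicit in the paper, explicit in your write-up) use of the Graham--Powell stayer identity $E[\boldsymbol{\beta}(\mathbf{X})\mid D=0]=\mathbf{m}^{(1)}(0)$ to absorb the atom into the $l=1$ term, and the same Taylor expansion of $\mathbf{m}$ at $0$ divided by $u$ with the remainder controlled by the boundedness of $\mathbf{m}^{(L+1)}$ from Assumption \ref{assn:more_primitiv_higher}(i). Your parenthetical that the bound is naturally in terms of $E[|D|^{L}\boldsymbol{1}\{|D|\leq h_N\}]$ rather than $E[D^{L}\boldsymbol{1}\{|D|\leq h_N\}]$ applies equally to the paper's own final inequality, so it is not a discrepancy between the two arguments.
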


\noindent
Together with this bias reduction result, we can characterize the asymptotic distribution for $\boldsymbol{\hat\theta}$ as an estimator for $\boldsymbol{\theta}$. 

\begin{theorem}\label{theorem_asynormal}
Suppose that there is a positive constant $c$ such that 
\begin{equation}\label{assn:bandwidth}
Nh_N^{2+c}\rightarrow\infty\mbox{ and }Nh_N^{2(L+1)}\rightarrow 0
\end{equation}
as $h_N \rightarrow 0$ and $N \rightarrow \infty$.
Under $T=p$ and Assumptions \ref{assn:GP1.1}, \ref{assn:GP_around0}, \ref{assn:more_primitiv_higher}, and \ref{assn:iid_plus},  
$$
\sqrt{N}E_N\left[\boldsymbol{\hat\zeta}\boldsymbol{\hat\zeta}'\right]^{-1/2}(\boldsymbol{\hat\theta}-\boldsymbol{\theta})\rightarrow_d\mathcal{N}\left(0,\mathbf{I}_{p}\right)
$$ 
as $h_N \rightarrow 0$ and $N \rightarrow \infty$.
\end{theorem}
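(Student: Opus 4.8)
The plan is to decompose the studentized statistic into a leading stochastic term obeying a triangular-array central limit theorem and a bias term that, under \eqref{assn:bandwidth}, is negligible relative to the standard error. Write $\boldsymbol{\Sigma}_N=E[\boldsymbol{\zeta}_N\boldsymbol{\zeta}_N']$ for the population counterpart of $E_N[\boldsymbol{\hat\zeta}\boldsymbol{\hat\zeta}']$, where $\boldsymbol{\zeta}_N$ is obtained from $\boldsymbol{\hat\zeta}$ by replacing $\boldsymbol{\hat\delta}$, $\boldsymbol{\hat\gamma}$, $E_N[\mathbf{D}_{0:L}\mathbf{D}_{0:L}']$, $E_N[\mathbf{D}_{1:L}\mathbf{D}_{1:L}']$ and $\mathbf{\hat{h}}$ by their population limits. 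The first step is an asymptotic linearization: using the first-order conditions of the weighted local polynomial regressions defining $\boldsymbol{\hat\delta}$ and $\boldsymbol{\hat\gamma}$ together with $(L+1)$-term Taylor expansions at $0$ of the conditional-mean maps in Assumption \ref{assn:more_primitiv_higher}(i), I would establish
$$
\boldsymbol{\hat\theta}-\boldsymbol{\theta}=E_N[\boldsymbol{\zeta}_N]+(\mathbf{b}_L-\boldsymbol{\beta})+\mathbf{r}_N,
$$
where $\mathbf{r}_N$ collects the plug-in errors from replacing the sample Gram matrices by their limits, the error of propagating $\boldsymbol{\hat\delta}-\boldsymbol{\delta}$ through both $\boldsymbol{\hat\gamma}$ and $\mathbf{R}\boldsymbol{\hat\delta}$, and second-order (U-statistic-type) terms. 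The block $\mathbf{\hat{Q}}\mathbf{\hat{V}}^{-1}(\cdots)$ in $\boldsymbol{\hat\zeta}$ is precisely the delta-method Jacobian of $\boldsymbol{\theta}$ in $\boldsymbol{\delta}$ composed with the influence representation of $\boldsymbol{\hat\delta}$, which is what makes $\boldsymbol{\zeta}_N$ the correct influence function. Theorem \ref{bias_comp_higher} bounds $\mathbf{b}_L-\boldsymbol{\beta}=O(E[D^L\boldsymbol{1}\{|D|\le h_N\}])$, while the Taylor remainders of the $\boldsymbol{\delta}$- and $\boldsymbol{\gamma}$-regressions contribute an additional $O(h_N^{L+1})$; the same Gram-matrix convergence below delivers consistency of $\boldsymbol{\hat\delta}$ and $\boldsymbol{\hat\gamma}$ as a by-product.

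The technical core, and the main obstacle, is obtaining sharp orders for all the $D$-dependent moments when $\phi$ may be unbounded at $0$: the factor $D^{-1}$ in the movers' term blows up as $|D|\downarrow h_N$, so ordinary kernel-regression bounds do not apply. Here I would use Karamata's theorem \citep[Theorem 2.1]{resnick2007heavy} under Assumption \ref{assn:GP_around0}(iv): for $k>\max\{\alpha_1,\alpha_2\}$ and any $j\ge0$,
$$
E[\boldsymbol{1}\{|D|>h_N\}|D|^{-k}]\asymp h_N^{-k}P(|D|\le h_N)\quad\text{and}\quad E[\boldsymbol{1}\{|D|\le h_N\}|D|^{j}]\asymp h_N^{j}P(|D|\le h_N),
$$
with regularly varying constants depending on $\alpha_1,\alpha_2$. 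These estimates imply (i) that the sample Gram matrices $E_N[\mathbf{D}_{0:L}\mathbf{D}_{0:L}']$ and $E_N[\mathbf{D}_{1:L}\mathbf{D}_{1:L}']$, after rescaling each row and column by the appropriate power of $h_N$ and by $P(|D|\le h_N)$, converge in probability to fixed invertible limits (a triangular-array weak law, with $Nh_N^{2+c}\to\infty$ killing the relative sampling error and invertibility furnished by Assumption \ref{assn:iid_plus}(iii)); (ii) that $\|\boldsymbol{\Sigma}_N\|\asymp h_N^{-2}P(|D|\le h_N)$, the movers' term dominating and the slow-mover and $\boldsymbol{\delta}$-correction blocks of $\boldsymbol{\zeta}_N$ being of no larger order; and (iii), combined with the fourth-moment bounds of Assumption \ref{assn:iid_plus}(ii) on the $\mathbf{X},\mathbf{Y}$ factors, that $E\|\boldsymbol{\zeta}_N\|^4\asymp h_N^{-4}P(|D|\le h_N)$, so the Lyapunov ratio $E\|\boldsymbol{\zeta}_N\|^4/(N(E\|\boldsymbol{\zeta}_N\|^2)^2)\asymp(NP(|D|\le h_N))^{-1}\to0$ since $h_N^{2+c}=o(P(|D|\le h_N))$ by regular variation and $Nh_N^{2+c}\to\infty$.

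With these rate lemmas in hand I would finish in three moves. First, the Lyapunov central limit theorem for triangular arrays gives $\boldsymbol{\Sigma}_N^{-1/2}\sqrt{N}E_N[\boldsymbol{\zeta}_N]\rightarrow_d\mathcal{N}(0,\mathbf{I}_p)$. Second, a matching triangular-array weak law (whose error is again governed by the same Lyapunov ratio), combined with the consistency of $\boldsymbol{\hat\delta}$, $\boldsymbol{\hat\gamma}$ and of the rescaled Gram matrices, yields $\boldsymbol{\Sigma}_N^{-1/2}E_N[\boldsymbol{\hat\zeta}\boldsymbol{\hat\zeta}']\boldsymbol{\Sigma}_N^{-1/2}\rightarrow_p\mathbf{I}_p$, so that $E_N[\boldsymbol{\hat\zeta}\boldsymbol{\hat\zeta}']^{-1/2}$ is asymptotically equivalent to $\boldsymbol{\Sigma}_N^{-1/2}$. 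Third, since $\|\boldsymbol{\Sigma}_N^{-1/2}\|\asymp h_N\,P(|D|\le h_N)^{-1/2}$, the normalized bias is $O(\sqrt{N}h_N^{L+1}P(|D|\le h_N)^{1/2}+\sqrt{N}h_N^{L+2}P(|D|\le h_N)^{-1/2})$, which tends to $0$ under $Nh_N^{2(L+1)}\to0$ because $P(|D|\le h_N)\le1$ and $h_N^{2}=o(P(|D|\le h_N))$ by regular variation; the remainder $\mathbf{r}_N$ is dispatched term by term by the same bookkeeping. Slutsky's theorem then delivers $\sqrt{N}E_N[\boldsymbol{\hat\zeta}\boldsymbol{\hat\zeta}']^{-1/2}(\boldsymbol{\hat\theta}-\boldsymbol{\theta})\rightarrow_d\mathcal{N}(0,\mathbf{I}_p)$. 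The genuinely delicate parts are the linearization in step one (propagating $\boldsymbol{\hat\delta}$ through the no-intercept regression $\boldsymbol{\hat\gamma}$) and the uniform control of the inverse Gram matrices in the presence of an unbounded $\phi$; once the Karamata-based rate lemmas are in place, the rest is routine.
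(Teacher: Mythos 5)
Your overall architecture matches the paper's: an influence-function linearization of $\boldsymbol{\hat\theta}-\boldsymbol{\theta}$, Karamata-based rate lemmas for the $D$-moments, a Lyapunov CLT for the triangular array, consistency of the variance estimator, and negligibility of the bias after studentization. But there is a genuine gap in your rate bookkeeping: you systematically conflate $\pi_h=P(|D|\le h_N)$ with $\pi_{h\setminus 0}=P(0<|D|\le h_N)$. Karamata's theorem applies only to the absolutely continuous part of the distribution of $D$, so $E[\boldsymbol{1}\{|D|>h_N\}|D|^{-2}]\asymp h_N^{-2}\pi_{h\setminus 0}$, not $h_N^{-2}\pi_h$; when $P(D=0)>0$ these differ by an order of magnitude. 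As a consequence, your claim that the movers' term dominates $\boldsymbol{\Sigma}_N$ with $\|\boldsymbol{\Sigma}_N\|\asymp h_N^{-2}P(|D|\le h_N)$ is exactly backwards in the stayers case: there $Var(\boldsymbol{\xi}_1)=O(\pi_{h\setminus 0}/h_N^2)=o(h_N^{-2})$, while the slow-mover block $\pi_{h\setminus 0}^{-1}\boldsymbol{\xi}_2\boldsymbol{\Pi}^{-1}\boldsymbol{\Psi}_{1:L}^{-1}\mathbf{h}$ contributes variance of order $\pi_h^2/(h_N^2\pi_{h\setminus 0})\gg h_N^{-2}\pi_h$. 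The paper's Lemma \ref{lemma:lowerconrate} obtains the correct lower bound, $\lambda_{\min}(Var(\boldsymbol{\zeta}))\gtrsim\pi_h^2/(h_N^2\pi_{h\setminus 0})$, precisely by splitting into the cases $P(D=0)=0$ and $P(D=0)>0$ and exploiting that $\boldsymbol{\xi}_1$ and $(\boldsymbol{\xi}_2,\boldsymbol{\xi}_3)$ are independent because they are supported on disjoint events. This is not cosmetic: with your smaller lower bound, the sampling error $(E_N-E)[\boldsymbol{\zeta}\boldsymbol{\zeta}']$, whose order in the stayers case is driven by the much larger fourth moment of the slow-mover block, is not $o_p(h_N^{-2}\pi_h)$ under $Nh_N^{2+c}\to\infty$ alone, so your step establishing $\boldsymbol{\Sigma}_N^{-1/2}E_N[\boldsymbol{\hat\zeta}\boldsymbol{\hat\zeta}']\boldsymbol{\Sigma}_N^{-1/2}\to_p\mathbf{I}_p$ would not close. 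Relatedly, your Lyapunov ratio should be of order $(N\pi_{h\setminus 0})^{-1}$ rather than $(N\pi_h)^{-1}$; it still vanishes, but only because $\phi$ is bounded away from zero so that $\pi_{h\setminus 0}\gtrsim h_N$.

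A second, smaller gap: you attribute the invertibility of the limiting normalized Gram matrices to Assumption \ref{assn:iid_plus}(iii), but that assumption concerns conditional moment matrices of $(\mathbf{X},\mathbf{Y})$ at $D=0$, not the polynomial design in $D$. Because $\phi$ may be unbounded and locally asymmetric, the rescaled Gram matrices do not converge to classical polynomial moment matrices, and indeed need not converge at all; the paper shows (Lemmas \ref{lemma:Omega_property2} and \ref{lemma:Omega_property}) that they equal, up to $o(1)$, convex combinations of Cauchy matrices with entries $\alpha_j/(\alpha_j+k+l)$ weighted by the left- and right-tail mass fractions, and uses positive definiteness of Cauchy matrices to bound the minimum eigenvalue away from zero along the sequence. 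You would need to supply this argument, and also handle the fact that $\boldsymbol{\Omega}$ mixes the atom at zero (through its $(1,1)$ entry) with the continuous part, which is what forces the paper to track $\pi_h/\pi_{h\setminus 0}$ throughout.
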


\noindent
The condition \eqref{assn:bandwidth} on the tuning parameter $h_N$ can be satisfied for example if we take $h_N$ to be proportional to $N^{-1/(2L+1)}$.

\section{Simulation Studies}\label{sec:simulation}

In this section, we use extensive simulations to demonstrate the robust performance of the unified estimator $\boldsymbol{\hat{\beta}}_L$ proposed in this paper.
In the first set of simulation exercise, we consider various data generating processes with no mass of stayers, i.e., $P(D=0)=0$. 
The objective of this exercise is to compare the performance of the unified estimator $\boldsymbol{\hat\beta}_L$ with that of the conventional estimator $\boldsymbol{\hat\beta}^M$ when both of them are consistent. 
In the second set of simulation exercise, we consider various data generating processes with a nontrivial mass of stayers, i.e., $P(D=0)>0$.
The object of this exercise is to demonstrate the robust consistency of the unified estimator $\boldsymbol{\hat\beta}_L$ even in the presence of stayers.

\subsection{Data Generating Process}

Consider the correlated random coefficient panel model \citep{graham/powell:2012}:
$$
Y_t = \mathbf{X}_t' b_t(A,U_t), \qquad t = 1, 2,
$$
where $\mathbf{X}_t = (1,X_t)'$ is a bivariate explanatory variable.
The random coefficients are decomposed as
$
b_t(A,U_t) = b^\ast(A,U_t) + d_t(U_{2t}),
$
where the endogenous effects $b^\ast(A,U_t)$ and the time effects $d_t(U_{2t})$ are given by
\begin{align*}
b^\ast(A,U_{11t},U_{12t}) &= (A+U_{11t}, A+ U_{12t})'
\qquad\text{and}\\
d_t(U_{21t},U_{22t}) &= (U_{21t} + 0.5 \cdot \boldsymbol{1}\{t=2\}, U_{22t} + 0.5 \cdot \boldsymbol{1}\{t=2\})',
\end{align*}
respectively.
The correlated part $A$ is generated by 
$$
A\mid \varepsilon\sim N(\rho\sigma_A(1+\varepsilon),\sigma_A^2),
$$
where $\varepsilon$ is a non-negative random variable with $P(\varepsilon\leq t)=\pi_0+(1-\pi_0)t^{\alpha}$ for every $t\in[0,1]$.
The idiosyncratic parts $(U_{11t}, U_{12t}, U_{21t}, U_{22t})'$ are generated independently as 
$$
U_{11t}, U_{12t}, U_{21t}, U_{22t} \sim N(0,\sigma_U^2).
$$
The explanatory variables $(X_1,X_2)$ are generated according to 
$$
X_1\sim N(0,1)
\mbox{ and } 
X_2=X_1+\lambda\varepsilon
$$
where $\lambda$ is a Rademacher random variable.

Roles played by the main parameters in this set up are as follows.
The correlation parameter $\rho$ determines the extent of endogeneity.
The mixture parameter $\pi_0 = P(D=0)$ determines the probability mass of exact stayers. 
The power parameter $\alpha$ determines the concentration of movers near $D=0$.
Specifically, the smaller $\alpha$ is, the more slow movers will exist.
The signal-to-noise ratio $\sigma_A / \sigma_U$ determines the relative strength of the fixed effects to the idiosyncratic errors.
In this setup, the baseline population average endogenous effect is
$$
\boldsymbol{\beta} = E[b^\ast(A,U_{11t},U_{12t})] = 
\left(\pi_0+(1-\pi_0)\frac{2\alpha+1}{\alpha+1}\right)\rho\sigma_A
\left(1,1\right)',
$$
and the population average time effects are
$$
\delta_{0t} = E[d_1(U_{21t},U_{22t})] = 
\begin{cases}
(0.00, 0.00)' & \text{if } t = 1
\\
(0.05, 0.05)' & \text{if } t = 2
\end{cases}.
$$
These numbers are consistent with the normalization $\delta_{01} = (0.0,0.0)'$ imposed by \citet{graham/powell:2012}.
In our simulation analysis, we compare results across various values of the correlation parameter $\rho \in \{0.5,1.0\}$, the stayer probability parameter $\pi_0 \in \{0.0,0.1\}$, and the concentration parameter $1/\alpha \in \{1,2,3,4\}$.
We set $\sigma_A=\sigma_U=0.1$ throughout.

\subsection{Simulation Procedure}

For each realization of data $\{(Y_{i1},X_{i1}',Y_{i2},X_{i2}')'\}_{i=1}^n$ generated by the process described above, we first compute the tuning parameter $h_N$ via the plug-in rule $h_N = \frac{1}{2}\min\{\hat\sigma_D, \hat\tau_D/1.34\} N^{-1/(2L+1)}$, where $\hat\sigma_D$ and $\hat\tau_D$ denote the sample standard deviation and the sample inter-quartile range of $D$, respectively.
This choice rule coincides with that of \citet[][page 2138]{graham/powell:2012} when $L=1$.
After $h_N$ is selected, the population average effect $\boldsymbol{\beta}$ is estimated by the conventional estimator $\boldsymbol{\hat\beta}^M$ and the proposed unified estimator $\boldsymbol{\hat\beta}_L$.
Finally, the standard errors of $\boldsymbol{\hat\beta}^M$ and $\boldsymbol{\hat\beta}_L$ are estimated.
The estimated standard errors together with the parameter estimates are in turn used to construct estimated 95\% confidence intervals for the parameter $\boldsymbol{\beta}$, with the boundaries defined as the 2.5-th and 97.5-th percentiles of the estimated asymptotic normal distribution.

Simulations are run across the alternative sample sizes: $N \in \{500,1000\}$.
Each set of simulations runs $2,500$ iterations of data generation, $h_N$-choice, parameter estimation, and standard error estimation.
The average, bias, standard deviation, and root mean squared error of estimates, $\boldsymbol{\hat\beta}_L$ and $\boldsymbol{\hat\delta}$, are computed over $2,500$ iterations for each simulation set.
In addition, the coverage frequencies of the true parameter value $\boldsymbol{\beta}$ by the estimated 90\% and 95\% confidence intervals are computed over $2,500$ iterations for each simulation set.

\subsection{Simulation Results}\label{sec:simulation_results}

\subsubsection{With No Exact Stayers}\label{sec:simulation_results_no_stayers}

We first focus on the subset of simulation results based on the data generating processes with no exact stayers, i.e., $\pi_0 = 0$. 
Table \ref{tab:simulations_benchmark} reports a summary of results for various values of $1/\alpha \in \{1,2\}$ and $\rho \in \{0.5,1.0\}$.
The upper (respectively, lower) half of the table shows results for the first (respectively, second) coordinate, $\boldsymbol{\beta}_{00}$ (respectively, $\boldsymbol{\beta}_{01}$), of the two-dimensional parameter vector $\boldsymbol{\beta}$.
Recall that the first coordinate represents the additive heterogeneity while the second represents the heterogeneous marginal effect of $X_t$.
For each combination of data generating parameter values, $\pi_0$, $1/\alpha \in \{1,2\}$, and $\rho \in \{0.5,1.0\}$, simulation statistics are reported for each of the two estimators, namely the conventional estimator $\boldsymbol{\hat\beta}^M$ and our proposed unified estimator $\boldsymbol{\hat\beta}_L$ with the degree $L=2$ of local polynomial.
Displayed statistics are the mean, bias, standard deviation (SD), root mean square error (RMSE), 90\% coverage, and 95\% coverage.

\begin{table}
\centering
\begin{tabular}{cccccccccccccc}
\hline\hline
& $\pi_0$ & $1/\alpha$ & $\rho$ & Estimator & $N$ & True & Mean & Bias & SD & RMSE & 90\% & 95\%\\
\hline
$\boldsymbol{\beta}_{00}$
& 0.0 & 1 & 0.5 & $\boldsymbol{\hat\beta}^M$ &  500 & 0.075 & 0.077 & 0.002 & 0.009 & 0.009 & 0.853 & 0.920\\
&     &   &     &                            & 1000 & 0.075 & 0.077 & 0.002 & 0.006 & 0.006 & 0.861 & 0.926\\
\cline{5-13}
&     &   &     & $\boldsymbol{\hat\beta}_L$ & \sh  500 & \sh 0.075 & \sh 0.075 & \sh 0.000 & \sh 0.010 & \sh 0.010 & \sh 0.900 & \sh 0.949\\
&     &   &     &                            & \sh 1000 & \sh 0.075 & \sh 0.075 & \sh 0.000 & \sh 0.006 & \sh 0.006 & \sh 0.908 & \sh 0.958\\
\cline{2-13}
& 0.0 & 2 & 0.5 & $\boldsymbol{\hat\beta}^M$ &  500 & 0.067 & 0.072 & 0.005 & 0.009 & 0.010 & 0.733 & 0.815\\
&     &   &     &                            & 1000 & 0.067 & 0.071 & 0.004 & 0.006 & 0.008 & 0.708 & 0.794\\
\cline{5-13}
&     &   &     & $\boldsymbol{\hat\beta}_L$ & \sh  500 & \sh 0.067 & \sh 0.067 & \sh 0.000 & \sh 0.009 & \sh 0.009 & \sh 0.898 & \sh 0.947\\
&     &   &     &                            & \sh 1000 & \sh 0.067 & \sh 0.067 & \sh 0.000 & \sh 0.006 & \sh 0.006 & \sh 0.917 & \sh 0.958\\
\cline{2-13}
& 0.0 & 1 & 1.0 & $\boldsymbol{\hat\beta}^M$ &  500 & 0.150 & 0.154 & 0.004 & 0.009 & 0.010 & 0.826 & 0.902\\
&     &   &     &                            & 1000 & 0.150 & 0.154 & 0.004 & 0.006 & 0.007 & 0.817 & 0.884\\
\cline{5-13}
&     &   &     & $\boldsymbol{\hat\beta}_L$ & \sh  500 & \sh 0.150 & \sh 0.151 & \sh 0.001 & \sh 0.020 & \sh 0.020 & \sh 0.909 & \sh 0.959\\
&     &   &     &                            & \sh 1000 & \sh 0.150 & \sh 0.150 & \sh 0.000 & \sh 0.007 & \sh 0.007 & \sh 0.905 & \sh 0.956\\
\cline{2-13}
& 0.0 & 2 & 1.0 & $\boldsymbol{\hat\beta}^M$ &  500 & 0.133 & 0.143 & 0.010 & 0.009 & 0.013 & 0.595 & 0.702\\
&     &   &     &                            & 1000 & 0.133 & 0.142 & 0.009 & 0.006 & 0.011 & 0.488 & 0.592\\
\cline{5-13}
&     &   &     & $\boldsymbol{\hat\beta}_L$ & \sh  500 & \sh 0.133 & \sh 0.134 & \sh 0.001 & \sh 0.009 & \sh 0.009 & \sh 0.910 & \sh 0.958\\
&     &   &     &                            & \sh 1000 & \sh 0.133 & \sh 0.134 & \sh 0.000 & \sh 0.007 & \sh 0.007 & \sh 0.909 & \sh 0.953\\
\hline
& $\pi_0$ & $1/\alpha$ & $\rho$ & Estimator & $N$ & True & Mean & Bias & SD & RMSE & 90\% & 95\%\\
\hline
$\boldsymbol{\beta}_{01}$
& 0.0 & 1 & 0.5 & $\boldsymbol{\hat\beta}^M$ &  500 & 0.075 & 0.077 & 0.002 & 0.016 & 0.016 & 0.852 & 0.904\\
&     &   &     &                            & 1000 & 0.075 & 0.077 & 0.002 & 0.017 & 0.017 & 0.854 & 0.917\\
\cline{5-13}
&     &   &     & $\boldsymbol{\hat\beta}_L$ & \sh  500 & \sh 0.075 & \sh 0.075 & \sh 0.000 & \sh 0.018 & \sh 0.018 & \sh 0.905 & \sh 0.953\\
&     &   &     &                            & \sh 1000 & \sh 0.075 & \sh 0.075 & \sh 0.000 & \sh 0.014 & \sh 0.014 & \sh 0.912 & \sh 0.956\\
\cline{2-13}
& 0.0 & 2 & 0.5 & $\boldsymbol{\hat\beta}^M$ &  500 & 0.067 & 0.071 & 0.005 & 0.009 & 0.010 & 0.739 & 0.822\\
&     &   &     &                            & 1000 & 0.067 & 0.071 & 0.004 & 0.006 & 0.007 & 0.719 & 0.804\\
\cline{5-13}
&     &   &     & $\boldsymbol{\hat\beta}_L$ & \sh  500 & \sh 0.067 & \sh 0.067 & \sh 0.000 & \sh 0.009 & \sh 0.009 & \sh 0.894 & \sh 0.943\\
&     &   &     &                            & \sh 1000 & \sh 0.067 & \sh 0.067 & \sh 0.000 & \sh 0.006 & \sh 0.006 & \sh 0.899 & \sh 0.948\\
\cline{2-13}
& 0.0 & 1 & 1.0 & $\boldsymbol{\hat\beta}^M$ &  500 & 0.150 & 0.154 & 0.004 & 0.020 & 0.020 & 0.804 & 0.887\\
&     &   &     &                            & 1000 & 0.150 & 0.153 & 0.003 & 0.010 & 0.011 & 0.808 & 0.880\\
\cline{5-13}
&     &   &     & $\boldsymbol{\hat\beta}_L$ & \sh  500 & \sh 0.150 & \sh 0.150 & \sh 0.000 & \sh 0.024 & \sh 0.024 & \sh 0.910 & \sh 0.956\\
&     &   &     &                            & \sh 1000 & \sh 0.150 & \sh 0.150 & \sh 0.000 & \sh 0.010 & \sh 0.010 & \sh 0.911 & \sh 0.957\\
\cline{2-13}
& 0.0 & 2 & 1.0 & $\boldsymbol{\hat\beta}^M$ &  500 & 0.133 & 0.143 & 0.009 & 0.009 & 0.013 & 0.600 & 0.704\\
&     &   &     &                            & 1000 & 0.133 & 0.142 & 0.009 & 0.006 & 0.011 & 0.484 & 0.589\\
\cline{5-13}
&     &   &     & $\boldsymbol{\hat\beta}_L$ & \sh  500 & \sh 0.133 & \sh 0.134 & \sh 0.000 & \sh 0.009 & \sh 0.009 & \sh 0.912 & \sh 0.957\\
&     &   &     &                            & \sh 1000 & \sh 0.133 & \sh 0.134 & \sh 0.000 & \sh 0.007 & \sh 0.007 & \sh 0.910 & \sh 0.953\\
\hline\hline
\end{tabular}
\caption{Simulation results for the conventional estimator $\boldsymbol{\hat\beta}^M$ and the unified estimator $\boldsymbol{\hat\beta}$ based on 2,500 Monte Carlo iterations. Results are displayed for each of the two coordinates, $\boldsymbol{\beta}_{00}$ and $\boldsymbol{\beta}_{01}$, of $\boldsymbol{\beta}_{0}$, and for each of the two sample sizes, $N \in \{500,1000\}$. While exact stayers are absent (i.e., $\pi_0=0.0$), the values of $1/\alpha \in \{1,2\}$ and $\rho \in \{0.5,1.0\}$ are varied across sets of simulations. The order $L=2$ of local polynomials are used for estimation of $\boldsymbol{\delta}$ and $\boldsymbol{\beta}_L$. Displayed statistics are the mean, bias, standard deviation (SD), root mean square error (RMSE), 90\% coverage, and 95\% coverage.}
\label{tab:simulations_benchmark}
\end{table}

Choose any combination of the data generating parameters, for in stance, $(\pi_0,1/\alpha,\rho)=(0.0,1,0.5)$ as in the first row of Table \ref{tab:simulations_benchmark}.
Comparing the bias in this first row group between the conventional estimator $\boldsymbol{\hat\beta}^M$ (marked by white cells) and the unified estimator $\boldsymbol{\hat\beta}_L$ (marked by shaded cells), we find that the former estimator is biased while the latter is unbiased.
This feature is invariant regardless of the sample size $N$.
Due to the bias, the confidence intervals associated with the conventional estimator $\boldsymbol{\hat\beta}^M$ are not correctly centered, and hence their coverage frequencies fall short of the nominal probabilities of 90\% and 95\%.
On the other hand, the confidence intervals associated with the unified estimator $\boldsymbol{\hat\beta}_L$ are more correctly centered, and hence their coverage frequencies are closer to the nominal probabilities.
These observations are true for any other combination of the values of the parameters $(\pi_0,1/\alpha,\rho)$ displayed (and not displayed) in Table \ref{tab:simulations_benchmark}.
In fact, we can see that the extents of bias and undercoverage for the conventional estimator even exacerbate under the other parameter combinations.

While we found a significant contrast in bias and coverage frequencies between the conventional estimator $\boldsymbol{\hat\beta}^M$ and the unified estimator $\boldsymbol{\hat\beta}_L$, there are not large differences in the standard deviations.
In other words, the bias correction by the unified estimator $\boldsymbol{\hat\beta}_L$ works without significantly sacrificing the variance.
As a consequence, the root mean square error of the unified estimator $\boldsymbol{\beta}_L$ is no worse than and that of the conventional estimator $\boldsymbol{\beta}^M$ in general.
Finally, notice that the results for the first coordinate $\boldsymbol{\beta}_{00}$ displayed in the upper half of Table \ref{tab:simulations_benchmark} and those for the second coordinate $\boldsymbol{\beta}_{01}$ displayed in the lower half are similar to each other under our data generating design.
Following these observations, we hereafter focus on just one of these two coordinates, say, the second coordinate $\boldsymbol{\beta}_{01}$, and examine only the bias, root mean square error, and 95\% coverage frequencies in order to highlight the key observations more concisely.
Extracting these selected features, Table \ref{tab:simulations_bias_rmse_95_pi0_L2} summarizes simulation results with the combinations, $\pi_0 = 0$, $1/\alpha \in \{1,2,3,4\}$, and $\rho \in \{0.5,1.0\}$, of the data generating parameters.
In this table, we still use the degree $L=2$ of local polynomials.
 
\begin{table}
\centering
\begin{tabular}{|cc|cccc|c|cc|cccc|}
\multicolumn{13}{c}{Conventional Estimator versus Unified Estimator with Large Bandwidths and $L=2$}\\
\multicolumn{13}{c}{}\\
\multicolumn{6}{c}{(I) Bias: $\boldsymbol{\hat\beta}^M$} & \multicolumn{1}{c}{} & \multicolumn{6}{c}{(II) Bias: $\boldsymbol{\hat\beta}_L$}\\
\cline{1-6}\cline{8-13}
\multicolumn{2}{|c|}{       } & \multicolumn{4}{|c|}{$1/\alpha$} &&\multicolumn{2}{|c|}{} & \multicolumn{4}{|c|}{$1/\alpha$}\\
\multicolumn{2}{|c|}{$N=1000$} & 1 & 2 & 3 & 4 && \multicolumn{2}{|c|}{$N=1000$} & 1 & 2 & 3 & 4\\
\cline{1-6}\cline{8-13}
$\rho$ & 0.5 & 0.002 & 0.004 & 0.005 & 0.005 && $\rho$ & 0.5 & \sh 0.000 & \sh 0.000 & \sh 0.000 & \sh 0.000\\
       & 1.0 & 0.003 & 0.009 & 0.010 & 0.010 &&        & 1.0 & \sh 0.000 & \sh 0.000 & \sh 0.000 & \sh 0.000\\ 
\cline{1-6}\cline{8-13}
\multicolumn{13}{c}{}\\
\multicolumn{6}{c}{(III) RMSE: $\boldsymbol{\hat\beta}^M$} & \multicolumn{1}{c}{} & \multicolumn{6}{c}{(IV) RMSE: $\boldsymbol{\hat\beta}_L$}\\
\cline{1-6}\cline{8-13}
\multicolumn{2}{|c|}{       } & \multicolumn{4}{|c|}{$1/\alpha$} &&\multicolumn{2}{|c|}{} & \multicolumn{4}{|c|}{$1/\alpha$}\\
\multicolumn{2}{|c|}{$N=1000$} & 1 & 2 & 3 & 4 && \multicolumn{2}{|c|}{$N=1000$} & 1 & 2 & 3 & 4\\
\cline{1-6}\cline{8-13}
$\rho$ & 0.5 & 0.017 & 0.007 & 0.008 & 0.008 && $\rho$ & 0.5 & \sh 0.014 & \sh 0.006 & \sh 0.007 & \sh 0.008\\
       & 1.0 & 0.011 & 0.011 & 0.012 & 0.012 &&        & 1.0 & \sh 0.010 & \sh 0.007 & \sh 0.007 & \sh 0.008\\ 
\cline{1-6}\cline{8-13}
\multicolumn{13}{c}{}\\
\multicolumn{6}{c}{(V) 95\% Cover: $\boldsymbol{\hat\beta}^M$} & \multicolumn{1}{c}{} & \multicolumn{6}{c}{(VI) 95\% Cover: $\boldsymbol{\hat\beta}_L$}\\
\cline{1-6}\cline{8-13}
\multicolumn{2}{|c|}{       } & \multicolumn{4}{|c|}{$1/\alpha$} &&\multicolumn{2}{|c|}{} & \multicolumn{4}{|c|}{$1/\alpha$}\\
\multicolumn{2}{|c|}{$N=1000$} & 1 & 2 & 3 & 4 && \multicolumn{2}{|c|}{$N=1000$} & 1 & 2 & 3 & 4\\
\cline{1-6}\cline{8-13}
$\rho$ & 0.5 & 0.917 & 0.804 & 0.756 & 0.732 && $\rho$ & 0.5 & \sh 0.956 & \sh 0.948 & \sh 0.954 & \sh 0.946\\
       & 1.0 & 0.880 & 0.589 & 0.508 & 0.493 &&        & 1.0 & \sh 0.957 & \sh 0.953 & \sh 0.956 & \sh 0.947\\ 
\cline{1-6}\cline{8-13}
\end{tabular}
\caption{Simulation results for the conventional estimator $\boldsymbol{\hat\beta}^M$ and the unified estimator $\boldsymbol{\hat\beta}_L$ based on 2,500 Monte Carlo iterations. Results are displayed for the second coordinate $\boldsymbol{\beta}_{01}$ of $\boldsymbol{\beta}_{0}$ and for the sample size $N=1000$. While exact stayers are absent (i.e., $\pi_0=0.0$), the values of $1/\alpha \in \{1,2,3,4\}$ and $\rho \in \{0.5,1.0\}$ are varied across sets of simulations. The order $L=2$ of local polynomials are used for estimation of $\boldsymbol{\delta}$ and $\boldsymbol{\beta}_L$. Displayed statistics are the bias, root mean square error (RMSE), and 95\% coverage.}
\label{tab:simulations_bias_rmse_95_pi0_L2}
\end{table}

The contrast between panels (I) and (II) of Table \ref{tab:simulations_bias_rmse_95_pi0_L2} summarizes the aforementioned observations regarding the bias of the conventional estimator $\boldsymbol{\hat\beta}^M$ and the unbiasedness of the unified estimator $\boldsymbol{\hat\beta}_L$, but now across a larger set of data generating parameters.
Likewise, a comparison between panels (III) and (IV) of Table \ref{tab:simulations_bias_rmse_95_pi0_L2} implies that the root mean square of the unified estimator $\boldsymbol{\hat\beta}_L$ is no worse than that of the conventional estimator $\boldsymbol{\hat\beta}^M$ in general, and also that the former is strictly better as the sample size becomes larger.
Finally, a comparison between panels (V) and (VI) of Table \ref{tab:simulations_bias_rmse_95_pi0_L2} evidences that the undercoverage by the confidence intervals associated with the conventional estimator $\boldsymbol{\hat\beta}^M$ tends to become worse as $1/\alpha$ increases and/or $\rho$ increases, while the confidence intervals associated with the unified estimator $\boldsymbol{\hat\beta}_L$ robustly achieves the desired coverage probability regardless of the values of the data generating parameters, $1/\alpha \in \{1,2,3,4\}$ and $\rho \in \{0.5,1.0\}$.
Table \ref{tab:simulations_bias_rmse_95_pi0_L3} shows a counterpart of Table \ref{tab:simulations_bias_rmse_95_pi0_L2}, but with the degree $L=3$ of local polynomials instead of $L=2$.
The key patterns of the results displayed in Table \ref{tab:simulations_bias_rmse_95_pi0_L3} remain the same as those displayed in Table \ref{tab:simulations_bias_rmse_95_pi0_L2}.

\begin{table}
\centering
\begin{tabular}{|cc|cccc|c|cc|cccc|}
\multicolumn{13}{c}{Conventional Estimator versus Unified Estimator with Large Bandwidths and $L=3$}\\
\multicolumn{13}{c}{}\\
\multicolumn{6}{c}{(I) Bias: $\boldsymbol{\hat\beta}^M$} & \multicolumn{1}{c}{} & \multicolumn{6}{c}{(II) Bias: $\boldsymbol{\hat\beta}_L$}\\
\cline{1-6}\cline{8-13}
\multicolumn{2}{|c|}{       } & \multicolumn{4}{|c|}{$1/\alpha$} &&\multicolumn{2}{|c|}{} & \multicolumn{4}{|c|}{$1/\alpha$}\\
\multicolumn{2}{|c|}{$N=1000$} & 1 & 2 & 3 & 4 && \multicolumn{2}{|c|}{$N=1000$} & 1 & 2 & 3 & 4\\
\cline{1-6}\cline{8-13}
$\rho$ & 0.5 & 0.002 & 0.005 & 0.006 & 0.005 && $\rho$ & 0.5 & \sh 0.000 & \sh 0.000 & \sh 0.000 & \sh 0.000\\
       & 1.0 & 0.005 & 0.011 & 0.012 & 0.011 &&        & 1.0 & \sh 0.000 & \sh 0.000 & \sh 0.000 & \sh 0.001\\ 
\cline{1-6}\cline{8-13}
\multicolumn{13}{c}{}\\
\multicolumn{6}{c}{(III) RMSE: $\boldsymbol{\hat\beta}^M$} & \multicolumn{1}{c}{} & \multicolumn{6}{c}{(IV) RMSE: $\boldsymbol{\hat\beta}_L$}\\
\cline{1-6}\cline{8-13}
\multicolumn{2}{|c|}{       } & \multicolumn{4}{|c|}{$1/\alpha$} &&\multicolumn{2}{|c|}{} & \multicolumn{4}{|c|}{$1/\alpha$}\\
\multicolumn{2}{|c|}{$N=1000$} & 1 & 2 & 3 & 4 && \multicolumn{2}{|c|}{$N=1000$} & 1 & 2 & 3 & 4\\
\cline{1-6}\cline{8-13}
$\rho$ & 0.5 & 0.006 & 0.008 & 0.009 & 0.009 && $\rho$ & 0.5 & \sh 0.006 & \sh 0.007 & \sh 0.008 & \sh 0.009\\
       & 1.0 & 0.010 & 0.013 & 0.014 & 0.013 &&        & 1.0 & \sh 0.008 & \sh 0.007 & \sh 0.009 & \sh 0.009\\ 
\cline{1-6}\cline{8-13}
\multicolumn{13}{c}{}\\
\multicolumn{6}{c}{(V) 95\% Cover: $\boldsymbol{\hat\beta}^M$} & \multicolumn{1}{c}{} & \multicolumn{6}{c}{(VI) 95\% Cover: $\boldsymbol{\hat\beta}_L$}\\
\cline{1-6}\cline{8-13}
\multicolumn{2}{|c|}{       } & \multicolumn{4}{|c|}{$1/\alpha$} &&\multicolumn{2}{|c|}{} & \multicolumn{4}{|c|}{$1/\alpha$}\\
\multicolumn{2}{|c|}{$N=1000$} & 1 & 2 & 3 & 4 && \multicolumn{2}{|c|}{$N=1000$} & 1 & 2 & 3 & 4\\
\cline{1-6}\cline{8-13}
$\rho$ & 0.5 & 0.889 & 0.727 & 0.678 & 0.674 && $\rho$ & 0.5 & \sh 0.951 & \sh 0.947 & \sh 0.947 & \sh 0.937\\
       & 1.0 & 0.804 & 0.430 & 0.374 & 0.386 &&        & 1.0 & \sh 0.954 & \sh 0.951 & \sh 0.945 & \sh 0.951\\  
\cline{1-6}\cline{8-13}
\end{tabular}
\caption{Simulation results for the conventional estimator $\boldsymbol{\hat\beta}^M$ and the unified estimator $\boldsymbol{\hat\beta}_L$ based on 2,500 Monte Carlo iterations. Results are displayed for the second coordinate $\boldsymbol{\beta}_{01}$ of $\boldsymbol{\beta}_{0}$ and for the sample size $N=1000$. While exact stayers are absent (i.e., $\pi_0=0.0$), the values of $1/\alpha \in \{1,2,3,4\}$ and $\rho \in \{0.5,1.0\}$ are varied across sets of simulations. The order $L=3$ of local polynomials are used for estimation of $\boldsymbol{\delta}$ and $\boldsymbol{\beta}_L$. Displayed statistics are the bias, root mean square error (RMSE), and 95\% coverage.}
\label{tab:simulations_bias_rmse_95_pi0_L3}
\end{table}

Summarizing the simulation results so far based on the data generating designs with no exact stayers ($\pi_0=0$), we make the following points.
First, the conventional estimator $\boldsymbol{\hat\beta}^M$ is biased even in the absence of exact stayers, while the unified estimator $\boldsymbol{\hat\beta}_L$ is almost unbiased.
Second, the bias correction by the unified estimator $\boldsymbol{\hat\beta}_L$ is achieved without much sacrificing the variance in comparison with the conventional estimator $\boldsymbol{\hat\beta}^M$.
As a consequence, the root mean square error of the unified estimator $\boldsymbol{\hat\beta}_L$ is no worse than that of the conventional estimator $\boldsymbol{\hat\beta}^M$ in general especially as the sample size become larger.
Third, confidence intervals associated with the conventional estimator $\boldsymbol{\hat\beta}^M$ suffer from severe undercoverage even in the absence of exact stayers, while confidence intervals associated with the unified estimator $\boldsymbol{\hat\beta}_L$ almost perfectly achieves the coverage probability at the desired nominal probability.
We ran a lot more simulations whose results are not presented here for the purpose of conserving space, but the observed patterns summarized above remain to hold under all of the alternative simulation settings.

\subsubsection{With Exact Stayers}\label{sec:simulation_results_stayers}

We next turn to the set of simulation results based on the data generating processes with a mass of exact stayers, i.e., $\pi_0 > 0$.
In this setting, the conventional estimator $\boldsymbol{\hat\beta}^M$ that uses the information of only the movers is biased, and hence the consistency of $\boldsymbol{\hat\beta}^M$ is out of the question.
Our unified estimator $\boldsymbol{\hat\beta}_L$ which generalizes the mixture estimator of \citet[][Section 3.2]{graham/powell:2012} continues to be asymptotically unbiased and consistent.
In this light, our discussions hereafter disregard comparisons with $\boldsymbol{\hat\beta}^M$, and instead focus on the performance of the unified estimator $\boldsymbol{\hat\beta}_L$.

Table \ref{tab:simulations_bias_rmse_95_stayer} summarizes simulation results under data generating designs with $\pi_0 = 0.1$ (left column) and $\pi_0 = 0.2$ (right column), allowing for 10\% and 20\%, respectively, of exact stayers.
Displayed statistics are the bias (panels (I)--(II)), root mean square error (panels (III)--(IV)), and 95\% coverage (panels (V)--(VI)) based on the unified estimator $\boldsymbol{\hat\beta}_L$ with the degree $L=2$ of local polynomials.
The results are displayed across combinations, $\pi_0 \in \{0.1,0.2\}$, $\alpha \in \{1,2,3,4\}$, and $\rho \in \{0.5,1.0\}$, of the data generating parameters and sample size $N=1000$.

\begin{table}
\centering
\begin{tabular}{|cc|cccc|c|cc|cccc|}
\multicolumn{13}{c}{Robust Performance of the Unified Estimator in the Presence of Exact Stayers}\\
\multicolumn{13}{c}{}\\
\multicolumn{6}{c}{(I) Bias: $\pi_0=0.1$} & \multicolumn{1}{c}{} & \multicolumn{6}{c}{(II) Bias: $\pi_0=0.2$}\\
\cline{1-6}\cline{8-13}
\multicolumn{2}{|c|}{       } & \multicolumn{4}{|c|}{$1/\alpha$} &&\multicolumn{2}{|c|}{} & \multicolumn{4}{|c|}{$1/\alpha$}\\
\multicolumn{2}{|c|}{$N=1000$} & 1 & 2 & 3 & 4 && \multicolumn{2}{|c|}{$N=1000$} & 1 & 2 & 3 & 4\\
\cline{1-6}\cline{8-13}
$\rho$ & 0.5 & \sh 0.000 & \sh 0.000 & \sh 0.000 & \sh 0.000 && $\rho$ & 0.5 & \sh 0.000 & \sh 0.000 & \sh 0.000 & \sh 0.000\\
       & 1.0 & \sh 0.001 & \sh 0.001 & \sh 0.001 & \sh 0.000 &&        & 1.0 & \sh 0.001 & \sh 0.001 & \sh 0.000 & \sh 0.000\\  
\cline{1-6}\cline{8-13}
\multicolumn{13}{c}{}\\
%
\multicolumn{13}{c}{}\\
\multicolumn{6}{c}{(III) RMSE: $\pi_0=0.1$} & \multicolumn{1}{c}{} & \multicolumn{6}{c}{(IV) RMSE: $\pi_0=0.2$}\\
\cline{1-6}\cline{8-13}
\multicolumn{2}{|c|}{       } & \multicolumn{4}{|c|}{$1/\alpha$} &&\multicolumn{2}{|c|}{} & \multicolumn{4}{|c|}{$1/\alpha$}\\
\multicolumn{2}{|c|}{$N=1000$} & 1 & 2 & 3 & 4 && \multicolumn{2}{|c|}{$N=1000$} & 1 & 2 & 3 & 4\\
\cline{1-6}\cline{8-13}
$\rho$ & 0.5 & \sh 0.007 & \sh 0.007 & \sh 0.009 & \sh 0.010 && $\rho$ & 0.5 & \sh 0.009 & \sh 0.010 & \sh 0.011 & \sh 0.012\\
       & 1.0 & \sh 0.007 & \sh 0.008 & \sh 0.009 & \sh 0.010 &&        & 1.0 & \sh 0.010 & \sh 0.010 & \sh 0.011 & \sh 0.011\\  
\cline{1-6}\cline{8-13}
\multicolumn{13}{c}{}\\
\multicolumn{6}{c}{(V) 95\% Cover: $\pi_0=0.1$} & \multicolumn{1}{c}{} & \multicolumn{6}{c}{(IV) 95\% Cover: $\pi_0=0.2$}\\
\cline{1-6}\cline{8-13}
\multicolumn{2}{|c|}{       } & \multicolumn{4}{|c|}{$1/\alpha$} &&\multicolumn{2}{|c|}{} & \multicolumn{4}{|c|}{$1/\alpha$}\\
\multicolumn{2}{|c|}{$N=1000$} & 1 & 2 & 3 & 4 && \multicolumn{2}{|c|}{$N=1000$} & 1 & 2 & 3 & 4\\
\cline{1-6}\cline{8-13}
$\rho$ & 0.5 & \sh 0.945 & \sh 0.950 & \sh 0.946 & \sh 0.944 && $\rho$ & 0.5 & \sh 0.931 & \sh 0.936 & \sh 0.944 & \sh 0.942\\
       & 1.0 & \sh 0.948 & \sh 0.952 & \sh 0.948 & \sh 0.940 &&        & 1.0 & \sh 0.937 & \sh 0.942 & \sh 0.943 & \sh 0.941\\   
\cline{1-6}\cline{8-13}
\end{tabular}
\caption{Simulation results for the unified estimator $\boldsymbol{\hat\beta}_L$ in the presence of exact stayers based on 2,500 Monte Carlo iterations. Results are displayed for the second coordinate $\boldsymbol{\beta}_{01}$ of $\boldsymbol{\beta}_{0}$ and for the sample size $N=1000$. The values of $\pi \in \{0.1,0.2\}$, $1/\alpha \in \{1,2,3,4\}$, and $\rho \in \{0.5,1.0\}$ are varied across sets of simulations. The order $L=2$ of local polynomials are used for estimation of $\boldsymbol{\delta}$ and $\boldsymbol{\beta}_L$. Displayed statistics are the bias, root mean square error (RMSE), and 95\% coverage.}
\label{tab:simulations_bias_rmse_95_stayer}
\end{table}

We make the following observations in Table \ref{tab:simulations_bias_rmse_95_stayer}.
First, the estimator remains unbiased even in the non-atomic presence of exact stayers (panels (I)--(II)).
The magnitude of the bias is invariant from the sample size.
Second, as a consequence of the bias reduction, the confidence intervals still achieve coverage frequencies that are close to the nominal probability of 95\% (panels (V)--(VI)).
The coverage accuracy improves as the sample size increases.
Third, the root mean square error becomes larger as the proportion $\pi_0$ of exact stayers increases (panels (III)--(IV)).
Specifically, for instance, observe that the numbers gradually increase as we move from Table \ref{tab:simulations_bias_rmse_95_pi0_L2} (IV) in which $\pi_0=0.0$ to Table \ref{tab:simulations_bias_rmse_95_stayer} (III) in which $\pi_0=0.1$, and then to Table \ref{tab:simulations_bias_rmse_95_stayer} (IV) in which $\pi_0=0.2$.
This result concerning the root mean square error is natural, because the mass of stayers fails to provide data variations and a larger proportion $\pi_0$ means more mass of such stayers exhibiting no data variations.
Finally, we remark that we ran more simulations whose results are not presented here for the purpose of conserving space, but the observed patterns summarized above remain to hold under all of the alternative simulation settings.

\section{The Robust Method of Estimation and Inference with $T>p$}\label{section:T>p}

Thus far, we have focused on the case with $T=p$.
In this section, we modify the methodology in the previous sections to the case of $T>p$. 
In this framework, we redefine $D$, $\mathbf{D}_{1:L}$,  $\mathbf{m}$, $\boldsymbol{\hat\theta}$, $\boldsymbol{\hat\beta}_L$, $ \boldsymbol{\hat\delta}$, $\boldsymbol{\hat\gamma}$, $\mathbf{\hat{h}}$, $\boldsymbol{\hat\zeta}$ differently from the ones in the previous sections. 
In the current section, we only introduce an estimator and its estimated influence function. 
We present formal theoretical results to support this procedure in Appendix \ref{supp:sec:main}. 

The correlated random coefficient panel data model is the same as in the previous sections: $Y_t=\mathbf{X}_t'b_t(A,U_t)$ for $t=1,\ldots,T$. The short-hand notations such as $\mathbf{Y}$, $\mathbf{X}$, $\mathbf{W}$, $\boldsymbol{\beta}(\mathbf{X})$, $\boldsymbol{\beta}$, and $\boldsymbol{\delta}$, also remain the same as in the previous sections. 
Using a matrix $\mathbf{R}$ that extracts time effects as before, suppose that the parameter of interest is $\boldsymbol{\theta}=\boldsymbol{\beta}+\mathbf{R}\boldsymbol{\delta}$. 
The parameter $(\boldsymbol{\beta}',\boldsymbol{\delta}')'$ can be identified based on the following implications from Assumption \ref{assn:GP1.1}: 
\begin{align}
&E[\boldsymbol{1}\{D>h_N\}\mathbf{W}\mathbf{M}_{\mathbf{X}}\mathbf{Y}]=E[\boldsymbol{1}\{D>h_N\}\mathbf{W}'\mathbf{M}_{\mathbf{X}}\mathbf{W}]\boldsymbol{\delta}
\qquad\text{and}
\label{supp:eq:base_ID1}
\\
&E[(\mathbf{X}'\mathbf{X})^\ast\mathbf{X}'(\mathbf{Y}-\mathbf{W}\boldsymbol{\delta})\mid\mathbf{X}]=D\boldsymbol{\beta}(\mathbf{X})\mbox{ with }\boldsymbol{\beta}=E[\boldsymbol{\beta}(\mathbf{X})],
\label{supp:eq:base_ID}
\end{align}
where $D=\mathrm{det}(\mathbf{X}'\mathbf{X})$, $\mathbf{M}_{\mathbf{X}}=\mathbf{I}-\mathbf{X}(\mathbf{X}'\mathbf{X})^{-1}\mathbf{X}'$ and $(\mathbf{X}'\mathbf{X})^\ast$ denotes the adjoint of $\mathbf{X}'\mathbf{X}$. 

We can estimate the parameter $\boldsymbol{\theta}=\boldsymbol{\beta}+\mathbf{R}\boldsymbol{\delta}$  by $\boldsymbol{\hat\theta}=\boldsymbol{\hat\beta}_L+\mathbf{R}\boldsymbol{\hat\delta}$ with 
\begin{align*}
\boldsymbol{\hat\delta}
=&
E_N[\boldsymbol{1}\{D>h_N\}\mathbf{W}'\mathbf{M}_{\mathbf{X}}\mathbf{W}]^{-1}E_N[\boldsymbol{1}\{D>h_N\}\mathbf{W}\mathbf{M}_{\mathbf{X}}\mathbf{Y}]
\qquad\text{and }
\\
\boldsymbol{\hat\beta}_L
=&
E_N[\boldsymbol{1}\{D>h_N\}D^{-1}(\mathbf{X}'\mathbf{X})^\ast\mathbf{X}'(\mathbf{Y}-\mathbf{W}\boldsymbol{\hat\delta})]+\boldsymbol{\hat\gamma}\mathbf{\hat{h}},
\end{align*}
where  $\mathbf{\hat{h}}=E_N[\boldsymbol{1}\{D\leq h_N\}(\begin{array}{cccc}1&\cdots&D^{L-1}\end{array})']$, $\mathbf{D}_{1:L}=\boldsymbol{1}\{D\leq h_N\}(\begin{array}{cccc}D&\cdots&D^L\end{array})'$,
and 
$$
\boldsymbol{\hat\gamma}
=
E_N\left[(\mathbf{X}'\mathbf{X})^\ast\mathbf{X}'(\mathbf{Y}-\mathbf{W}\boldsymbol{\hat\delta})\mathbf{D}_{1:L}'\right]E_N\left[\mathbf{D}_{1:L}\mathbf{D}_{1:L}'\right]^{-1}.
$$
The influence function for $\boldsymbol{\hat\theta}$ is estimated by 
\begin{align*}
\boldsymbol{\hat\zeta}
=&
\boldsymbol{1}\{D>h_N\}D^{-1}(\mathbf{X}'\mathbf{X})^\ast\mathbf{X}'(\mathbf{Y}-\mathbf{W}\boldsymbol{\hat\delta})-E_N[\boldsymbol{1}\{D>h_N\}D^{-1}(\mathbf{X}'\mathbf{X})^\ast\mathbf{X}'(\mathbf{Y}-\mathbf{W}\boldsymbol{\hat\delta})]
\\&
+
((\mathbf{X}'\mathbf{X})^\ast\mathbf{X}'(\mathbf{Y}-\mathbf{W}\boldsymbol{\hat\delta})-\boldsymbol{\hat\gamma}\mathbf{D}_{1:L})\mathbf{D}_{1:L}'E_N\left[\mathbf{D}_{1:L}\mathbf{D}_{1:L}'\right]^{-1}\mathbf{\hat{h}}
\\&
+
\mathbf{\hat{Q}}\mathbf{\hat{V}}^{-1}\boldsymbol{1}\{D>h_N\}\mathbf{W}\mathbf{M}_{\mathbf{X}}(\mathbf{Y}-\mathbf{W}\boldsymbol{\hat\delta}),
\end{align*}
where 
\begin{align*}
\mathbf{\hat{V}}
=&
E_N[\boldsymbol{1}\{D>h_N\}\mathbf{W}'\mathbf{M}_{\mathbf{X}}\mathbf{W}],\mbox{ and }
\\
\mathbf{\hat{Q}}
=&
\mathbf{R}-E_N\left[\left(\boldsymbol{1}\{D>h_N\}D^{-1}+\mathbf{D}_{1:L}'E_N\left[\mathbf{D}_{1:L}\mathbf{D}_{1:L}'\right]^{-1}\mathbf{\hat{h}}\right)(\mathbf{X}'\mathbf{X})^\ast\mathbf{X}'\mathbf{W}\right].
\end{align*}

With these notations, we establish the asymptotic normality
$$
\sqrt{N}E_N\left[\boldsymbol{\hat\zeta}\boldsymbol{\hat\zeta}'\right]^{-1/2}(\boldsymbol{\hat\theta}-\boldsymbol{\theta})\rightarrow_d\mathcal{N}\left(0,\mathbf{I}_{p}\right).
$$ 
See Theorem \ref{supp:theorem_asynormal} in Appendix \ref{supp:sec:main} for a formal statement of this result.

\section{Conclusion}\label{sec:conclusion}

Panel data often contain units with no within-variations (stayers) and units with little within-variations (slow movers). Figure \ref{fig:PSID_17_19} demonstrates a case in point in the log of total family income in the U.S. Panel Survey of Income Dynamics (PSID) between 2017 and 2019, where there are many stayers and many slow movers. In the presence of many slow movers, as in Figure \ref{fig:PSID_17_19}, conventional econometric methods can fail to work. In this paper, we propose a novel method of robust inference for the average partial effects in correlated random coefficient models robustly across various distributions of within-variations, including the cases with many stayers and/or many slow movers. The extent of this robustness encompasses all of the eight cases illustrated in Figure \ref{fig:eight_cases}, and a single method of estimation and inference proposed in this paper can handle all these eight cases in a unified manner. In contrast, the two existing methods focus separately on the case illustrated in Figure \ref{fig:eight_cases} (a) and the case illustrated in Figure \ref{fig:eight_cases} (b). In addition to the robustness property, our proposed method enjoys smaller biases and hence improved accuracy in inference compared to existing methods. Simulation studies demonstrate theoretical claims about these preferred properties of the method proposed in this paper. Specifically, the proposed method enjoys smaller biases and more accurate coverage frequencies than the conventional method robustly across various data generating processes. 
In our simulation studies, our proposed 95\% confidence interval achieves 93-96\% coverage frequencies, as opposed to 37-93\% coverage frequencies for the conventional method. 
While most parts of the paper focus on the baseline case in which $T=p$, we also provide an extension to the case with $T>p$.

\bibliography{mybib}

\begin{thebibliography}{23}
\newcommand{\enquote}[1]{``#1''}
\expandafter\ifx\csname natexlab\endcsname\relax\def\natexlab#1{#1}\fi

\bibitem[\protect\citeauthoryear{Altonji and Matzkin}{Altonji and
  Matzkin}{2005}]{altonji2005cross}
\textsc{Altonji, J.~G. and R.~L. Matzkin} (2005): \enquote{Cross section and
  panel data estimators for nonseparable models with endogenous regressors,}
  \emph{Econometrica}, 73, 1053--1102.

\bibitem[\protect\citeauthoryear{Arellano and Bonhomme}{Arellano and
  Bonhomme}{2012}]{arellano2012identifying}
\textsc{Arellano, M. and S.~Bonhomme} (2012): \enquote{Identifying
  distributional characteristics in random coefficients panel data models,}
  \emph{The Review of Economic Studies}, 79, 987--1020.

\bibitem[\protect\citeauthoryear{Armstrong and Koles{\'a}r}{Armstrong and
  Koles{\'a}r}{2020}]{armstrong2020simple}
\textsc{Armstrong, T.~B. and M.~Koles{\'a}r} (2020): \enquote{Simple and honest
  confidence intervals in nonparametric regression,} \emph{Quantitative
  Economics}, 11, 1--39.

\bibitem[\protect\citeauthoryear{Calonico, Cattaneo, and Titiunik}{Calonico
  et~al.}{2014}]{calonico2014robust}
\textsc{Calonico, S., M.~D. Cattaneo, and R.~Titiunik} (2014): \enquote{Robust
  nonparametric confidence intervals for regression-discontinuity designs,}
  \emph{Econometrica}, 82, 2295--2326.

\bibitem[\protect\citeauthoryear{Chamberlain}{Chamberlain}{1982}]{chamberlain:1982}
\textsc{Chamberlain, G.} (1982): \enquote{Multivariate regression models for
  panel data,} \emph{Journal of Econometrics}, 18, 5--46.

\bibitem[\protect\citeauthoryear{Chamberlain}{Chamberlain}{1992}]{chamberlain:1992}
---\hspace{-.1pt}---\hspace{-.1pt}--- (1992): \enquote{Efficiency bounds for
  semiparametric regression,} \emph{Econometrica}, 60, 567--596.

\bibitem[\protect\citeauthoryear{Chaudhuri and Hill}{Chaudhuri and
  Hill}{2016}]{chaudhuri/hill:2016}
\textsc{Chaudhuri, S. and J.~B. Hill} (2016): \enquote{Heavy tail robust
  estimation and inference for average treatment effects,} Working paper.

\bibitem[\protect\citeauthoryear{Chernozhukov, Fern{\'a}ndez-Val, Hahn, and
  Newey}{Chernozhukov et~al.}{2013}]{chernozhukov2013average}
\textsc{Chernozhukov, V., I.~Fern{\'a}ndez-Val, J.~Hahn, and W.~Newey} (2013):
  \enquote{Average and quantile effects in nonseparable panel models,}
  \emph{Econometrica}, 81, 535--580.

\bibitem[\protect\citeauthoryear{Chernozhukov, Fern\'andez-Val, Hoderlein,
  Holzmann, and Newey}{Chernozhukov
  et~al.}{2015}]{chernozhukov2015nonparametric}
\textsc{Chernozhukov, V., I.~Fern\'andez-Val, S.~Hoderlein, H.~Holzmann, and
  W.~Newey} (2015): \enquote{Nonparametric identification in panels using
  quantiles,} \emph{Journal of Econometrics}, 188, 378--392.

\bibitem[\protect\citeauthoryear{Fern{\'a}ndez-Val and Lee}{Fern{\'a}ndez-Val
  and Lee}{2013}]{fernandez2013panel}
\textsc{Fern{\'a}ndez-Val, I. and J.~Lee} (2013): \enquote{Panel data models
  with nonadditive unobserved heterogeneity: Estimation and inference,}
  \emph{Quantitative Economics}, 4, 453--481.

\bibitem[\protect\citeauthoryear{Graham, Hahn, Poirier, and Powell}{Graham
  et~al.}{2017}]{bryan/hahn/poirier/powell:2017}
\textsc{Graham, B.~S., J.~Hahn, A.~Poirier, and J.~L. Powell} (2017):
  \enquote{Quantile regression with panel data,} \emph{Journal of
  Econometrics}.

\bibitem[\protect\citeauthoryear{Graham and Powell}{Graham and
  Powell}{2012}]{graham/powell:2012}
\textsc{Graham, B.~S. and J.~L. Powell} (2012): \enquote{Identification and
  estimation of average partial effects in ``irregular'' correlated random
  coefficient panel data models,} \emph{Econometrica}, 80, 2105--2152.

\bibitem[\protect\citeauthoryear{Hoderlein and White}{Hoderlein and
  White}{2012}]{hoderlein/white:2012}
\textsc{Hoderlein, S. and H.~White} (2012): \enquote{Nonparametric
  identification in nonseparable panel data models with generalized fixed
  effects,} \emph{Journal of Econometrics}, 168, 300--314.

\bibitem[\protect\citeauthoryear{Hsiao}{Hsiao}{2014}]{hsiao2014analysis}
\textsc{Hsiao, C.} (2014): \emph{Analysis of Panel Data}, Econometric Society
  Monographs, Cambridge University Press, 3 ed.

\bibitem[\protect\citeauthoryear{Hsiao and Pesaran}{Hsiao and
  Pesaran}{2008}]{hsiao/pesaran:2008}
\textsc{Hsiao, C. and M.~H. Pesaran} (2008): \enquote{Random coefficient
  models,} in \emph{The Econometrics of Panel Data: Fundamentals and Recent
  Developments in Theory and Practice}, ed. by L.~M\'aty\'as and P.~Sevestre,
  Springer.

\bibitem[\protect\citeauthoryear{Hsiao, Pesaran, and Tahmiscioglu}{Hsiao
  et~al.}{1999}]{hsiao_pesaran_tahmiscioglu_1999}
\textsc{Hsiao, C., M.~H. Pesaran, and A.~K. Tahmiscioglu} (1999):
  \enquote{Bayes estimation of short-run coefficients in dynamic panel data
  models,} in \emph{Analysis of Panels and Limited Dependent Variable Models},
  ed. by C.~Hsiao, M.~H. Pesaran, K.~Lahiri, and L.~F. Lee, Cambridge
  University Press, 268--296.

\bibitem[\protect\citeauthoryear{Murtazashvili and Wooldridge}{Murtazashvili
  and Wooldridge}{2008}]{murtazashvili2008fixed}
\textsc{Murtazashvili, I. and J.~M. Wooldridge} (2008): \enquote{Fixed effects
  instrumental variables estimation in correlated random coefficient panel data
  models,} \emph{Journal of Econometrics}, 142, 539--552.

\bibitem[\protect\citeauthoryear{Pesaran and Smith}{Pesaran and
  Smith}{1995}]{pesaran1995estimating}
\textsc{Pesaran, M.~H. and R.~Smith} (1995): \enquote{Estimating long-run
  relationships from dynamic heterogeneous panels,} \emph{Journal of
  econometrics}, 68, 79--113.

\bibitem[\protect\citeauthoryear{Resnick}{Resnick}{2007}]{resnick2007heavy}
\textsc{Resnick, S.~I.} (2007): \emph{Heavy-tail phenomena: probabilistic and
  statistical modeling}, Springer Science \& Business Media.

\bibitem[\protect\citeauthoryear{Schennach}{Schennach}{2020}]{schennach2020bias}
\textsc{Schennach, S.~M.} (2020): \enquote{A bias bound approach to
  non-parametric inference,} \emph{The Review of Economic Studies}, 87,
  2439--2472.

\bibitem[\protect\citeauthoryear{Swamy}{Swamy}{1970}]{swamy1970efficient}
\textsc{Swamy, P.~A.} (1970): \enquote{Efficient inference in a random
  coefficient regression model,} \emph{Econometrica}, 311--323.

\bibitem[\protect\citeauthoryear{Verdier}{Verdier}{2020}]{verdier2020average}
\textsc{Verdier, V.} (2020): \enquote{Average treatment effects for stayers
  with correlated random coefficient models of panel data,} \emph{Journal of
  Applied Econometrics}, 35, 917--939.

\bibitem[\protect\citeauthoryear{Wooldridge}{Wooldridge}{2005}]{wooldridge2005fixed}
\textsc{Wooldridge, J.~M.} (2005): \enquote{Fixed-effects and related
  estimators for correlated random-coefficient and treatment-effect panel data
  models,} \emph{Review of Economics and Statistics}, 87, 385--390.

\end{thebibliography}

\appendix
\section{Proofs} 

This section collects proofs of the main results together with auxiliary lemmas and their proofs.

\subsection{Proof of Theorem \ref{bias_comp_higher}} 

\begin{proof}
Note that, by Assumption \ref{assn:more_primitiv_higher}, there is a constant $C < \infty$ such that  
$$
\left\|\mathbf{m}(u)-\sum_{l=0}^{L}\mathbf{m}^{(l)}(0)\frac{u^{l}}{l!}\right\|\leq C|u|^{L+1}
$$
for every $u$ in a neighborhood of zero. 
Using $\mathbf{m}(u)=uE[\boldsymbol{\beta}(\mathbf{X})\mid D=u]$, we can decompose $\mathbf{b}_L$ and $\boldsymbol{\beta}$ as
\begin{align*}
\mathbf{b}_L
=&
E[\mathbf{X}^{-1}(\mathbf{Y}-\mathbf{W}\boldsymbol{\delta})\boldsymbol{1}\{|D|>h_N\}]
+
E[\boldsymbol{\beta}(\mathbf{X})\boldsymbol{1}\{D=0\}]
+
E\left[\sum_{l=1}^L\mathbf{m}^{(l)}(0)\frac{D^{l-1}}{l!}\boldsymbol{1}\{0<|D|\leq h_N\}\right],
\quad\text{and}\\
\boldsymbol{\beta}
=& 
E[\mathbf{X}^{-1}(\mathbf{Y}-\mathbf{W}\boldsymbol{\delta})\boldsymbol{1}\{|D|>h_N\}]
+
E[\boldsymbol{\beta}(\mathbf{X})\boldsymbol{1}\{D=0\}]+
E\left[\frac{\mathbf{m}(D)}{D}\boldsymbol{1}\{0<|D|\leq h_N\}\right].
\end{align*}
Taking the difference and noting that $\phi$ is the conditional density of the absolutely continuous component of the distribution of $D$ by Assumption \ref{assn:GP_around0}, we have
$$
\left\|\mathbf{b}_L-\boldsymbol{\beta}\right\|
\leq
\int_{-h_N}^{h_N}\left\|\frac{\mathbf{m}(u)-\mathbf{m}(0)-\sum_{l=1}^L\mathbf{m}^{(l)}(0)\frac{u^{l}}{l!}}{u}\right\|\phi(u)du
\leq
C\int_{-h_N}^{h_N}|u|^{L}\phi(u)du,
$$
where the first inequality uses $\mathbf{m}(0)=0$. 
\end{proof}

\subsection{Proof of Theorem \ref{theorem_asynormal}} 

Before proceeding with a proof of Theorem \ref{theorem_asynormal}, we first introduce short-hand notations to simplify the proof. 
When we derive the influence function representation, we control the remainder term as $o_p\left(\sqrt{\frac{\pi_h^2}{Nh_N^2\pi_{h\setminus 0}}}\right)$ with $\pi_h=P(|D|\leq h_N)$ and $\pi_{h\setminus 0}=P(0<|D|\leq h_N)$. 
For $\left(\boldsymbol{\hat\delta},\boldsymbol{\hat\gamma}\right)$, the design matrices, $E_N\left[\mathbf{D}_{0:L}\mathbf{D}_{0:L}'\right]$ and $E_N\left[\mathbf{D}_{1:L}\mathbf{D}_{1:L}'\right]$, are used. As in the standard local polynomial regression, the inverses of these design matrices are not bounded, so we normalize them to ensure that the inverses are bounded. The resulting matrices are 
\begin{align*}
\boldsymbol{\hat\Omega}=&\frac{1}{\pi_h}\boldsymbol{\Psi}_{0:L}^{-1}E_N\left[\mathbf{D}_{0:L}\mathbf{D}_{0:L}'\right]\boldsymbol{\Psi}_{0:L}^{-1}
\qquad\mbox{ and }\\
\boldsymbol{\hat\Pi}=&\frac{1}{\pi_{h\setminus 0}}\boldsymbol{\Psi}_{1:L}^{-1}E_N\left[\mathbf{D}_{1:L}\mathbf{D}_{1:L}'\right]\boldsymbol{\Psi}_{1:L}^{-1},
\end{align*}
where $\boldsymbol{\Psi}_{0:L}=\mathrm{diag}\left(\begin{array}{cccc}1&h_N&\cdots&h_N^L\end{array}\right)$ and $\boldsymbol{\Psi}_{1:L}=\mathrm{diag}\left(\begin{array}{cccc}h_N&\cdots&h_N^L\end{array}\right)$.
Their population counterparts are 
\begin{align*}
\boldsymbol{\Omega}=&\frac{1}{\pi_h}\boldsymbol{\Psi}_{0:L}^{-1}E\left[\mathbf{D}_{0:L}\mathbf{D}_{0:L}'\right]\boldsymbol{\Psi}_{0:L}^{-1}
\qquad\text{and}\\
\boldsymbol{\Pi}=&\frac{1}{\pi_{h\setminus 0}}\boldsymbol{\Psi}_{1:L}^{-1}E\left[\mathbf{D}_{1:L}\mathbf{D}_{1:L}'\right]\boldsymbol{\Psi}_{1:L}^{-1}.
\end{align*}
In the proof, we characterize the influence function for $\boldsymbol{\hat\beta}_L+\mathbf{R}\boldsymbol{\hat\delta}$ by 
$$
\boldsymbol{\zeta}=\boldsymbol{\xi}_1+\frac{1}{\pi_{h\setminus 0}}\boldsymbol{\xi}_2\boldsymbol{\Pi}^{-1}\boldsymbol{\Psi}_{1:L}^{-1}\mathbf{h}+\frac{1}{\pi_h}\mathbf{Q}\mathbf{V}^{-1}\boldsymbol{\xi}_3\boldsymbol{\Omega}^{-1}\mathbf{e}_1,
$$ 
where 
\begin{eqnarray*}
\mathbf{V}
&=&
E\left[\left(\mathbf{D}_{0:L}'E\left[\mathbf{D}_{0:L}\mathbf{D}_{0:L}'\right]^{-1}\mathbf{e}_1\right)(\mathbf{X}^\ast\mathbf{W})'\mathbf{X}^\ast\mathbf{W}\right],
\\
\mathbf{Q}
&=&
\mathbf{R}-E\left[\left(\boldsymbol{1}\{|D|>h_N\}D^{-1}+\mathbf{D}_{1:L}'E\left[\mathbf{D}_{1:L}\mathbf{D}_{1:L}'\right]^{-1}\mathbf{h}\right)\mathbf{X}^\ast\mathbf{W}\right],
\\
\boldsymbol{\xi}_1
&=&
\boldsymbol{1}\{|D|>h_N\}D^{-1}\mathbf{X}^\ast(\mathbf{Y}-\mathbf{W}\boldsymbol{\delta})-E[\boldsymbol{1}\{|D|>h_N\}D^{-1}\mathbf{X}^\ast(\mathbf{Y}-\mathbf{W}\boldsymbol{\delta})],
\\
\boldsymbol{\xi}_2
&=&
(\mathbf{X}^\ast(\mathbf{Y}-\mathbf{W}\boldsymbol{\delta})-\boldsymbol{\gamma}\mathbf{D}_{1:L})\mathbf{D}_{1:L}'\boldsymbol{\Psi}_{1:L}^{-1},
\qquad\text{and}
\\
\boldsymbol{\xi}_3
&=&
(\mathbf{X}^\ast\mathbf{W})'\mathbf{X}^\ast(\mathbf{Y}-\mathbf{W}\boldsymbol{\delta})\mathbf{D}_{0:L}'\boldsymbol{\Psi}_{0:L}^{-1}.
\end{eqnarray*}
We also use the sample counterparts of $\boldsymbol{\xi}_1$, $\boldsymbol{\xi}_2$ and $\boldsymbol{\xi}_3$:
\begin{eqnarray*}
\boldsymbol{\hat\xi}_1
&=&
\boldsymbol{1}\{|D|>h_N\}D^{-1}\mathbf{X}^\ast(\mathbf{Y}-\mathbf{W}\boldsymbol{\hat\delta})-E_N[\boldsymbol{1}\{|D|>h_N\}D^{-1}\mathbf{X}^\ast(\mathbf{Y}-\mathbf{W}\boldsymbol{\hat\delta})],
\\
\boldsymbol{\hat\xi}_2
&=&
(\mathbf{X}^\ast(\mathbf{Y}-\mathbf{W}\boldsymbol{\hat\delta})-\boldsymbol{\hat\gamma}\mathbf{D}_{1:L})\mathbf{D}_{1:L}'\boldsymbol{\Psi}_{1:L}^{-1},
\qquad\text{and}
\\
\boldsymbol{\hat\xi}_3
&=&
(\mathbf{X}^\ast\mathbf{W})'\mathbf{X}^\ast(\mathbf{Y}-\mathbf{W}\boldsymbol{\hat\delta})\mathbf{D}_{0:L}'\boldsymbol{\Psi}_{0:L}^{-1}.
\end{eqnarray*}
In the proof, we use the operator norm (based on the Euclidean norms) as the norm for matrices.
All the asymptotic results in the proof hold as $h_N\rightarrow 0$ and $N\rightarrow\infty$.   

It is worthwhile discussing $(\pi_{h\setminus 0},\pi_h)$. 
For any sample size, we have $\pi_{h\setminus 0}\leq \pi_h$. 
Also, since $\phi$ is bounded away from zero, the term $\pi_{h\setminus 0}/h_N$ is also bounded away from zero. 

With these notations introduced, we are now going to show Theorem \ref{theorem_asynormal}. 
In the proof, we use the auxiliary lemmas collected in Appendix \ref{sec:aux_results}. 

\bigskip
\begin{proof}[Proof of Theorem \ref{theorem_asynormal}]
First, we are going to show 
\begin{equation}\label{eq:var_estimation}
E_N\left[\boldsymbol{\hat\zeta}\boldsymbol{\hat\zeta}'\right]^{-1}Var(\boldsymbol{\zeta})=\mathbf{I}_{p}+o_p(1).
\end{equation}
By  \eqref{assn:bandwidth} and Lemmas \ref{lemma:zeta_moments} and \ref{lemma:xi_estimates}, 
\begin{eqnarray*}
E_N[\|\boldsymbol{\hat\zeta}-\boldsymbol{\zeta}\|^2]^{1/2}
&=&
o_p\left(\sqrt{\frac{\pi_h^{3+\epsilon}}{h_N^2\pi_{h\setminus 0}}}\right),
\\
E_N[\|\boldsymbol{\zeta}\|^2]^{1/2}
&=&
O_p\left(\sqrt{\frac{\pi_h^{1-\epsilon}}{h_N^2}}\right),
\\
(E_N-E)\left[\boldsymbol{\zeta}\boldsymbol{\zeta}'\right]
&=&
o_p\left(\frac{\pi_h^2}{h_N^2\pi_{h\setminus 0}}\right),
\qquad\text{and}
\\
E[\boldsymbol{\zeta}]
&=&
o\left(1\right),
\end{eqnarray*}
where we can take a sufficiently small $\epsilon>0$. 
Therefore, 
\begin{eqnarray*}
\|E_N\left[\boldsymbol{\hat\zeta}\boldsymbol{\hat\zeta}'\right]-Var(\boldsymbol{\zeta})\|
&\leq&
E_N[\|\boldsymbol{\hat\zeta}-\boldsymbol{\zeta}\|^2]+2E_N[\|\boldsymbol{\hat\zeta}-\boldsymbol{\zeta}\|^2]^{1/2}E_N[\|\boldsymbol{\zeta}\|^2]^{1/2}+\|(E_N-E)\left[\boldsymbol{\zeta}\boldsymbol{\zeta}'\right]\|+\|E[\boldsymbol{\zeta}]\|^2,
\\
&=&
o_p\left(\frac{\pi_h^2}{h_N^2\pi_{h\setminus 0}}\right).
\end{eqnarray*}
By Lemma \ref{lemma:lowerconrate}, we have $Var(\boldsymbol{\zeta})^{-1}=O(\frac{h_N^2\pi_{h\setminus 0}}{\pi_h^2})$, and so 
$$
E_N\left[\boldsymbol{\hat\zeta}\boldsymbol{\hat\zeta}'\right]^{-1}Var(\boldsymbol{\zeta})-\mathbf{I}_{p}
=
-\left(Var(\boldsymbol{\zeta})+\left(E_N\left[\boldsymbol{\hat\zeta}\boldsymbol{\hat\zeta}'\right]-Var(\boldsymbol{\zeta})\right)\right)^{-1}\left(E_N\left[\boldsymbol{\hat\zeta}\boldsymbol{\hat\zeta}'\right]-Var(\boldsymbol{\zeta})\right)
=
o_p(1).
$$

Second, we are going to show 
\begin{equation}\label{eq:influeFunct_estimation}
\boldsymbol{\hat\theta}-\boldsymbol{\theta}-(E_N-E)[\boldsymbol{\zeta}]=o_p\left(\sqrt{\frac{\pi_h^2}{Nh_N^2\pi_{h\setminus 0}}}\right).
\end{equation}
By Lemmas \ref{lemma:influe_eror4}, \ref{lemma:Karamata's1}, \ref{lemma:Omega_property2}, \ref{lemma:Omega_property}, \ref{lemma:M_inve}, \ref{lemma:Sigma_rate}, \ref{lemma:bias_bound1}, \ref{lemma:bias_bound2}, and Theorem \ref{bias_comp_higher}, 
\begin{eqnarray*}
&&
\|\boldsymbol{\hat\theta}-\boldsymbol{\theta}-(E_N-E)[\boldsymbol{\zeta}]\|
\\
&\leq&
\|\mathbf{b}_L-\boldsymbol{\beta}+E[\boldsymbol{\zeta}]\|
\\&&
+\frac{1}{\pi_{h\setminus 0}}\left(\left\|\boldsymbol{\hat\Pi}^{-1}-\boldsymbol{\Pi}^{-1}\right\|\|\boldsymbol{\Psi}_{1:L}^{-1}\mathbf{\hat{h}}\|+\|\boldsymbol{\Pi}^{-1}\|\|\boldsymbol{\Psi}_{1:L}^{-1}\left(\mathbf{\hat{h}}-\mathbf{h}\right)\|\right)\|E_N\left[\boldsymbol{\xi}_2\right]\|
\\&&
+\frac{1}{\pi_h}(\|\mathbf{\hat{Q}}\|(\|\mathbf{\hat{V}}^{-1}\|\|(\boldsymbol{\hat\Omega}^{-1}-\boldsymbol{\Omega}^{-1})\mathbf{e}_1\|+\|\mathbf{\hat{V}}^{-1}-\mathbf{{V}}^{-1}\|\|\boldsymbol{\Omega}^{-1}\mathbf{e}_1\|)+\|\mathbf{\hat{Q}}-\mathbf{Q}\|\|\mathbf{{V}}^{-1}\|\|\boldsymbol{\Omega}^{-1}\mathbf{e}_1\|)\|E_N\left[\boldsymbol{\xi}_3\right]\|
\\&&
+\|\boldsymbol{\gamma}\|\left\|\mathbf{\hat{h}}-\mathbf{h}\right\|
\\
&=&
O\left(h_N^L\right)+O_p\left(\sqrt{\frac{\pi_h^{2}}{Nh_N^2}}+
\sqrt{\frac{\pi_h^2}{N^2h_N^4\pi_{h\setminus 0}}}\right).
\end{eqnarray*}
By \eqref{assn:bandwidth}, we have \eqref{eq:influeFunct_estimation}. 

Third, we are going to show 
\begin{equation}\label{eq:L2L4condition_new}
E[\|Var(\boldsymbol{\zeta})^{-1/2}\boldsymbol{\zeta}\|^4]=o(N).
\end{equation}
By Lemmas \ref{lemma:zeta_moments} and \ref{lemma:lowerconrate}, 
$$
E[\|Var(\boldsymbol{\zeta})^{-1/2}\boldsymbol{\zeta}\|^4]=O\left(\frac{1}{\pi_h^2}
\right).
$$ 
By \eqref{assn:bandwidth}, we obtain \eqref{eq:L2L4condition_new}. 

Now, we are going to show the statement of the theorem. 
By Lemmas \ref{lemma:zeta_moments} and \ref{lemma:lowerconrate}, and \eqref{eq:var_estimation} and \eqref{eq:influeFunct_estimation}, 
\begin{eqnarray*}
\sqrt{N}E_N\left[\boldsymbol{\hat\zeta}\boldsymbol{\hat\zeta}'\right]^{-1/2}(\boldsymbol{\hat\theta}-\boldsymbol{\theta})
&=&
\sqrt{N}Var(\boldsymbol{\zeta})^{-1/2}(E_N-E)[\boldsymbol{\zeta}]
\\&&
+\sqrt{N}E_N\left[\boldsymbol{\hat\zeta}\boldsymbol{\hat\zeta}'\right]^{-1/2}\left(\boldsymbol{\hat\theta}-\boldsymbol{\theta}-(E_N-E)[\boldsymbol{\zeta}]\right)
\\&&
+\sqrt{N}\left(E_N\left[\boldsymbol{\hat\zeta}\boldsymbol{\hat\zeta}'\right]^{-1/2}-Var(\boldsymbol{\zeta})^{-1/2}\right)(E_N-E)[\boldsymbol{\zeta}]
\\
&=&
\sqrt{N}Var(\boldsymbol{\zeta})^{-1/2}(E_N-E)[\boldsymbol{\zeta}]+o_p(1).
\end{eqnarray*}
By \eqref{eq:L2L4condition_new}, Assumption \ref{assn:iid_plus} (i), and Lyapunov's central limit theorem, $\sqrt{N}E_N\left[\boldsymbol{\hat\zeta}\boldsymbol{\hat\zeta}'\right]^{-1/2}(\boldsymbol{\hat\theta}-\boldsymbol{\theta})\rightarrow_d\mathcal{N}\left(0,\mathbf{I}_{p}\right)$.
\end{proof}

\section{Auxiliary Lemmas for the Proof of Theorem \ref{theorem_asynormal}}\label{sec:aux_results}

We introduce and prove the auxiliary results, which we use in the proof of Theorem \ref{theorem_asynormal}. 
In all the lemmas to be stated in this section, we impose the assumptions invoked in Theorem \ref{theorem_asynormal} as well as $T=p$.

\begin{lemma}\label{lemma:influe_eror4}
$\boldsymbol{\Psi}_{1:L}^{-1}\mathbf{h}=O(\frac{\pi_h}{h_N})$, $\left\|\frac{h_N}{\pi_h}\boldsymbol{\Psi}_{1:L}^{-1}\mathbf{h}\right\|$ is bounded away from zero and $\boldsymbol{\Psi}_{1:L}^{-1}(\mathbf{\hat{h}}-\mathbf{h})=O_p\left(\sqrt{\frac{\pi_h}{Nh_N^2}}\right)$.
\end{lemma}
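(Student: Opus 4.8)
The plan is to verify the three assertions coordinate by coordinate, the key observation being that on the event $\{|D|\leq h_N\}$ each monomial $D^k$ is bounded in absolute value by $h_N^k$, so that $E[|D|^k\boldsymbol{1}\{|D|\leq h_N\}]\leq h_N^k\pi_h$. Everything in this lemma is driven by this single moment bound together with the fact that $L$ is a fixed integer, so the vector statements follow from the scalar statements up to a constant depending only on $L$.

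First I would expand $\boldsymbol{\Psi}_{1:L}^{-1}\mathbf{h}$: since $\boldsymbol{\Psi}_{1:L}=\mathrm{diag}(h_N,\dots,h_N^L)$ and the $l$-th coordinate of $\mathbf{h}$ is $E[D^{l-1}\boldsymbol{1}\{|D|\leq h_N\}]$, the $l$-th coordinate of $\boldsymbol{\Psi}_{1:L}^{-1}\mathbf{h}$ is $h_N^{-l}E[D^{l-1}\boldsymbol{1}\{|D|\leq h_N\}]$ for $l=1,\dots,L$. Applying the elementary bound above with $k=l-1$ shows that each coordinate is at most $h_N^{-l}\cdot h_N^{l-1}\pi_h=\pi_h/h_N$ in magnitude; since $L$ is fixed, the Euclidean norm is $O(\pi_h/h_N)$, which is the first claim. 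For the second claim I would simply note that the first coordinate of $\mathbf{h}$ is $E[\boldsymbol{1}\{|D|\leq h_N\}]=\pi_h$ and the leading diagonal entry of $\boldsymbol{\Psi}_{1:L}^{-1}$ is $h_N^{-1}$, so the first coordinate of $\tfrac{h_N}{\pi_h}\boldsymbol{\Psi}_{1:L}^{-1}\mathbf{h}$ equals $\tfrac{h_N}{\pi_h}\cdot\tfrac{\pi_h}{h_N}=1$ identically; hence the norm of this vector is at least $1$, and in particular bounded away from zero.

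For the third claim I would pass to variances. The $l$-th coordinate of $\boldsymbol{\Psi}_{1:L}^{-1}(\mathbf{\hat h}-\mathbf{h})$ is $h_N^{-l}(E_N-E)[D^{l-1}\boldsymbol{1}\{|D|\leq h_N\}]$, and under the i.i.d. assumption (Assumption \ref{assn:iid_plus}(i)) its variance equals $\tfrac1N h_N^{-2l}\mathrm{Var}(D^{l-1}\boldsymbol{1}\{|D|\leq h_N\})\leq \tfrac1N h_N^{-2l}E[D^{2(l-1)}\boldsymbol{1}\{|D|\leq h_N\}]\leq \tfrac1N h_N^{-2l}h_N^{2(l-1)}\pi_h=\tfrac{\pi_h}{Nh_N^2}$. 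Chebyshev's inequality then upgrades this bound to the stochastic order $O_p(\sqrt{\pi_h/(Nh_N^2)})$ for each coordinate, and summing over the $L$ fixed coordinates yields $\boldsymbol{\Psi}_{1:L}^{-1}(\mathbf{\hat h}-\mathbf{h})=O_p(\sqrt{\pi_h/(Nh_N^2)})$.

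There is no genuine obstacle here: the whole lemma rests on the moment bound $E[|D|^k\boldsymbol{1}\{|D|\leq h_N\}]\leq h_N^k\pi_h$ and on $L$ being fixed, so the only thing to keep track of is the bookkeeping of the $h_N$-powers introduced by the normalizing matrix $\boldsymbol{\Psi}_{1:L}$. It is worth noting that no moment condition on $(\mathbf{Y},\mathbf{X})$ beyond the automatic finiteness of $\pi_h$ is used.
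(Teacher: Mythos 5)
Your proof is correct and follows essentially the same route as the paper's: the uniform bound $|D|^{k}\boldsymbol{1}\{|D|\leq h_N\}\leq h_N^{k}\boldsymbol{1}\{|D|\leq h_N\}$ for the first claim, the observation that the first coordinate of $\tfrac{h_N}{\pi_h}\boldsymbol{\Psi}_{1:L}^{-1}\mathbf{h}$ equals $1$ for the second, and an i.i.d.\ second-moment/Chebyshev computation for the third. The only cosmetic difference is that you argue coordinatewise while the paper works with the normalized vector directly; the bounds are identical.
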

\begin{proof}
The statement follows from 
$$
\frac{h_N}{\pi_h}\boldsymbol{\Psi}_{1:L}^{-1}\mathbf{h}
=
\frac{1}{\pi_h}E\left[\boldsymbol{1}\{|D|\leq h_N\}\left(\begin{array}{cccc}1&\cdots&\frac{D^{L-1}}{h_N^{L-1}}\end{array}\right)'\right]
=
O\left(1\right),
$$
$$
\mathbf{e}_1'\left(\frac{h_N}{\pi_h}\boldsymbol{\Psi}_{1:L}^{-1}\mathbf{h}\right)
=
\frac{1}{\pi_h}E\left[\boldsymbol{1}\{|D|\leq h_N\}\right]
=
1,
$$
and 
\begin{eqnarray*}
E\left[\left\|\boldsymbol{\Psi}_{1:L}^{-1}(\mathbf{\hat{h}}-\mathbf{h})\right\|^2\right]
&=&
E\left[\left\|\frac{1}{h_N}(E_N-E)\left[\boldsymbol{1}\{|D|\leq h_N\}\left(\begin{array}{cccc}1&\cdots&\frac{D^{L-1}}{h_N^{L-1}}\end{array}\right)'\right]\right\|^2\right]
\\
&=&
\frac{1}{Nh_N^2}E\left[\left\|\boldsymbol{1}\{|D|\leq h_N\}\left(\begin{array}{cccc}1&\cdots&\frac{D^{L-1}}{h_N^{L-1}}\end{array}\right)'\right\|^2\right]
\\
&=&
O\left(\frac{\pi_h}{Nh_N^2}\right)
\end{eqnarray*}
under Assumption \ref{assn:iid_plus} (i). 
\end{proof}

\begin{lemma}\label{lemma:Karamata's1}
(i) For every integer $l\geq 1$,  
$$
\frac{1}{h_N^l\pi_{h\setminus 0}} E[D^l\boldsymbol{1}\{|D|\leq h_N\}] =\frac{\alpha_1}{l+\alpha_1}\frac{P(0<D\leq h_N)}{\pi_{h\setminus 0}}+(-1)^l\frac{\alpha_2}{l+\alpha_2}\frac{P(-h_N\leq D<0)}{\pi_{h\setminus 0}}+o(1)
$$ 
and 
$$
\frac{1}{h_N^l\pi_{h\setminus 0}} E[|D|^l\boldsymbol{1}\{|D|\leq h_N\}] =\frac{\alpha_1}{l+\alpha_1}\frac{P(0<D\leq h_N)}{\pi_{h\setminus 0}}+\frac{\alpha_2}{l+\alpha_2}\frac{P(-h_N\leq D<0)}{\pi_{h\setminus 0}}+o(1).
$$
(ii) For every positive integer $l$,  $E\left[\left\|\mathbf{D}_{0:L}'\boldsymbol{\Psi}_{0:L}^{-1}\right\|^l\right]=O(\pi_h)$ and $E\left[\left\|\mathbf{D}_{1:L}'\boldsymbol{\Psi}_{1:L}^{-1}\right\|^l\right]=O(\pi_{h\setminus 0})$.
(iii) For every pair of positive integers $l_1$ and $l_2$, $E\left[|D|^{l_1}\left\|\mathbf{D}_{0:L}'\boldsymbol{\Psi}_{0:L}^{-1}\right\|^{l_2}\right]=O(\pi_{h\setminus 0}h_N^{l_1})$ and $E\left[|D|^{l_1}\left\|\mathbf{D}_{1:L}'\boldsymbol{\Psi}_{1:L}^{-1}\right\|^{l_2}\right]=O(\pi_{h\setminus 0}h_N^{l_1})$. 
(iv) For every random variable $Z$ in Assumption \ref{assn:iid_plus} (ii) and for every positive integer $l$, 
$$
E\left[\|Z\|\|\mathbf{D}_{0:L}'\boldsymbol{\Psi}_{0:L}^{-1}\|^l\right]=O\left(\pi_h\right)
\mbox{ and }
E\left[\|Z\|\|\mathbf{D}_{1:L}'\boldsymbol{\Psi}_{1:L}^{-1}\|^l\right]=O\left(\pi_{h\setminus 0}\right).
$$
\end{lemma}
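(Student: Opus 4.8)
The plan is to obtain part (i) from Karamata's theorem \citep[Theorem 2.1]{resnick2007heavy} and to get parts (ii)--(iv) from crude bounds on the normalized local-polynomial design vectors together with the moment conditions in Assumption \ref{assn:iid_plus}; part (i) is the only substantive step. For part (i) I would first split $E[D^l\mathbf 1\{|D|\le h_N\}]$ into its $\{D>0\}$ and $\{D<0\}$ parts, noting that $\{D=0\}$ contributes nothing since $l\ge1$. On $\{D>0\}$ set $V:=1/D=|D|^{-1}$, so that $\{0<D\le h_N\}=\{V\ge 1/h_N\}$ and $D^l=V^{-l}$; by Assumption \ref{assn:GP_around0}(iv) the conditional survivor $\bar G_+(v):=P(V\ge v\mid D>0)$ is regularly varying at infinity with index $-\alpha_1$. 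With $c:=1/h_N\to\infty$, an integration by parts gives $E[V^{-l}\mathbf 1\{V\ge c\}\mid D>0]=c^{-l}\bar G_+(c)-l\int_c^\infty \bar G_+(v)v^{-l-1}\,dv$, the boundary term at infinity vanishing because $v^{-l}\bar G_+(v)\to0$. Since $\bar G_+(v)v^{-l-1}\in\mathrm{RV}_{-(\alpha_1+l+1)}$ with $\alpha_1+l+1>1$, Karamata's theorem gives $\int_c^\infty \bar G_+(v)v^{-l-1}\,dv\sim\tfrac{1}{\alpha_1+l}c^{-l}\bar G_+(c)$, so $E[V^{-l}\mathbf 1\{V\ge c\}\mid D>0]=\tfrac{\alpha_1}{\alpha_1+l}c^{-l}\bar G_+(c)(1+o(1))$. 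Multiplying by $P(D>0)$ and using $c^{-l}=h_N^l$ and $P(D>0)\bar G_+(c)=P(0<D\le h_N)$ yields $E[D^l\mathbf 1\{0<D\le h_N\}]=\tfrac{\alpha_1}{l+\alpha_1}h_N^lP(0<D\le h_N)(1+o(1))$. The mirror-image argument on $\{D<0\}$ with $V=|D|^{-1}=-1/D$ and index $\alpha_2$ gives $E[D^l\mathbf 1\{-h_N\le D<0\}]=(-1)^l\tfrac{\alpha_2}{l+\alpha_2}h_N^lP(-h_N\le D<0)(1+o(1))$, the sign coming from $D^l=(-1)^l|D|^l$. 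Adding the two, dividing by $h_N^l\pi_{h\setminus 0}$, and using that $P(0<D\le h_N)/\pi_{h\setminus 0}$ and $P(-h_N\le D<0)/\pi_{h\setminus 0}$ both lie in $[0,1]$, the two multiplicative $(1+o(1))$'s collapse to a single additive $o(1)$, which is the first display; the $|D|^l$ display is identical without the $(-1)^l$.

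For parts (ii)--(iv) I would use that on $\{|D|\le h_N\}$ the $k$-th entry of $\mathbf D_{0:L}'\boldsymbol\Psi_{0:L}^{-1}$ is $(D/h_N)^k$ with $|D/h_N|\le1$, so $\|\mathbf D_{0:L}'\boldsymbol\Psi_{0:L}^{-1}\|\le\sqrt{L+1}\,\mathbf 1\{|D|\le h_N\}$, whereas $\mathbf D_{1:L}'\boldsymbol\Psi_{1:L}^{-1}$ involves only $k=1,\dots,L$, hence vanishes at $D=0$ and satisfies $\|\mathbf D_{1:L}'\boldsymbol\Psi_{1:L}^{-1}\|\le\sqrt{L}\,\mathbf 1\{0<|D|\le h_N\}$. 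Raising to the $l$-th power and taking expectations gives (ii): $O(\pi_h)$ and $O(\pi_{h\setminus 0})$. For (iii), additionally $|D|^{l_1}\mathbf 1\{|D|\le h_N\}\le h_N^{l_1}\mathbf 1\{0<|D|\le h_N\}$ since $l_1\ge1$, so both expectations are $O(h_N^{l_1}\pi_{h\setminus 0})$. For (iv), conditioning on $D$ gives $E[\|Z\|\mathbf 1\{|D|\le h_N\}]=E[E[\|Z\|\mid D]\mathbf 1\{|D|\le h_N\}]\le C\pi_h$ for $h_N$ small, where $C$ bounds $u\mapsto E[\|Z\|\mid D=u]$ near zero by Assumption \ref{assn:iid_plus}(ii); combining with the design-vector bound yields $O(\pi_h)$, and replacing $\mathbf 1\{|D|\le h_N\}$ by $\mathbf 1\{0<|D|\le h_N\}$ for the $\mathbf D_{1:L}$ case yields $O(\pi_{h\setminus 0})$.

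The main obstacle will be the bookkeeping in part (i): translating the regularly-varying-tail hypothesis on $|D|^{-1}$ into the correct small-argument statement about the law of $D$, carrying out the integration by parts so that the boundary terms disappear (here the absolute continuity of $D$ near zero from Assumption \ref{assn:GP_around0}(ii) helps), checking the index inequalities under which Karamata's theorem applies (automatic since $l\ge1$ and $\alpha_1,\alpha_2>0$), and verifying that the remainder is genuinely $o(1)$ after normalization by $\pi_{h\setminus 0}$ rather than only after the separate normalizations by $P(0<D\le h_N)$ and $P(-h_N\le D<0)$.
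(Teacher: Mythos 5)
Your proposal is correct and follows essentially the same route as the paper: split at the sign of $D$, transform to the tail of $|D|^{-1}$, apply Karamata for part (i), and use the trivial bounds $\|\mathbf{D}_{0:L}'\boldsymbol{\Psi}_{0:L}^{-1}\|\le\sqrt{L+1}\,\boldsymbol{1}\{|D|\leq h_N\}$ and $\|\mathbf{D}_{1:L}'\boldsymbol{\Psi}_{1:L}^{-1}\|\le\sqrt{L}\,\boldsymbol{1}\{0<|D|\leq h_N\}$ together with Assumption \ref{assn:iid_plus} (ii) for parts (ii)--(iv). The only difference is that where the paper directly cites the truncated-moment form of Karamata's theorem \citep[Theorem 2.1 and Exercise 2.5]{resnick2007heavy} to evaluate the ratio $E[V^{-l}\boldsymbol{1}\{V\ge c\}\mid D>0]/(c^{-l}P(V\ge c\mid D>0))$, you rederive that exercise via integration by parts and the integral form of Karamata's theorem, and you additionally (and correctly) note that the multiplicative $(1+o(1))$ factors collapse to an additive $o(1)$ because the probability ratios are bounded by one.
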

\begin{proof}
First, we are going to show the statement in (i). 
Under Assumption \ref{assn:GP_around0} (iv), Karamata's theorem \citep[Theorem 2.1 and Exercise 2.5]{resnick2007heavy} implies 
\begin{eqnarray*}
\frac{1}{h_N^l\pi_{h\setminus 0}}
E\left[\boldsymbol{1}\{|D|\leq h_N\}D^l\right]
&=&
\frac{E\left[\boldsymbol{1}\{|D|^{-1}\geq h_N^{-1}\}(|D|^{-1})^{-l}\mid D>0\right]}{(h_N^{-1})^{-l}P(|D|^{-1}\geq h_N^{-1}\mid D>0)}\frac{P(0<D\leq h_N)}{\pi_{h\setminus 0}}
\\&&
+(-1)^l\frac{E\left[\boldsymbol{1}\{|D|^{-1}\geq h_N^{-1}\}(|D|^{-1})^{-l}\mid D<0\right]}{(h_N^{-1})^{-l}P(|D|^{-1}\geq h_N^{-1}\mid D<0)}\frac{P(-h_N\leq D<0)}{\pi_{h\setminus 0}}
\\
&=&
\frac{\alpha_1}{l+\alpha_1}\frac{P(0<D\leq h_N)}{\pi_{h\setminus 0}}+(-1)^l\frac{\alpha_2}{l+\alpha_2}\frac{P(-h_N\leq D<0)}{\pi_{h\setminus 0}}+o(1).
\end{eqnarray*}
Similarly, 
$$
\frac{1}{h_N^l\pi_{h\setminus 0}}
E\left[\boldsymbol{1}\{|D|\leq h_N\}|D|^l\right]
=
\frac{\alpha_1}{l+\alpha_1}\frac{P(0<D\leq h_N)}{\pi_{h\setminus 0}}+\frac{\alpha_2}{l+\alpha_2}\frac{P(-h_N\leq D<0)}{\pi_{h\setminus 0}}+o(1).
$$
Therefore, the statement in (i) hold. 

Second, we are going to show the statement in (ii). 
Note that 
$$
\left\|\mathbf{D}_{0:L}'\boldsymbol{\Psi}_{0:L}^{-1}\right\|^l
=
\boldsymbol{1}\{|D|\leq h_N\}\left\|\left(\begin{array}{cccc}1&\cdots&\frac{D^{L}}{h_N^{L}}\end{array}\right)'\right\|^l
\leq 
(L+1)^{l/2}\boldsymbol{1}\{|D|\leq h_N\}
$$
and 
$$
\left\|\mathbf{D}_{1:L}'\boldsymbol{\Psi}_{1:L}^{-1}\right\|^l
=
\boldsymbol{1}\{0<|D|\leq h_N\}\left\|\left(\begin{array}{cccc}\frac{D}{h_N}&\cdots&\frac{D^{L}}{h_N^{L}}\end{array}\right)'\right\|^l
\leq 
L^{l/2}\boldsymbol{1}\{0<|D|\leq h_N\}.
$$
Therefore, 
$$
E\left[\left\|\mathbf{D}_{0:L}'\boldsymbol{\Psi}_{0:L}^{-1}\right\|^l\right]
\leq 
(L+1)^{l/2}P(|D|\leq h_N)
=
O(\pi_h)
\mbox{ and }
E\left[\left\|\mathbf{D}_{1:L}'\boldsymbol{\Psi}_{1:L}^{-1}\right\|^l\right]
\leq 
L^{l/2}P(0<|D|\leq h_N)
=
O(\pi_{h\setminus 0}).
$$

Third, we are going to show the statement in (iii). 
Note that 
$$
|D|^{l_1}\left\|\mathbf{D}_{0:L}'\boldsymbol{\Psi}_{0:L}^{-1}\right\|^{l_2}
=
|D|^{l_1}\boldsymbol{1}\{0<|D|\leq h_N\}\left\|\left(\begin{array}{cccc}1&\cdots&\frac{D^{L}}{h_N^{L}}\end{array}\right)'\right\|^{l_2}
\leq
(L+1)^{l_2/2}h_N^{l_1}\boldsymbol{1}\{0<|D|\leq h_N\}
$$
and 
$$
|D|^{l_1}\left\|\mathbf{D}_{1:L}'\boldsymbol{\Psi}_{1:L}^{-1}\right\|^{l_2}
=
|D|^{l_1}\boldsymbol{1}\{0<|D|\leq h_N\}\left\|\left(\begin{array}{cccc}\frac{D}{h_N}&\cdots&\frac{D^{L}}{h_N^{L}}\end{array}\right)'\right\|^{l_2}
\leq
L^{l_2/2}h_N^{l_1}\boldsymbol{1}\{0<|D|\leq h_N\}.
$$
Therefore, 
$$
E\left[|D|^{l_1}\left\|\mathbf{D}_{0:L}'\boldsymbol{\Psi}_{0:L}^{-1}\right\|^{l_2}\right]
\leq
(L+1)^{l_2/2}h_N^{l_1}E[\boldsymbol{1}\{0<|D|\leq h_N\}]
=
O(\pi_{h\setminus 0}h_N^{l_1})
$$ 
and 
$$
E\left[|D|^{l_1}\left\|\mathbf{D}_{1:L}'\boldsymbol{\Psi}_{1:L}^{-1}\right\|^{l_2}\right]
\leq
L^{l_2/2}h_N^{l_1}E[\boldsymbol{1}\{0<|D|\leq h_N\}]
=
O(\pi_{h\setminus 0}h_N^{l_1}).
$$ 

Fourth, we are going to show the statement in (iv). 
There is a positive constant $c$ such that $\sup_{u\in[-c,c]}E[\|Z\|\mid D=u]<\infty$. 
By the statement in (ii),  
$$
E\left[\|Z\|\|\mathbf{D}_{0:L}'\boldsymbol{\Psi}_{0:L}^{-1}\|^l\right]
\leq
E\left[\|\mathbf{D}_{0:L}'\boldsymbol{\Psi}_{0:L}^{-1}\|^l\right]\sup_{u\in[-h_N,h_N]}E[\|Z\|\mid D=u]
=
O\left(\pi_h\right),\mbox{ and }
$$
$$
E\left[\|Z\|\|\mathbf{D}_{1:L}'\boldsymbol{\Psi}_{1:L}^{-1}\|^l\right]
\leq
E\left[\|\mathbf{D}_{1:L}'\boldsymbol{\Psi}_{1:L}^{-1}\|^l\right]\sup_{u\in[-h_N,h_N]}E[\|Z\|\mid D=u]
=
O\left(\pi_{h\setminus 0}\right).
$$
\end{proof}

\begin{lemma}\label{lemma:Karamata's2}
(i) For every integer $l$ with $l\geq\max\{\alpha_1,\alpha_2\}$, 
$$
\frac{h_N^l}{\pi_{h\setminus 0}}E\left[\boldsymbol{1}\{h_N<|D|\}D^{-l}\right]=\frac{\alpha_1}{l-\alpha_1}\frac{P(0<D\leq h_N)}{\pi_{h\setminus 0}}+(-1)^l\frac{\alpha_2}{l-\alpha_2}\frac{P(-h_N<D<0)}{\pi_{h\setminus 0}}+o(1).
$$
(ii) For every random variable $Z$ in Assumption \ref{assn:iid_plus} (ii) and for every positive integer $l$, 
$$
E\left[\boldsymbol{1}\{|D|>h_N\}|D|^{-1}\|Z\|\right]=O\left(\frac{\pi_{h\setminus 0}^{1-\epsilon}}{h_N}\right)\mbox{ for every number }\epsilon\in(0,1),
$$ 
$$
E\left[\boldsymbol{1}\{|D|>h_N\}D^{-2}\|Z\|\right]=O\left(\frac{\pi_{h\setminus 0}}{h_N^2}\right),\mbox { and }
$$ 
$$
E\left[\boldsymbol{1}\{|D|>h_N\}D^{-4}\|Z\|\right]=O\left(\frac{\pi_{h\setminus 0}}{h_N^4}\right). 
$$
\end{lemma}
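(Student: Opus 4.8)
The plan is to prove part (i) in close analogy with Lemma \ref{lemma:Karamata's1}(i) — replacing the \emph{upper}-truncated moment used there by a \emph{lower}-truncated one — and then to deduce part (ii) from part (i) by conditioning on $D$, with a Hölder interpolation needed in the first display to accommodate the boundary index. For part (i), split $E[\boldsymbol{1}\{h_N<|D|\}D^{-l}]$ over $\{D>0\}$ and $\{D<0\}$, and on $\{D>0\}$ rewrite $\boldsymbol{1}\{h_N<D\}D^{-l}=\boldsymbol{1}\{|D|^{-1}<h_N^{-1}\}(|D|^{-1})^{l}$. The point is that $E[\boldsymbol{1}\{|D|^{-1}<h_N^{-1}\}(|D|^{-1})^{l}\mid D>0]$ is a lower-truncated $l$-th moment of the variable $|D|^{-1}$ whose survival function $\bar G$ is regularly varying with index $-\alpha_1$ by Assumption \ref{assn:GP_around0}(iv). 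Applying the companion Karamata statement \citep[Theorem 2.1 and Exercise 2.5]{resnick2007heavy} — equivalently, integrating by parts and using $\int_0^{t}y^{l-1}\bar G(y)\,dy\sim\frac{1}{l-\alpha_1}t^{l}\bar G(t)$, which is valid since $l>\alpha_1$ (automatic for integer $l\geq 2$, and for $l=1$ when $\alpha_1<1$) — gives
$$E\left[\boldsymbol{1}\{|D|^{-1}<h_N^{-1}\}(|D|^{-1})^{l}\mid D>0\right]=\frac{\alpha_1}{l-\alpha_1}\,h_N^{-l}\,P\left(|D|^{-1}\geq h_N^{-1}\mid D>0\right)\left(1+o(1)\right),$$
and $P(|D|^{-1}\geq h_N^{-1}\mid D>0)=P(0<D\leq h_N\mid D>0)$. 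The same computation on $\{D<0\}$, using $D^{-l}=(-1)^{l}|D|^{-l}$ and the index $-\alpha_2$, produces the $(-1)^{l}$ factor; multiplying through by $h_N^{l}/\pi_{h\setminus 0}$ and collecting the two pieces gives the claimed expansion.

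For the second and third displays of part (ii), fix the radius $c>0$ of the neighborhood on which $\bar C:=\sup_{u\in[-c,c]}E[\|Z\|\mid D=u]<\infty$ by Assumption \ref{assn:iid_plus}(ii), and split $\{|D|>h_N\}=\{h_N<|D|\leq c\}\cup\{|D|>c\}$. On $\{|D|>c\}$ bound $|D|^{-k}\leq c^{-k}$, so that piece is at most $c^{-k}E[\|Z\|]=O(1)$, which is negligible compared with $\pi_{h\setminus 0}/h_N^{k}$ because $\pi_{h\setminus 0}/h_N$ is bounded away from zero (as noted in the excerpt) and hence $\pi_{h\setminus 0}/h_N^{k}\rightarrow\infty$ for $k\geq 2$. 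On $\{h_N<|D|\leq c\}$ condition on $D$, use $E[\|Z\|\mid D]\leq\bar C$, and apply part (i) with $l=k\in\{2,4\}$ to get $\bar C\,E[\boldsymbol{1}\{h_N<|D|\}|D|^{-k}]=O(\pi_{h\setminus 0}/h_N^{k})$.

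The first display of part (ii) is the delicate one, since part (i) with $l=1$ is unavailable when $\max\{\alpha_1,\alpha_2\}=1$; the loss of $\pi_{h\setminus 0}^{\epsilon}$ is exactly the price of avoiding this. After the same split (the $\{|D|>c\}$ part is again $O(1)=o(\pi_{h\setminus 0}^{1-\epsilon}/h_N)$) and the same conditioning, bound $E[\boldsymbol{1}\{h_N<|D|\}|D|^{-1}]$ by Hölder's inequality with the conjugate pair $p=1/(1-\epsilon)$ and $q=1/\epsilon$:
$$E\left[\boldsymbol{1}\{h_N<|D|\}|D|^{-1}\right]\leq\left(E\left[\boldsymbol{1}\{h_N<|D|\}|D|^{-p}\right]\right)^{1/p}\left(E\left[\boldsymbol{1}\{h_N<|D|\}\right]\right)^{1/q}.$$
Since $p>1\geq\max\{\alpha_1,\alpha_2\}$, the Karamata argument of part (i) applies verbatim with the non-integer exponent $p$ and yields $E[\boldsymbol{1}\{h_N<|D|\}|D|^{-p}]=O(\pi_{h\setminus 0}/h_N^{p})$; combined with $E[\boldsymbol{1}\{h_N<|D|\}]\leq 1$ and $\pi_{h\setminus 0}\leq 1$, this gives $O(\pi_{h\setminus 0}^{1/p}/h_N)=O(\pi_{h\setminus 0}^{1-\epsilon}/h_N)$, as required.

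I expect the main obstacle to be precisely this boundary case $\alpha_1=1$ or $\alpha_2=1$ in the first display: one cannot invoke the $l=1$ form of Karamata (which blows up through the factor $\alpha_i/(l-\alpha_i)$), and must instead trade a slightly larger moment exponent $p>1$ against a slightly weaker $\pi_{h\setminus 0}$-rate via Hölder. Everything else — the passage through $|D|^{-1}$, the two-sided split by the sign of $D$, and the reduction of the weighted expectations to unweighted ones by conditioning on $D$ — is routine bookkeeping parallel to the proof of Lemma \ref{lemma:Karamata's1}.
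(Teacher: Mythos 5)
Your proposal is correct and follows essentially the same route as the paper's proof: Karamata's theorem applied to the lower-truncated moments of $|D|^{-1}$ conditionally on the sign of $D$ for part (i), and for part (ii) the split of $\{|D|>h_N\}$ at a fixed radius $c$, conditioning on $D$ to peel off $\|Z\|$, with H\"older's inequality at exponent $1/(1-\epsilon)$ handling the $|D|^{-1}$ case exactly as in the paper. Your observation that the $l=1$, $\alpha_i=1$ boundary makes the part (i) formula inapplicable (forcing the H\"older detour) is a correct and slightly more careful reading than the paper's own statement of (i), which nominally allows $l=\max\{\alpha_1,\alpha_2\}$.
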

\begin{proof}
Under Assumption \ref{assn:GP_around0} (iv), Karamata's theorem \citep[Theorem 2.1 and Exercise 2.5]{resnick2007heavy} implies that 
\begin{eqnarray*}
\frac{h_N^l}{\pi_{h\setminus 0}}E\left[\boldsymbol{1}\{|D|>h_N\}D^{-l}\right]
&=&
\frac{E\left[\boldsymbol{1}\{|D|^{-1}\leq h_N^{-1}\}(|D|^{-1})^l\mid D>0\right]}{
(h_N^{-1})^lP(|D|^{-1}\geq h_N^{-1}\mid D>0)}\frac{P(0<D\leq h_N)}{\pi_{h\setminus 0}}
\\&&
+(-1)^l\frac{E\left[\boldsymbol{1}\{|D|^{-1}\leq h_N^{-1}\}(|D|^{-1})^l\mid D<0\right]}{
(h_N^{-1})^lP(|D|^{-1}\geq h_N^{-1}\mid D<0)}\frac{P(-h_N<D<0)}{\pi_{h\setminus 0}}
\\
&=&
\frac{\alpha_1}{l-\alpha_1}\frac{P(0<D\leq h_N)}{\pi_{h\setminus 0}}+(-1)^l\frac{\alpha_2}{l-\alpha_2}\frac{P(-h_N<D<0)}{\pi_{h\setminus 0}}+o(1),
\end{eqnarray*}
which proves the statement in (i). 
We are going to show the statement in (ii). 
There is a positive constant $c$ such that $\sup_{u\in[-c,c]}E[\|Z\|\mid D=u]<\infty$. 
By the triangle inequality and H\"older's inequality, 
\begin{eqnarray*}
\left\|E\left[\boldsymbol{1}\{|D|>h_N\}|D|^{-1}\|Z\|\right]\right\|
&\leq&
\left\|E\left[\boldsymbol{1}\{|D|>c\}|D|^{-1}\|Z\|\right]\right\|+\left\|E\left[\boldsymbol{1}\{h_N<|D|\leq c\}|D|^{-1}\|Z\|\right]\right\|
\\
&\leq&
c^{-1}E\left[\|Z\|\right]+E\left[\boldsymbol{1}\{h_N<|D|\}|D|^{-1}\right]\sup_{u\in[-c,c]}E[\|Z\|\mid D=u]\\
\\
&\leq&
c^{-1}E\left[\|Z\|\right]+E\left[\boldsymbol{1}\{h_N<|D|\}|D|^{-1/1-\epsilon}\right]^{1-\epsilon}\sup_{u\in[-c,c]}E[\|Z\|\mid D=u],
\end{eqnarray*}
\begin{eqnarray*}
\left\|E\left[\boldsymbol{1}\{|D|>h_N\}|D|^{-2}\|Z\|\right]\right\|
&\leq&
\left\|E\left[\boldsymbol{1}\{|D|>c\}|D|^{-2}\|Z\|\right]\right\|+\left\|E\left[\boldsymbol{1}\{h_N<|D|\}|D|^{-2}\|Z\|\right]\right\|
\\
&\leq&
c^{-2}E\left[\|Z\|\right]+E\left[\boldsymbol{1}\{h_N<|D|\}|D|^{-2}\right]\sup_{u\in[-c,c]}E[\|Z\|\mid D=u],
\end{eqnarray*}
and 
\begin{eqnarray*}
\left\|E\left[\boldsymbol{1}\{|D|>h_N\}|D|^{-4}\|Z\|\right]\right\|
&\leq&
\left\|E\left[\boldsymbol{1}\{|D|>c\}|D|^{-4}\|Z\|\right]\right\|+\left\|E\left[\boldsymbol{1}\{h_N<|D|\}|D|^{-4}\|Z\|\right]\right\|
\\
&\leq&
c^{-4}E\left[\|Z\|\right]+E\left[\boldsymbol{1}\{h_N<|D|\}|D|^{-4}\right]\sup_{u\in[-c,c]}E[\|Z\|\mid D=u].
\end{eqnarray*}
By the statement in (i), the statement in (ii) holds. 
\end{proof}

\begin{lemma}\label{lemma:Omega_property2}
The minimum eigenvalue of $\boldsymbol{\Pi}$ is bounded away from zero, the maximum eigenvalue of $\boldsymbol{\Pi}$ is bounded, and  $\boldsymbol{\hat\Pi}^{-1}-\boldsymbol{\Pi}^{-1}=O_p\left(\frac{1}{\sqrt{N\pi_{h\setminus 0}}}\right)$.
\end{lemma}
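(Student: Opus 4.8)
The plan is to establish the three assertions separately, with Lemma \ref{lemma:Karamata's1} supplying the limiting shape of $\boldsymbol{\Pi}$. Write $\tilde{\mathbf{D}}=\boldsymbol{\Psi}_{1:L}^{-1}\mathbf{D}_{1:L}=\boldsymbol{1}\{|D|\le h_N\}(D/h_N,\ldots,(D/h_N)^L)'$, which equals $\boldsymbol{1}\{0<|D|\le h_N\}(D/h_N,\ldots,(D/h_N)^L)'$ since the polynomial vector vanishes at $D=0$. Then $\boldsymbol{\Pi}=\pi_{h\setminus 0}^{-1}E[\tilde{\mathbf{D}}\tilde{\mathbf{D}}']$ is symmetric and positive semidefinite, so its eigenvalues are well defined and nonnegative. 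The maximum-eigenvalue bound is immediate: on $\{0<|D|\le h_N\}$ we have $|D/h_N|\le 1$, so every entry of $\boldsymbol{\Pi}$ is bounded in absolute value by $\pi_{h\setminus 0}^{-1}E[\boldsymbol{1}\{0<|D|\le h_N\}]=1$, whence $\|\boldsymbol{\Pi}\|\le\|\boldsymbol{\Pi}\|_{F}\le L$.

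For the minimum-eigenvalue bound, the $(j,k)$ entry of $\boldsymbol{\Pi}$ is $h_N^{-(j+k)}\pi_{h\setminus 0}^{-1}E[D^{j+k}\boldsymbol{1}\{|D|\le h_N\}]$, so Lemma \ref{lemma:Karamata's1}(i) applied with $l=j+k\ge 2$ gives $\boldsymbol{\Pi}=\rho_N A+(1-\rho_N)B+o(1)$, where $\rho_N=P(0<D\le h_N)/\pi_{h\setminus 0}\in[0,1]$, $A_{jk}=\alpha_1/(j+k+\alpha_1)$, $B_{jk}=(-1)^{j+k}\alpha_2/(j+k+\alpha_2)$, and the $o(1)$ is entrywise, hence in operator norm (there are only $L^2$ entries). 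Both $A$ and $B$ are positive definite: using $1/(m+\alpha)=\int_0^1 t^{m+\alpha-1}dt$, for any $v\ne 0$ one has $v'Av=\alpha_1\int_0^1 t^{\alpha_1-1}(\sum_j v_j t^j)^2\,dt>0$, and the substitution $w_j=(-1)^j v_j$ gives $v'Bv=\alpha_2\int_0^1 t^{\alpha_2-1}(\sum_j w_j t^j)^2\,dt>0$. Since the minimum eigenvalue is concave in the (symmetric) matrix argument, $\lambda_{\min}(\rho_N A+(1-\rho_N)B)\ge\min\{\lambda_{\min}(A),\lambda_{\min}(B)\}>0$ uniformly in $\rho_N\in[0,1]$, and absorbing the vanishing perturbation shows $\lambda_{\min}(\boldsymbol{\Pi})$ is bounded away from zero for all small $h_N$.

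It remains to control $\boldsymbol{\hat\Pi}^{-1}-\boldsymbol{\Pi}^{-1}$. First, $\boldsymbol{\hat\Pi}-\boldsymbol{\Pi}=\pi_{h\setminus 0}^{-1}(E_N-E)[\tilde{\mathbf{D}}\tilde{\mathbf{D}}']$, and under the i.i.d. sampling of Assumption \ref{assn:iid_plus}(i), $E\|\boldsymbol{\hat\Pi}-\boldsymbol{\Pi}\|_{F}^2\le N^{-1}\pi_{h\setminus 0}^{-2}E[\|\tilde{\mathbf{D}}\|^4]\le L^2/(N\pi_{h\setminus 0})$ because $\|\tilde{\mathbf{D}}\|^4\le L^2\boldsymbol{1}\{0<|D|\le h_N\}$ (equivalently Lemma \ref{lemma:Karamata's1}(ii) with $l=4$); Markov's inequality then yields $\|\boldsymbol{\hat\Pi}-\boldsymbol{\Pi}\|=O_p(1/\sqrt{N\pi_{h\setminus 0}})$. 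The bandwidth condition \eqref{assn:bandwidth} together with $\pi_{h\setminus 0}/h_N$ bounded away from zero forces $N\pi_{h\setminus 0}\to\infty$, so $\boldsymbol{\hat\Pi}\rightarrow_p\boldsymbol{\Pi}$; combined with the eigenvalue bounds just proved, $\boldsymbol{\hat\Pi}$ is invertible with probability approaching one and $\|\boldsymbol{\hat\Pi}^{-1}\|=O_p(1)$. Finally, $\boldsymbol{\hat\Pi}^{-1}-\boldsymbol{\Pi}^{-1}=-\boldsymbol{\hat\Pi}^{-1}(\boldsymbol{\hat\Pi}-\boldsymbol{\Pi})\boldsymbol{\Pi}^{-1}$ gives $\|\boldsymbol{\hat\Pi}^{-1}-\boldsymbol{\Pi}^{-1}\|\le\|\boldsymbol{\hat\Pi}^{-1}\|\,\|\boldsymbol{\hat\Pi}-\boldsymbol{\Pi}\|\,\|\boldsymbol{\Pi}^{-1}\|=O_p(1/\sqrt{N\pi_{h\setminus 0}})$. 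I expect the minimum-eigenvalue step to be the only real obstacle: the limit of $\boldsymbol{\Pi}$ need not exist because the mass split $\rho_N$ between $\{D>0\}$ and $\{D<0\}$ may oscillate and the density $\phi$ may be unbounded, and the fix is to write $\boldsymbol{\Pi}$ as a convex combination of the two Hilbert-type matrices $A$ and $B$, each positive definite by an integral representation; the variance bound and the matrix-inverse perturbation are routine.
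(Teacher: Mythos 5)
Your proof is correct and follows essentially the same route as the paper: both decompose $\boldsymbol{\Pi}$ via Lemma \ref{lemma:Karamata's1} into a convex combination of the two positive-definite Cauchy-type matrices (with the sign-flip $w_j=(-1)^j v_j$ handling the $D<0$ branch), bound the variance of $\boldsymbol{\hat\Pi}-\boldsymbol{\Pi}$ using the fourth moment of $\mathbf{D}_{1:L}'\boldsymbol{\Psi}_{1:L}^{-1}$, and conclude with the standard inverse-perturbation identity. The only difference is that you supply the integral-representation argument for positive definiteness of $A$ and $B$, which the paper simply asserts.
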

\begin{proof}
First, we are going to show the first two statements.
Let $\mathbf{c}$ be any $L$-dimensional column vector with $\|\mathbf{c}\|=1$. 
Define $\mathbf{\tilde{c}}=\mathrm{diag}(1,-1,1,-1,\ldots)\mathbf{c}$. Note that $\|\mathbf{\tilde{c}}\|=1$. 
Define 
\begin{align*}
\mathbf{C}_1
=
\left(\begin{array}{ccc}
\frac{\alpha_1}{\alpha_1+2}&\cdots&\frac{\alpha_1}{\alpha_1+L+1}\\
\vdots&\ddots&\vdots\\
\frac{\alpha_1}{\alpha_1+L+1}&\cdots&\frac{\alpha_1}{\alpha_1+2L}\\
\end{array}\right)
\qquad\mbox{ and }\qquad
\mathbf{C}_2
=
\left(\begin{array}{ccc}
\frac{\alpha_2}{\alpha_2+2}&\cdots&\frac{\alpha_2}{\alpha_2+L+1}\\
\vdots&\ddots&\vdots\\
\frac{\alpha_2}{\alpha_2+L+1}&\cdots&\frac{\alpha_2}{\alpha_2+2L}\\
\end{array}\right).
\end{align*}
Note that $\mathbf{C}_1$ and $\mathbf{C}_2$ are Cauchy matrices and positive definite.
By Lemma \ref{lemma:Karamata's1},
$$
\mathbf{c}'\boldsymbol{\Pi}\mathbf{c}=\frac{P(0<D\leq h_N)}{\pi_{h\setminus 0}}\mathbf{c}'\mathbf{C}_1\mathbf{c}+\frac{P(-h_N\leq D<0)}{\pi_{h\setminus 0}}\mathbf{\tilde{c}}'\mathbf{C}_2\mathbf{\tilde{c}}+o(1),
$$
which implies that $\mathbf{c}'\boldsymbol{\Pi}\mathbf{c}$ is bounded away from zero. 
Therefore, the minimum eigenvalue of $\boldsymbol{\Pi}$ is bounded away from zero and the maximum eigenvalue of $\boldsymbol{\Pi}$ is bounded as $h_N \rightarrow 0$. 

Second, we are going to show $\boldsymbol{\hat\Pi}-\boldsymbol{\Pi}=O_p\left(\frac{1}{\sqrt{N\pi_{h\setminus 0}}}\right)$. 
Since $\boldsymbol{\hat\Pi}-\boldsymbol{\Pi}=\frac{1}{\pi_{h\setminus 0}}(E_N-E)\left[\boldsymbol{\Psi}_{1:L}^{-1}\mathbf{D}_{1:L}\mathbf{D}_{1:L}'\boldsymbol{\Psi}_{1:L}^{-1}\right]$, 
by Assumption \ref{assn:iid_plus} (i) and Lemma \ref{lemma:Karamata's1}, 
$$
E\left[\left\|\boldsymbol{\hat\Pi}-\boldsymbol{\Pi}\right\|^2\right]^{1/2}=\frac{1}{\sqrt{N\pi_{h\setminus 0}^2}}E\left[\left\|\mathbf{D}_{1:L}'\boldsymbol{\Psi}_{1:L}^{-1}\right\|^4\right]^{1/2}=O\left(\frac{1}{\sqrt{N\pi_{h\setminus 0}}}\right).
$$

Third, we are going to show $\boldsymbol{\hat\Pi}^{-1}-\boldsymbol{\Pi}^{-1}=O_p\left(\frac{1}{\sqrt{N\pi_{h\setminus 0}}}\right)$.
Note that $\boldsymbol{\hat\Pi}^{-1}-\boldsymbol{\Pi}^{-1}=-(\boldsymbol{\Pi}+(\boldsymbol{\hat\Pi}-\boldsymbol{\Pi}))^{-1}(\boldsymbol{\hat\Pi}-\boldsymbol{\Pi})\boldsymbol{\Pi}^{-1}$.
Since the minimum eigenvalue of $\boldsymbol{\Pi}$ is bounded away from zero and $\boldsymbol{\hat\Pi}-\boldsymbol{\Pi}=o_p(1)$, the matrix $(\boldsymbol{\Pi}+(\boldsymbol{\hat\Pi}-\boldsymbol{\Pi}))^{-1}$ is $O_p(1)$.  
Since $\boldsymbol{\hat\Pi}-\boldsymbol{\Pi}=\frac{1}{\pi_{h\setminus 0}}(E_N-E)\left[\boldsymbol{\Psi}_{1:L}^{-1}\mathbf{D}_{1:L}\mathbf{D}_{1:L}'\boldsymbol{\Psi}_{1:L}^{-1}\right]=O_p\left(\frac{1}{\sqrt{N\pi_{h\setminus 0}}}\right)$ by Lemma \ref{lemma:Karamata's1}, we have $\boldsymbol{\hat\Pi}^{-1}-\boldsymbol{\Pi}^{-1}=O_p\left(\frac{1}{\sqrt{N\pi_{h\setminus 0}}}\right)$.
\end{proof}

\begin{lemma}\label{lemma:Omega_property}
$\boldsymbol{\Omega}^{-1}\mathbf{e}_1=O(1)$ and $\left(\boldsymbol{\hat\Omega}^{-1}-\boldsymbol{\Omega}^{-1}\right)\mathbf{e}_1=O_p\left(\sqrt{\frac{\pi_h}{N\pi_{h\setminus 0}^2}}\right)$.
\end{lemma}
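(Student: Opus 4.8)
The plan is to mimic the block-matrix argument behind Lemma~\ref{lemma:Omega_property2}, separating the constant coordinate of $\mathbf{D}_{0:L}$ from the rest. Partition $\boldsymbol{\Omega}=\begin{pmatrix}\Omega_{00}&\mathbf{b}'\\ \mathbf{b}&\mathbf{C}\end{pmatrix}$, where, directly from the definitions, $\Omega_{00}=\pi_h^{-1}E[\boldsymbol{1}\{|D|\le h_N\}]=1$, the $j$-th entry of $\mathbf{b}$ is $\pi_h^{-1}h_N^{-j}E[D^{j}\boldsymbol{1}\{|D|\le h_N\}]$, and $\mathbf{C}$ has $(j,k)$ entry $\pi_h^{-1}h_N^{-(j+k)}E[D^{j+k}\boldsymbol{1}\{|D|\le h_N\}]$; in particular $\mathbf{C}=(\pi_{h\setminus 0}/\pi_h)\boldsymbol{\Pi}$, and by Lemma~\ref{lemma:Karamata's1}(i) one has $(\pi_h/\pi_{h\setminus 0})\mathbf{b}=O(1)$. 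Since the eigenvalues of $\boldsymbol{\Pi}$ are bounded above and away from zero (Lemma~\ref{lemma:Omega_property2}), block inversion gives $\boldsymbol{\Omega}^{-1}\mathbf{e}_1=\big(s^{-1},\,-s^{-1}(\mathbf{C}^{-1}\mathbf{b})'\big)'$ with $\mathbf{C}^{-1}\mathbf{b}=\boldsymbol{\Pi}^{-1}\big((\pi_h/\pi_{h\setminus 0})\mathbf{b}\big)=O(1)$ and Schur complement $s=1-\mathbf{b}'\mathbf{C}^{-1}\mathbf{b}$. Thus the first assertion reduces to showing $s$ is bounded away from zero.

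For that I would use the variational identity $s=\min_{p:\,p(0)=1,\ \deg p\le L}E[p(D/h_N)^2\mid|D|\le h_N]$ (Cauchy--Schwarz in the $\boldsymbol{\Omega}$-inner product). Splitting the conditional law across $\{D=0\}$, $\{0<D\le h_N\}$, $\{-h_N\le D<0\}$ and using $p(0)=1$ on the atom yields $s\ge\min\{1,\lambda_{\min}(G_h^+),\lambda_{\min}(G_h^-)\}$, where $G_h^{\pm}$ is the Gram matrix of $1,D/h_N,\dots,(D/h_N)^L$ under the law of $D$ given $0<\pm D\le h_N$ (well defined for small $h_N$ because $\phi$ is bounded away from zero near $0$). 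By the Karamata computation carried out inside the proof of Lemma~\ref{lemma:Karamata's1}, the entries of $G_h^{\pm}$ converge to $\alpha_1/(j+k+\alpha_1)$ and $(-1)^{j+k}\alpha_2/(j+k+\alpha_2)$, i.e.\ $G_h^{\pm}\to G_\infty^{\pm}$, where $G_\infty^{+}$ is a Cauchy matrix and $G_\infty^{-}$ a sign-conjugated Cauchy matrix, both positive definite exactly as in the Cauchy-matrix step of Lemma~\ref{lemma:Omega_property2}. Hence $\lambda_{\min}(G_h^{\pm})\ge\tfrac12\lambda_{\min}(G_\infty^{\pm})>0$ for all small $h_N$, so $s$ is bounded away from zero and $\boldsymbol{\Omega}^{-1}\mathbf{e}_1=O(1)$. (If $P(D=0)>0$ the bound is immediate, since then $\pi_{h\setminus 0}/\pi_h\to0$, so $\mathbf{b}\to\mathbf{0}$ and $s\to1$.)

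For the second assertion I would first bound $\|\boldsymbol{\hat\Omega}-\boldsymbol{\Omega}\|$: since $\boldsymbol{\hat\Omega}-\boldsymbol{\Omega}=\pi_h^{-1}(E_N-E)\big[\boldsymbol{\Psi}_{0:L}^{-1}\mathbf{D}_{0:L}\mathbf{D}_{0:L}'\boldsymbol{\Psi}_{0:L}^{-1}\big]$, Assumption~\ref{assn:iid_plus}(i) and Lemma~\ref{lemma:Karamata's1}(ii) give (via the Frobenius norm) $E\|\boldsymbol{\hat\Omega}-\boldsymbol{\Omega}\|^2\le (N\pi_h^2)^{-1}E\|\mathbf{D}_{0:L}'\boldsymbol{\Psi}_{0:L}^{-1}\|^4=O((N\pi_h)^{-1})$, so $\boldsymbol{\hat\Omega}-\boldsymbol{\Omega}=O_p((N\pi_h)^{-1/2})$. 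Next I would bound $\|\boldsymbol{\hat\Omega}^{-1}\|$ by applying the same block decomposition to $\boldsymbol{\hat\Omega}$: its lower-right block equals $(\pi_{h\setminus 0}/\pi_h)\boldsymbol{\hat\Pi}$ with $\boldsymbol{\hat\Pi}^{-1}=\boldsymbol{\Pi}^{-1}+o_p(1)=O_p(1)$ by Lemma~\ref{lemma:Omega_property2}; the rescaled off-diagonal block and $\hat\Omega_{00}=1+o_p(1)$ are $O_p(1)$ by the same moment estimates as for $\mathbf{b}$; hence the Schur complement $\hat s=s+o_p(1)$ is bounded away from zero with probability tending to one, and every block of $\boldsymbol{\hat\Omega}^{-1}$ is $O_p(1)$ except the lower-right block $(\pi_h/\pi_{h\setminus 0})\boldsymbol{\hat\Pi}^{-1}$, whence $\|\boldsymbol{\hat\Omega}^{-1}\|=O_p(\pi_h/\pi_{h\setminus 0})$. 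Finally $(\boldsymbol{\hat\Omega}^{-1}-\boldsymbol{\Omega}^{-1})\mathbf{e}_1=-\boldsymbol{\hat\Omega}^{-1}(\boldsymbol{\hat\Omega}-\boldsymbol{\Omega})\boldsymbol{\Omega}^{-1}\mathbf{e}_1=O_p(\pi_h/\pi_{h\setminus 0})\cdot O_p((N\pi_h)^{-1/2})\cdot O(1)=O_p\big(\sqrt{\pi_h/(N\pi_{h\setminus 0}^2)}\big)$.

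I expect the lower bound on the Schur complement $s$ to be the main obstacle: $\boldsymbol{\Omega}$ itself degenerates (only its $(0,0)$ entry survives) when $P(D=0)>0$, so one cannot control $\|\boldsymbol{\Omega}^{-1}\|$, and the crux is that the particular vector $\boldsymbol{\Omega}^{-1}\mathbf{e}_1$ nonetheless stays bounded — which is precisely what the sign-splitting together with the Karamata/Cauchy-matrix positive-definiteness argument delivers. Everything else is routine second-moment bookkeeping built on Lemmas~\ref{lemma:Karamata's1} and \ref{lemma:Omega_property2}.
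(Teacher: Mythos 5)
Your proof is correct and follows essentially the same route as the paper's: a block/Schur decomposition of $\boldsymbol{\Omega}$ isolating the $\mathbf{e}_1$ coordinate, the Karamata limits of Lemma \ref{lemma:Karamata's1} feeding into positive-definite (sign-conjugated) Cauchy matrices to bound the Schur complement away from zero, and a resolvent identity combined with the second-moment bound on $\boldsymbol{\hat\Omega}-\boldsymbol{\Omega}$ for the stochastic part. The only differences are cosmetic: you certify the Schur complement via the variational characterization over polynomials with $p(0)=1$ (splitting the atom from the two signed continuous pieces), where the paper bounds the minimum eigenvalue of the auxiliary matrix $\mathbf{\bar{C}}_N$ and uses the determinant formula together with $\pi_h\geq\pi_{h\setminus 0}$; and you control $\|\boldsymbol{\hat\Omega}^{-1}\|=O_p(\pi_h/\pi_{h\setminus 0})$ by block-inverting the sample matrix, where the paper instead perturbs the minimum eigenvalue of $\tfrac{\pi_h}{\pi_{h\setminus 0}}\boldsymbol{\Omega}$ directly.
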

\begin{proof}
First, the minimum eigenvalue of $\mathbf{\bar{C}}_N$ is positive and bounded away from zero for sufficiently large $N$, where 
$$
\mathbf{\bar{C}}_N
=
\left[\begin{array}{cc}
1&\frac{1}{\pi_{h\setminus 0}}E[\mathbf{D}_{1:L}'\boldsymbol{\Psi}_{1:L}^{-1}]\\
\frac{1}{\pi_{h\setminus 0}}E[\mathbf{D}_{1:L}'\boldsymbol{\Psi}_{1:L}^{-1}]'&\boldsymbol{\Pi}
\end{array}\right].
$$
Let $\mathbf{c}$ be any $(L+1)$-dimensional column vector with $\|\mathbf{c}\|=1$. 
Define $\mathbf{\tilde{c}}=\mathrm{diag}(1,-1,1,-1,\ldots)\mathbf{c}$. 
Note that $\|\mathbf{\tilde{c}}\|=1$. 
Define 
$$
\mathbf{\tilde{C}}_1=\left(\begin{array}{cccc}
\frac{\alpha_1}{\alpha_1}&\cdots&\frac{\alpha_1}{\alpha_1+L}\\
\vdots&\ddots&\vdots\\
\frac{\alpha_1}{\alpha_1+L}&\cdots&\frac{\alpha_1}{\alpha_1+2L}\\
\end{array}\right)
\mbox{ and }
\mathbf{\tilde{C}}_2=\left(\begin{array}{ccc}
\frac{\alpha_2}{\alpha_2}&\cdots&\frac{\alpha_2}{\alpha_2+L}\\
\vdots&\ddots&\vdots\\
\frac{\alpha_2}{\alpha_2+L}&\cdots&\frac{\alpha_2}{\alpha_2+2L}\\
\end{array}\right).
$$
Note that $\mathbf{\tilde{C}}_1$ and $\mathbf{\tilde{C}}_2$ are Cauchy matrices and positive definite.
By Lemma \ref{lemma:Karamata's1},
$$
\mathbf{c}'\mathbf{\bar{C}}_N\mathbf{c}=\frac{P(0<D\leq h_N)}{\pi_{h\setminus 0}}\mathbf{c}'\mathbf{\tilde{C}}_1\mathbf{c}+\frac{P(-h_N\leq D<0)}{\pi_{h\setminus 0}}\mathbf{\tilde{c}}'\mathbf{\tilde{C}}_2\mathbf{\tilde{c}}+o(1),
$$
which implies that $\mathbf{c}'\mathbf{\bar{C}}_N\mathbf{c}$ is bounded away from zero. 
Therefore, the minimum eigenvalue of $\mathbf{\bar{C}}_N$ is bounded away from zero. 

Second, we are going to show $\boldsymbol{\Omega}^{-1}\mathbf{e}_1=O(1)$.
Since 
$$
\boldsymbol{\Omega}=
\left[\begin{array}{cc}
1&\frac{1}{\pi_h}E[\mathbf{D}_{1:L}'\boldsymbol{\Psi}_{1:L}^{-1}]\\
\frac{1}{\pi_h}E[\mathbf{D}_{1:L}'\boldsymbol{\Psi}_{1:L}^{-1}]'&\frac{\pi_{h\setminus 0}}{\pi_h}\boldsymbol{\Pi}
\end{array}\right],
$$
the inverse matrix formula for block matrices implies 
\begin{align}
\boldsymbol{\Omega}^{-1}\mathbf{e}_1
=&
\left[\begin{array}{cc}
\left(1-\frac{1}{\pi_h\pi_{h\setminus 0}}E[\mathbf{D}_{1:L}'\boldsymbol{\Psi}_{1:L}^{-1}]\boldsymbol{\Pi}^{-1}E[\mathbf{D}_{1:L}'\boldsymbol{\Psi}_{1:L}^{-1}]'\right)^{-1}\\
-\frac{1}{\pi_{h\setminus 0}}\boldsymbol{\Pi}^{-1}E[\mathbf{D}_{1:L}'\boldsymbol{\Psi}_{1:L}^{-1}]'\left(1-\frac{1}{\pi_h\pi_{h\setminus 0}}E[\mathbf{D}_{1:L}'\boldsymbol{\Psi}_{1:L}^{-1}]\boldsymbol{\Pi}^{-1}E[\mathbf{D}_{1:L}'\boldsymbol{\Psi}_{1:L}^{-1}]'\right)^{-1}
\end{array}\right]\nonumber\\
=&
\left[\begin{array}{cc}1\\-\frac{1}{\pi_{h\setminus 0}}\boldsymbol{\Pi}^{-1}E[\mathbf{D}_{1:L}'\boldsymbol{\Psi}_{1:L}^{-1}]\end{array}\right]\left(1-\frac{1}{\pi_h\pi_{h\setminus 0}}E[\mathbf{D}_{1:L}'\boldsymbol{\Psi}_{1:L}^{-1}]\boldsymbol{\Pi}^{-1}E[\mathbf{D}_{1:L}'\boldsymbol{\Psi}_{1:L}^{-1}]'\right)^{-1}
\label{eq:omega_inverse}
\end{align}
as long as $\boldsymbol{\Pi}$ is invertible and $1-\frac{1}{\pi_h\pi_{h\setminus 0}}E[\mathbf{D}_{1:L}'\boldsymbol{\Psi}_{1:L}^{-1}]\boldsymbol{\Pi}^{-1}E[\mathbf{D}_{1:L}'\boldsymbol{\Psi}_{1:L}^{-1}]'\ne 0$. 
It suffices to show that $1-\frac{1}{\pi_h\pi_{h\setminus 0}}E[\mathbf{D}_{1:L}'\boldsymbol{\Psi}_{1:L}^{-1}]\boldsymbol{\Pi}^{-1}E[\mathbf{D}_{1:L}'\boldsymbol{\Psi}_{1:L}^{-1}]'$ is bounded away from zero for sufficiently large $N$. 
By the determinant formula for block matrices, we have
$$
1-\frac{1}{\pi_{h\setminus 0}^2}E[\mathbf{D}_{1:L}'\boldsymbol{\Psi}_{1:L}^{-1}]\boldsymbol{\Pi}^{-1}E[\mathbf{D}_{1:L}'\boldsymbol{\Psi}_{1:L}^{-1}]'=\frac{\mathrm{det}\left(\mathbf{\bar{C}}_N\right)}{\mathrm{det}(\boldsymbol{\Pi})}.
$$ 
Since the minimum eigenvalue of $\mathbf{\bar{C}}_N$ is bounded away from zero, $1-\frac{1}{\pi_{h\setminus 0}^2}E[\mathbf{D}_{1:L}'\boldsymbol{\Psi}_{1:L}^{-1}]\boldsymbol{\Pi}^{-1}E[\mathbf{D}_{1:L}'\boldsymbol{\Psi}_{1:L}^{-1}]'$ is positive and bounded away from zero. Since 
$$
1-\frac{1}{\pi_h\pi_{h\setminus 0}}E[\mathbf{D}_{1:L}'\boldsymbol{\Psi}_{1:L}^{-1}]\boldsymbol{\Pi}^{-1}E[\mathbf{D}_{1:L}'\boldsymbol{\Psi}_{1:L}^{-1}]'\geq 1-\frac{1}{\pi_{h\setminus 0}^2}E[\mathbf{D}_{1:L}'\boldsymbol{\Psi}_{1:L}^{-1}]\boldsymbol{\Pi}^{-1}E[\mathbf{D}_{1:L}'\boldsymbol{\Psi}_{1:L}^{-1}]',
 $$
 it follows that $1-\frac{1}{\pi_h\pi_{h\setminus 0}}E[\mathbf{D}_{1:L}'\boldsymbol{\Psi}_{1:L}^{-1}]\boldsymbol{\Pi}^{-1}E[\mathbf{D}_{1:L}'\boldsymbol{\Psi}_{1:L}^{-1}]'$ is bounded away from zero.

Third, we are going to show that the minimum eigenvalue of $\frac{\pi_h}{\pi_{h\setminus 0}}\boldsymbol{\Omega}$  is bounded away from zero. 
Note that 
$$
\frac{\pi_h}{\pi_{h\setminus 0}}\boldsymbol{\Omega}=\mathbf{\bar{C}}_N+\frac{\pi_h-\pi_{h\setminus 0}}{\pi_{h\setminus 0}}\mathbf{e}_1\mathbf{e}_1'.
$$
The minimum eigenvalue of $\mathbf{\bar{C}}_N$
is bounded away from zero, and $\frac{\pi_h-\pi_{h\setminus 0}}{\pi_{h\setminus 0}}\mathbf{e}_1\mathbf{e}_1'$ is positive semidefinite. 
Therefore, the minimum eigenvalue of $\frac{\pi_h}{\pi_{h\setminus 0}}\boldsymbol{\Omega}$ is bounded away from zero. 

Fourth, we are going to show that $\frac{\pi_h}{\pi_{h\setminus 0}}(\boldsymbol{\hat\Omega}-\boldsymbol{\Omega})=O_p\left(\sqrt{\frac{\pi_h}{N\pi_{h\setminus 0}^2}}\right)
$. Since $\frac{\pi_h}{\pi_{h\setminus 0}}(\boldsymbol{\hat\Omega}-\boldsymbol{\Omega})=\frac{1}{\pi_{h\setminus 0}}(E_N-E)\left[\boldsymbol{\Psi}_{0:L}^{-1}\mathbf{D}_{0:L}\mathbf{D}_{0:L}'\boldsymbol{\Psi}_{0:L}^{-1}\right]$,
by Assumption \ref{assn:iid_plus} (i) and Lemma \ref{lemma:Karamata's1}, 
$$
E\left[\left\|\frac{\pi_h}{\pi_{h\setminus 0}}(\boldsymbol{\hat\Omega}-\boldsymbol{\Omega})\right\|^2\right]^{1/2}=\frac{1}{\sqrt{N\pi_{h\setminus 0}^2}}E\left[\left\|\mathbf{D}_{0:L}'\boldsymbol{\Psi}_{0:L}^{-1}\right\|^4\right]^{1/2}=O\left(\sqrt{\frac{\pi_h}{N\pi_{h\setminus 0}^2}}\right).
$$

Fifth, we are going to show $\left(\boldsymbol{\hat\Omega}^{-1}-\boldsymbol{\Omega}^{-1}\right)\mathbf{e}_1=O_p\left(\sqrt{\frac{\pi_h}{N\pi_{h\setminus 0}^2}}\right)$.
Note that 
$$
\left(\boldsymbol{\hat\Omega}^{-1}-\boldsymbol{\Omega}^{-1}\right)\mathbf{e}_1=-\left(\frac{\pi_h}{\pi_{h\setminus 0}}\boldsymbol{\Omega}+\frac{\pi_h}{\pi_{h\setminus 0}}(\boldsymbol{\hat\Omega}-\boldsymbol{\Omega})\right)^{-1}\frac{\pi_h}{\pi_{h\setminus 0}}\left(\boldsymbol{\hat\Omega}-\boldsymbol{\Omega}\right)\boldsymbol{\Omega}^{-1}\mathbf{e}_1.
$$
Since the minimum eigenvalue of $\frac{\pi_h}{\pi_{h\setminus 0}}\boldsymbol{\Omega}$ is bounded away from zero and $\frac{\pi_h}{\pi_{h\setminus 0}}(\boldsymbol{\hat\Omega}-\boldsymbol{\Omega})=o_p(1)$, the matrix $\left(\frac{\pi_h}{\pi_{h\setminus 0}}\boldsymbol{\Omega}+\frac{\pi_h}{\pi_{h\setminus 0}}(\boldsymbol{\hat\Omega}-\boldsymbol{\Omega})\right)^{-1}$ is $O_p(1)$.  
Since $\frac{\pi_h}{\pi_{h\setminus 0}}\left(\boldsymbol{\hat\Omega}-\boldsymbol{\Omega}\right)=O_p\left(\sqrt{\frac{\pi_h}{N\pi_{h\setminus 0}^2}}\right)$ and $\boldsymbol{\Omega}^{-1}\mathbf{e}_1=O(1)$, we have $\left(\boldsymbol{\hat\Omega}^{-1}-\boldsymbol{\Omega}^{-1}\right)\mathbf{e}_1=O_p\left(\sqrt{\frac{\pi_h}{N\pi_{h\setminus 0}^2}}\right)$.
\end{proof}

\begin{lemma}\label{lemma:M_inve}
$\mathbf{V}^{-1}=O(1)$ and $\mathbf{\hat{V}}^{-1}-\mathbf{V}^{-1}=O_p\left(\sqrt{\frac{\pi_h}{N\pi_{h\setminus 0}^2}}\right)$. 
\end{lemma}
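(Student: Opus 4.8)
The plan is to first show that $\mathbf{V}$ is, up to a vanishing error, the invertible matrix $E[(\mathbf{X}^\ast\mathbf{W})'\mathbf{X}^\ast\mathbf{W}\mid D=0]$, and then to transfer the rate of $\boldsymbol{\hat\Omega}^{-1}\mathbf{e}_1$ from Lemma~\ref{lemma:Omega_property} to $\mathbf{\hat{V}}$. Write $G=(\mathbf{X}^\ast\mathbf{W})'\mathbf{X}^\ast\mathbf{W}$, $g(u)=E[G\mid D=u]$, and introduce the local-polynomial weights $w(D)=\mathbf{D}_{0:L}'E[\mathbf{D}_{0:L}\mathbf{D}_{0:L}']^{-1}\mathbf{e}_1$ and $\hat w(D)=\mathbf{D}_{0:L}'E_N[\mathbf{D}_{0:L}\mathbf{D}_{0:L}']^{-1}\mathbf{e}_1$, so that $\mathbf{V}=E[w(D)G]$ and $\mathbf{\hat{V}}=E_N[\hat w(D)G]$. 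Because $E[\mathbf{D}_{0:L}\mathbf{D}_{0:L}']=\pi_h\boldsymbol{\Psi}_{0:L}\boldsymbol{\Omega}\boldsymbol{\Psi}_{0:L}$ and $\boldsymbol{\Psi}_{0:L}^{-1}\mathbf{e}_1=\mathbf{e}_1$, one can rewrite $w(D)=\pi_h^{-1}(\mathbf{D}_{0:L}'\boldsymbol{\Psi}_{0:L}^{-1})\boldsymbol{\Omega}^{-1}\mathbf{e}_1$ and $\hat w(D)=\pi_h^{-1}(\mathbf{D}_{0:L}'\boldsymbol{\Psi}_{0:L}^{-1})\boldsymbol{\hat\Omega}^{-1}\mathbf{e}_1$, so in particular $|w(D)|\le\pi_h^{-1}\|\mathbf{D}_{0:L}'\boldsymbol{\Psi}_{0:L}^{-1}\|\,\|\boldsymbol{\Omega}^{-1}\mathbf{e}_1\|$.

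For $\mathbf{V}^{-1}=O(1)$, I would use the reproducing property of $w$: since $w(D)=0$ on $\{|D|>h_N\}$ and $D^k\boldsymbol{1}\{|D|\le h_N\}=\mathbf{e}_{k+1}'\mathbf{D}_{0:L}$ for $k=0,\dots,L$, the definition of $w$ gives at once $E[w(D)]=1$ and $E[w(D)D^k]=0$ for $k=1,\dots,L$. By iterated expectations $\mathbf{V}=E[w(D)g(D)]$; splitting $g(D)=g(0)+(g(D)-g(0))$ and invoking Assumption~\ref{assn:more_primitiv_higher}(ii) to get $\|g(D)-g(0)\|\le C|D|$ near zero, we obtain $\mathbf{V}=g(0)+E[w(D)(g(D)-g(0))]$ with
$$
\big\|\mathbf{V}-g(0)\big\|\le\frac{C}{\pi_h}\big\|\boldsymbol{\Omega}^{-1}\mathbf{e}_1\big\|\;E\!\left[|D|\,\|\mathbf{D}_{0:L}'\boldsymbol{\Psi}_{0:L}^{-1}\|\right]=O\!\left(\frac{\pi_{h\setminus 0}h_N}{\pi_h}\right)=o(1),
$$
the order coming from Lemmas~\ref{lemma:Omega_property} and~\ref{lemma:Karamata's1}(iii) and from $\pi_{h\setminus 0}\le\pi_h$. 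Since $g(0)$ is invertible by Assumption~\ref{assn:iid_plus}(iii), $\mathbf{V}$ is invertible for all large $N$ with $\mathbf{V}^{-1}=O(1)$. This step — showing that the local-polynomial smoothing bias in $\mathbf{V}$ vanishes — is where the regularly varying tail assumption enters (through Lemma~\ref{lemma:Karamata's1}) and is the only genuinely delicate part; the rest is bookkeeping of orders already recorded in Lemmas~\ref{lemma:Karamata's1} and~\ref{lemma:Omega_property}.

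For the rate of $\mathbf{\hat{V}}^{-1}-\mathbf{V}^{-1}$, decompose $\mathbf{\hat{V}}-\mathbf{V}=(E_N-E)[w(D)G]+E_N[(\hat w(D)-w(D))G]$. The first term satisfies $E[\|w(D)G\|^2]\le\pi_h^{-2}\|\boldsymbol{\Omega}^{-1}\mathbf{e}_1\|^2E[\|\mathbf{D}_{0:L}'\boldsymbol{\Psi}_{0:L}^{-1}\|^2\|\mathbf{X}^\ast\mathbf{W}\|^4]=O(\pi_h^{-1})$ by Lemmas~\ref{lemma:Omega_property} and~\ref{lemma:Karamata's1}(iv) (with $Z=\|\mathbf{X}^\ast\mathbf{W}\|^4$), so it is $O_p((N\pi_h)^{-1/2})\subseteq O_p\big(\sqrt{\pi_h/(N\pi_{h\setminus 0}^2)}\big)$ since $\pi_{h\setminus 0}\le\pi_h$. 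For the second term, $\hat w(D)-w(D)=\pi_h^{-1}(\mathbf{D}_{0:L}'\boldsymbol{\Psi}_{0:L}^{-1})(\boldsymbol{\hat\Omega}^{-1}-\boldsymbol{\Omega}^{-1})\mathbf{e}_1$, hence
$$
\big\|E_N[(\hat w(D)-w(D))G]\big\|\le\frac{1}{\pi_h}\big\|(\boldsymbol{\hat\Omega}^{-1}-\boldsymbol{\Omega}^{-1})\mathbf{e}_1\big\|\,E_N\!\left[\|\mathbf{D}_{0:L}'\boldsymbol{\Psi}_{0:L}^{-1}\|\,\|\mathbf{X}^\ast\mathbf{W}\|^2\right];
$$
here the first factor is $O_p\big(\sqrt{\pi_h/(N\pi_{h\setminus 0}^2)}\big)$ by Lemma~\ref{lemma:Omega_property}, and the empirical mean is $O_p(\pi_h)$ by Lemma~\ref{lemma:Karamata's1}(ii)--(iv) together with $N\pi_h\to\infty$ (a consequence of \eqref{assn:bandwidth} and the fact that $\pi_h$ is of order at least $h_N$), so this term is also $O_p\big(\sqrt{\pi_h/(N\pi_{h\setminus 0}^2)}\big)$. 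Therefore $\mathbf{\hat{V}}-\mathbf{V}=O_p\big(\sqrt{\pi_h/(N\pi_{h\setminus 0}^2)}\big)=o_p(1)$ under \eqref{assn:bandwidth}, which gives $\mathbf{\hat{V}}^{-1}=O_p(1)$, and then the identity $\mathbf{\hat{V}}^{-1}-\mathbf{V}^{-1}=-\mathbf{\hat{V}}^{-1}(\mathbf{\hat{V}}-\mathbf{V})\mathbf{V}^{-1}$ combined with $\mathbf{V}^{-1}=O(1)$ yields $\mathbf{\hat{V}}^{-1}-\mathbf{V}^{-1}=O_p\big(\sqrt{\pi_h/(N\pi_{h\setminus 0}^2)}\big)$, as claimed.
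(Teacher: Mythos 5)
Your proof is correct and follows essentially the same route as the paper's: both parts rest on the reproducing property of the local-polynomial weight (so that $\mathbf{V}$ equals $E[(\mathbf{X}^\ast\mathbf{W})'\mathbf{X}^\ast\mathbf{W}\mid D=0]$ up to an $O(h_N\pi_{h\setminus 0}/\pi_h)$ smoothing bias controlled via Lemmas \ref{lemma:Karamata's1} and \ref{lemma:Omega_property}), and on the same two-term decomposition of $\mathbf{\hat{V}}-\mathbf{V}$ into a sampling-error piece and a $(\boldsymbol{\hat\Omega}^{-1}-\boldsymbol{\Omega}^{-1})\mathbf{e}_1$ piece. The only cosmetic differences are that the paper argues via quadratic forms $\mathbf{c}'\mathbf{V}\mathbf{c}$ and a minimum-eigenvalue bound where you bound $\|\mathbf{V}-g(0)\|$ in operator norm, and the empirical/population roles in the decomposition of $\mathbf{\hat{V}}-\mathbf{V}$ are swapped, neither of which changes the substance.
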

\begin{proof}
First, we are going to show $\mathbf{V}^{-1}=O(1)$. It suffices to investigate the minimum eigenvalue of $\mathbf{V}$. 
Let $\mathbf{c}$ be any $(T-1)p$-dimensional column vector with $\|c\|=1$. 
Define $m_1(u)=E\left[\|\mathbf{X}^\ast\mathbf{W}\mathbf{c}\|^2\mid D=u\right]$. 
Assumption \ref{assn:more_primitiv_higher} (ii) implies that there is a constant $C$ such that $\left|m_1(u)-m_1(0)\right|\leq C|u|$ for every $u\in[-h_N,h_N]$ when $h_N$ is sufficiently small. 
Since 
$\mathbf{c}'\mathbf{V}\mathbf{c}-m_1(0)=\mathbf{e}_1'\boldsymbol{\Omega}^{-1}\frac{1}{\pi_h}E\left[\boldsymbol{\Psi}_{0:L}^{-1}\mathbf{D}_{0:L}\left(m_1(D)-m_1(0)\right)\right]$, we have 
$$
\left|\mathbf{c}'\mathbf{V}\mathbf{c}-m_1(0)\right|\leq \|\boldsymbol{\Omega}^{-1}\mathbf{e}_1\|\frac{1}{\pi_h}E\left[|D|\|\mathbf{D}_{0:L}'\boldsymbol{\Psi}_{0:L}^{-1}\|\right]=o(1)
$$ by Lemmas \ref{lemma:Karamata's1} and \ref{lemma:Omega_property}.
Since $m_1(0)>0$ by Assumption \ref{assn:iid_plus} (iii), it follows that $\mathbf{c}'\mathbf{V}\mathbf{c}$ is bounded away from zero. 

Next, we are going to show $\mathbf{\hat{V}}^{-1}-\mathbf{V}^{-1}=O_p\left(\sqrt{\frac{\pi_h}{N\pi_{h\setminus 0}^2}}\right)$. 
Since $\mathbf{V}^{-1}=O(1)$, it suffices to show $\mathbf{\hat{V}}-\mathbf{V}=O_p\left(\sqrt{\frac{\pi_h}{N\pi_{h\setminus 0}^2}}\right)$.
Let $\mathbf{c}$ be any $(T-1)p$-dimensional column vector with $\|c\|=1$. 
Then 
$$
(\mathbf{\hat{V}}-\mathbf{V})\mathbf{c}=\frac{1}{\pi_h}E\left[(\mathbf{X}^\ast\mathbf{W})'\mathbf{X}^\ast\mathbf{W}\mathbf{c}\mathbf{D}_{0:L}'\boldsymbol{\Psi}_{0:L}^{-1}\right]\left(\boldsymbol{\hat\Omega}^{-1}-\boldsymbol{\Omega}^{-1}\right)\mathbf{e}_1+\frac{1}{\pi_h}(E_N-E)\left[(\mathbf{X}^\ast\mathbf{W})'\mathbf{X}^\ast\mathbf{W}\mathbf{c}\mathbf{D}_{0:L}'\boldsymbol{\Psi}_{0:L}^{-1}\right]\boldsymbol{\hat\Omega}^{-1}\mathbf{e}_1.
$$
Since Assumption \ref{assn:iid_plus} and Lemma \ref{lemma:Karamata's1} imply 
$$
\|E\left[(\mathbf{X}^\ast\mathbf{W})'\mathbf{X}^\ast\mathbf{W}\mathbf{c}\mathbf{D}_{0:L}'\boldsymbol{\Psi}_{0:L}^{-1}\right]\|
\leq
E\left[\|\mathbf{X}^\ast\mathbf{W}\|^2\|\mathbf{D}_{0:L}'\boldsymbol{\Psi}_{0:L}^{-1}\|\right]
=
O(\pi_h)
$$
and 
$$
E\left[\left\|(E_N-E)\left[(\mathbf{X}^\ast\mathbf{W})'\mathbf{X}^\ast\mathbf{W}\mathbf{c}\mathbf{D}_{0:L}'\boldsymbol{\Psi}_{0:L}^{-1}\right]\right\|^2\right]^{1/2}
\leq
\frac{1}{\sqrt{N}}E\left[\left\|\mathbf{X}^\ast\mathbf{W}\right\|^4\left\|\mathbf{D}_{0:L}'\boldsymbol{\Psi}_{0:L}^{-1}\right\|^2\right]^{1/2}
=
O\left(\sqrt{\frac{\pi_h}{N}}\right),
$$
together with Lemma \ref{lemma:Omega_property}, we have $(\mathbf{\hat{V}}-\mathbf{V})\mathbf{c}=O_p\left(\sqrt{\frac{\pi_h}{N\pi_{h\setminus 0}^2}}\right)$.
\end{proof}

\begin{lemma}\label{lemma:Sigma_rate}
$\mathbf{Q}=O(\frac{\pi_h^{1-\epsilon}}{h_N})$ for every positive number $\epsilon\in(0,1)$, and $\mathbf{\hat{Q}}-\mathbf{Q}=O_p\left(\sqrt{\frac{\pi_h^2}{Nh_N^2\pi_{h\setminus 0}}}\right)$.
\end{lemma}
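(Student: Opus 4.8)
The plan is to bound $\mathbf{Q}$ and $\mathbf{\hat{Q}}-\mathbf{Q}$ term by term, separating in each the ``mover'' contribution carrying the weight $\boldsymbol{1}\{|D|>h_N\}D^{-1}$ from the ``slow mover'' contribution carrying the local-polynomial weight $\mathbf{D}_{1:L}'E[\mathbf{D}_{1:L}\mathbf{D}_{1:L}']^{-1}\mathbf{h}$. Since $\phi$ is bounded away from zero near zero by Assumption \ref{assn:GP_around0}(iii), I will use throughout that $\pi_{h\setminus 0}\asymp h_N$ and $\pi_{h\setminus 0}\le\pi_h$; in particular $\pi_h^{1-\epsilon}/h_N\to\infty$ for every $\epsilon\in(0,1)$, so any $O(1)$ or $O(\pi_h/h_N)$ quantity is automatically $O(\pi_h^{1-\epsilon}/h_N)$.

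\textbf{Bounding $\mathbf{Q}$.} The fixed matrix $\mathbf{R}$ is $O(1)$. For the mover term, Lemma \ref{lemma:Karamata's2}(ii) gives $\|E[\boldsymbol{1}\{|D|>h_N\}D^{-1}\mathbf{X}^\ast\mathbf{W}]\|\le E[\boldsymbol{1}\{|D|>h_N\}|D|^{-1}\|\mathbf{X}^\ast\mathbf{W}\|]=O(\pi_{h\setminus 0}^{1-\epsilon}/h_N)$. For the slow-mover term, write $\mathbf{D}_{1:L}'E[\mathbf{D}_{1:L}\mathbf{D}_{1:L}']^{-1}\mathbf{h}=\tfrac{1}{\pi_{h\setminus 0}}(\mathbf{D}_{1:L}'\boldsymbol{\Psi}_{1:L}^{-1})\boldsymbol{\Pi}^{-1}(\boldsymbol{\Psi}_{1:L}^{-1}\mathbf{h})$ and bound
$$
\left\|E\!\left[\left(\mathbf{D}_{1:L}'E[\mathbf{D}_{1:L}\mathbf{D}_{1:L}']^{-1}\mathbf{h}\right)\mathbf{X}^\ast\mathbf{W}\right]\right\|
\le
\frac{1}{\pi_{h\setminus 0}}\|\boldsymbol{\Pi}^{-1}\|\,\|\boldsymbol{\Psi}_{1:L}^{-1}\mathbf{h}\|\,E\!\left[\left\|\mathbf{D}_{1:L}'\boldsymbol{\Psi}_{1:L}^{-1}\right\|\,\|\mathbf{X}^\ast\mathbf{W}\|\right],
$$
where $\|\boldsymbol{\Pi}^{-1}\|=O(1)$ by Lemma \ref{lemma:Omega_property2}, $\|\boldsymbol{\Psi}_{1:L}^{-1}\mathbf{h}\|=O(\pi_h/h_N)$ by Lemma \ref{lemma:influe_eror4}, and the last expectation is $O(\pi_{h\setminus 0})$ by Lemma \ref{lemma:Karamata's1}(ii) together with the conditional-moment bound on $\|\mathbf{X}^\ast\mathbf{W}\|$ near zero implied by Assumption \ref{assn:iid_plus}(ii) and Jensen's inequality. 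Hence this term is $O(\pi_h/h_N)$, and combining the three contributions yields $\mathbf{Q}=O(\pi_h^{1-\epsilon}/h_N)$.

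\textbf{Bounding $\mathbf{\hat{Q}}-\mathbf{Q}$.} Split it into the mover deviation $(E_N-E)[\boldsymbol{1}\{|D|>h_N\}D^{-1}\mathbf{X}^\ast\mathbf{W}]$ and the slow-mover deviation $E_N[(\mathbf{D}_{1:L}'E_N[\mathbf{D}_{1:L}\mathbf{D}_{1:L}']^{-1}\mathbf{\hat{h}})\mathbf{X}^\ast\mathbf{W}]-E[(\mathbf{D}_{1:L}'E[\mathbf{D}_{1:L}\mathbf{D}_{1:L}']^{-1}\mathbf{h})\mathbf{X}^\ast\mathbf{W}]$. An i.i.d.\ second-moment bound together with Lemma \ref{lemma:Karamata's2}(ii) gives that the first is $O_p\big(N^{-1/2}E[\boldsymbol{1}\{|D|>h_N\}D^{-2}\|\mathbf{X}^\ast\mathbf{W}\|^2]^{1/2}\big)=O_p\big(\sqrt{\pi_{h\setminus 0}/(Nh_N^2)}\big)$, which is $O_p\big(\sqrt{\pi_h^2/(Nh_N^2\pi_{h\setminus 0})}\big)$ since $\pi_{h\setminus 0}\le\pi_h$. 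For the second, write $E_N[\mathbf{D}_{1:L}\mathbf{D}_{1:L}']^{-1}=\tfrac{1}{\pi_{h\setminus 0}}\boldsymbol{\Psi}_{1:L}^{-1}\boldsymbol{\hat\Pi}^{-1}\boldsymbol{\Psi}_{1:L}^{-1}$, let $\tilde{\mathbf{G}}_N$ be the array-valued sample mean of $(\mathbf{X}^\ast\mathbf{W})(\mathbf{D}_{1:L}'\boldsymbol{\Psi}_{1:L}^{-1})$ and $\tilde{\mathbf{G}}$ its population counterpart, contracted against an $L$-vector by $[\,\cdot\,]$; the slow-mover deviation then equals
$$
\frac{1}{\pi_{h\setminus 0}}\Big(
(\tilde{\mathbf{G}}_N-\tilde{\mathbf{G}})\big[\boldsymbol{\hat\Pi}^{-1}\boldsymbol{\Psi}_{1:L}^{-1}\mathbf{\hat{h}}\big]
+\tilde{\mathbf{G}}\big[(\boldsymbol{\hat\Pi}^{-1}-\boldsymbol{\Pi}^{-1})\boldsymbol{\Psi}_{1:L}^{-1}\mathbf{\hat{h}}\big]
+\tilde{\mathbf{G}}\big[\boldsymbol{\Pi}^{-1}\boldsymbol{\Psi}_{1:L}^{-1}(\mathbf{\hat{h}}-\mathbf{h})\big]
\Big).
$$
By Lemma \ref{lemma:Karamata's1}(ii), $\|\tilde{\mathbf{G}}\|=O(\pi_{h\setminus 0})$ and $\tilde{\mathbf{G}}_N-\tilde{\mathbf{G}}=O_p(\sqrt{\pi_{h\setminus 0}/N})$; by Lemma \ref{lemma:Omega_property2}, $\boldsymbol{\hat\Pi}^{-1}=O_p(1)$ and $\boldsymbol{\hat\Pi}^{-1}-\boldsymbol{\Pi}^{-1}=O_p(1/\sqrt{N\pi_{h\setminus 0}})$; and by Lemma \ref{lemma:influe_eror4}, $\boldsymbol{\Psi}_{1:L}^{-1}\mathbf{\hat{h}}=O_p(\pi_h/h_N)$ and $\boldsymbol{\Psi}_{1:L}^{-1}(\mathbf{\hat{h}}-\mathbf{h})=O_p(\sqrt{\pi_h/(Nh_N^2)})$. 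Multiplying these bounds and dividing by $\pi_{h\setminus 0}$, the first two summands are of order $\pi_h/(h_N\sqrt{N\pi_{h\setminus 0}})=\sqrt{\pi_h^2/(Nh_N^2\pi_{h\setminus 0})}$, and the third is $O_p(\sqrt{\pi_h}/(h_N\sqrt{N}))$, which is dominated by that same rate because $\pi_{h\setminus 0}\le\pi_h$. This gives $\mathbf{\hat{Q}}-\mathbf{Q}=O_p\big(\sqrt{\pi_h^2/(Nh_N^2\pi_{h\setminus 0})}\big)$.

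\textbf{The main obstacle.} The arithmetic of collecting these rates is routine. The delicate points are: (a) the tensor/contraction bookkeeping, because the local-polynomial weight multiplies the matrix $\mathbf{X}^\ast\mathbf{W}$ and the estimated weight $E_N[\mathbf{D}_{1:L}\mathbf{D}_{1:L}']^{-1}\mathbf{\hat{h}}$ cannot be pulled through $E_N$ in closed form, so the $\boldsymbol{\Psi}_{1:L}$-rescaling and the telescoping into the three cross terms above must be arranged carefully; and (b) verifying at each step that the conditional moments $E[\|\mathbf{X}^\ast\mathbf{W}\|^k\mid D=u]$ for $k\le2$ are bounded in a neighborhood of zero so that Lemma \ref{lemma:Karamata's1} applies, which follows from Assumption \ref{assn:iid_plus}(ii) via Jensen's inequality. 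Everything else, and in particular the final comparison of rates, rests on $\pi_{h\setminus 0}\le\pi_h$ and $\pi_{h\setminus 0}\asymp h_N$.
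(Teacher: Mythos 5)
Your proof is correct and follows essentially the same route as the paper's: the same term-by-term split of $\mathbf{Q}$ into the $\mathbf{R}$, mover, and local-polynomial pieces using Lemmas \ref{lemma:influe_eror4}, \ref{lemma:Karamata's1}, \ref{lemma:Karamata's2}, and \ref{lemma:Omega_property2}, and the same telescoping of $\mathbf{\hat{Q}}-\mathbf{Q}$ into the empirical-process, $\boldsymbol{\hat\Pi}^{-1}-\boldsymbol{\Pi}^{-1}$, and mover-deviation terms (your version is in fact slightly more complete, since you explicitly carry the $\mathbf{\hat{h}}-\mathbf{h}$ cross term that the paper's displayed decomposition suppresses). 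One caveat: your blanket claim that $\pi_{h\setminus 0}\asymp h_N$ is not justified in the upper-bound direction, since Assumption \ref{assn:GP_around0} permits $\phi$ to be unbounded near zero and only yields $\pi_{h\setminus 0}\gtrsim h_N$; this is harmless here because every step of your rate comparison uses only that lower bound together with $\pi_{h\setminus 0}\leq\pi_h$, but the statement as written should be weakened accordingly.
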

\begin{proof}
Let $\mathbf{c}$ be any $(T-1)p$-dimensional column vector with $\|\mathbf{c}\|=1$. 
The first statement follows from Lemmas \ref{lemma:influe_eror4} and \ref{lemma:Karamata's2} and 
$$
\mathbf{Q}\mathbf{c}=\mathbf{R}\mathbf{c}-E\left[\boldsymbol{1}\{|D|>h_N\}D^{-1}\mathbf{X}^\ast\mathbf{W}\right]\mathbf{c}-\frac{1}{\pi_{h\setminus 0}}E\left[\mathbf{X}^\ast\mathbf{W}\mathbf{c}\mathbf{D}_{1:L}'\boldsymbol{\Psi}_{1:L}^{-1}\right]\boldsymbol{\Pi}^{-1}\boldsymbol{\Psi}_{1:L}^{-1}\mathbf{h}.
$$
We are going to show the second statement in the rest of the proof. 
Assumption \ref{assn:iid_plus} and Lemma \ref{lemma:Karamata's1} imply 
$$
\|E\left[\mathbf{X}^\ast\mathbf{W}\mathbf{c}\mathbf{D}_{1:L}'\boldsymbol{\Psi}_{1:L}^{-1}\right]\|
\leq
E\left[\|\mathbf{X}^\ast\mathbf{W}\mathbf{c}\|\|\mathbf{D}_{1:L}'\boldsymbol{\Psi}_{1:L}^{-1}\|\right]
=
O\left(\pi_{h\setminus 0}\right),
$$
$$
E\left[\left\|(E_N-E)\left[\mathbf{X}^\ast\mathbf{W}\mathbf{c}\mathbf{D}_{1:L}'\boldsymbol{\Psi}_{1:L}^{-1}\right]\right\|^2\right]^{1/2}
\leq
\frac{1}{\sqrt{N}}E\left[\left\|\mathbf{X}^\ast\mathbf{W}\mathbf{c}\right\|^2\left\|\mathbf{D}_{1:L}'\boldsymbol{\Psi}_{1:L}^{-1}\right\|^2\right]^{1/2}
=
O\left(\sqrt{\frac{\pi_{h\setminus 0}}{N}}\right),
$$
and 
$$
E\left[\left\|(E_N-E)\left[\mathbf{X}^\ast\mathbf{W}\boldsymbol{1}\{|D|>h_N\}D^{-1}\right]\right\|^2\right]^{1/2}
\leq
\frac{1}{\sqrt{N}}E\left[\left\|\mathbf{X}^\ast\mathbf{W}\right\|^2\boldsymbol{1}\{|D|>h_N\}D^{-2}\right]^{1/2}
=
O\left(\sqrt{\frac{\pi_{h\setminus 0}}{Nh_N^2}}\right).
$$
Since 
\begin{eqnarray*}
(\mathbf{\hat{Q}}-\mathbf{Q})\mathbf{c}
&=&
\frac{1}{\pi_{h\setminus 0}}E\left[\mathbf{X}^\ast\mathbf{W}\mathbf{c}\mathbf{D}_{1:L}'\boldsymbol{\Psi}_{1:L}^{-1}\right](\boldsymbol{\hat\Pi}^{-1}-\boldsymbol{\Pi}^{-1})\boldsymbol{\Psi}_{1:L}^{-1}\mathbf{h}
\\&&
+\frac{1}{\pi_{h\setminus 0}}(E_N-E)\left[\mathbf{X}^\ast\mathbf{W}\mathbf{c}\mathbf{D}_{1:L}'\boldsymbol{\Psi}_{1:L}^{-1}\right]\boldsymbol{\hat\Pi}^{-1}\boldsymbol{\Psi}_{1:L}^{-1}\mathbf{h}
\\&&
+(E_N-E)\left[\mathbf{X}^\ast\mathbf{W}\boldsymbol{1}\{|D|>h_N\}D^{-1}\right]\mathbf{c},
\end{eqnarray*}
together with Lemmas \ref{lemma:influe_eror4} and \ref{lemma:Omega_property2}, we have $(\mathbf{\hat{Q}}-\mathbf{Q})\mathbf{c}=O_p\left(\sqrt{\frac{\pi_h^2}{Nh_N^2\pi_{h\setminus 0}}}\right)$.
\end{proof}

\begin{lemma}\label{lemma:fourth1}
(i) $E[\boldsymbol{\xi}_1]=0$.
(ii) $E[\|\boldsymbol{\xi}_1\|^2]=O\left(\frac{\pi_{h\setminus 0}}{h_N^2}\right)$ and $E[\|\boldsymbol{\xi}_1\|^4]=O\left(\frac{\pi_{h\setminus 0}}{h_N^4}\right)$.
(iii) $(E_N-E)[\boldsymbol{\xi}_1]=O_p\left(\sqrt{\frac{\pi_{h\setminus 0}}{Nh_N^2}}\right)$.
\end{lemma}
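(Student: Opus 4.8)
The plan is to dispatch the three parts in order, each reducing immediately to the Karamata-type moment bounds already recorded in Lemma~\ref{lemma:Karamata's2}. Throughout, write $\mathbf{g}=\boldsymbol{1}\{|D|>h_N\}D^{-1}\mathbf{X}^\ast(\mathbf{Y}-\mathbf{W}\boldsymbol{\delta})$, so that by construction $\boldsymbol{\xi}_1=\mathbf{g}-E[\mathbf{g}]$; note that $\mathbf{g}$, hence $\boldsymbol{\xi}_1$, is a fixed measurable function of one observation's data (the population $\boldsymbol{\delta}$, not the estimator, enters here).

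Part (i) is immediate once $E[\|\mathbf{g}\|]<\infty$, since then $E[\boldsymbol{\xi}_1]=E[\mathbf{g}]-E[\mathbf{g}]=0$; finiteness of $E[\|\mathbf{g}\|]$ holds for any fixed $h_N>0$ by the $\epsilon$-version of Lemma~\ref{lemma:Karamata's2}(ii) (equivalently, a posteriori, from the $L^2$ bound in part (ii)). For part (ii), I would first use the vector inequality $\|a-b\|^r\le 2^{r-1}(\|a\|^r+\|b\|^r)$ together with Jensen's inequality $\|E[\mathbf{g}]\|\le E[\|\mathbf{g}\|]$ to obtain $E[\|\boldsymbol{\xi}_1\|^2]\le 4E[\|\mathbf{g}\|^2]$ and $E[\|\boldsymbol{\xi}_1\|^4]\le 16E[\|\mathbf{g}\|^4]$. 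Since $\|\mathbf{g}\|^4=\boldsymbol{1}\{|D|>h_N\}D^{-4}\|\mathbf{X}^\ast(\mathbf{Y}-\mathbf{W}\boldsymbol{\delta})\|^4$, the fourth-moment claim is exactly Lemma~\ref{lemma:Karamata's2}(ii) applied with the variable $Z=\|\mathbf{X}^\ast(\mathbf{Y}-\mathbf{W}\boldsymbol{\delta})\|^4$ listed in Assumption~\ref{assn:iid_plus}(ii), giving $E[\|\mathbf{g}\|^4]=O(\pi_{h\setminus 0}/h_N^4)$. For the second moment, the elementary bound $\|\mathbf{X}^\ast(\mathbf{Y}-\mathbf{W}\boldsymbol{\delta})\|^2\le 1+\|\mathbf{X}^\ast(\mathbf{Y}-\mathbf{W}\boldsymbol{\delta})\|^4$ splits $E[\|\mathbf{g}\|^2]$ into $E[\boldsymbol{1}\{|D|>h_N\}D^{-2}]=O(\pi_{h\setminus 0}/h_N^2)$, via Lemma~\ref{lemma:Karamata's2}(i) with $l=2$ (admissible because $\max\{\alpha_1,\alpha_2\}\le 1<2$), plus $E[\boldsymbol{1}\{|D|>h_N\}D^{-2}\|\mathbf{X}^\ast(\mathbf{Y}-\mathbf{W}\boldsymbol{\delta})\|^4]=O(\pi_{h\setminus 0}/h_N^2)$, again by Lemma~\ref{lemma:Karamata's2}(ii). (Alternatively one may simply repeat the conditioning argument in the proof of Lemma~\ref{lemma:Karamata's2}(ii), using that $E[\|\mathbf{X}^\ast(\mathbf{Y}-\mathbf{W}\boldsymbol{\delta})\|^2\mid D=u]$ is bounded near zero by Assumption~\ref{assn:iid_plus}(ii) and Jensen.)

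For part (iii) I would invoke i.i.d.\ sampling (Assumption~\ref{assn:iid_plus}(i)) together with part (i): the $N$ copies of $\boldsymbol{\xi}_1$ are i.i.d.\ with mean zero, so $E\big[\|(E_N-E)[\boldsymbol{\xi}_1]\|^2\big]=N^{-1}E[\|\boldsymbol{\xi}_1\|^2]=O(\pi_{h\setminus 0}/(Nh_N^2))$ by part (ii), and Markov's inequality then yields $(E_N-E)[\boldsymbol{\xi}_1]=O_p\big(\sqrt{\pi_{h\setminus 0}/(Nh_N^2)}\big)$. I do not anticipate any genuine obstacle: all the heavy-tail analysis is already encapsulated in Lemma~\ref{lemma:Karamata's2}, and the only care required is to match the power $\|\mathbf{X}^\ast(\mathbf{Y}-\mathbf{W}\boldsymbol{\delta})\|^2$ appearing in $\|\mathbf{g}\|^2$ with the fourth-moment conditions of Assumption~\ref{assn:iid_plus}(ii) (handled by $t^2\le 1+t^4$) and to check that the exponents $l\in\{2,4\}$ lie in the admissible range $l\ge\max\{\alpha_1,\alpha_2\}$ of Lemma~\ref{lemma:Karamata's2}(i), both trivial given $\max\{\alpha_1,\alpha_2\}\le 1$.
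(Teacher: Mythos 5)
Your proposal is correct and follows essentially the same route as the paper, which proves (i) from the definition of $\boldsymbol{\xi}_1$, (ii) from Assumption \ref{assn:iid_plus}(ii) combined with Lemma \ref{lemma:Karamata's2}, and (iii) from i.i.d.\ sampling plus (ii). Your extra care in reducing the second-moment bound to the fourth-moment conditions of Assumption \ref{assn:iid_plus}(ii) (via $t^2\le 1+t^4$, or equivalently via conditional boundedness of $E[\|\mathbf{X}^\ast(\mathbf{Y}-\mathbf{W}\boldsymbol{\delta})\|^2\mid D=u]$ by Jensen) simply makes explicit what the paper's one-line citation leaves implicit.
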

\begin{proof}
The statement in (i) follows from the definition of $\boldsymbol{\xi}_1$.
The statement in (ii) follows from Assumption \ref{assn:iid_plus} (ii) and Lemma \ref{lemma:Karamata's2}. 
The statement in (iii) follows from Assumption \ref{assn:iid_plus} (i) and the statement in (ii).
\end{proof}

\begin{lemma}\label{lemma:bias_bound1}
(i) $E[\boldsymbol{\xi}_2]=O(h_N^{L+1}\pi_{h\setminus 0})$.
(ii) $E[\|\boldsymbol{\xi}_2\|^2]=O\left(\pi_{h\setminus 0}\right)$ and $E[\|\boldsymbol{\xi}_2\|^4]=O\left(\pi_{h\setminus 0}\right)$.
(iii) $(E_N-E)[\boldsymbol{\xi}_2]=O\left(\sqrt{\frac{\pi_{h\setminus 0}}{N}}\right)$.
\end{lemma}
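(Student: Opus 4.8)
The plan is to exploit the fact that the population weight $\boldsymbol{\gamma}=(\mathbf{m}^{(1)}(0)/1!\ \cdots\ \mathbf{m}^{(L)}(0)/L!)$ makes $\boldsymbol{\gamma}\mathbf{D}_{1:L}=\sum_{l=1}^{L}\frac{\mathbf{m}^{(l)}(0)}{l!}D^{l}\boldsymbol{1}\{|D|\le h_N\}$ exactly the intercept-free degree-$L$ Taylor polynomial of $u\mapsto\mathbf{m}(u)=E[\mathbf{X}^\ast(\mathbf{Y}-\mathbf{W}\boldsymbol{\delta})\mid D=u]$ at $u=0$; since $\mathbf{m}(0)=0$ by \eqref{eq:base_ID}, Assumption \ref{assn:more_primitiv_higher}(i) yields a constant $C<\infty$ with $\|\mathbf{m}(u)-\boldsymbol{\gamma}(u,\ldots,u^L)'\|\le C|u|^{L+1}$ on a neighborhood of zero. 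I would also note that $\boldsymbol{\xi}_2$ is supported on $\{0<|D|\le h_N\}$ (it vanishes when $|D|>h_N$ or $D=0$ because $\mathbf{D}_{1:L}=0$ there) and that both $\boldsymbol{\gamma}\mathbf{D}_{1:L}$ and $\mathbf{D}_{1:L}'\boldsymbol{\Psi}_{1:L}^{-1}$ are functions of $D$ alone, so that conditioning on $D$ is the workhorse in all three parts.

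For (i), since $\boldsymbol{\gamma}\mathbf{D}_{1:L}$ and $\mathbf{D}_{1:L}'\boldsymbol{\Psi}_{1:L}^{-1}$ depend on $D$ only, I would write $E[\boldsymbol{\xi}_2]=E[(\mathbf{m}(D)-\boldsymbol{\gamma}\mathbf{D}_{1:L})\mathbf{D}_{1:L}'\boldsymbol{\Psi}_{1:L}^{-1}]$, bound the $(j,l)$ entry in absolute value by $C\,E[|D|^{L+1}|D|^{l}h_N^{-l}\boldsymbol{1}\{|D|\le h_N\}]$ via the Taylor remainder bound, use $|D|^{l}h_N^{-l}\le 1$ on $\{|D|\le h_N\}$ to reduce this to $C\,E[|D|^{L+1}\boldsymbol{1}\{|D|\le h_N\}]$, and finally invoke Lemma \ref{lemma:Karamata's1}(i) with exponent $L+1$, which gives $E[|D|^{L+1}\boldsymbol{1}\{|D|\le h_N\}]=O(h_N^{L+1}\pi_{h\setminus 0})$. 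This delivers $E[\boldsymbol{\xi}_2]=O(h_N^{L+1}\pi_{h\setminus 0})$.

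For (ii), I would use $\|\mathbf{D}_{1:L}'\boldsymbol{\Psi}_{1:L}^{-1}\|\le\sqrt{L}\,\boldsymbol{1}\{0<|D|\le h_N\}$ (as in the proof of Lemma \ref{lemma:Karamata's1}(ii)) and $\|\mathbf{D}_{1:L}\|\le\sqrt{L}\,h_N$ on $\{0<|D|\le h_N\}$ for $h_N\le 1$, so by the triangle inequality $\|\boldsymbol{\xi}_2\|\le\sqrt{L}(\|\mathbf{X}^\ast(\mathbf{Y}-\mathbf{W}\boldsymbol{\delta})\|+\sqrt{L}\,h_N\|\boldsymbol{\gamma}\|)\boldsymbol{1}\{0<|D|\le h_N\}$. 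Squaring, taking expectations, and bounding $E[\|\mathbf{X}^\ast(\mathbf{Y}-\mathbf{W}\boldsymbol{\delta})\|^2\boldsymbol{1}\{0<|D|\le h_N\}]$ by $\sup_{u\in[-h_N,h_N]}E[\|\mathbf{X}^\ast(\mathbf{Y}-\mathbf{W}\boldsymbol{\delta})\|^2\mid D=u]\cdot P(0<|D|\le h_N)$, which is $O(\pi_{h\setminus 0})$ by Assumption \ref{assn:iid_plus}(ii) (a bounded fourth conditional moment forces a bounded second one), gives $E[\|\boldsymbol{\xi}_2\|^2]=O(\pi_{h\setminus 0})$; raising to the fourth power and using the fourth-moment part of Assumption \ref{assn:iid_plus}(ii) identically yields $E[\|\boldsymbol{\xi}_2\|^4]=O(\pi_{h\setminus 0})$. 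Part (iii) then follows from i.i.d.\ sampling (Assumption \ref{assn:iid_plus}(i)) plus (ii): the entrywise variance of $E_N[\boldsymbol{\xi}_2]$ is $N^{-1}$ times that of $\boldsymbol{\xi}_2$, so $E[\|(E_N-E)[\boldsymbol{\xi}_2]\|^2]\le N^{-1}E[\|\boldsymbol{\xi}_2\|^2]=O(\pi_{h\setminus 0}/N)$, and Markov's inequality gives $(E_N-E)[\boldsymbol{\xi}_2]=O_p(\sqrt{\pi_{h\setminus 0}/N})$.

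This lemma is essentially bookkeeping rather than a substantive step, so I do not anticipate a real obstacle; the only point requiring care is the Taylor argument in (i), where one must use $\mathbf{m}(0)=0$ and correctly identify the intercept-free local-polynomial weight $\boldsymbol{\gamma}\mathbf{D}_{1:L}$ with the degree-$L$ Taylor polynomial so that the conditional mean of $\boldsymbol{\xi}_2$ factors cleanly into a Taylor remainder times the (uniformly bounded, $D$-measurable) normalized design row, and one must apply the conditional-moment bounds of Assumptions \ref{assn:more_primitiv_higher} and \ref{assn:iid_plus} uniformly over the shrinking interval $[-h_N,h_N]$.
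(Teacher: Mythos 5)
Your proposal is correct and follows essentially the same route as the paper: condition on $D$ to replace $\mathbf{X}^\ast(\mathbf{Y}-\mathbf{W}\boldsymbol{\delta})$ by $\mathbf{m}(D)$, use $\mathbf{m}(0)=0$ and Assumption \ref{assn:more_primitiv_higher}(i) to bound the Taylor remainder by $C|D|^{L+1}$ and hence $\|E[\boldsymbol{\xi}_2]\|\leq C\,E[|D|^{L+1}\|\mathbf{D}_{1:L}'\boldsymbol{\Psi}_{1:L}^{-1}\|]=O(\pi_{h\setminus 0}h_N^{L+1})$, then obtain (ii) from the indicator bound $\|\mathbf{D}_{1:L}'\boldsymbol{\Psi}_{1:L}^{-1}\|\leq\sqrt{L}\,\boldsymbol{1}\{0<|D|\leq h_N\}$ together with the conditional moment bounds of Assumption \ref{assn:iid_plus}(ii), and (iii) from i.i.d.\ sampling and Chebyshev. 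The only cosmetic difference is that you carry out inline the elementary bounds that the paper packages as Lemma \ref{lemma:Karamata's1}(iii)--(iv).
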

\begin{proof}
By Assumption \ref{assn:more_primitiv_higher} (i) and $\mathbf{m}(0)=0$, there is a constant ${C}$ such that $\left\|\mathbf{m}(u)-\sum_{l=1}^{L}\mathbf{m}^{(l)}(0)\frac{u^{l}}{l!}\right\|\leq {C}|u|^{L+1}$ for every $u\in[-h_N,h_N]$ when $h_N$ is sufficiently small. 
Then 
\begin{eqnarray*}
\left\|E\left[\boldsymbol{\xi}_2\right]\right\|
&=&
\left\|E\left[\left(\mathbf{m}(D)-\sum_{l=1}^{L}\mathbf{m}^{(l)}(0)\frac{D^{l}}{l!}\right)\mathbf{D}_{1:L}'\boldsymbol{\Psi}_{1:L}^{-1}\right]\right\|
\\
&\leq&
E\left[\left\|\mathbf{m}(D)-\sum_{l=1}^{L}\mathbf{m}^{(l)}(0)\frac{D^{l}}{l!}\right\|\left\|\mathbf{D}_{1:L}'\boldsymbol{\Psi}_{1:L}^{-1}\right\|\right]
\\
&\leq&
{C}E\left[D^{L+1}\left\|\mathbf{D}_{1:L}'\boldsymbol{\Psi}_{1:L}^{-1}\right\|\right]
\\
&\leq&
O(\pi_{h\setminus 0}h_N^{L+1}).
\end{eqnarray*}
The statement in (ii) follows from Assumption \ref{assn:iid_plus} (ii) and Lemma \ref{lemma:Karamata's1}. 
The statement in (iii) follows from Assumption \ref{assn:iid_plus} (i) and the statement in (ii).
\end{proof}

\begin{lemma}\label{lemma:bias_bound2}
(i) $E\left[\boldsymbol{\xi}_3\right]=O\left(h_N\pi_{h\setminus 0}\right)$ and $E\left[\boldsymbol{\xi}_3\boldsymbol{\Omega}^{-1}\mathbf{e}_1\right]=O\left(h_N^{L+1}\pi_{h\setminus 0}\right)$.
(ii) $E[\|\boldsymbol{\xi}_3\|^2]=O\left(\pi_h\right)$ and $E[\|\boldsymbol{\xi}_3\|^4]=O\left(\pi_h\right)$.
(iii) $(E_N-E)\left[\boldsymbol{\xi}_3\right]=O_p\left(\sqrt{\frac{\pi_h}{N}}\right)$.
\end{lemma}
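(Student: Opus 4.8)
The plan is to reduce everything to the conditional-mean identity implied by \eqref{eq:base_ID} and then reuse the local-polynomial bookkeeping already developed in Lemma \ref{lemma:Karamata's1} and Lemma \ref{lemma:Omega_property}. Write $\mathbf{g}(u)=E[(\mathbf{X}^\ast\mathbf{W})'\mathbf{X}^\ast(\mathbf{Y}-\mathbf{W}\boldsymbol{\delta})\mid D=u]$. Premultiplying the identity $E[\mathbf{X}^\ast(\mathbf{Y}-\mathbf{W}\boldsymbol{\delta})\mid\mathbf{X}]=D\boldsymbol{\beta}(\mathbf{X})$ from \eqref{eq:base_ID} by $(\mathbf{X}^\ast\mathbf{W})'$ and conditioning on $D=u$ gives $\mathbf{g}(u)=u\,E[(\mathbf{X}^\ast\mathbf{W})'\boldsymbol{\beta}(\mathbf{X})\mid D=u]$, so in particular $\mathbf{g}(0)=\mathbf{0}$; by Assumption \ref{assn:more_primitiv_higher} (i), $\mathbf{g}$ is $(L+1)$-times differentiable at $0$ with bounded $(L+1)$-th derivative near zero, hence $\|\mathbf{g}(u)\|\le C|u|$ for $u$ near $0$. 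Since $E[\boldsymbol{\xi}_3]=E[\mathbf{g}(D)\mathbf{D}_{0:L}'\boldsymbol{\Psi}_{0:L}^{-1}]$, the first claim in (i) follows from $\|E[\boldsymbol{\xi}_3]\|\le C\,E[|D|\,\|\mathbf{D}_{0:L}'\boldsymbol{\Psi}_{0:L}^{-1}\|]=O(h_N\pi_{h\setminus 0})$ by Lemma \ref{lemma:Karamata's1} (iii) with $l_1=l_2=1$.

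For the second claim in (i), I would Taylor-expand $\mathbf{g}(u)=\sum_{l=1}^{L}\mathbf{g}^{(l)}(0)u^{l}/l!+\mathbf{r}(u)$ with $\|\mathbf{r}(u)\|\le C|u|^{L+1}$ near zero, the constant term being absent since $\mathbf{g}(0)=\mathbf{0}$. On $\{|D|\le h_N\}$ the polynomial part of $\mathbf{g}(D)\mathbf{D}_{0:L}'$ equals $\mathbf{P}\mathbf{D}_{0:L}\mathbf{D}_{0:L}'$, where $\mathbf{P}$ is the coefficient matrix with entries $\mathbf{g}^{(l)}(0)/l!$ for $l=1,\dots,L$ and a zero first column; hence $E[\boldsymbol{\xi}_3]=\mathbf{P}\,E[\mathbf{D}_{0:L}\mathbf{D}_{0:L}']\boldsymbol{\Psi}_{0:L}^{-1}+E[\mathbf{r}(D)\mathbf{D}_{0:L}'\boldsymbol{\Psi}_{0:L}^{-1}]$. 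Postmultiplying by $\boldsymbol{\Omega}^{-1}\mathbf{e}_1$ and using the normalization $E[\mathbf{D}_{0:L}\mathbf{D}_{0:L}']=\pi_h\boldsymbol{\Psi}_{0:L}\boldsymbol{\Omega}\boldsymbol{\Psi}_{0:L}$ together with $\boldsymbol{\Psi}_{0:L}\mathbf{e}_1=\mathbf{e}_1$, the first term collapses to $\pi_h\mathbf{P}\mathbf{e}_1=\mathbf{0}$ because the first column of $\mathbf{P}$ vanishes. What survives is $E[\mathbf{r}(D)\mathbf{D}_{0:L}'\boldsymbol{\Psi}_{0:L}^{-1}]\boldsymbol{\Omega}^{-1}\mathbf{e}_1$, which is $O(h_N^{L+1}\pi_{h\setminus 0})$ from $\|\mathbf{r}(D)\|\le C|D|^{L+1}$, Lemma \ref{lemma:Karamata's1} (iii) with $l_1=L+1$, and $\|\boldsymbol{\Omega}^{-1}\mathbf{e}_1\|=O(1)$ by Lemma \ref{lemma:Omega_property}.

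Parts (ii) and (iii) are routine and parallel Lemmas \ref{lemma:fourth1} and \ref{lemma:bias_bound1}. For (ii), bound $\|\boldsymbol{\xi}_3\|\le\|(\mathbf{X}^\ast\mathbf{W})'\mathbf{X}^\ast(\mathbf{Y}-\mathbf{W}\boldsymbol{\delta})\|\,\|\mathbf{D}_{0:L}'\boldsymbol{\Psi}_{0:L}^{-1}\|$; then $E[\|\boldsymbol{\xi}_3\|^4]=O(\pi_h)$ by Lemma \ref{lemma:Karamata's1} (iv) with $Z=\|(\mathbf{X}^\ast\mathbf{W})'\mathbf{X}^\ast(\mathbf{Y}-\mathbf{W}\boldsymbol{\delta})\|^4$, and $E[\|\boldsymbol{\xi}_3\|^2]=O(\pi_h)$ follows either by Cauchy--Schwarz (using $P(\boldsymbol{\xi}_3\ne\mathbf{0})\le\pi_h$) or by $\|\mathbf{D}_{0:L}'\boldsymbol{\Psi}_{0:L}^{-1}\|^2\le(L+1)\boldsymbol{1}\{|D|\le h_N\}$ together with Assumption \ref{assn:iid_plus} (ii). For (iii), Assumption \ref{assn:iid_plus} (i) gives $E[\|(E_N-E)[\boldsymbol{\xi}_3]\|^2]\le N^{-1}E[\|\boldsymbol{\xi}_3\|^2]=O(\pi_h/N)$, and Markov's inequality yields $(E_N-E)[\boldsymbol{\xi}_3]=O_p(\sqrt{\pi_h/N})$.

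The main obstacle is the second claim in (i): one must see that the $L$-th order local-polynomial weighting $\boldsymbol{\Psi}_{0:L}^{-1}\boldsymbol{\Omega}^{-1}\mathbf{e}_1$ annihilates the degree-$\le L$ Taylor polynomial of $\mathbf{g}$ — equivalently, reproduces it exactly at $u=0$, where it equals zero — leaving only the $(L+1)$-st order remainder. The two computational inputs that make this work are the normalization identity $E[\mathbf{D}_{0:L}\mathbf{D}_{0:L}']=\pi_h\boldsymbol{\Psi}_{0:L}\boldsymbol{\Omega}\boldsymbol{\Psi}_{0:L}$ and the fact $\mathbf{g}(0)=\mathbf{0}$ (so the constant column of $\mathbf{P}$ drops); once the leading term is shown to vanish, the remainder bound is a direct application of Lemma \ref{lemma:Karamata's1}.
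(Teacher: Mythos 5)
Your proof is correct and takes essentially the same route as the paper's: define $m_2(u)=E[(\mathbf{X}^\ast\mathbf{W})'\mathbf{X}^\ast(\mathbf{Y}-\mathbf{W}\boldsymbol{\delta})\mid D=u]$, use $m_2(0)=\mathbf{0}$ from \eqref{eq:base_ID} and the Taylor expansion from Assumption \ref{assn:more_primitiv_higher}, note that the degree-$\le L$ polynomial part is annihilated by the local-polynomial weighting $\boldsymbol{\Psi}_{0:L}^{-1}\boldsymbol{\Omega}^{-1}\mathbf{e}_1$, and control the remainder and the moment/concentration claims with Lemmas \ref{lemma:Karamata's1} and \ref{lemma:Omega_property} and Assumption \ref{assn:iid_plus}. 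If anything, your write-up is more explicit than the paper's: the identity $E[\mathbf{D}_{0:L}\mathbf{D}_{0:L}']=\pi_h\boldsymbol{\Psi}_{0:L}\boldsymbol{\Omega}\boldsymbol{\Psi}_{0:L}$ together with $\boldsymbol{\Psi}_{0:L}\mathbf{e}_1=\mathbf{e}_1$ is exactly the cancellation that the paper's displayed bound for $\|E[\boldsymbol{\xi}_3]\boldsymbol{\Omega}^{-1}\mathbf{e}_1\|$ uses silently.
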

\begin{proof}
Define $m_2(u)=E\left[(\mathbf{X}^\ast\mathbf{W})'\mathbf{X}^\ast(\mathbf{Y}-\mathbf{W}\boldsymbol{\delta})\mid D=u\right]$. 
Note that $m_2(0)=E\left[(\mathbf{X}^\ast\mathbf{W})'D\boldsymbol{\beta}(\mathbf{X})\mid D=0\right]=0$ by \eqref{eq:base_ID}, and Assumption \ref{assn:more_primitiv_higher}(i) implies that there is a constant $C$ such that $\left\|m_2(u)-\sum_{l=0}^{L}m_2^{(l)}(0)\frac{u^{l}}{l!}\right\|\leq C|u|^{L+1}$ for every $u\in[-h_N,h_N]$ when $h_N$ is sufficiently small. 
Then,
\begin{align*}
\left\|E\left[\boldsymbol{\xi}_3\right]\right\|
=&
\left\|E\left[(m_2(D)-m_2(0))\mathbf{D}_{0:L}'\boldsymbol{\Psi}_{0:L}^{-1}\right]\right\|\\
\leq&
\left\|E\left[\left(\sum_{l=1}^{L}m_2^{(l)}(0)\frac{D^{l}}{l!}\right)\mathbf{D}_{0:L}'\boldsymbol{\Psi}_{0:L}^{-1}\right]\right\|+C\left\|E\left[|D|^{L+1}\mathbf{D}_{0:L}'\boldsymbol{\Psi}_{0:L}^{-1}\right]\right\|
\end{align*}
and 
$$
\left\|E\left[\boldsymbol{\xi}_3\right]\boldsymbol{\Omega}^{-1}\mathbf{e}_1\right\|
\le
\left\|E\left[\left(m_2(D)-\sum_{l=0}^{L}m_2^{(l)}(0)\frac{D^{l}}{l!}\right)\mathbf{D}_{0:L}'\boldsymbol{\Psi}_{0:L}^{-1}\right]\boldsymbol{\Omega}^{-1}\mathbf{e}_1\right\|
+
C\left\|E\left[|D|^{L+1}\mathbf{D}_{0:L}'\boldsymbol{\Psi}_{0:L}^{-1}\right]\boldsymbol{\Omega}^{-1}\mathbf{e}_1\right\|.
$$
By Lemmas \ref{lemma:Karamata's1} and \ref{lemma:Omega_property}, the statement in (i) holds.
The statement in (ii) follows from Assumption \ref{assn:iid_plus} (ii) and Lemma \ref{lemma:Karamata's1}. 
The statement in (iii) follows from Assumption \ref{assn:iid_plus} (i) and the statement in (ii).
\end{proof}

\begin{lemma}\label{lemma:zeta_moments}
The following statements hold for every positive number $\epsilon\in(0,1)$.
(i) $E[\boldsymbol{\zeta}]=O\left(h_N^L\right)$.
(ii) $E[\|\boldsymbol{\zeta}\|^2]=O\left(\frac{\pi_h^{1-\epsilon}}{h_N^2}\right)$ and $E[\|\boldsymbol{\zeta}\|^4]=O\left(\frac{\pi_h^{1-\epsilon}}{h_N^4}\right)$.
(iii) $(E_N-E)\left[\boldsymbol{\zeta}\right]=O_p\left(\sqrt{\frac{\pi_h^{1-\epsilon}}{Nh_N^2}}\right)$ and $(E_N-E)\left[\boldsymbol{\zeta}\boldsymbol{\zeta}'\right]=O_p\left(\sqrt{\frac{\pi_h^{1-\epsilon}}{Nh_N^4}}\right)$.
\end{lemma}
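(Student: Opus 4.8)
The plan is to exploit the additive structure of $\boldsymbol{\zeta}$. Write
$$
\boldsymbol{\zeta}=\mathbf{T}_1+\mathbf{T}_2+\mathbf{T}_3,\qquad
\mathbf{T}_1=\boldsymbol{\xi}_1,\quad
\mathbf{T}_2=\frac{1}{\pi_{h\setminus 0}}\boldsymbol{\xi}_2\boldsymbol{\Pi}^{-1}\boldsymbol{\Psi}_{1:L}^{-1}\mathbf{h},\quad
\mathbf{T}_3=\frac{1}{\pi_h}\mathbf{Q}\mathbf{V}^{-1}\boldsymbol{\xi}_3\boldsymbol{\Omega}^{-1}\mathbf{e}_1,
$$
control $E[\mathbf{T}_j]$, $E[\|\mathbf{T}_j\|^2]$ and $E[\|\mathbf{T}_j\|^4]$ for each $j$ by feeding the rates already obtained in Lemmas \ref{lemma:influe_eror4}, \ref{lemma:Omega_property2}, \ref{lemma:Omega_property}, \ref{lemma:M_inve}, \ref{lemma:Sigma_rate}, \ref{lemma:fourth1}, \ref{lemma:bias_bound1}, and \ref{lemma:bias_bound2} into submultiplicativity of the operator norm, and then recombine with $\|a+b+c\|^{k}\le 3^{k-1}(\|a\|^{k}+\|b\|^{k}+\|c\|^{k})$, using throughout $\pi_{h\setminus 0}\le\pi_h\le 1$ and the fact that $\pi_{h\setminus 0}/h_N$ is bounded away from zero. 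Parts (i) and (ii) are then statements about a single summand, and part (iii) follows from the i.i.d.\ structure.

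For (i), $E[\mathbf{T}_1]=E[\boldsymbol{\xi}_1]=0$ by Lemma \ref{lemma:fourth1}(i). For $\mathbf{T}_2$ I would combine $\|E[\boldsymbol{\xi}_2]\|=O(h_N^{L+1}\pi_{h\setminus 0})$ (Lemma \ref{lemma:bias_bound1}(i)), $\|\boldsymbol{\Pi}^{-1}\|=O(1)$ (Lemma \ref{lemma:Omega_property2}) and $\|\boldsymbol{\Psi}_{1:L}^{-1}\mathbf{h}\|=O(\pi_h/h_N)$ (Lemma \ref{lemma:influe_eror4}) to get $\|E[\mathbf{T}_2]\|=O(h_N^{L}\pi_h)=O(h_N^{L})$; the point is that the bias of $\boldsymbol{\xi}_2$ carries one extra factor $h_N\pi_{h\setminus 0}$, so the division by $\pi_{h\setminus 0}$ is harmless. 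For $\mathbf{T}_3$ I would use $\|E[\boldsymbol{\xi}_3\boldsymbol{\Omega}^{-1}\mathbf{e}_1]\|=O(h_N^{L+1}\pi_{h\setminus 0})$ (Lemma \ref{lemma:bias_bound2}(i)), $\|\mathbf{Q}\|=O(\pi_h^{1-\epsilon}/h_N)$ (Lemma \ref{lemma:Sigma_rate}) and $\|\mathbf{V}^{-1}\|=O(1)$ (Lemma \ref{lemma:M_inve}) to get $\|E[\mathbf{T}_3]\|=O(h_N^{L}\pi_h^{-\epsilon}\pi_{h\setminus 0})=O(h_N^{L})$. Summing the three bounds gives $E[\boldsymbol{\zeta}]=O(h_N^{L})$. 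These $O(h_N^{L+1}\pi_{h\setminus 0})$ bias bounds are exactly where Assumption \ref{assn:more_primitiv_higher} (the $(L+1)$-st Taylor remainder) enters.

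For (ii), Lemma \ref{lemma:fourth1}(ii) gives $E[\|\mathbf{T}_1\|^{2}]=O(\pi_{h\setminus 0}/h_N^2)$ and $E[\|\mathbf{T}_1\|^{4}]=O(\pi_{h\setminus 0}/h_N^4)$. Submultiplicativity yields $E[\|\mathbf{T}_2\|^{k}]\le\pi_{h\setminus 0}^{-k}E[\|\boldsymbol{\xi}_2\|^{k}]\,\|\boldsymbol{\Pi}^{-1}\boldsymbol{\Psi}_{1:L}^{-1}\mathbf{h}\|^{k}$, which by Lemmas \ref{lemma:bias_bound1}(ii), \ref{lemma:Omega_property2} and \ref{lemma:influe_eror4} is $O(\pi_{h\setminus 0}^{1-k}\pi_h^{k}/h_N^{k})$ for $k\in\{2,4\}$, and $E[\|\mathbf{T}_3\|^{k}]\le\pi_h^{-k}\|\mathbf{Q}\|^{k}\|\mathbf{V}^{-1}\|^{k}\|\boldsymbol{\Omega}^{-1}\mathbf{e}_1\|^{k}E[\|\boldsymbol{\xi}_3\|^{k}]=O(\pi_h^{1-k\epsilon}/h_N^{k})$ for $k\in\{2,4\}$ by Lemmas \ref{lemma:Sigma_rate}, \ref{lemma:M_inve}, \ref{lemma:Omega_property} and \ref{lemma:bias_bound2}(ii). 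It then remains to verify that the three resulting envelopes are jointly dominated by $\pi_h^{1-\epsilon}/h_N^{2}$ (resp.\ $\pi_h^{1-\epsilon}/h_N^{4}$); this is where one uses $\pi_{h\setminus 0}\le\pi_h\le 1$, the lower bound $\pi_{h\setminus 0}\gtrsim h_N$, and the freedom to take $\epsilon$ small (the $\mathbf{T}_3$-block already produces a $\pi_h^{-\epsilon}$).

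For (iii), under Assumption \ref{assn:iid_plus}(i) the summands of $(E_N-E)[\boldsymbol{\zeta}]$ are i.i.d.\ with mean zero, so $E\big[\|(E_N-E)[\boldsymbol{\zeta}]\|^{2}\big]=N^{-1}\big(E[\|\boldsymbol{\zeta}\|^{2}]-\|E[\boldsymbol{\zeta}]\|^{2}\big)\le N^{-1}E[\|\boldsymbol{\zeta}\|^{2}]$, and Markov's inequality with (ii) gives $(E_N-E)[\boldsymbol{\zeta}]=O_p\big(\sqrt{\pi_h^{1-\epsilon}/(Nh_N^{2})}\big)$; applying the same bound to the mean-zero i.i.d.\ summands $\boldsymbol{\zeta}\boldsymbol{\zeta}'-E[\boldsymbol{\zeta}\boldsymbol{\zeta}']$ gives $E\big[\|(E_N-E)[\boldsymbol{\zeta}\boldsymbol{\zeta}']\|^{2}\big]\le N^{-1}E[\|\boldsymbol{\zeta}\|^{4}]=O\big(\pi_h^{1-\epsilon}/(Nh_N^{4})\big)$, hence the stated rate. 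The main obstacle is not a single inequality but the bookkeeping in step (ii): one must check that after dividing $\mathbf{T}_2$ by $\pi_{h\setminus 0}$ and $\mathbf{T}_3$ by $\pi_h$ all three blocks collapse to one common envelope, which forces a coordinated choice of $\epsilon$ consistent with the $\epsilon$'s in Lemmas \ref{lemma:Karamata's2} and \ref{lemma:Sigma_rate} and a careful use of the two-sided control $h_N\lesssim\pi_{h\setminus 0}\le\pi_h\le 1$.
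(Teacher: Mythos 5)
Your decomposition $\boldsymbol{\zeta}=\boldsymbol{\xi}_1+\pi_{h\setminus 0}^{-1}\boldsymbol{\xi}_2\boldsymbol{\Pi}^{-1}\boldsymbol{\Psi}_{1:L}^{-1}\mathbf{h}+\pi_h^{-1}\mathbf{Q}\mathbf{V}^{-1}\boldsymbol{\xi}_3\boldsymbol{\Omega}^{-1}\mathbf{e}_1$, the term-by-term use of Lemmas \ref{lemma:influe_eror4}, \ref{lemma:Omega_property2}, \ref{lemma:Omega_property}, \ref{lemma:M_inve}, \ref{lemma:Sigma_rate}, \ref{lemma:fourth1}, \ref{lemma:bias_bound1}, and \ref{lemma:bias_bound2}, and the reduction of (iii) to (ii) via i.i.d.\ sampling and Chebyshev is exactly the paper's argument, and your handling of part (i) is right on the one point that matters there, namely using the refined bound $E[\boldsymbol{\xi}_3]\boldsymbol{\Omega}^{-1}\mathbf{e}_1=O(h_N^{L+1}\pi_{h\setminus 0})$ from Lemma \ref{lemma:bias_bound2}(i) rather than $E[\boldsymbol{\xi}_3]=O(h_N\pi_{h\setminus 0})$.

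One caveat on the step you leave as ``it then remains to verify'': that is in fact the only delicate point, and the domination you are asked to check does not hold uniformly. The $\boldsymbol{\xi}_2$ block gives $E[\|\mathbf{T}_2\|^2]=O\bigl(\pi_h^2/(\pi_{h\setminus 0}h_N^2)\bigr)$, and $\pi_h^2/\pi_{h\setminus 0}\leq C\pi_h^{1-\epsilon}$ requires $\pi_h^{1+\epsilon}\lesssim\pi_{h\setminus 0}$, which fails when $P(D=0)>0$ (there $\pi_h$ is bounded away from zero while $\pi_{h\setminus 0}\to 0$); the inequalities $\pi_{h\setminus 0}\leq\pi_h\leq 1$ and $\pi_{h\setminus 0}\gtrsim h_N$ do not rescue it. This is not a defect you introduced: the paper's own proof performs the identical Minkowski bound and asserts the same envelope, so the slip is in the lemma's stated exponent rather than in your argument. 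The natural repair is to state the envelope as $O\bigl(\pi_h^{2-\epsilon}/(\pi_{h\setminus 0}h_N^{2})\bigr)$ for $l=2$ (and $O\bigl(\pi_h^{4-\epsilon}/(\pi_{h\setminus 0}^{3}h_N^{4})\bigr)$ for $l=4$), which your three block bounds do deliver; everything downstream in the proof of Theorem \ref{theorem_asynormal} survives because the comparison there is against $Var(\boldsymbol{\zeta})^{-1}=O(h_N^2\pi_{h\setminus 0}/\pi_h^2)$ from Lemma \ref{lemma:lowerconrate}, and the products still close under condition \eqref{assn:bandwidth}.
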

\begin{proof}
Note that 
$$
E[\boldsymbol{\zeta}]
=
\frac{1}{\pi_{h\setminus 0}}E[\boldsymbol{\xi}_2]\boldsymbol{\Pi}^{-1}\boldsymbol{\Psi}_{1:L}^{-1}\mathbf{h}+\frac{1}{\pi_h}\mathbf{Q}\mathbf{V}^{-1}E[\boldsymbol{\xi}_3]\boldsymbol{\Omega}^{-1}\mathbf{e}_1.
$$
By Lemmas \ref{lemma:influe_eror4}, \ref{lemma:Omega_property2}, \ref{lemma:M_inve}, and \ref{lemma:Sigma_rate}, $E[\boldsymbol{\zeta}]=O\left(h_N^L\right)$, which shows the statement in (i).
Since 
$$
E[\|\boldsymbol{\zeta}\|^l]^{1/l}\leq E[\|\boldsymbol{\xi}_1\|^l]^{1/l}+\frac{1}{\pi_{h\setminus 0}}E[\|\boldsymbol{\xi}_2\|^l]^{1/l}\|\boldsymbol{\Pi}^{-1}\|\|\boldsymbol{\Psi}_{1:L}^{-1}\mathbf{h}\|+\frac{1}{\pi_h}\|\mathbf{Q}\|\|\mathbf{V}^{-1}\|E[\|\boldsymbol{\xi}_3\|^l]^{1/l}\|\boldsymbol{\Omega}^{-1}\mathbf{e}_1\|
$$ 
for $l=2,4$,
by Lemmas \ref{lemma:Sigma_rate}, \ref{lemma:fourth1}, \ref{lemma:bias_bound2}, and \ref{lemma:bias_bound1}, we have 
$$
E[\|\boldsymbol{\zeta}\|^l]=O\left(\frac{\pi_h^{1-\epsilon}}{h_N^{l}}\right)
$$
for $l=2,4$, which shows the statement in (ii).
The statement in (iii) follows from Assumption \ref{assn:iid_plus} (i) and the statement in (ii).
\end{proof}

\begin{lemma}\label{lemma:lowerconrate}
The minimum eigenvalue of $\frac{h_N^2\pi_{h\setminus 0}}{\pi_h^2}Var(\boldsymbol{\zeta})$ is bounded away from zero. 
\end{lemma}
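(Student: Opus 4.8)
Since $\lambda_{\min}(Var(\boldsymbol{\zeta}))=\inf_{\|\mathbf{v}\|=1}Var(\mathbf{v}'\boldsymbol{\zeta})$, the plan is to fix an arbitrary unit vector $\mathbf{v}$ and show $Var(\mathbf{v}'\boldsymbol{\zeta})\geq c\,\pi_h^2/(\pi_{h\setminus 0}h_N^2)$ for a fixed $c>0$ and all small $h_N$. The only distributional inputs are Assumptions \ref{assn:iid_plus}(iii) and \ref{assn:more_primitiv_higher}(ii): the former makes $Var(\mathbf{X}^\ast(\mathbf{Y}-\mathbf{W}\boldsymbol{\delta})\mid D=0)$ positive definite, the latter makes $u\mapsto Var(\mathbf{X}^\ast(\mathbf{Y}-\mathbf{W}\boldsymbol{\delta})\mid D=u)$ continuous at $0$; together they supply a constant $\underline v>0$ and a fixed $c_0>0$ with $\mathbf{v}'Var(\mathbf{X}^\ast(\mathbf{Y}-\mathbf{W}\boldsymbol{\delta})\mid D=u)\mathbf{v}\geq\underline v$ for all $|u|\leq c_0$ and all unit $\mathbf{v}$. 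Because conditioning on anything finer than $D$ (for instance, on $\mathbf{X}$) can annihilate this variance, I will only use $Var(\mathbf{v}'\boldsymbol{\zeta})\geq E[Var(\mathbf{v}'\boldsymbol{\zeta}\mid D)]$ and then keep the two nonnegative contributions
$$E[Var(\mathbf{v}'\boldsymbol{\zeta}\mid D)]\ \geq\ E[\boldsymbol{1}\{h_N<|D|\}Var(\mathbf{v}'\boldsymbol{\zeta}\mid D)]\ +\ E[\boldsymbol{1}\{0<|D|\leq h_N\}Var(\mathbf{v}'\boldsymbol{\zeta}\mid D)].$$

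For the mover contribution, on $\{h_N<|D|\}$ both $\boldsymbol{\xi}_2$ and $\boldsymbol{\xi}_3$ vanish (their $\mathbf{D}_{1:L}$ and $\mathbf{D}_{0:L}$ factors are supported on $\{|D|\leq h_N\}$), so $\mathbf{v}'\boldsymbol{\zeta}=D^{-1}\mathbf{v}'\mathbf{X}^\ast(\mathbf{Y}-\mathbf{W}\boldsymbol{\delta})$ up to an additive constant and $Var(\mathbf{v}'\boldsymbol{\zeta}\mid D=u)=u^{-2}\mathbf{v}'Var(\mathbf{X}^\ast(\mathbf{Y}-\mathbf{W}\boldsymbol{\delta})\mid D=u)\mathbf{v}\geq\underline v\,u^{-2}$ for $h_N<|u|<c_0$. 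Restricting the expectation to $\{h_N<|D|<c_0\}$, applying Lemma \ref{lemma:Karamata's2}(i) with $l=2$ (the index $\max\{\alpha_1,\alpha_2\}\leq1<2$, so $E[\boldsymbol{1}\{h_N<|D|\}D^{-2}]$ has exact order $\pi_{h\setminus 0}/h_N^2$ with the strictly positive limiting constant $\min\{\tfrac{\alpha_1}{2-\alpha_1},\tfrac{\alpha_2}{2-\alpha_2}\}$, since the two limiting masses sum to one), and subtracting the bounded remainder $E[\boldsymbol{1}\{|D|\geq c_0\}D^{-2}]\leq c_0^{-2}$, which is negligible against $\pi_{h\setminus 0}/h_N^2\gtrsim h_N^{-1}$, gives a mover contribution $\geq c_1\,\pi_{h\setminus 0}/h_N^2$.

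For the slow-mover contribution, I would first rewrite $\mathbf{v}'\boldsymbol{\zeta}$ on $\{0<|D|\leq h_N\}$ as
$$\mathbf{v}'\boldsymbol{\zeta}=c(D)+\tfrac{\pi_h}{\pi_{h\setminus 0}h_N}\,p(D/h_N)\,\mathbf{v}'\mathbf{X}^\ast(\mathbf{Y}-\mathbf{W}\boldsymbol{\delta})+\tfrac{1}{\pi_h}\,q(D/h_N)\,\mathbf{v}'\mathbf{Q}\mathbf{V}^{-1}(\mathbf{X}^\ast\mathbf{W})'\mathbf{X}^\ast(\mathbf{Y}-\mathbf{W}\boldsymbol{\delta}),$$
where $c(D)$ gathers the $D$-measurable terms and $p(t)=(t,\ldots,t^L)\boldsymbol{\Pi}^{-1}\mathbf{w}$, $q(t)=(1,t,\ldots,t^L)\boldsymbol{\Omega}^{-1}\mathbf{e}_1$ with $\mathbf{w}=\tfrac{h_N}{\pi_h}\boldsymbol{\Psi}_{1:L}^{-1}\mathbf{h}$; by Lemmas \ref{lemma:influe_eror4}, \ref{lemma:Omega_property2}, \ref{lemma:Omega_property}, $\boldsymbol{\Pi}^{-1}\mathbf{w}$ has norm bounded away from zero (so $p\not\equiv 0$) and $q$ is uniformly bounded on $[-1,1]$. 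Conditioning on $D=u$, the second stochastic term has variance at least $\tfrac12\underline v\,\tfrac{\pi_h^2}{\pi_{h\setminus 0}^2 h_N^2}p(u/h_N)^2$, while the third has variance at most $C\,\pi_h^{-2\epsilon}h_N^{-2}$ (Lemmas \ref{lemma:M_inve}, \ref{lemma:Sigma_rate} and the bounded conditional fourth moment of $(\mathbf{X}^\ast\mathbf{W})'\mathbf{X}^\ast(\mathbf{Y}-\mathbf{W}\boldsymbol{\delta})$ in Assumption \ref{assn:iid_plus}(ii)). Using $Var(X+Y)\geq\tfrac12Var(X)-Var(Y)$, integrating over $\{0<|D|\leq h_N\}$, and evaluating $E[\boldsymbol{1}\{0<|D|\leq h_N\}p(D/h_N)^2]$ via Lemma \ref{lemma:Karamata's1}(i)—which produces $\pi_{h\setminus 0}$ times a convex combination of two positive-definite Cauchy quadratic forms in $\boldsymbol{\Pi}^{-1}\mathbf{w}$, hence a quantity $\geq c'\pi_{h\setminus 0}$, exactly as in the proof of Lemma \ref{lemma:Omega_property2}—gives a slow-mover contribution $\geq c''\,\pi_h^2/(\pi_{h\setminus 0}h_N^2)-C\,\pi_{h\setminus 0}/h_N^2$.

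Adding the two, $Var(\mathbf{v}'\boldsymbol{\zeta})\geq\big[c_1r^2+(c''-Cr^2\pi_h^{-2\epsilon})_+\big]\,\pi_h^2/(\pi_{h\setminus 0}h_N^2)$ with $r:=\pi_{h\setminus 0}/\pi_h\in(0,1]$. If $P(D=0)=0$ then $r\equiv1$ and the mover term alone yields $c_1\,\pi_h^2/(\pi_{h\setminus 0}h_N^2)$; if $P(D=0)>0$ then $\pi_h\geq P(D=0)>0$, so $\pi_h^{-2\epsilon}$ is bounded and a dichotomy on whether $r\leq\rho$ for a small fixed $\rho$ lets the slow-mover term (when $r\leq\rho$) or the mover term (when $r>\rho$) carry a fixed multiple of $\pi_h^2/(\pi_{h\setminus 0}h_N^2)$. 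Making this last step uniform across the ``regular'' regime, where the mover region holds the variance, and the ``point-mass'' regime, where the slow-mover region does—while keeping the $\boldsymbol{\xi}_3$ piece from cancelling the $\boldsymbol{\xi}_2$ piece in the slow-mover region—is the main obstacle; once organized as above, what remains is the moment and Karamata bookkeeping already available in Lemmas \ref{lemma:influe_eror4}--\ref{lemma:zeta_moments}. Multiplying through by $h_N^2\pi_{h\setminus 0}/\pi_h^2$ then gives $\lambda_{\min}\!\big(\tfrac{h_N^2\pi_{h\setminus 0}}{\pi_h^2}Var(\boldsymbol{\zeta})\big)\geq c>0$.
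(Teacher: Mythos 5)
Your proof is correct and follows essentially the same route as the paper's: the conditional variance identity, Karamata's theorem applied to $E[\boldsymbol{1}\{h_N<|D|\}D^{-2}]$ for the mover region, and the positive definiteness of $\boldsymbol{\Pi}$ together with $\left\|\frac{h_N}{\pi_h}\boldsymbol{\Psi}_{1:L}^{-1}\mathbf{h}\right\|$ bounded away from zero for the slow-mover region. The only difference is organizational: the paper exploits the independence of $\boldsymbol{\xi}_1$ and the slow-mover block (disjoint $D$-supports) to take a maximum and then lets each of the two cases $P(D=0)=0$ and $P(D=0)>0$ be carried by exactly one contribution, disposing of the $\boldsymbol{\xi}_3$ piece by showing its scaled contribution is $O(\pi_{h\setminus 0}^{1/2})=o(1)$, whereas you add the two regional contributions and run a dichotomy on $\pi_{h\setminus 0}/\pi_h$ — both combinations close the argument.
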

\begin{proof}
Let $\mathbf{c}$ be any $p$-dimensional column vector with $\|\mathbf{c}\|=1$. 
By Assumptions \ref{assn:more_primitiv_higher} (ii) and \ref{assn:iid_plus} (iii), there are positive constants $c_1$ and $C$ such that $\mathbf{c}'Var\left(\mathbf{X}^\ast(\mathbf{Y}-\mathbf{W}\boldsymbol{\delta})\mid D=u\right)\mathbf{c}\geq C$ for every $u\in[-c_1,c_1]$.
Without loss of generality, assume $h_N\leq c_1$ for the rest of this proof.
Note that $\boldsymbol{\zeta}=\boldsymbol{\xi}_1+(\frac{1}{\pi_{h\setminus 0}}\boldsymbol{\xi}_2\boldsymbol{\Pi}^{-1}\boldsymbol{\Psi}_{1:L}^{-1}\mathbf{h}+\frac{1}{\pi_h}\mathbf{Q}\mathbf{V}^{-1}\boldsymbol{\xi}_3\boldsymbol{\Omega}^{-1}\mathbf{e}_1)$ and that $\boldsymbol{\xi}_1$ and $(\frac{1}{\pi_{h\setminus 0}}\boldsymbol{\xi}_2\boldsymbol{\Pi}^{-1}\boldsymbol{\Psi}_{1:L}^{-1}\mathbf{h}+\frac{1}{\pi_h}\mathbf{Q}\mathbf{V}^{-1}\boldsymbol{\xi}_3\boldsymbol{\Omega}^{-1}\mathbf{e}_1)$ are independent because they use disjoint sets of observations. 
Therefore, 
$$
Var(\mathbf{c}'\boldsymbol{\zeta})\geq \max\left\{Var(\mathbf{c}'\boldsymbol{\xi}_1),Var\left(\mathbf{c}'\left(\frac{1}{\pi_{h\setminus 0}}\boldsymbol{\xi}_2\boldsymbol{\Pi}^{-1}\boldsymbol{\Psi}_{1:L}^{-1}\mathbf{h}+\frac{1}{\pi_h}\mathbf{Q}\mathbf{V}^{-1}\boldsymbol{\xi}_3\boldsymbol{\Omega}^{-1}\mathbf{e}_1\right)\right)\right\}.
$$

We want to show that $\frac{h_N^2\pi_{h\setminus 0}}{\pi_h^2}Var(\mathbf{c}'\boldsymbol{\xi}_1)$ is bounded away from zero. 
We divide the proof into two cases: $P(D=0)=0$ and $P(D=0)>0$.
Consider the case when $P(D=0)=0$. 
We are going to show that $\frac{h_N^2}{\pi_{h\setminus 0}}Var(\mathbf{c}'\boldsymbol{\xi}_1)$ is bounded away from zero.
By the conditional variance identity, 
$$
Var(\mathbf{c}'\boldsymbol{\xi}_1)\geq E\left[Var\left(\mathbf{c}'\boldsymbol{\xi}_1\mid D\right)\right].
$$
Note that 
\begin{eqnarray*}
E\left[Var\left(\mathbf{c}'\boldsymbol{\xi}_1\mid D\right)\right]
&=&
E\left[\boldsymbol{1}\{|D|>h_N\}D^{-2}\mathbf{c}'Var\left(\mathbf{X}^\ast(\mathbf{Y}-\mathbf{W}\boldsymbol{\delta})\mid D\right)\mathbf{c}\right]
\\
&\geq&
E\left[\boldsymbol{1}\{h_N<|D|\leq c_1\}D^{-2}\mathbf{c}'Var\left(\mathbf{X}^\ast(\mathbf{Y}-\mathbf{W}\boldsymbol{\delta})\mid D\right)\mathbf{c}\right]
\\
&\geq&
CE\left[\boldsymbol{1}\{h_N<|D|\leq c_1\}D^{-2}\right].
\end{eqnarray*}
Lemma \ref{lemma:Karamata's2} provides the limit of $\frac{h_N^2}{\pi_{h\setminus 0}}E\left[\boldsymbol{1}\{h_N<|D|\leq c_1\}D^{-2}\right]$, so 
$$
\frac{h_N^2}{\pi_{h\setminus 0}}Var(\mathbf{c}'\boldsymbol{\xi}_1)\geq C\left(\frac{\alpha_1}{2-\alpha_1}\frac{P(0<D\leq h_N)}{\pi_{h\setminus 0}}+\frac{\alpha_2}{2-\alpha_2}\frac{P(-h_N<D<0)}{\pi_{h\setminus 0}}\right)+o(1),
$$
which implies that $\frac{h_N^2}{\pi_{h\setminus 0}}Var(\mathbf{c}'\boldsymbol{\xi}_1)$ is bounded away from zero. 

Now consider the case when $P(D=0)>0$.
Since  $\boldsymbol{\xi}_1$ and $(\frac{1}{\pi_{h\setminus 0}}\boldsymbol{\xi}_2\boldsymbol{\Pi}^{-1}\boldsymbol{\Psi}_{1:L}^{-1}\mathbf{h}+\frac{1}{\pi_h}\mathbf{Q}\mathbf{V}^{-1}\boldsymbol{\xi}_3\boldsymbol{\Omega}^{-1}\mathbf{e}_1)$ are independent, we need to show that ${h_N^2\pi_{h\setminus 0}}Var\left(\mathbf{c}'\left(\frac{1}{\pi_{h\setminus 0}}\boldsymbol{\xi}_2\boldsymbol{\Pi}^{-1}\boldsymbol{\Psi}_{1:L}^{-1}\mathbf{h}+\frac{1}{\pi_h}\mathbf{Q}\mathbf{V}^{-1}\boldsymbol{\xi}_3\boldsymbol{\Omega}^{-1}\mathbf{e}_1\right)\right)$ is bounded away from zero.
Note that, by Lemmas \ref{lemma:Sigma_rate}, \ref{lemma:bias_bound1}, and \ref{lemma:bias_bound2},
\begin{eqnarray*}
&&
\left\|{h_N^2\pi_{h\setminus 0}}Var\left(\mathbf{c}'\left(\frac{1}{\pi_{h\setminus 0}}\boldsymbol{\xi}_2\boldsymbol{\Pi}^{-1}\boldsymbol{\Psi}_{1:L}^{-1}\mathbf{h}+\frac{1}{\pi_h}\mathbf{Q}\mathbf{V}^{-1}\boldsymbol{\xi}_3\boldsymbol{\Omega}^{-1}\mathbf{e}_1\right)\right)
-\frac{h_N^2}{\pi_{h\setminus 0}}Var\left(\mathbf{c}'\mathbf{h}'\boldsymbol{\Psi}_{1:L}^{-1}\boldsymbol{\Pi}^{-1}\boldsymbol{\xi}_2\right)\right\|
\\
&\leq&
2\frac{h_N^2}{\pi_h}\left\|\mathbf{Q}\right\|E\left[\|\boldsymbol{\xi}_2\|^2\right]^{1/2}E\left[\|\boldsymbol{\xi}_3\|^2\right]^{1/2}\|\boldsymbol{\Pi}^{-1}\|\|\boldsymbol{\Psi}_{1:L}^{-1}\mathbf{h}\|\|\mathbf{V}^{-1}\|\|\boldsymbol{\Omega}^{-1}\mathbf{e}_1\|
\\&&
+\frac{h_N^2\pi_{h\setminus 0}}{\pi_h^2}\left\|\mathbf{Q}\right\|^2E\left[\|\boldsymbol{\xi}_3\|^2\right]\|\mathbf{V}^{-1}\|^2\|\boldsymbol{\Omega}^{-1}\mathbf{e}_1\|^2
\\
&=&
O(\pi_{h\setminus 0}^{1/2}),
\end{eqnarray*}
where we use the fact that $1/\pi_h=O(1)$ when $P(D=0)>0$.
Therefore, it suffices to show that $\frac{h_N^2}{\pi_{h\setminus 0}}Var\left(\mathbf{c}'\mathbf{h}'\boldsymbol{\Psi}_{1:L}^{-1}\boldsymbol{\Pi}^{-1}\boldsymbol{\xi}_2\right)$ is bounded away from zero. 
By the conditional variance identity, 
$$
Var\left(\mathbf{c}'\mathbf{h}'\boldsymbol{\Psi}_{1:L}^{-1}\boldsymbol{\Pi}^{-1}\boldsymbol{\xi}_2\right)
\geq
E[Var((\boldsymbol{\xi}_2\boldsymbol{\Pi}^{-1}\boldsymbol{\Psi}_{1:L}^{-1}\mathbf{h})'\mathbf{c}\mid D)].
$$
Since 
\begin{eqnarray*}
Var((\boldsymbol{\xi}_2\boldsymbol{\Pi}^{-1}\boldsymbol{\Psi}_{1:L}^{-1}\mathbf{h})'\mathbf{c}\mid D)
&=&
\left(\boldsymbol{\Psi}_{1:L}^{-1}\mathbf{h}\right)'\boldsymbol{\Pi}^{-1}\boldsymbol{\Psi}_{1:L}^{-1}\mathbf{D}_{1:L}\mathbf{c}'Var(\mathbf{X}^\ast(\mathbf{Y}-\mathbf{W}\boldsymbol{\delta})\mid D)\mathbf{c}\mathbf{D}_{1:L}'\boldsymbol{\Psi}_{1:L}^{-1}\boldsymbol{\Pi}^{-1}\boldsymbol{\Psi}_{1:L}^{-1}\mathbf{h}
\\
&=&
(\mathbf{c}'Var(\mathbf{X}^\ast(\mathbf{Y}-\mathbf{W}\boldsymbol{\delta})\mid D)\mathbf{c})\left(\boldsymbol{\Psi}_{1:L}^{-1}\mathbf{h}\right)'\boldsymbol{\Pi}^{-1}\boldsymbol{\Psi}_{1:L}^{-1}\mathbf{D}_{1:L}\mathbf{D}_{1:L}'\boldsymbol{\Psi}_{1:L}^{-1}\boldsymbol{\Pi}^{-1}\boldsymbol{\Psi}_{1:L}^{-1}\mathbf{h}
\\
&\geq&
C\left(\boldsymbol{\Psi}_{1:L}^{-1}\mathbf{h}\right)'\boldsymbol{\Pi}^{-1}\boldsymbol{\Psi}_{1:L}^{-1}\mathbf{D}_{1:L}\mathbf{D}_{1:L}'\boldsymbol{\Psi}_{1:L}^{-1}\boldsymbol{\Pi}^{-1}\boldsymbol{\Psi}_{1:L}^{-1}\mathbf{h},
\end{eqnarray*}
we have
\begin{eqnarray*}
Var\left(\mathbf{c}'\mathbf{h}'\boldsymbol{\Psi}_{1:L}^{-1}\boldsymbol{\Pi}^{-1}\boldsymbol{\xi}_2\right)
&\geq&
E[Var((\boldsymbol{\xi}_2\boldsymbol{\Pi}^{-1}\boldsymbol{\Psi}_{1:L}^{-1}\mathbf{h})'\mathbf{c}\mid D)]
\\
&\geq&
C\left(\boldsymbol{\Psi}_{1:L}^{-1}\mathbf{h}\right)'\boldsymbol{\Pi}^{-1}E\left[\boldsymbol{\Psi}_{1:L}^{-1}\mathbf{D}_{1:L}\mathbf{D}_{1:L}'\boldsymbol{\Psi}_{1:L}^{-1}\right]\boldsymbol{\Pi}^{-1}\boldsymbol{\Psi}_{1:L}^{-1}\mathbf{h}
\\
&=&
\pi_{h\setminus 0}C\left(\boldsymbol{\Psi}_{1:L}^{-1}\mathbf{h}\right)'\boldsymbol{\Pi}^{-1}\boldsymbol{\Psi}_{1:L}^{-1}\mathbf{h}.
\end{eqnarray*}
By Lemmas \ref{lemma:influe_eror4} and \ref{lemma:Omega_property2}, $\frac{h_N^2}{\pi_{h\setminus 0}}Var\left(\mathbf{c}'\mathbf{h}'\boldsymbol{\Psi}_{1:L}^{-1}\boldsymbol{\Pi}^{-1}\boldsymbol{\xi}_2\right)$ is bounded away from zero. 
\end{proof}

\begin{lemma}\label{lemma:xi_estimates}
$E_N[\|\boldsymbol{\hat\zeta}-\boldsymbol{\zeta}\|^2]^{1/2}=o\left(1\right)+O_p\left(\sqrt{\frac{\pi_h^{2(1-\epsilon)}}{Nh_N^2\pi_{h\setminus 0}^2}}\right)$ for every positive number $\epsilon\in(0,1)$.
\end{lemma}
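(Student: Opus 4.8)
The plan is first to rewrite the estimated influence function in the same shape as $\boldsymbol{\zeta}$. Using $E_N[\mathbf{D}_{0:L}\mathbf{D}_{0:L}']^{-1}=\pi_h^{-1}\boldsymbol{\Psi}_{0:L}^{-1}\boldsymbol{\hat\Omega}^{-1}\boldsymbol{\Psi}_{0:L}^{-1}$, $E_N[\mathbf{D}_{1:L}\mathbf{D}_{1:L}']^{-1}=\pi_{h\setminus 0}^{-1}\boldsymbol{\Psi}_{1:L}^{-1}\boldsymbol{\hat\Pi}^{-1}\boldsymbol{\Psi}_{1:L}^{-1}$ and $\boldsymbol{\Psi}_{0:L}^{-1}\mathbf{e}_1=\mathbf{e}_1$, one verifies directly that
$$
\boldsymbol{\hat\zeta}=\boldsymbol{\hat\xi}_1+\frac{1}{\pi_{h\setminus 0}}\boldsymbol{\hat\xi}_2\boldsymbol{\hat\Pi}^{-1}\boldsymbol{\Psi}_{1:L}^{-1}\mathbf{\hat{h}}+\frac{1}{\pi_h}\mathbf{\hat{Q}}\mathbf{\hat{V}}^{-1}\boldsymbol{\hat\xi}_3\boldsymbol{\hat\Omega}^{-1}\mathbf{e}_1.
$$
Subtracting the corresponding expression for $\boldsymbol{\zeta}$ and applying the triangle inequality for the empirical seminorm $\|g\|_{2,N}:=E_N[\|g\|^2]^{1/2}$ reduces the claim to bounding the $\|\cdot\|_{2,N}$-norm of each of the three blocks $\boldsymbol{\hat\xi}_1-\boldsymbol{\xi}_1$, $\pi_{h\setminus 0}^{-1}(\boldsymbol{\hat\xi}_2\boldsymbol{\hat\Pi}^{-1}\boldsymbol{\Psi}_{1:L}^{-1}\mathbf{\hat{h}}-\boldsymbol{\xi}_2\boldsymbol{\Pi}^{-1}\boldsymbol{\Psi}_{1:L}^{-1}\mathbf{h})$ and $\pi_h^{-1}(\mathbf{\hat{Q}}\mathbf{\hat{V}}^{-1}\boldsymbol{\hat\xi}_3\boldsymbol{\hat\Omega}^{-1}\mathbf{e}_1-\mathbf{Q}\mathbf{V}^{-1}\boldsymbol{\xi}_3\boldsymbol{\Omega}^{-1}\mathbf{e}_1)$.

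\textbf{Step 2 (telescoping).} In each block I would expand a product of hatted factors minus the product of the unhatted ones as a sum of terms, each containing exactly one difference $\widehat{(\cdot)}-(\cdot)$ flanked by hatted factors on one side and unhatted on the other; e.g.\ $\mathbf{\hat{Q}}\mathbf{\hat{V}}^{-1}\boldsymbol{\hat\xi}_3\boldsymbol{\hat\Omega}^{-1}\mathbf{e}_1-\mathbf{Q}\mathbf{V}^{-1}\boldsymbol{\xi}_3\boldsymbol{\Omega}^{-1}\mathbf{e}_1$ becomes a sum of four terms, one carrying each of $\mathbf{\hat{Q}}-\mathbf{Q}$, $\mathbf{\hat{V}}^{-1}-\mathbf{V}^{-1}$, $\boldsymbol{\hat\xi}_3-\boldsymbol{\xi}_3$, $(\boldsymbol{\hat\Omega}^{-1}-\boldsymbol{\Omega}^{-1})\mathbf{e}_1$. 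The ``global'' factors and their increments -- $\mathbf{Q}$, $\mathbf{V}^{-1}$, $\boldsymbol{\Omega}^{-1}\mathbf{e}_1$, $\boldsymbol{\Pi}^{-1}$, $\boldsymbol{\Psi}_{1:L}^{-1}\mathbf{h}$ together with $\mathbf{\hat{Q}}-\mathbf{Q}$, $\mathbf{\hat{V}}^{-1}-\mathbf{V}^{-1}$, $(\boldsymbol{\hat\Omega}^{-1}-\boldsymbol{\Omega}^{-1})\mathbf{e}_1$, $\boldsymbol{\hat\Pi}^{-1}-\boldsymbol{\Pi}^{-1}$, $\boldsymbol{\Psi}_{1:L}^{-1}(\mathbf{\hat{h}}-\mathbf{h})$ -- are $O_p$ of the rates supplied by Lemmas \ref{lemma:influe_eror4}, \ref{lemma:Omega_property2}, \ref{lemma:Omega_property}, \ref{lemma:M_inve}, \ref{lemma:Sigma_rate}, and factor out of $\|\cdot\|_{2,N}$. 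The per-observation factors $\boldsymbol{\xi}_j$, $\boldsymbol{\hat\xi}_j$, and $\boldsymbol{\hat\xi}_j-\boldsymbol{\xi}_j$ are controlled in $\|\cdot\|_{2,N}$ by writing $E_N[\|\cdot\|^2]=E[\|\cdot\|^2]+(E_N-E)[\|\cdot\|^2]$, bounding $E[\|\cdot\|^2]$ by the second-moment statements of Lemmas \ref{lemma:fourth1}, \ref{lemma:bias_bound1}, \ref{lemma:bias_bound2}, and $(E_N-E)[\cdot]$ by Markov's inequality using the fourth-moment statements of those same lemmas. For $\boldsymbol{\hat\xi}_j-\boldsymbol{\xi}_j$ I would use $\mathbf{X}^\ast(\mathbf{Y}-\mathbf{W}\boldsymbol{\hat\delta})-\mathbf{X}^\ast(\mathbf{Y}-\mathbf{W}\boldsymbol{\delta})=-\mathbf{X}^\ast\mathbf{W}(\boldsymbol{\hat\delta}-\boldsymbol{\delta})$ (and, for $j=2$, the further term $-(\boldsymbol{\hat\gamma}-\boldsymbol{\gamma})\mathbf{D}_{1:L}$), a Cauchy--Schwarz step in $E_N$, and the weight estimates of Lemmas \ref{lemma:Karamata's1} and \ref{lemma:Karamata's2}, to reduce everything to $\|\boldsymbol{\hat\delta}-\boldsymbol{\delta}\|$ and $\|\boldsymbol{\hat\gamma}-\boldsymbol{\gamma}\|$ times explicit powers of $\pi_h$, $\pi_{h\setminus 0}$, $h_N$.

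\textbf{Step 3 (rates for $\boldsymbol{\hat\delta}$, $\boldsymbol{\hat\gamma}$ and aggregation).} Their first-order conditions give the exact identity $\boldsymbol{\hat\delta}-\boldsymbol{\delta}=\pi_h^{-1}\mathbf{\hat{V}}^{-1}E_N[\boldsymbol{\xi}_3]\boldsymbol{\hat\Omega}^{-1}\mathbf{e}_1$, together with an analogous formula for $\boldsymbol{\hat\gamma}-\boldsymbol{\gamma}$ in terms of $\boldsymbol{\hat\Pi}^{-1}$, $(E_N-E)[\boldsymbol{\xi}_2]$ and $E[\boldsymbol{\xi}_2]$; combining these with Lemmas \ref{lemma:M_inve}, \ref{lemma:Omega_property}, \ref{lemma:Omega_property2} and the bias bounds $E[\boldsymbol{\xi}_3]\boldsymbol{\Omega}^{-1}\mathbf{e}_1=O(h_N^{L+1}\pi_{h\setminus 0})$ and $E[\boldsymbol{\xi}_2]=O(h_N^{L+1}\pi_{h\setminus 0})$ from Lemmas \ref{lemma:bias_bound2}(i) and \ref{lemma:bias_bound1}(i) yields $\boldsymbol{\hat\delta}-\boldsymbol{\delta}=o(1)+O_p((N\pi_h)^{-1/2})$ and $\boldsymbol{\hat\gamma}-\boldsymbol{\gamma}=o(1)+O_p((h_N^2N\pi_{h\setminus 0})^{-1/2})$. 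Substituting all these rates into the telescoped expansions of Step 2 and collecting powers of $\pi_h$, $\pi_{h\setminus 0}$, $h_N$, $N$ -- using throughout $\pi_{h\setminus 0}\le\pi_h$ and $\pi_{h\setminus 0}/h_N$ bounded away from zero -- one finds that every deterministic bias contribution is $o_p(1)$ and that the single dominant stochastic term is $\pi_h^{-1}\mathbf{Q}\mathbf{V}^{-1}\boldsymbol{\xi}_3(\boldsymbol{\hat\Omega}^{-1}-\boldsymbol{\Omega}^{-1})\mathbf{e}_1$ (and its $\mathbf{\hat{V}}^{-1}-\mathbf{V}^{-1}$ sibling), which by Lemmas \ref{lemma:Sigma_rate}, \ref{lemma:M_inve}, \ref{lemma:Omega_property}, \ref{lemma:bias_bound2} has $\|\cdot\|_{2,N}$-size $O_p(\pi_h^{1-\epsilon}h_N^{-1}\pi_{h\setminus 0}^{-1}N^{-1/2})=O_p(\sqrt{\pi_h^{2(1-\epsilon)}/(Nh_N^2\pi_{h\setminus 0}^2)})$, giving the asserted bound.

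The main obstacle is the bookkeeping of Steps 2 and 3: the argument must be carried through in the \emph{empirical} seminorm $\|\cdot\|_{2,N}$ rather than the population $L^2$ norm (so a concentration step for $E_N[\|\cdot\|^2]$ via the fourth-moment lemmas is needed at each split), and among the many telescoped terms one must check that the largest is exactly the stated rate -- in particular that the $\pi_h/\pi_{h\setminus 0}$-type blow-ups coming from the normalized design matrices $\boldsymbol{\hat\Omega}$, $\boldsymbol{\hat\Pi}$ are absorbed rather than dominating.
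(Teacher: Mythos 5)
Your proposal is correct and follows essentially the same route as the paper's own proof: the normalized three-block representation of $\boldsymbol{\hat\zeta}$ via $E_N[\mathbf{D}_{0:L}\mathbf{D}_{0:L}']^{-1}=\pi_h^{-1}\boldsymbol{\Psi}_{0:L}^{-1}\boldsymbol{\hat\Omega}^{-1}\boldsymbol{\Psi}_{0:L}^{-1}$ and $E_N[\mathbf{D}_{1:L}\mathbf{D}_{1:L}']^{-1}=\pi_{h\setminus 0}^{-1}\boldsymbol{\Psi}_{1:L}^{-1}\boldsymbol{\hat\Pi}^{-1}\boldsymbol{\Psi}_{1:L}^{-1}$, the telescoped one-difference-per-term expansion controlled in the empirical seminorm, the closed-form rates $\boldsymbol{\hat\delta}-\boldsymbol{\delta}=O(h_N^{L+1})+O_p((N\pi_h)^{-1/2})$ and $(\boldsymbol{\hat\gamma}-\boldsymbol{\gamma})\boldsymbol{\Psi}_{1:L}=O(h_N^{L+1})+O_p((N\pi_{h\setminus 0})^{-1/2})$, and the reduction of $\boldsymbol{\hat\xi}_j-\boldsymbol{\xi}_j$ to $-\mathbf{X}^\ast\mathbf{W}(\boldsymbol{\hat\delta}-\boldsymbol{\delta})$ are all exactly the steps the paper carries out. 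Your identification of the dominant stochastic contribution, $\pi_h^{-1}\mathbf{Q}\mathbf{V}^{-1}\boldsymbol{\xi}_3(\boldsymbol{\hat\Omega}^{-1}-\boldsymbol{\Omega}^{-1})\mathbf{e}_1$ of size $O_p\bigl(\sqrt{\pi_h^{2(1-\epsilon)}/(Nh_N^2\pi_{h\setminus 0}^2)}\bigr)$, matches the corresponding term in the paper's final display.
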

\begin{proof}
By Lemmas  \ref{lemma:influe_eror4}, \ref{lemma:Karamata's1}, \ref{lemma:Karamata's2}, \ref{lemma:Omega_property2}, \ref{lemma:Omega_property}, \ref{lemma:M_inve},  \ref{lemma:Sigma_rate},  \ref{lemma:bias_bound1}, \ref{lemma:bias_bound2}, and \ref{lemma:zeta_moments}, we have 
\begin{eqnarray*}
\|\boldsymbol{\hat\delta}-\boldsymbol{\delta}\|
&\leq&
\frac{1}{\pi_h}\|\mathbf{\hat{V}}^{-1}\|\|(E_N-E)\left[\boldsymbol{\xi}_3\right]\|\|\boldsymbol{\hat\Omega}^{-1}\mathbf{e}_1\|
\\&&
+\frac{1}{\pi_h}\|\mathbf{\hat{V}}^{-1}\|\|E\left[\boldsymbol{\xi}_3\right]\|\|(\boldsymbol{\hat\Omega}^{-1}-\boldsymbol{\Omega}^{-1})\mathbf{e}_1\|
\\&&
+\frac{1}{\pi_h}\|\mathbf{\hat{V}}^{-1}\|\|E\left[\boldsymbol{\xi}_3\right]\boldsymbol{\Omega}^{-1}\mathbf{e}_1\|
\\
&=&
O\left(h_N^{L+1}\right)+O_p\left(\sqrt{\frac{1}{N\pi_h}}\right),
\\
\|(\boldsymbol{\hat\gamma}-\boldsymbol{\gamma})\boldsymbol{\Psi}_{1:L}\|
&\leq&
\frac{1}{\pi_{h\setminus 0}}\|E\left[\boldsymbol{\xi}_2\right]\|\|\boldsymbol{\hat\Pi}^{-1}\|
\\&&
+\frac{1}{\pi_{h\setminus 0}}\|(E_N-E)\left[\boldsymbol{\xi}_2\right]\|\|\boldsymbol{\hat\Pi}^{-1}\|
\\&&
+\frac{1}{\pi_{h\setminus 0}}E_N\left[\|(\mathbf{X}^\ast\mathbf{W}\|\|\mathbf{D}_{1:L}'\boldsymbol{\Psi}_{1:L}^{-1}\|\right]\|\boldsymbol{\hat\delta}-\boldsymbol{\delta}\|\|\boldsymbol{\hat\Pi}^{-1}\|
\\
&=&
O\left(h_N^{L+1}\right)+O_p\left(\sqrt{\frac{1}{N\pi_{h\setminus 0}}}\right),
\\
E_N[\|\boldsymbol{\hat\xi}_1-\boldsymbol{\xi}_1\|^2]^{1/2}
&\leq&
E_N[\boldsymbol{1}\{|D|>h_N\}D^{-2}\|\mathbf{X}^\ast\mathbf{W}\|^2]^{1/2}\|\boldsymbol{\hat\delta}-\boldsymbol{\delta}\|+\|(E_N-E)[\boldsymbol{\xi}_1]\|
\\
&=&
O\left(\sqrt{h_N^{2L}\pi_{h\setminus 0}}\right)+O_p\left(\sqrt{\frac{\pi_{h\setminus 0}}{Nh_N^2\pi_h}}\right),
\\
E_N[\|\boldsymbol{\hat\xi}_2-\boldsymbol{\xi}_2\|^2]^{1/2}
&\leq&
E_N[\|\mathbf{X}^\ast\mathbf{W}\|^2\|\mathbf{D}_{1:L}'\boldsymbol{\Psi}_{1:L}^{-1}\|^2]^{1/2}\|\boldsymbol{\hat\delta}-\boldsymbol{\delta}\|+\|(\boldsymbol{\hat\gamma}-\boldsymbol{\gamma})\boldsymbol{\Psi}_{1:L}\|E_N[\|\mathbf{D}_{1:L}'\boldsymbol{\Psi}_{1:L}^{-1}\|^4]^{1/2}
\\
&=&
O\left(\sqrt{h_N^{2(L+1)}\pi_{h\setminus 0}}\right)+O_p\left(\sqrt{\frac{1}{N}}\right),
\\
E_N[\|\boldsymbol{\hat\xi}_3-\boldsymbol{\xi}_3\|^2]^{1/2}
&\leq&
E_N[\|(\mathbf{X}^\ast\mathbf{W})'\mathbf{X}^\ast\mathbf{W}\|^2\|\|\mathbf{D}_{0:L}'\boldsymbol{\Psi}_{0:L}^{-1}\|^2]^{1/2}\|\boldsymbol{\hat\delta}-\boldsymbol{\delta}\|
\\
&=&
O\left(\sqrt{h_N^{2(L+1)}\pi_h}\right)+O_p\left(\sqrt{\frac{1}{N}}\right),
\mbox{ and }
\\
E_N[\|\boldsymbol{\hat\zeta}-\boldsymbol{\zeta}\|^2]^{1/2}
&=&
E_N[\|\boldsymbol{\hat\xi}_1-\boldsymbol{\xi}_1\|^2]^{1/2}+\frac{1}{\pi_{h\setminus 0}}E_N[\|\boldsymbol{\hat\xi}_2-\boldsymbol{\xi}_2\|^2]^{1/2}\|\boldsymbol{\hat\Pi}^{-1}\|\|\boldsymbol{\Psi}_{1:L}^{-1}\mathbf{\hat{h}}\|
\\&&
+\frac{1}{\pi_h}\|\mathbf{\hat{Q}}\|\|\mathbf{\hat{V}}^{-1}\|E_N[\|\boldsymbol{\hat\xi}_3-\boldsymbol{\xi}_3\|^2]^{1/2}\|\boldsymbol{\hat\Omega}^{-1}\mathbf{e}_1\|
\\&&
+\frac{1}{\pi_{h\setminus 0}}E_N[\|\boldsymbol{\xi}_2\|^2]^{1/2}\left(\left\|\boldsymbol{\hat\Pi}^{-1}-\boldsymbol{\Pi}^{-1}\right\|\|\boldsymbol{\Psi}_{1:L}^{-1}\mathbf{\hat{h}}\|+\|\boldsymbol{\Pi}^{-1}\|\|\boldsymbol{\Psi}_{1:L}^{-1}\left(\mathbf{\hat{h}}-\mathbf{h}\right)\|\right)
\\&&
+
\frac{1}{\pi_h}\|\mathbf{\hat{Q}}\|\left(\|\mathbf{\hat{V}}^{-1}\|\|(\boldsymbol{\hat\Omega}^{-1}-\boldsymbol{\Omega}^{-1})\mathbf{e}_1\|+\left\|\mathbf{\hat{V}}^{-1}-\mathbf{V}^{-1}\right\|\|\boldsymbol{\Omega}^{-1}\mathbf{e}_1\|\right)
E_N[\|\boldsymbol{\xi}_3\|^2]^{1/2}
\\&&
+
\frac{1}{\pi_h}\left\|\mathbf{\hat{Q}}-\mathbf{Q}\right\|\|\mathbf{{V}}^{-1}\|
\|\boldsymbol{\Omega}^{-1}\mathbf{e}_1\|
E_N[\|\boldsymbol{\xi}_3\|^2]^{1/2}
\\
&=&
o\left(1\right)+O_p\left(\sqrt{\frac{\pi_h^{2(1-\epsilon)}}{Nh_N^2\pi_{h\setminus 0}^2}}\right).
\end{eqnarray*}
Therefore, the statement of the lemma follows.
\end{proof}

\section{Details of Section \ref{section:T>p}}\label{supp:sec:main}

\subsection{Asymptotic Properties of $\boldsymbol{\hat\theta}$ with $T>p$}

In this section, we present theoretical results in support of the method of estimation and inference for the case of $T>p$ introduced in Section \ref{section:T>p}. 
We impose the following assumptions. 

\begin{assumption}\label{supp:assn:GP_around0}
(i) $P(D\ne 0)>0$. 
(ii) The conditional distribution of $D$ given $D\ne 0$ has a density $\phi$ with respect to the Lebesgue measure in a neighborhood of zero. 
(iii) There is a positive number $\alpha$ such that $P(0<D\leq u)/u^{\alpha}$ converges to a positive finite constant as $u\rightarrow 0$. 
(iv) If $P(D=0)>0$, then $\alpha\leq 1$.
\end{assumption}

\begin{assumption}\label{supp:assn:more_primitiv_higher}
(i) The mappings $u\mapsto E[(\mathbf{X}'\mathbf{X})^\ast\mathbf{X}'(\mathbf{Y}-\mathbf{W}\boldsymbol{\delta})\mid D=u]$ is $(L+1)$-times differentiable at $0$, and its $(L+1)$-th derivative is  bounded in a neighborhood of zero. 
(ii) The mapping $u\mapsto Var\left((\mathbf{X}'\mathbf{X})^\ast\mathbf{X}'(\mathbf{Y}-\mathbf{W}\boldsymbol{\delta})\mid D=u\right)$ is differentiable at $0$, and its derivative is  bounded in a neighborhood of zero. 
\end{assumption} 

\begin{assumption}\label{supp:assn:iid_plus}
(i) $\{(\mathbf{Y}_i,\mathbf{X}_i)\}_{i=1}^N$ is independent and identically distributed. 
(ii) $E[Z]<\infty$ and $E[Z\mid D=u]$ is bounded  in a neighborhood of zero for 
$Z=
\|(\mathbf{X}'\mathbf{X})^\ast\mathbf{X}'\mathbf{Y}\|^4$,
$\|(\mathbf{X}'\mathbf{X})^\ast\mathbf{X}'\mathbf{W}\|^4$.
(iii) 
$E[\|\mathbf{W}\|^4]<\infty$,
$E[\|\mathbf{W}\|^4\|\mathbf{Y}\|^4]<\infty$, and 
$E[\|\mathbf{W}\|^4\|\mathbf{W}\|^4]<\infty$.
(iv) The minimum eigenvalues of $E[\boldsymbol{1}\{D>0\}\mathbf{W}'\mathbf{M}_{\mathbf{X}}\mathbf{W}]$, $Var((\mathbf{X}'\mathbf{X})^\ast\mathbf{X}'(\mathbf{Y}-\mathbf{W}\boldsymbol{\delta})\mid \mathrm{det}(\mathbf{X}'\mathbf{X})=0)$, and  $Var(\boldsymbol{\eta})$ are bounded away from zero, where
$$
\boldsymbol{\eta}=\boldsymbol{1}\{D>h_N\}\left(D^{-1}(\mathbf{X}'\mathbf{X})^\ast\mathbf{X}'+\left(\mathbf{R}-E\left[\boldsymbol{1}\{D>h_N\}D^{-1}(\mathbf{X}'\mathbf{X})^\ast\mathbf{X}'\mathbf{W}\right]\right)\mathbf{{V}}^{-1}\mathbf{W}\mathbf{M}_{\mathbf{X}}\right)(\mathbf{Y}-\mathbf{W}\boldsymbol{\delta})
$$
with $\mathbf{{V}}=E[\boldsymbol{1}\{D>h_N\}\mathbf{W}'\mathbf{M}_{\mathbf{X}}\mathbf{W}]$.
\end{assumption}

In Assumption \ref{supp:assn:GP_around0}, we assume the tail to be approximately Pareto. Also, we require the density $\phi$ is bounded away from zero when $D$ has a point mass at $0$. 
Assumption \ref{supp:assn:iid_plus} (iv) requires that the influence function $\boldsymbol{\eta}$ for  $\boldsymbol{\hat\beta}^M+\mathbf{R}\boldsymbol{\hat\delta}$ has an invertible variance  with the trimmed mean $\boldsymbol{\hat\beta}^M=E_N[\boldsymbol{1}\{D>h_N\}D^{-1}(\mathbf{X}'\mathbf{X})^\ast\mathbf{X}'(\mathbf{Y}-\mathbf{W}\boldsymbol{\hat\delta})]$.

Now we are going to state the theoretical results for the case of $T>p$. 
The first result is about the bias for $\boldsymbol{\hat\beta}_L$. 
We define the population analog of $\boldsymbol{\hat\beta}_L$ by
$$
\mathbf{b}_L=E[\boldsymbol{1}\{D>h_N\}D^{-1}(\mathbf{X}'\mathbf{X})^\ast\mathbf{X}'(\mathbf{Y}-\mathbf{W}\boldsymbol{\delta})]+\boldsymbol{\gamma}\mathbf{h},
$$
where $\mathbf{h}=E[\boldsymbol{1}\{D\leq h_N\}(\begin{array}{cccc}1&\cdots&D^{L-1}\end{array})']$, $\mathbf{m}(u)=E[(\mathbf{X}'\mathbf{X})^\ast\mathbf{X}'(\mathbf{Y}-\mathbf{W}\boldsymbol{\delta})\mid D=u]$, and $\boldsymbol{\gamma}=(\begin{array}{cccc}\mathbf{m}^{(1)}(0)/1!&\cdots&\mathbf{m}^{(L)}(0)/L!\end{array})$. 
The proof is the same as that of Theorem \ref{bias_comp_higher}.

\begin{theorem}\label{supp:bias_comp_higher}
Under $T>p$ and Assumptions \ref{assn:GP1.1}, \ref{supp:assn:GP_around0}, and \ref{supp:assn:more_primitiv_higher}, $\mathbf{b}_L=\boldsymbol{\beta}+O\left(E[D^L\boldsymbol{1}\{D\leq h_N\}]\right)$. 
\end{theorem}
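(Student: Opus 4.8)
The plan is to mirror the proof of Theorem~\ref{bias_comp_higher} almost line by line, updating only the objects that change in the $T>p$ setting: now $D=\mathrm{det}(\mathbf{X}'\mathbf{X})\geq 0$, so the trimming is one-sided (the events $\{D>h_N\}$ and $\{0<D\leq h_N\}$ replace $\{|D|>h_N\}$ and $\{0<|D|\leq h_N\}$), the individual-specific estimand is $D^{-1}(\mathbf{X}'\mathbf{X})^\ast\mathbf{X}'(\mathbf{Y}-\mathbf{W}\boldsymbol{\delta})$, and $\mathbf{m}(u)=E[(\mathbf{X}'\mathbf{X})^\ast\mathbf{X}'(\mathbf{Y}-\mathbf{W}\boldsymbol{\delta})\mid D=u]$. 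First I would record the Taylor-remainder bound: by Assumption~\ref{supp:assn:more_primitiv_higher}(i) there is a constant $C<\infty$ with $\|\mathbf{m}(u)-\sum_{l=0}^{L}\mathbf{m}^{(l)}(0)u^l/l!\|\leq C|u|^{L+1}$ for every $u$ in a neighborhood of $0$. The key structural input is \eqref{supp:eq:base_ID}, namely $E[(\mathbf{X}'\mathbf{X})^\ast\mathbf{X}'(\mathbf{Y}-\mathbf{W}\boldsymbol{\delta})\mid\mathbf{X}]=D\boldsymbol{\beta}(\mathbf{X})$; conditioning further on $D$ gives $\mathbf{m}(u)=u\,E[\boldsymbol{\beta}(\mathbf{X})\mid D=u]$, and in particular $\mathbf{m}(0)=0$.

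Next I would decompose both $\mathbf{b}_L$ and $\boldsymbol{\beta}$ over the three groups $\{D>h_N\}$, $\{D=0\}$, and $\{0<D\leq h_N\}$. Using \eqref{supp:eq:base_ID} once more, $E[D^{-1}(\mathbf{X}'\mathbf{X})^\ast\mathbf{X}'(\mathbf{Y}-\mathbf{W}\boldsymbol{\delta})\boldsymbol{1}\{D>h_N\}]=E[\boldsymbol{\beta}(\mathbf{X})\boldsymbol{1}\{D>h_N\}]$, so the mover terms of $\mathbf{b}_L$ and $\boldsymbol{\beta}$ coincide; the stayer contributions coincide as well once one uses $P(D=0)\,\mathbf{m}^{(1)}(0)=E[\boldsymbol{\beta}(\mathbf{X})\boldsymbol{1}\{D=0\}]$, which is the $T>p$ analogue of the mixture identity \eqref{eq:mixture} and follows from Assumption~\ref{assn:GP1.1} together with \eqref{supp:eq:base_ID}. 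After cancellation only the slow-mover piece remains, so that, writing the absolutely continuous part of the law of $D$ through its density $\phi$ (Assumption~\ref{supp:assn:GP_around0}(ii)),
\[
\|\mathbf{b}_L-\boldsymbol{\beta}\|
\leq
\int_{0}^{h_N}\left\|\frac{\mathbf{m}(u)-\mathbf{m}(0)-\sum_{l=1}^{L}\mathbf{m}^{(l)}(0)u^l/l!}{u}\right\|\phi(u)\,du,
\]
where the insertion of $-\mathbf{m}(0)$ in the numerator is justified by $\mathbf{m}(0)=0$.

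Finally I would apply the Taylor-remainder bound from the first step: the numerator is $O(u^{L+1})$ uniformly over $u\in(0,h_N]$, hence the integrand is $O(u^{L})$, and therefore
\[
\|\mathbf{b}_L-\boldsymbol{\beta}\|\leq C\int_{0}^{h_N}u^{L}\phi(u)\,du
= C\,E\!\left[D^{L}\boldsymbol{1}\{0<D\leq h_N\}\right]
= O\!\left(E\!\left[D^{L}\boldsymbol{1}\{D\leq h_N\}\right]\right),
\]
the last equality holding because $D^{L}\boldsymbol{1}\{D=0\}=0$ for $L\geq 1$. This is the asserted rate.

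I do not expect a genuine obstacle here, since the argument is essentially a transcription of the proof of Theorem~\ref{bias_comp_higher}. The one point needing care is the cancellation of the stayer terms, i.e. verifying $P(D=0)\,\mathbf{m}^{(1)}(0)=E[\boldsymbol{\beta}(\mathbf{X})\boldsymbol{1}\{D=0\}]$ when $T>p$; this is where one must invoke the mixture identification rather than \eqref{supp:eq:base_ID} alone. The remaining changes — one-sidedness of $D$, the replacement of $\mathbf{X}^{-1}$ by $D^{-1}(\mathbf{X}'\mathbf{X})^\ast\mathbf{X}'$, and the use of Assumption~\ref{supp:assn:more_primitiv_higher}(i) in place of Assumption~\ref{assn:more_primitiv_higher}(i) — are cosmetic.
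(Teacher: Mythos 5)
Your proposal is correct and follows exactly the paper's argument: the paper's own proof of this theorem is literally the statement that it ``is the same as that of Theorem \ref{bias_comp_higher},'' and your write-up is a faithful transcription of that proof with the one-sided trimming on $D=\mathrm{det}(\mathbf{X}'\mathbf{X})\geq 0$ and the substitution $\mathbf{X}^{-1}\mapsto D^{-1}(\mathbf{X}'\mathbf{X})^\ast\mathbf{X}'$. You are right that the only step requiring more than \eqref{supp:eq:base_ID} is the cancellation of the stayer terms via $P(D=0)\,\mathbf{m}^{(1)}(0)=E[\boldsymbol{\beta}(\mathbf{X})\boldsymbol{1}\{D=0\}]$, which the paper also uses implicitly (it is the content of the mixture identification \eqref{eq:mixture}).
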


\noindent
The second result is the asymptotic normality of $\boldsymbol{\hat\theta}$ as an estimator for $\boldsymbol{\theta}$. 

\begin{theorem}\label{supp:theorem_asynormal}
Suppose there is some $c>0$ such that 
\begin{equation}\label{supp:assn:bandwidth}
Nh_N^{2+c}\rightarrow\infty\mbox{ and }Nh_N^{2(L+1)}\rightarrow 0
\end{equation}
as $h_N \rightarrow 0$ and $N \rightarrow \infty$.
Under $T>p$ and Assumptions \ref{assn:GP1.1}, \ref{supp:assn:GP_around0},  \ref{supp:assn:more_primitiv_higher}, and \ref{supp:assn:iid_plus}, 
$$
\sqrt{N}E_N\left[\boldsymbol{\hat\zeta}\boldsymbol{\hat\zeta}'\right]^{-1/2}(\boldsymbol{\hat\theta}-\boldsymbol{\theta})\rightarrow_d\mathcal{N}\left(0,\mathbf{I}_{p}\right)
$$ 
as $h_N \rightarrow 0$ and $N \rightarrow \infty$.
\end{theorem}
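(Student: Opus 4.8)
The plan is to follow the same three-part template as the proof of Theorem~\ref{theorem_asynormal}: (i) show that the plug-in variance $E_N[\boldsymbol{\hat\zeta}\boldsymbol{\hat\zeta}']$ is consistent for $\mathrm{Var}(\boldsymbol{\zeta})$ after the appropriate normalization, where $\boldsymbol{\zeta}$ denotes the population influence function obtained from $\boldsymbol{\hat\zeta}$ by replacing every sample quantity with its population counterpart; (ii) establish the linearization $\boldsymbol{\hat\theta}-\boldsymbol{\theta}=(E_N-E)[\boldsymbol{\zeta}]+o_p(\sqrt{\mathrm{Var}(\boldsymbol{\zeta})/N})$; and (iii) verify the Lyapunov moment condition $E[\|\mathrm{Var}(\boldsymbol{\zeta})^{-1/2}\boldsymbol{\zeta}\|^{4}]=o(N)$, after which Lyapunov's central limit theorem applied to the i.i.d.\ average $\sqrt{N}\,\mathrm{Var}(\boldsymbol{\zeta})^{-1/2}(E_N-E)[\boldsymbol{\zeta}]$ delivers the stated convergence.

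The preliminary task is to reprove, for the redefined objects $D=\det(\mathbf{X}'\mathbf{X})$, $\mathbf{W}$, $\mathbf{D}_{1:L}$, $\boldsymbol{\hat\delta}$, $\boldsymbol{\hat\gamma}$, $\mathbf{\hat{h}}$, the analogues of the auxiliary lemmas of Appendix~\ref{sec:aux_results}. Since $D\ge 0$ here, Assumption~\ref{supp:assn:GP_around0} furnishes a single regularly varying tail with index $\alpha$, so Karamata's theorem yields $E[D^{l}\boldsymbol{1}\{D\le h_N\}]=\tfrac{\alpha}{l+\alpha}h_N^{l}\pi_{h\setminus 0}(1+o(1))$ and $E[\boldsymbol{1}\{D>h_N\}D^{-l}]=\tfrac{\alpha}{l-\alpha}h_N^{-l}\pi_{h\setminus 0}(1+o(1))$ for $l>\alpha$, with $\pi_h=P(D\le h_N)$ and $\pi_{h\setminus 0}=P(0<D\le h_N)$. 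These bounds control the three blocks making up $\boldsymbol{\zeta}$ --- the movers' term $\boldsymbol{1}\{D>h_N\}D^{-1}(\mathbf{X}'\mathbf{X})^\ast\mathbf{X}'(\mathbf{Y}-\mathbf{W}\boldsymbol{\delta})$, the slow-movers' correction built from $(\mathbf{X}'\mathbf{X})^\ast\mathbf{X}'(\mathbf{Y}-\mathbf{W}\boldsymbol{\delta})-\boldsymbol{\gamma}\mathbf{D}_{1:L}$, and the $\boldsymbol{\hat\delta}$-term $\boldsymbol{1}\{D>h_N\}\mathbf{W}\mathbf{M}_{\mathbf{X}}(\mathbf{Y}-\mathbf{W}\boldsymbol{\delta})$, which (unlike in the $T=p$ case) is an ordinary trimmed moment rather than a local-polynomial object and is bounded in fourth moment by Assumption~\ref{supp:assn:iid_plus}(iii) --- and they deliver the rates $\mathbf{\hat{V}}^{-1}-\mathbf{{V}}^{-1}=O_p(N^{-1/2})$ with $\mathbf{V}=E[\boldsymbol{1}\{D>h_N\}\mathbf{W}'\mathbf{M}_{\mathbf{X}}\mathbf{W}]$ invertible with bounded inverse by Assumption~\ref{supp:assn:iid_plus}(iv), $\boldsymbol{\hat\Pi}^{-1}-\boldsymbol{\Pi}^{-1}=O_p((N\pi_{h\setminus 0})^{-1/2})$ for the normalized $\mathbf{D}_{1:L}$ design matrix, $\mathbf{\hat{Q}}-\mathbf{Q}=O_p(\sqrt{\pi_h^{2}/(Nh_N^{2}\pi_{h\setminus 0})})$, and the lower bound $\mathrm{Var}(\boldsymbol{\zeta})^{-1}=O(h_N^{2}\pi_{h\setminus 0}/\pi_h^{2})$.

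With these estimates in place, step~(i) runs exactly as in Theorem~\ref{theorem_asynormal}: $\|E_N[\boldsymbol{\hat\zeta}\boldsymbol{\hat\zeta}']-\mathrm{Var}(\boldsymbol{\zeta})\|=o_p(\pi_h^{2}/(h_N^{2}\pi_{h\setminus 0}))$ together with the reciprocal-order bound on $\mathrm{Var}(\boldsymbol{\zeta})^{-1}$ gives $E_N[\boldsymbol{\hat\zeta}\boldsymbol{\hat\zeta}']^{-1}\mathrm{Var}(\boldsymbol{\zeta})=\mathbf{I}_p+o_p(1)$. For step~(ii) I would expand $\boldsymbol{\hat\delta}-\boldsymbol{\delta}$ --- which, because \eqref{supp:eq:base_ID1} is an exact identity for every $h_N$, is a smooth function of sample means and hence equals its linearization up to $o_p(N^{-1/2})$ with no trimming bias --- then $\boldsymbol{\hat\gamma}-\boldsymbol{\gamma}$, which contributes a bias of order $h_N^{L+1}$ from the local-polynomial fit plus a stochastic term of the orders listed above; feeding these into $\boldsymbol{\hat\beta}_L+\mathbf{R}\boldsymbol{\hat\delta}$, adding the estimand bias $\mathbf{b}_L-\boldsymbol{\beta}=O(E[D^{L}\boldsymbol{1}\{D\le h_N\}])=O(h_N^{L}\pi_{h\setminus 0})$ from Theorem~\ref{supp:bias_comp_higher}, and invoking \eqref{supp:assn:bandwidth} shows the remainder is $o_p(\sqrt{\pi_h^{2}/(Nh_N^{2}\pi_{h\setminus 0})})=o_p(\sqrt{\mathrm{Var}(\boldsymbol{\zeta})/N})$. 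Step~(iii) then follows at once, since the fourth-moment bounds combined with the variance lower bound give $E[\|\mathrm{Var}(\boldsymbol{\zeta})^{-1/2}\boldsymbol{\zeta}\|^{4}]=O(1/\pi_h^{2})$, which is $o(N)$ under \eqref{supp:assn:bandwidth}.

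The main obstacle will be the variance lower bound, the $T>p$ analogue of Lemma~\ref{lemma:lowerconrate}. One splits on whether $P(D=0)=0$ or $P(D=0)>0$, using that the slow-movers' block is supported on $\{D\le h_N\}$ while the movers' and $\boldsymbol{\hat\delta}$ blocks are supported on $\{D>h_N\}$, so the two groups are uncorrelated and $\mathrm{Var}(\boldsymbol{\zeta})$ dominates the variance of either group. When $P(D=0)=0$ the order is extracted from the movers' block, whose $D^{-2}$ weighting makes its variance of order $\pi_{h\setminus 0}/h_N^{2}$ via Karamata applied to the conditional-variance identity, which requires $\mathrm{Var}((\mathbf{X}'\mathbf{X})^\ast\mathbf{X}'(\mathbf{Y}-\mathbf{W}\boldsymbol{\delta})\mid D=u)$ to be bounded away from zero near $u=0$ (Assumptions~\ref{supp:assn:more_primitiv_higher}(ii) and \ref{supp:assn:iid_plus}(iv)); when $P(D=0)>0$ that block is only $O(1)$ and the order must instead come from the slow-movers' correction, via $\|\boldsymbol{\Psi}_{1:L}^{-1}\mathbf{h}\|\asymp\pi_h/h_N$ and the boundedness away from zero of $\phi$ near $0$. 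Carrying the $\mathbf{M}_{\mathbf{X}}$-projection cleanly through these computations and checking that $\mathbf{V}$ stays invertible with bounded inverse as $h_N\to 0$ are the places where the transcription from Appendix~\ref{sec:aux_results} is least mechanical; the remaining bookkeeping is routine.
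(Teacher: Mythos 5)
Your three-step architecture (variance consistency, influence-function linearization, Lyapunov CLT) and your grouping of $\boldsymbol{\zeta}$ into a movers'-side block $\boldsymbol{\xi}_1+\mathbf{Q}\mathbf{V}^{-1}\boldsymbol{\xi}_3$ supported on $\{D>h_N\}$ and a slow-movers' block supported on $\{D\leq h_N\}$ match the paper's proof exactly, as does your observation that $\boldsymbol{\hat\delta}$ is now an untrimmed-bias, $\sqrt{N}$-consistent object because \eqref{supp:eq:base_ID1} holds exactly. The genuine gap is that you transplant the $T=p$ rates and the $T=p$ dichotomy ($P(D=0)=0$ versus $P(D=0)>0$) into a setting where they no longer exhaust the cases. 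Under Assumption \ref{supp:assn:GP_around0} the constraint $\alpha\leq 1$ is imposed \emph{only} when $P(D=0)>0$; when $P(D=0)=0$ the index $\alpha$ can exceed $2$ (indeed be arbitrarily large), and in that regime $\pi_h=\pi_{h\setminus 0}\asymp h_N^{\alpha}$, so your claimed lower bound $Var(\boldsymbol{\zeta})^{-1}=O(h_N^{2}\pi_{h\setminus 0}/\pi_h^{2})=O(h_N^{2-\alpha})$ diverges and is vacuous, your Karamata computation for the movers' block ($E[\boldsymbol{1}\{D>h_N\}D^{-2}]\asymp\pi_{h\setminus 0}/h_N^{2}$) is false since that expectation is $O(1)$ for $l=2<\alpha$, and your Lyapunov bound $E[\|Var(\boldsymbol{\zeta})^{-1/2}\boldsymbol{\zeta}\|^{4}]=O(1/\pi_h^{2})=O(h_N^{-2\alpha})$ is not $o(N)$ under $Nh_N^{2+c}\rightarrow\infty$ once $\alpha>1+c/2$. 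The paper's Lemmas \ref{supp:lemma:Karamata's2}--\ref{supp:lemma:lowerconrate} carry indicator functions $1\{\alpha\leq 2\}$, $1\{2<\alpha\leq 4\}$, $1\{\alpha>4\}$ through every rate precisely to handle this: for $\alpha\leq 2$ the lower bound is extracted from the slow-movers' block (not the movers' block, even when $P(D=0)=0$), and for $\alpha>2$ one shows $Var(\boldsymbol{\zeta})$ itself is bounded away from zero by approximating $\boldsymbol{\xi}_1+\mathbf{Q}\mathbf{V}^{-1}\boldsymbol{\xi}_3$ in $L^{2}$ by the regular influence function $\boldsymbol{\eta}$ and invoking the nondegeneracy of $Var(\boldsymbol{\eta})$ in Assumption \ref{supp:assn:iid_plus}(iv) --- an assumption your argument never uses and which has no counterpart in the $T=p$ proof.

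A second, related point: because $\boldsymbol{\xi}_3=\boldsymbol{1}\{D>h_N\}\mathbf{W}\mathbf{M}_{\mathbf{X}}(\mathbf{Y}-\mathbf{W}\boldsymbol{\delta})$ now lives on the same event as $\boldsymbol{\xi}_1$ (unlike the $T=p$ case, where the $\boldsymbol{\hat\delta}$-block involved $\mathbf{D}_{0:L}$ and was supported on $\{|D|\leq h_N\}$), a conditional-variance-identity lower bound applied to $\boldsymbol{\xi}_1$ alone does not lower-bound the variance of the movers' group: $\boldsymbol{\xi}_1$ and $\mathbf{Q}\mathbf{V}^{-1}\boldsymbol{\xi}_3$ are correlated and can partially cancel, and Assumption \ref{supp:assn:iid_plus}(iv) controls only the conditional variance of $(\mathbf{X}'\mathbf{X})^{\ast}\mathbf{X}'(\mathbf{Y}-\mathbf{W}\boldsymbol{\delta})$ at $D=0$, not the joint conditional covariance with $\mathbf{W}\mathbf{M}_{\mathbf{X}}(\mathbf{Y}-\mathbf{W}\boldsymbol{\delta})$. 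The paper avoids this cancellation issue entirely by never using the movers' block for the lower bound when $\alpha\leq 2$ and by appealing directly to $Var(\boldsymbol{\eta})$ when $\alpha>2$. Your sketch would need to be reorganized around the $\alpha$-dependent case split before the remaining bookkeeping becomes routine.
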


\noindent
A proof of this theorem is presented in the subsequent subsection.

\subsection{Proof of Theorem \ref{supp:theorem_asynormal}} 

We use the following notations in the proof of Theorem \ref{supp:theorem_asynormal}. 
We use $\pi_h=P(D\leq h_N)$ and $\pi_{h\setminus 0}=P(0<D\leq h_N)$. 
In the following proof, we characterize the influence function as $\boldsymbol{\zeta}=\boldsymbol{\xi}_1+\frac{1}{\pi_{h\setminus 0}}\boldsymbol{\xi}_2\boldsymbol{\Pi}^{-1}\boldsymbol{\Psi}_{1:L}^{-1}\mathbf{h}+\mathbf{Q}\mathbf{{V}}^{-1}\boldsymbol{\xi}_3$, 
where $\boldsymbol{\Psi}_{1:L}=\mathrm{diag}\left(\begin{array}{cccc}h_N&\cdots&h_N^L\end{array}\right)$, 
\begin{align*}
\boldsymbol{\Pi}
=&
\frac{1}{\pi_{h\setminus 0}}\boldsymbol{\Psi}_{1:L}^{-1}E\left[\mathbf{D}_{1:L}\mathbf{D}_{1:L}'\right]\boldsymbol{\Psi}_{1:L}^{-1},
\\
\mathbf{Q}
=&
\mathbf{R}-E\left[\left(\boldsymbol{1}\{D>h_N\}D^{-1}+\mathbf{D}_{1:L}'E\left[\mathbf{D}_{1:L}\mathbf{D}_{1:L}'\right]^{-1}\mathbf{h}\right)(\mathbf{X}'\mathbf{X})^\ast\mathbf{X}'\mathbf{W}\right],
\\
\boldsymbol{\xi}_1
=&
\boldsymbol{1}\{D>h_N\}D^{-1}(\mathbf{X}'\mathbf{X})^\ast\mathbf{X}'(\mathbf{Y}-\mathbf{W}\boldsymbol{\delta})-E[\boldsymbol{1}\{D>h_N\}D^{-1}(\mathbf{X}'\mathbf{X})^\ast\mathbf{X}'(\mathbf{Y}-\mathbf{W}\boldsymbol{\delta})],
\\
\boldsymbol{\xi}_2
=&
((\mathbf{X}'\mathbf{X})^\ast\mathbf{X}'(\mathbf{Y}-\mathbf{W}\boldsymbol{\delta})-\boldsymbol{\gamma}\mathbf{D}_{1:L})\mathbf{D}_{1:L}'\boldsymbol{\Psi}_{1:L}^{-1},\mbox{ and }
\\
\boldsymbol{\xi}_3
=&
\boldsymbol{1}\{D>h_N\}\mathbf{W}\mathbf{M}_{\mathbf{X}}(\mathbf{Y}-\mathbf{W}\boldsymbol{\delta}).
\end{align*}
Their sample counterparts are 
\begin{align*}
\boldsymbol{\hat\Pi}
=&
\frac{1}{\pi_{h\setminus 0}}\boldsymbol{\Psi}_{1:L}^{-1}E_N\left[\mathbf{D}_{1:L}\mathbf{D}_{1:L}'\right]\boldsymbol{\Psi}_{1:L}^{-1},
\\
\boldsymbol{\hat\xi}_1
=&
\boldsymbol{1}\{D>h_N\}D^{-1}(\mathbf{X}'\mathbf{X})^\ast\mathbf{X}'(\mathbf{Y}-\mathbf{W}\boldsymbol{\hat\delta})-E_N[\boldsymbol{1}\{D>h_N\}D^{-1}(\mathbf{X}'\mathbf{X})^\ast\mathbf{X}'(\mathbf{Y}-\mathbf{W}\boldsymbol{\hat\delta})],
\\
\boldsymbol{\hat\xi}_2
=&
((\mathbf{X}'\mathbf{X})^\ast\mathbf{X}'(\mathbf{Y}-\mathbf{W}\boldsymbol{\hat\delta})-\boldsymbol{\hat\gamma}\mathbf{D}_{1:L})\mathbf{D}_{1:L}'\boldsymbol{\Psi}_{1:L}^{-1},
\mbox{ and }
\\
\boldsymbol{\hat\xi}_3
=&
\boldsymbol{1}\{D>h_N\}\mathbf{W}\mathbf{M}_{\mathbf{X}}(\mathbf{Y}-\mathbf{W}\boldsymbol{\hat\delta}).
\end{align*}
Except for $\boldsymbol{\Psi}_{1:L}$, they are all different from the corresponding ones in the proof of Theorem \ref{theorem_asynormal}. 

Unlike Section \ref{sec:main}, we do not assume that  $\phi$ is bounded away from zero here. 
Thus, the asymptotic properties for $(\pi_{h\setminus 0},\pi_h)$ depends on the tail index $\alpha$. 
By Assumption \ref{supp:assn:GP_around0} (iv), 
$$
\alpha>1\implies \pi_{h\setminus 0}=\pi_h.
$$ 
By Assumption \ref{supp:assn:GP_around0} (iii), the term $\pi_{h\setminus 0}/h_N^{\alpha}$ is bounded away from zero and infinity.

Using the auxiliary results in Section \ref{supp:sec:aux_results}, we are now ready to prove Theorem \ref{supp:theorem_asynormal}.

\begin{proof}[Proof of Theorem \ref{supp:theorem_asynormal}]
First, we are going to show 
\begin{equation}\label{supp:eq:var_estimation}
E_N\left[\boldsymbol{\hat\zeta}\boldsymbol{\hat\zeta}'\right]^{-1}Var(\boldsymbol{\zeta})=\mathbf{I}_{p}+o_p(1).
\end{equation}
By \eqref{supp:assn:bandwidth} and Lemmas \ref{supp:lemma:zeta_moments} and \ref{supp:lemma:xi_estimates}, 
\begin{align*}
E_N[\|\boldsymbol{\hat\zeta}-\boldsymbol{\zeta}\|^2]^{1/2}
=&
o_p\left(1\{\alpha\leq 2\}{\frac{\pi_h}{\pi_{h\setminus 0}}}+1\{\alpha>2\}\right),
\\
E_N[\|\boldsymbol{\zeta}\|^2]^{1/2}
=&
O\left(1\{\alpha\leq 2\}{\frac{\pi_h}{h_N^2}}+1\{\alpha>2\}\right),
\\
(E_N-E)\left[\boldsymbol{\zeta}\boldsymbol{\zeta}'\right]
=&
o_p\left(1\{\alpha\leq 2\}{\frac{\pi_h^2}{h_N^2\pi_{h\setminus 0}}}+1\{\alpha>2\}\right), \qquad\text{and}\\
E[\boldsymbol{\zeta}]
=&
o(1).
\end{align*}
Therefore, 
\begin{align*}
\|E_N\left[\boldsymbol{\hat\zeta}\boldsymbol{\hat\zeta}'\right]-Var(\boldsymbol{\zeta})\|
\leq&
E_N[\|\boldsymbol{\hat\zeta}-\boldsymbol{\zeta}\|^2]+2E_N[\|\boldsymbol{\hat\zeta}-\boldsymbol{\zeta}\|^2]^{1/2}E_N[\|\boldsymbol{\zeta}\|^2]^{1/2}+\|(E_N-E)\left[\boldsymbol{\zeta}\boldsymbol{\zeta}'\right]\|+\|E[\boldsymbol{\zeta}]\|^2
\\
=&
o_p\left(1\{\alpha\leq 2\}{\frac{\pi_h^2}{h_N^2\pi_{h\setminus 0}}}+1\{\alpha>2\}\right).
\end{align*}
By Lemma \ref{supp:lemma:lowerconrate},
$$
E_N\left[\boldsymbol{\hat\zeta}\boldsymbol{\hat\zeta}'\right]^{-1}Var(\boldsymbol{\zeta})-\mathbf{I}_{p}
=
-\left(Var(\boldsymbol{\zeta})+\left(E_N\left[\boldsymbol{\hat\zeta}\boldsymbol{\hat\zeta}'\right]-Var(\boldsymbol{\zeta})\right)\right)^{-1}\left(E_N\left[\boldsymbol{\hat\zeta}\boldsymbol{\hat\zeta}'\right]-Var(\boldsymbol{\zeta})\right)
=
o_p(1).
$$

Second, we are going to show 
\begin{equation}\label{supp:eq:influeFunct_estimation}
\sqrt{N}(\boldsymbol{\hat\theta}-\boldsymbol{\theta}-(E_N-E)[\boldsymbol{\zeta}])=o_p\left(1\{\alpha\leq 2\}\sqrt{\frac{\pi_h^2}{h_N^2\pi_{h\setminus 0}}}+1\{\alpha>2\}\right).
\end{equation}
Note that 
\begin{align*}
\|\boldsymbol{\hat\theta}-\boldsymbol{\theta}-(E_N-E)[\boldsymbol{\zeta}]\|
\leq&
\|\mathbf{b}_L-\boldsymbol{\beta}+E[\boldsymbol{\zeta}]\|
\\&
+\frac{1}{\pi_{h\setminus 0}}
\left(\left\|\boldsymbol{\hat\Pi}^{-1}-\boldsymbol{\Pi}^{-1}\right\|\|\boldsymbol{\Psi}_{1:L}^{-1}\mathbf{\hat{h}}\|+\|\boldsymbol{\Pi}^{-1}\|\|\boldsymbol{\Psi}_{1:L}^{-1}\left(\mathbf{\hat{h}}-\mathbf{h}\right)\|\right)\|E_N\left[\boldsymbol{\xi}_2\right]\|
\\&
+\left(\|\mathbf{Q}\|\|\mathbf{\hat{V}}^{-1}-\mathbf{{V}}^{-1}\|+\|\mathbf{\hat{Q}}-\mathbf{Q}\|\|\mathbf{\hat{V}}^{-1}\|\right)\|E_N[\boldsymbol{\xi}_3]\|
\\&
+\|\boldsymbol{\gamma}\|\left\|\mathbf{\hat{h}}-\mathbf{h}\right\|.
\end{align*}
Lemmas \ref{supp:lemma:influe_eror4}, \ref{supp:lemma:Omega_property2}, \ref{supp:lemma:Sigma_rate}, \ref{supp:lemma:M_inve}, \ref{supp:lemma:bias_bound2},  \ref{supp:lemma:bias_bound1}, and \ref{supp:lemma:zeta_moments} and Theorem \ref{supp:bias_comp_higher} bound each term on the right hand side of the above equation, which leads to 
$$
\boldsymbol{\hat\theta}-\boldsymbol{\theta}-(E_N-E)[\boldsymbol{\zeta}]
=
O\left(h_N^{L}\pi_h\right)
+
O_p\left(
\sqrt{\frac{\pi_h}{N}}+
1\{\alpha\leq 2\}
\sqrt{\frac{\pi_h^2}{N^2h_N^4\pi_{h\setminus 0}}}
+1\{\alpha>2\}\frac{1}{Nh_N}
\right),
$$
where we can take a sufficiently small $\epsilon>0$. 
By  \eqref{supp:assn:bandwidth}, we have \eqref{supp:eq:influeFunct_estimation}.

Third, we are going to show 
\begin{equation}\label{supp:eq:L2L4_estimation}
E[\|Var(\boldsymbol{\zeta})^{-1/2}\boldsymbol{\zeta}\|^4]=o(N).
\end{equation}
By Lemmas  \ref{supp:lemma:zeta_moments} and \ref{supp:lemma:lowerconrate}, 
$$
E[\|Var(\boldsymbol{\zeta})^{-1/2}\boldsymbol{\zeta}\|^4]=O\left(1\{\alpha\leq 2\}\frac{1}{\pi_{h\setminus 0}^{1+\epsilon}}+1\{2<\alpha\leq 4\}\frac{\pi_{h\setminus 0}^{1-\epsilon}}{h_N^4}+1\{\alpha>4\}\right),
$$
where we can take a sufficiently small $\epsilon>0$. 
By \eqref{supp:assn:bandwidth}, we have \eqref{supp:eq:L2L4_estimation}.

Now we are going to show the statement of this theorem. 
By Lemma  \ref{supp:lemma:lowerconrate},  \eqref{supp:eq:var_estimation} and \eqref{supp:eq:influeFunct_estimation}, 
$$
\sqrt{N}E_N\left[\boldsymbol{\hat\zeta}\boldsymbol{\hat\zeta}'\right]^{-1/2}(\boldsymbol{\hat\theta}-\boldsymbol{\theta})=\sqrt{N}Var(\boldsymbol{\zeta})^{-1/2}(E_N-E)[\boldsymbol{\zeta}]+o_p(1).
$$
By \eqref{supp:eq:L2L4_estimation}, Assumption \ref{supp:assn:iid_plus} (i), and  Lyapunov's central limit theorem, 
$
\sqrt{N}E_N\left[\boldsymbol{\hat\zeta}\boldsymbol{\hat\zeta}'\right]^{-1/2}(\boldsymbol{\hat\theta}-\boldsymbol{\theta})\rightarrow_d\mathcal{N}\left(0,\mathbf{I}_{p}\right).
$
\end{proof}

\section{Auxiliary Results for Theorem \ref{supp:theorem_asynormal}}\label{supp:sec:aux_results}

We introduce and prove auxiliary lemmas, which we use for the proof of Theorem \ref{supp:theorem_asynormal}. 
In all the lemma in this section, we impose the assumptions in the statement of Theorem \ref{supp:theorem_asynormal} as well as $T>p$.

\begin{lemma}\label{supp:lemma:influe_eror4}
$\boldsymbol{\Psi}_{1:L}^{-1}\mathbf{h}=O(\frac{\pi_h}{h_N})$, $\left\|\frac{h_N}{\pi_h}\boldsymbol{\Psi}_{1:L}^{-1}\mathbf{h}\right\|$ is bounded away from zero,  and $\boldsymbol{\Psi}_{1:L}^{-1}(\mathbf{\hat{h}}-\mathbf{h})=O_p\left(\sqrt{\frac{\pi_h}{Nh_N^2}}\right)$.
\end{lemma}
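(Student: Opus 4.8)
The plan is to follow the proof of Lemma \ref{lemma:influe_eror4} essentially verbatim, with the only bookkeeping change that in the $T>p$ setting $D=\det(\mathbf{X}'\mathbf{X})\geq 0$, so the slow-mover trimming indicator is $\boldsymbol{1}\{D\leq h_N\}$ (equivalently $\boldsymbol{1}\{0\leq D\leq h_N\}$) rather than $\boldsymbol{1}\{|D|\leq h_N\}$, while $\mathbf{h}=E[\boldsymbol{1}\{D\leq h_N\}(\begin{array}{cccc}1&\cdots&D^{L-1}\end{array})']$ and $\boldsymbol{\Psi}_{1:L}=\mathrm{diag}(\begin{array}{cccc}h_N&\cdots&h_N^L\end{array})$ retain the same shape. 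First I would evaluate $\frac{h_N}{\pi_h}\boldsymbol{\Psi}_{1:L}^{-1}\mathbf{h}$ coordinatewise: its $l$-th coordinate is $\frac{1}{\pi_h}E[\boldsymbol{1}\{D\leq h_N\}(D/h_N)^{l-1}]$, and since $0\leq D/h_N\leq 1$ on $\{D\leq h_N\}$ each coordinate lies in $[0,1]$. Hence $\frac{h_N}{\pi_h}\boldsymbol{\Psi}_{1:L}^{-1}\mathbf{h}=O(1)$, which is the first claim $\boldsymbol{\Psi}_{1:L}^{-1}\mathbf{h}=O(\pi_h/h_N)$.

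For the lower bound I would use that the first coordinate of $\frac{h_N}{\pi_h}\boldsymbol{\Psi}_{1:L}^{-1}\mathbf{h}$ equals $\frac{1}{\pi_h}E[\boldsymbol{1}\{D\leq h_N\}]=\frac{1}{\pi_h}P(D\leq h_N)=1$, so the Euclidean norm of the vector is at least $1$ and is therefore bounded away from zero (this uses no distributional assumption beyond $\pi_h>0$, which holds by Assumption \ref{supp:assn:GP_around0} (i)).

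For the stochastic term I would write $\boldsymbol{\Psi}_{1:L}^{-1}(\mathbf{\hat{h}}-\mathbf{h})=\frac{1}{h_N}(E_N-E)[\boldsymbol{1}\{D\leq h_N\}(\begin{array}{cccc}1&D/h_N&\cdots&(D/h_N)^{L-1}\end{array})']$ and bound its mean square. Using i.i.d. sampling (Assumption \ref{supp:assn:iid_plus} (i)) together with $\|\boldsymbol{1}\{D\leq h_N\}(\begin{array}{cccc}1&D/h_N&\cdots&(D/h_N)^{L-1}\end{array})'\|^2\leq L\,\boldsymbol{1}\{D\leq h_N\}$,
\[
E\big[\|\boldsymbol{\Psi}_{1:L}^{-1}(\mathbf{\hat{h}}-\mathbf{h})\|^2\big]=\frac{1}{Nh_N^2}E\big[\|\boldsymbol{1}\{D\leq h_N\}(\begin{array}{cccc}1&\cdots&(D/h_N)^{L-1}\end{array})'\|^2\big]\leq\frac{L}{Nh_N^2}P(D\leq h_N)=O\Big(\frac{\pi_h}{Nh_N^2}\Big),
\]
and Markov's inequality gives $\boldsymbol{\Psi}_{1:L}^{-1}(\mathbf{\hat{h}}-\mathbf{h})=O_p(\sqrt{\pi_h/(Nh_N^2)})$.

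There is no real obstacle in this lemma: it is a routine first- and second-moment computation, and the only points requiring care are substituting the $T>p$ redefinitions of $D,\mathbf{h},\mathbf{\hat{h}},\boldsymbol{\Psi}_{1:L},\pi_h$ consistently — in particular using $D\geq 0$ so that the absolute value collapses to $\boldsymbol{1}\{D\leq h_N\}$ — and observing that premultiplication by $\boldsymbol{\Psi}_{1:L}^{-1}$ is precisely the normalization that keeps every rescaled moment $O(1)$ or $O(\pi_h)$.
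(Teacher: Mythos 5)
Your proposal is correct and matches the paper's argument, which simply observes that the proof of Lemma \ref{lemma:influe_eror4} carries over verbatim: the first coordinate of $\frac{h_N}{\pi_h}\boldsymbol{\Psi}_{1:L}^{-1}\mathbf{h}$ equals $1$, each coordinate is bounded by $1$ since $0\leq D/h_N\leq 1$ on the trimming event, and the i.i.d.\ second-moment bound gives the $O_p\bigl(\sqrt{\pi_h/(Nh_N^2)}\bigr)$ rate. Your explicit note that $D=\mathrm{det}(\mathbf{X}'\mathbf{X})\geq 0$ makes the indicator substitution harmless is exactly the bookkeeping the paper leaves implicit.
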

\begin{proof}
While the definition of $D$ is different, the proof is the same as that of Lemma \ref{lemma:influe_eror4}.
\end{proof}

\begin{lemma}\label{supp:lemma:Karamata's1}
(i) For every number $l\geq 1$,  
$$
\frac{1}{h_N^l\pi_{h\setminus 0}} E[D^l\boldsymbol{1}\{D\leq h_N\}] =\frac{\alpha}{l+\alpha}+o(1)
$$ 
as $h_N \rightarrow 0$.
(ii) For every pair of positive numbers $l_1,l_2$, $E\left[D^{l_1}\left\|\mathbf{D}_{1:L}'\boldsymbol{\Psi}_{1:L}^{-1}\right\|^{l_2}\right]=O(\pi_{h\setminus 0}h_N^{l_1})$. 
(iii) $E\left[\|Z\|\|\mathbf{D}_{1:L}'\boldsymbol{\Psi}_{1:L}^{-1}\|^l\right]=O\left(\pi_{h\setminus 0}\right)$ for every random variable $Z$ in Assumption \ref{supp:assn:iid_plus} (iii) and for every positive integer $l$. 
\end{lemma}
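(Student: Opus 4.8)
The plan is to prove the three parts in turn, with part (i) carrying the only real content and parts (ii)--(iii) reducing to elementary deterministic bounds. Throughout I use that in the $T>p$ setting $D=\det(\mathbf{X}'\mathbf{X})\ge 0$, so every indicator $\boldsymbol{1}\{D\le h_N\}$ may be replaced by $\boldsymbol{1}\{0<D\le h_N\}$ whenever it multiplies a positive power of $D$ or the vector $\mathbf{D}_{1:L}$, since both vanish on $\{D=0\}$.

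For part (i), I would write $G(u)=P(0<D\le u)$, so that $\pi_{h\setminus 0}=G(h_N)$ and, by Assumption \ref{supp:assn:GP_around0}(iii), $G(u)=(c+o(1))u^{\alpha}$ as $u\to 0$ for a constant $c\in(0,\infty)$. Then $E[D^{l}\boldsymbol{1}\{D\le h_N\}]=\int_{(0,h_N]}u^{l}\,dG(u)$, and integration by parts gives $\int_{0}^{h_N}u^{l}\,dG(u)=h_N^{l}G(h_N)-l\int_{0}^{h_N}u^{l-1}G(u)\,du$, the boundary contribution at $0$ vanishing because $u^{l}G(u)=O(u^{l+\alpha})\to 0$. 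Substituting $G(u)=(c+o(1))u^{\alpha}$ into the remaining integral yields $l\int_{0}^{h_N}u^{l-1}G(u)\,du=\tfrac{lc}{l+\alpha}h_N^{l+\alpha}(1+o(1))$, so that $E[D^{l}\boldsymbol{1}\{D\le h_N\}]=\tfrac{\alpha c}{l+\alpha}h_N^{l+\alpha}(1+o(1))$; dividing by $h_N^{l}\pi_{h\setminus 0}=h_N^{l}\,c\,h_N^{\alpha}(1+o(1))$ gives the claimed limit $\alpha/(l+\alpha)$. Equivalently, one can invoke Karamata's theorem directly \citep[Theorem 2.1 and Exercise 2.5]{resnick2007heavy}, as in the proof of Lemma \ref{lemma:Karamata's1}, applied to the truncated moment of the regularly varying random variable $D^{-1}$ restricted to $\{D>0\}$; the one-sided nature of $D$ here simply removes the second (``$D<0$'') term that appears in Lemma \ref{lemma:Karamata's1}(i).

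For parts (ii) and (iii), I would use that $\boldsymbol{\Psi}_{1:L}^{-1}\mathbf{D}_{1:L}=\boldsymbol{1}\{0<D\le h_N\}(D/h_N,\dots,D^{L}/h_N^{L})'$ has every coordinate bounded in absolute value by $1$ on its support, whence $\|\mathbf{D}_{1:L}'\boldsymbol{\Psi}_{1:L}^{-1}\|^{l}\le L^{l/2}\boldsymbol{1}\{0<D\le h_N\}$ for every positive integer $l$. For (ii), combining this with $D^{l_1}\boldsymbol{1}\{0<D\le h_N\}\le h_N^{l_1}\boldsymbol{1}\{0<D\le h_N\}$ and taking expectations gives $E[D^{l_1}\|\mathbf{D}_{1:L}'\boldsymbol{\Psi}_{1:L}^{-1}\|^{l_2}]\le L^{l_2/2}h_N^{l_1}P(0<D\le h_N)=O(\pi_{h\setminus 0}h_N^{l_1})$. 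For (iii), the same bound reduces the claim to $E[\|Z\|\boldsymbol{1}\{0<D\le h_N\}]=O(\pi_{h\setminus 0})$; conditioning on $D$ and using the near-origin conditional integrability of $\|Z\|$ that accompanies the moment conditions of Assumption \ref{supp:assn:iid_plus}(iii), this is bounded by $L^{l/2}\pi_{h\setminus 0}\sup_{0<u\le h_N}E[\|Z\|\mid D=u]=O(\pi_{h\setminus 0})$, exactly as in the proof of Lemma \ref{lemma:Karamata's1}(iv).

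The main obstacle is the regular-variation step in part (i): one must ensure that the error in $G(u)=(c+o(1))u^{\alpha}$ is uniform enough to survive the integration by parts and the division by $h_N^{l}\pi_{h\setminus 0}$, which is precisely the content of Karamata's theorem and is the only point where anything beyond elementary inequalities enters. A secondary point of care is in part (iii): dominated convergence alone delivers only $E[\|Z\|\boldsymbol{1}\{0<D\le h_N\}]=o(1)$, so obtaining the sharp rate $O(\pi_{h\setminus 0})$ is exactly what forces reliance on the conditional-moment control near $D=0$ rather than mere integrability of $\|Z\|$.
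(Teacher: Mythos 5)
Your proposal is correct, and parts (ii) and (iii) coincide with the paper's argument (the coordinatewise bound $\|\mathbf{D}_{1:L}'\boldsymbol{\Psi}_{1:L}^{-1}\|^{l}\leq L^{l/2}\boldsymbol{1}\{0<D\leq h_N\}$ followed by the conditional-moment bound near $D=0$; note the paper's own proof of (iii) also silently uses the conditional boundedness from Assumption \ref{supp:assn:iid_plus}(ii) rather than anything literally stated in (iii), so your reading is the intended one). The only genuine difference is in part (i): the paper simply rewrites $\frac{1}{h_N^l\pi_{h\setminus 0}}E[D^l\boldsymbol{1}\{D\leq h_N\}]$ as a normalized truncated moment of the regularly varying variable $D^{-1}$ and cites Karamata's theorem, whereas your primary route computes the limit directly by integration by parts against $G(u)=P(0<D\leq u)=(c+o(1))u^{\alpha}$, which is available because Assumption \ref{supp:assn:GP_around0}(iii) is stronger than bare regular variation (it pins down $G(u)\sim cu^{\alpha}$). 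Your computation is sound: the boundary term at $0$ vanishes since $u^lG(u)=O(u^{l+\alpha})$, and the $o(1)$ in $G(u)/u^{\alpha}\to c$ is exactly the uniform-on-$(0,h_N]$ control needed to evaluate $l\int_0^{h_N}u^{l-1}G(u)\,du$, so no separate uniformity argument is required. What the direct computation buys is self-containedness and transparency about where the constant $\alpha/(l+\alpha)$ comes from; what the Karamata citation buys is brevity and uniformity of technique with Lemma \ref{lemma:Karamata's1} (and it would still work if Assumption \ref{supp:assn:GP_around0}(iii) were weakened to regular variation with a slowly varying factor). Since you also flag the Karamata route as an equivalent alternative, either version is acceptable.
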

\begin{proof}
First, we are going to show the statement in (i). 
Under Assumption \ref{supp:assn:GP_around0}, Karamata's theorem \citep[Theorem 2.1 and Exercise 2.5]{resnick2007heavy} implies 
$$
\frac{1}{h_N^l\pi_{h\setminus 0}}
E\left[\boldsymbol{1}\{D\leq h_N\}D^l\right]
=
\frac{E\left[\boldsymbol{1}\{D^{-1}\geq h_N^{-1}\}(D^{-1})^{-l}\mid D\ne 0\right]}{(h_N^{-1})^{-l}P(D^{-1}\geq h_N^{-1}\mid D\ne 0)}
=
\frac{\alpha}{l+\alpha}+o(1).
$$
Therefore, the statement in (i) holds. 

Next, we are going to show the statement in (ii). 
Note that 
$$
D^{l_1}\left\|\mathbf{D}_{1:L}'\boldsymbol{\Psi}_{1:L}^{-1}\right\|^{l_2}
=
D^{l_1}\boldsymbol{1}\{0<D\leq h_N\}\left\|\left(\begin{array}{cccc}\frac{D}{h_N}&\cdots&\frac{D^{L}}{h_N^{L}}\end{array}\right)'\right\|^{l_2}
\leq
L^{l_2/2}h_N^{l_1}\boldsymbol{1}\{0<D\leq h_N\}.
$$
Therefore, 
$$
E\left[D^{l_1}\left\|\mathbf{D}_{1:L}'\boldsymbol{\Psi}_{1:L}^{-1}\right\|^{l_2}\right]
\leq
L^{l_2/2}h_N^{l_1}E[\boldsymbol{1}\{0<D\leq h_N\}]
=
O(\pi_{h\setminus 0}h_N^{l_1}).
$$ 

Third, we are going to show the statement in (iii). 
There is a positive constant $c$ such that $\sup_{u\in[-c,c]}E[\|Z\|\mid D=u]<\infty$. 
By the statement in (ii),  
$$
E\left[\|Z\|\|\mathbf{D}_{1:L}'\boldsymbol{\Psi}_{1:L}^{-1}\|^l\right]
\leq
E\left[\|\mathbf{D}_{1:L}'\boldsymbol{\Psi}_{1:L}^{-1}\|^l\right]\sup_{u\in[-h_N,h_N]}E[\|Z\|\mid D=u]
=
O\left(\pi_{h\setminus 0}\right).
$$
\end{proof}

\begin{lemma}\label{supp:lemma:Karamata's2}
(i) For every positive number $l$,
$$
E\left[\boldsymbol{1}\{h_N<D\}D^{-l}\right]=O(1)\mbox{ if }l<\alpha
$$
and 
$$
\frac{h_N^l}{\pi_{h\setminus 0}}E\left[\boldsymbol{1}\{h_N<D\}D^{-l}\right]=\frac{\alpha}{l-\alpha}+o(1)\mbox{ if }l>\alpha.
$$
(ii)$E\left[\boldsymbol{1}\{D>h_N\}D^{-l}\|Z\|\right]=O\left(1\{\alpha\leq l\}\frac{\pi_{h\setminus 0}^{1-\epsilon}}{h_N^l}+1\{\alpha>l\}\right)$ for every positive number $\epsilon\in(0,1)$ for every random variable $Z$ in Assumption \ref{supp:assn:iid_plus} (ii) and for every positive integer $l$. 
\end{lemma}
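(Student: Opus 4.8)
The plan is to mirror the two-step structure of the proof of Lemma~\ref{lemma:Karamata's2}, now simplified because in the $T>p$ setting $D=\det(\mathbf{X}'\mathbf{X})\geq 0$ is one-sided, so there is a single tail index $\alpha$ rather than the pair $(\alpha_1,\alpha_2)$.

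For part~(i) I would apply Karamata's theorem \citep[Theorem 2.1 and Exercise 2.5]{resnick2007heavy} to the random variable $V=D^{-1}$ conditional on $D\ne 0$. Assumption~\ref{supp:assn:GP_around0}(iii) says $P(0<D\leq u)/u^{\alpha}$ converges to a positive finite constant, which is precisely the statement that $t\mapsto P(V>t\mid D\ne 0)=P(0<D<1/t\mid D\ne 0)$ is regularly varying (indeed asymptotically Pareto) at infinity with index $\alpha$. Writing $E[\boldsymbol{1}\{h_N<D\}D^{-l}]=P(D\ne 0)\,E[\boldsymbol{1}\{V\leq h_N^{-1}\}V^{l}\mid D\ne 0]$ and noting that $h_N^{-l}\,P(V>h_N^{-1}\mid D\ne 0)\,P(D\ne 0)=h_N^{-l}\pi_{h\setminus 0}$ up to a Lebesgue-null endpoint, Karamata's truncated-moment asymptotics split into two regimes: when $l<\alpha$ the moment $E[V^{l}\mid D\ne 0]$ is finite, so the left side is $O(1)$; when $l>\alpha$ one has $E[\boldsymbol{1}\{V\leq x\}V^{l}\mid D\ne 0]\sim\frac{\alpha}{l-\alpha}\,x^{l}P(V>x\mid D\ne 0)$, and taking $x=h_N^{-1}$ gives $\frac{h_N^{l}}{\pi_{h\setminus 0}}E[\boldsymbol{1}\{h_N<D\}D^{-l}]\to\frac{\alpha}{l-\alpha}$, which is part~(i).

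For part~(ii) I would reduce to part~(i) by the truncation argument of Lemma~\ref{lemma:Karamata's2}(ii). Pick $c>0$ with $\sup_{u\in[-c,c]}E[\|Z\|\mid D=u]<\infty$, which Assumption~\ref{supp:assn:iid_plus}(ii) guarantees, and split $E[\boldsymbol{1}\{D>h_N\}D^{-l}\|Z\|]\leq E[\boldsymbol{1}\{D>c\}D^{-l}\|Z\|]+E[\boldsymbol{1}\{h_N<D\leq c\}D^{-l}\|Z\|]$. The first term is at most $c^{-l}E[\|Z\|]=O(1)$, and conditioning on $D$ bounds the second by $\sup_{u\in[-c,c]}E[\|Z\|\mid D=u]\cdot E[\boldsymbol{1}\{h_N<D\}D^{-l}]$. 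When $\alpha>l$, part~(i) makes this $O(1)$. When $\alpha\leq l$, I would insert H\"older's inequality, $E[\boldsymbol{1}\{h_N<D\}D^{-l}]\leq\left(E[\boldsymbol{1}\{h_N<D\}D^{-l/(1-\epsilon)}]\right)^{1-\epsilon}$; since $\alpha\leq l$ implies $l/(1-\epsilon)>\alpha$ for every $\epsilon\in(0,1)$, part~(i) applies to the inner expectation and delivers $\left(O(h_N^{-l/(1-\epsilon)}\pi_{h\setminus 0})\right)^{1-\epsilon}=O(\pi_{h\setminus 0}^{1-\epsilon}/h_N^{l})$, which dominates the $O(1)$ term. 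Collecting the two regimes gives the stated bound $O\left(1\{\alpha\leq l\}\pi_{h\setminus 0}^{1-\epsilon}/h_N^{l}+1\{\alpha>l\}\right)$.

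The only real subtlety is the regime $\alpha\leq l$ in part~(ii): part~(i) cannot be applied directly at the boundary $l=\alpha$, and the H\"older step with the auxiliary exponent $\epsilon$ is exactly what circumvents it, trading an $h_N$-free factor for the slightly weakened power $\pi_{h\setminus 0}^{1-\epsilon}$. Everything else is a routine transcription of the $T=p$ argument with $|D|$ replaced by $D$ and the two-sided tail collapsed to a single one.
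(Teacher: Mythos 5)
Your proposal is correct and follows essentially the same route as the paper's proof: Karamata's theorem applied to $V=D^{-1}$ given $D\ne 0$ for part (i), and for part (ii) the truncation at a constant $c$ with $\sup_{u\in[-c,c]}E[\|Z\|\mid D=u]<\infty$ followed by the H\"older step with exponent $1/(1-\epsilon)$ to handle the boundary. The only cosmetic difference is that you run the H\"older bound uniformly over all $\alpha\leq l$ while the paper treats $\alpha<l$ directly and reserves H\"older for $\alpha=l$; both yield the stated $O(\pi_{h\setminus 0}^{1-\epsilon}/h_N^{l})$.
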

\begin{proof}
First, we are going to show the statement in (i). 
Under Assumption \ref{supp:assn:GP_around0}, Karamata's theorem \citep[Theorem 2.1 and Exercise 2.5]{resnick2007heavy} implies that 
$$
\frac{h_N^l}{\pi_{h\setminus 0}}E\left[\boldsymbol{1}\{D>h_N\}D^{-l}\right]
=
\frac{E\left[\boldsymbol{1}\{D^{-1}\leq h_N^{-1}\}(D^{-1})^l\mid D\ne 0\right]}{
(h_N^{-1})^lP(D^{-1}\geq h_N^{-1}\mid D\ne 0)}
=
\frac{\alpha}{l-\alpha}+o(1).
$$

Next, we are going to show the statement in (ii). 
There is a positive constant $c$ such that $\sup_{u\in[-c,c]}E[\|Z\|\mid D=u]<\infty$. 
Note that
\begin{align*}
\left\|E\left[\boldsymbol{1}\{|D|>h_N\}|D|^{-l}\|Z\|\right]\right\|
\leq&
\left\|E\left[\boldsymbol{1}\{|D|>c\}|D|^{-l}\|Z\|\right]\right\|+\left\|E\left[\boldsymbol{1}\{h_N<|D|\leq c\}|D|^{-l}\|Z\|\right]\right\|
\\
\leq&
c^{-l}E\left[\|Z\|\right]+E\left[\boldsymbol{1}\{h_N<|D|\}|D|^{-l}\right]\sup_{u\in[-c,c]}E[\|Z\|\mid D=u].
\end{align*}
By H\"older's inequality and the statement in (i), 
$$
E\left[\boldsymbol{1}\{|D|>h_N\}|D|^{-l}\right]
=
\begin{cases}
O(1)&\mbox{ if }\alpha>l
\\
O\left(\frac{\pi_{h\setminus 0}}{h_N^l}\right)&\mbox{ if }\alpha<l
\\
O(E\left[\boldsymbol{1}\{|D|>h_N\}|D|^{-l/{1-\epsilon}}\right]^{1-\epsilon})
=
O\left(\frac{\pi_{h\setminus 0}}{h_N^{l/{1-\epsilon}}}\right)^{1-\epsilon}
=
O\left(\frac{\pi_{h\setminus 0}^{1-\epsilon}}{h_N^l}\right)&\mbox{ if }\alpha=l,
\end{cases}
$$
which implies the statement in (ii). 
\end{proof}

\begin{lemma}\label{supp:lemma:Omega_property2}
The minimum eigenvalue of $\boldsymbol{\Pi}$ is bounded away from zero,
the maximum eigenvalue of $\boldsymbol{\Pi}$ is bounded,
 and  $\boldsymbol{\hat\Pi}^{-1}-\boldsymbol{\Pi}^{-1}=O_p\left(\frac{1}{\sqrt{N\pi_{h\setminus 0}}}\right)$.
\end{lemma}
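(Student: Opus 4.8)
The plan is to follow the same three-step template used in the proof of Lemma~\ref{lemma:Omega_property2} for the $T=p$ case, with the simplification that here $D=\det(\mathbf{X}'\mathbf{X})\geq 0$, so that only a single tail index $\alpha$ and a single Cauchy matrix appear and no $\mathrm{diag}(1,-1,1,\dots)$ sign correction is needed.

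First I would establish the eigenvalue bounds on $\boldsymbol{\Pi}$. Writing the $(j,k)$ entry of $\boldsymbol{\Psi}_{1:L}^{-1}\mathbf{D}_{1:L}\mathbf{D}_{1:L}'\boldsymbol{\Psi}_{1:L}^{-1}$ as $h_N^{-(j+k)}\boldsymbol{1}\{D\leq h_N\}D^{j+k}$ for $j,k\in\{1,\dots,L\}$ and dividing by $\pi_{h\setminus 0}$, Lemma~\ref{supp:lemma:Karamata's1}(i) gives
$$
\boldsymbol{\Pi}\longrightarrow\mathbf{C},\qquad \mathbf{C}_{jk}=\frac{\alpha}{j+k+\alpha},
$$
as $h_N\to 0$. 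The limit $\mathbf{C}$ is a Cauchy matrix (write $\mathbf{C}_{jk}=\alpha/((j+\alpha/2)+(k+\alpha/2))$), hence positive definite, so for any unit vector $\mathbf{c}$ we have $\mathbf{c}'\boldsymbol{\Pi}\mathbf{c}=\mathbf{c}'\mathbf{C}\mathbf{c}+o(1)$, which is bounded away from zero and from above; this yields the first two claims. The point that requires care here is precisely that Assumption~\ref{supp:assn:GP_around0}(iii)–(iv) make $\pi_{h\setminus 0}/h_N^{\alpha}$ bounded away from $0$ and $\infty$, so that the normalization built into $\boldsymbol{\Pi}$ produces exactly this nondegenerate Cauchy limit rather than something degenerate; invoking Karamata's theorem through Lemma~\ref{supp:lemma:Karamata's1} is what makes this work.

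Second, I would bound $\boldsymbol{\hat\Pi}-\boldsymbol{\Pi}=\frac{1}{\pi_{h\setminus 0}}(E_N-E)\left[\boldsymbol{\Psi}_{1:L}^{-1}\mathbf{D}_{1:L}\mathbf{D}_{1:L}'\boldsymbol{\Psi}_{1:L}^{-1}\right]$. Under i.i.d.\ sampling (Assumption~\ref{supp:assn:iid_plus}(i)),
$$
E\!\left[\left\|\boldsymbol{\hat\Pi}-\boldsymbol{\Pi}\right\|^2\right]^{1/2}=\frac{1}{\sqrt{N\pi_{h\setminus 0}^2}}\,E\!\left[\left\|\mathbf{D}_{1:L}'\boldsymbol{\Psi}_{1:L}^{-1}\right\|^4\right]^{1/2},
$$
and since $\left\|\mathbf{D}_{1:L}'\boldsymbol{\Psi}_{1:L}^{-1}\right\|^4\leq L^2\boldsymbol{1}\{0<D\leq h_N\}$ we get $E[\|\mathbf{D}_{1:L}'\boldsymbol{\Psi}_{1:L}^{-1}\|^4]=O(\pi_{h\setminus 0})$ (this is also immediate from Lemma~\ref{supp:lemma:Karamata's1}), hence $\boldsymbol{\hat\Pi}-\boldsymbol{\Pi}=O_p(1/\sqrt{N\pi_{h\setminus 0}})$.

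Finally, I would deduce the statement on the inverse from the identity $\boldsymbol{\hat\Pi}^{-1}-\boldsymbol{\Pi}^{-1}=-\big(\boldsymbol{\Pi}+(\boldsymbol{\hat\Pi}-\boldsymbol{\Pi})\big)^{-1}(\boldsymbol{\hat\Pi}-\boldsymbol{\Pi})\boldsymbol{\Pi}^{-1}$: since the minimum eigenvalue of $\boldsymbol{\Pi}$ is bounded away from zero (Step~1) and $\boldsymbol{\hat\Pi}-\boldsymbol{\Pi}=o_p(1)$, the leading factor is $O_p(1)$, and the rate from Step~2 transfers to $\boldsymbol{\hat\Pi}^{-1}-\boldsymbol{\Pi}^{-1}$. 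Overall this is a routine adaptation of the $T=p$ argument; there is no serious obstacle beyond the bookkeeping in Step~1 needed to certify that the normalized design matrix has a genuinely nondegenerate, positive-definite Cauchy limit.
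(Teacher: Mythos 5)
Your proof is correct and takes essentially the same route as the paper, whose entire proof of this lemma is the remark that, apart from the redefinition of $D$, the argument is identical to that of Lemma \ref{lemma:Omega_property2}. Your three steps reproduce exactly that argument, and you correctly identify the one simplification: since $D=\mathrm{det}(\mathbf{X}'\mathbf{X})\geq 0$, only a single tail index and a single positive-definite Cauchy limit appear, with no sign-alternating vector $\mathbf{\tilde{c}}$ needed.
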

\begin{proof}
While the definition of $D$ is different, the proof is the same as that of Lemma \ref{lemma:Omega_property2}.
\end{proof}

\begin{lemma}\label{supp:lemma:M_inve}
$\mathbf{V}^{-1}=O(1)$, and $\mathbf{\hat{V}}^{-1}-\mathbf{V}^{-1}=O_p\left(\sqrt{\frac{1}{N}}\right)$. 
\end{lemma}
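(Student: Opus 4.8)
The plan is to follow the same two-part structure as the proof of Lemma~\ref{lemma:M_inve} in the $T=p$ case, but the argument is lighter here because $\mathbf{V}=E[\boldsymbol{1}\{D>h_N\}\mathbf{W}'\mathbf{M}_{\mathbf{X}}\mathbf{W}]$ carries no local-polynomial normalization and, correspondingly, no Karamata-type asymptotics are needed. First I would establish $\mathbf{V}^{-1}=O(1)$ by bounding the minimum eigenvalue of $\mathbf{V}$ away from zero uniformly in $h_N$; then I would bound the sampling error $\mathbf{\hat{V}}-\mathbf{V}$ by a standard i.i.d.\ second-moment argument and conclude via the resolvent identity.

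For the first step, I would compare $\mathbf{V}$ with its limit. Since $\mathbf{M}_{\mathbf{X}}=\mathbf{I}-\mathbf{X}(\mathbf{X}'\mathbf{X})^{-1}\mathbf{X}'$ is an orthogonal projection, $\|\mathbf{W}'\mathbf{M}_{\mathbf{X}}\mathbf{W}\|\leq\|\mathbf{W}\|^2$, so $\|\mathbf{V}-E[\boldsymbol{1}\{D>0\}\mathbf{W}'\mathbf{M}_{\mathbf{X}}\mathbf{W}]\|\leq E[\boldsymbol{1}\{0<D\leq h_N\}\|\mathbf{W}\|^2]\to 0$ by dominated convergence, using $E[\|\mathbf{W}\|^4]<\infty$ from Assumption~\ref{supp:assn:iid_plus}(iii). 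Because Assumption~\ref{supp:assn:iid_plus}(iv) guarantees that the minimum eigenvalue of $E[\boldsymbol{1}\{D>0\}\mathbf{W}'\mathbf{M}_{\mathbf{X}}\mathbf{W}]$ is bounded away from zero, the same holds for $\mathbf{V}$ once $h_N$ is small enough, whence $\mathbf{V}^{-1}=O(1)$. This is the only genuinely delicate point: one has to rule out that the shrinking trimming set $\{D>h_N\}$ lets the minimum eigenvalue of $\mathbf{V}$ degenerate along the sequence $h_N\to 0$, and this is precisely what the dominated-convergence comparison above — together with the non-degeneracy hard-wired into Assumption~\ref{supp:assn:iid_plus}(iv) — resolves.

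For the second step, I would write $\mathbf{\hat{V}}-\mathbf{V}=(E_N-E)[\boldsymbol{1}\{D>h_N\}\mathbf{W}'\mathbf{M}_{\mathbf{X}}\mathbf{W}]$ and use i.i.d.\ sampling (Assumption~\ref{supp:assn:iid_plus}(i)) together with $\|\mathbf{W}'\mathbf{M}_{\mathbf{X}}\mathbf{W}\|\leq\|\mathbf{W}\|^2$ to get $E[\|\mathbf{\hat{V}}-\mathbf{V}\|^2]^{1/2}\leq N^{-1/2}E[\|\mathbf{W}\|^4]^{1/2}=O(N^{-1/2})$, so that $\mathbf{\hat{V}}-\mathbf{V}=O_p(N^{-1/2})$ by Markov's inequality. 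Finally, using $\mathbf{\hat{V}}^{-1}-\mathbf{V}^{-1}=-(\mathbf{V}+(\mathbf{\hat{V}}-\mathbf{V}))^{-1}(\mathbf{\hat{V}}-\mathbf{V})\mathbf{V}^{-1}$, the bound $\mathbf{V}^{-1}=O(1)$ and $\mathbf{\hat{V}}-\mathbf{V}=o_p(1)$ make $(\mathbf{V}+(\mathbf{\hat{V}}-\mathbf{V}))^{-1}=O_p(1)$ (which also gives invertibility of $\mathbf{\hat{V}}$ with probability tending to one), and therefore $\mathbf{\hat{V}}^{-1}-\mathbf{V}^{-1}=O_p(N^{-1/2})$, as claimed.
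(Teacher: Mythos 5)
Your proposal is correct and follows essentially the same route as the paper: eigenvalue lower bound for $\mathbf{V}$ from Assumption \ref{supp:assn:iid_plus} (iv), an i.i.d.\ second-moment bound using $\|\mathbf{M}_{\mathbf{X}}\|\leq 1$ and $E[\|\mathbf{W}\|^4]<\infty$ for $\mathbf{\hat{V}}-\mathbf{V}=O_p(N^{-1/2})$, and the resolvent identity to pass to the inverses. The only difference is that you make explicit the dominated-convergence comparison between $E[\boldsymbol{1}\{D>h_N\}\mathbf{W}'\mathbf{M}_{\mathbf{X}}\mathbf{W}]$ and $E[\boldsymbol{1}\{D>0\}\mathbf{W}'\mathbf{M}_{\mathbf{X}}\mathbf{W}]$, a step the paper leaves implicit; this is a welcome clarification, since the trimming only gives the matrix inequality in the unhelpful direction.
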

\begin{proof}
By Assumption \ref{supp:assn:iid_plus} (iv), the minimum eigenvalue of $\mathbf{V}$ is bounded away from zero, which implies $\mathbf{V}^{-1}=O(1)$. 
Since $\mathbf{V}^{-1}=O(1)$, it suffices to show $\mathbf{\hat{V}}-\mathbf{V}=O_p\left(\sqrt{\frac{1}{N}}\right)$ to show $\mathbf{\hat{V}}^{-1}-\mathbf{V}^{-1}=O_p\left(\sqrt{\frac{1}{N}}\right)$. 
Let $\mathbf{c}$ be any $(T-1)p$-dimensional column vector with $\|c\|=1$. 
Together with Assumption \ref{supp:assn:iid_plus} (i), we have
$$
E\left[\left\|(\mathbf{\hat{V}}-\mathbf{V})\mathbf{c}\right\|^2\right]^{1/2}
=
E\left[\left\|(E_N-E)\left[\boldsymbol{1}\{D>h_N\}\mathbf{W}'\mathbf{M}_{\mathbf{X}}\mathbf{W}\mathbf{c}\right]\right\|^2\right]^{1/2}
\leq
\frac{1}{\sqrt{N}}E\left[\|\mathbf{W}\|^4\|\mathbf{M}_{\mathbf{X}}\|^2\right]^{1/2}.
$$
Since $\|\mathbf{M}_{\mathbf{X}}\|\leq 1$ and $E\left[\|\mathbf{W}\|^4\right]<\infty$, we have $\mathbf{\hat{V}}-\mathbf{V}=O_p\left(\sqrt{\frac{1}{N}}\right)$.
\end{proof}

\begin{lemma}\label{supp:lemma:Sigma_rate}
For every $\epsilon\in(0,1)$, $\mathbf{Q}=O(1\{\alpha\leq 1\}\frac{\pi_h^{1-\epsilon}}{h_N}+1\{\alpha>1\})$, and $\mathbf{\hat{Q}}-\mathbf{Q}=O_p\Big(1\{\alpha\leq 2\}\sqrt{\frac{\pi_h^{2-\epsilon}}{Nh_N^2\pi_{h\setminus 0}}}+1\{\alpha>2\}\sqrt{\frac{1}{N}}\Big)$.
\end{lemma}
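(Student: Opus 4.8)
The plan is to fix an arbitrary unit vector $\mathbf{c}$ of conformable dimension, expand both $\mathbf{Q}\mathbf{c}$ and $(\mathbf{\hat{Q}}-\mathbf{Q})\mathbf{c}$ into explicit finite sums, bound each summand with the auxiliary lemmas already proved in this section, and then collapse the resulting $\alpha$-dependent rates into the stated order in each tail regime. I would begin from the identity $\mathbf{D}_{1:L}'E[\mathbf{D}_{1:L}\mathbf{D}_{1:L}']^{-1}\mathbf{h}=\pi_{h\setminus 0}^{-1}\mathbf{D}_{1:L}'\boldsymbol{\Psi}_{1:L}^{-1}\boldsymbol{\Pi}^{-1}\boldsymbol{\Psi}_{1:L}^{-1}\mathbf{h}$, which gives
\[
\mathbf{Q}\mathbf{c}=\mathbf{R}\mathbf{c}-E\left[\boldsymbol{1}\{D>h_N\}D^{-1}(\mathbf{X}'\mathbf{X})^\ast\mathbf{X}'\mathbf{W}\mathbf{c}\right]-\frac{1}{\pi_{h\setminus 0}}E\left[(\mathbf{X}'\mathbf{X})^\ast\mathbf{X}'\mathbf{W}\mathbf{c}\,\mathbf{D}_{1:L}'\boldsymbol{\Psi}_{1:L}^{-1}\right]\boldsymbol{\Pi}^{-1}\boldsymbol{\Psi}_{1:L}^{-1}\mathbf{h}.
\]
For the middle term I would apply Lemma \ref{supp:lemma:Karamata's2}(ii) with $l=1$ and $Z=\|(\mathbf{X}'\mathbf{X})^\ast\mathbf{X}'\mathbf{W}\|$, obtaining $O(1\{\alpha\leq1\}\pi_{h\setminus 0}^{1-\epsilon}/h_N+1\{\alpha>1\})$, and then use $\pi_{h\setminus 0}\leq\pi_h$. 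For the last term I would bound $\|E[(\mathbf{X}'\mathbf{X})^\ast\mathbf{X}'\mathbf{W}\mathbf{c}\,\mathbf{D}_{1:L}'\boldsymbol{\Psi}_{1:L}^{-1}]\|=O(\pi_{h\setminus 0})$ via (the argument of) Lemma \ref{supp:lemma:Karamata's1}(iii), which only needs $\sup_uE[\|(\mathbf{X}'\mathbf{X})^\ast\mathbf{X}'\mathbf{W}\|^2\mid D=u]<\infty$ from Assumption \ref{supp:assn:iid_plus}(ii), together with $\|\boldsymbol{\Pi}^{-1}\|=O(1)$ from Lemma \ref{supp:lemma:Omega_property2} and $\|\boldsymbol{\Psi}_{1:L}^{-1}\mathbf{h}\|=O(\pi_h/h_N)$ from Lemma \ref{supp:lemma:influe_eror4}, so this term is $O(\pi_h/h_N)$. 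Since $\pi_h\to0$, we have $\pi_h/h_N\leq\pi_h^{1-\epsilon}/h_N$ eventually when $\alpha\leq1$, and $\pi_h\asymp h_N^{\alpha}$ forces $\pi_h/h_N=o(1)$ when $\alpha>1$; combining the three bounds yields the claimed order for $\mathbf{Q}$.

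For $\mathbf{\hat{Q}}-\mathbf{Q}$ I would write $(\mathbf{\hat{Q}}-\mathbf{Q})\mathbf{c}$ as the empirical-process term $-(E_N-E)[\boldsymbol{1}\{D>h_N\}D^{-1}(\mathbf{X}'\mathbf{X})^\ast\mathbf{X}'\mathbf{W}\mathbf{c}]$ plus, after adding and subtracting, three further pieces coming from $\pi_{h\setminus 0}^{-1}\big(E_N[\mathbf{G}]\boldsymbol{\hat\Pi}^{-1}\boldsymbol{\Psi}_{1:L}^{-1}\mathbf{\hat{h}}-E[\mathbf{G}]\boldsymbol{\Pi}^{-1}\boldsymbol{\Psi}_{1:L}^{-1}\mathbf{h}\big)$ with $\mathbf{G}=(\mathbf{X}'\mathbf{X})^\ast\mathbf{X}'\mathbf{W}\mathbf{c}\,\mathbf{D}_{1:L}'\boldsymbol{\Psi}_{1:L}^{-1}$, namely $(E_N-E)[\mathbf{G}]\boldsymbol{\hat\Pi}^{-1}\boldsymbol{\Psi}_{1:L}^{-1}\mathbf{\hat{h}}$, then $E[\mathbf{G}](\boldsymbol{\hat\Pi}^{-1}-\boldsymbol{\Pi}^{-1})\boldsymbol{\Psi}_{1:L}^{-1}\mathbf{\hat{h}}$, then $E[\mathbf{G}]\boldsymbol{\Pi}^{-1}\boldsymbol{\Psi}_{1:L}^{-1}(\mathbf{\hat{h}}-\mathbf{h})$. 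The first piece I would control by the i.i.d.\ second-moment inequality and Lemma \ref{supp:lemma:Karamata's2}(ii) with $l=2$, giving $O_p(1\{\alpha\leq2\}\sqrt{\pi_{h\setminus 0}^{1-\epsilon}/(Nh_N^2)}+1\{\alpha>2\}\sqrt{1/N})$. For the three $\mathbf{G}$-pieces I would use $\|E[\mathbf{G}]\|=O(\pi_{h\setminus 0})$ and $E[\|\mathbf{G}\|^2]^{1/2}=O(\pi_{h\setminus 0}^{1/2})$ from Lemma \ref{supp:lemma:Karamata's1}(iii); $\|\boldsymbol{\hat\Pi}^{-1}\|=O_p(1)$ and $\|\boldsymbol{\hat\Pi}^{-1}-\boldsymbol{\Pi}^{-1}\|=O_p((N\pi_{h\setminus 0})^{-1/2})$ from Lemma \ref{supp:lemma:Omega_property2}; and $\|\boldsymbol{\Psi}_{1:L}^{-1}\mathbf{\hat{h}}\|=O_p(\pi_h/h_N)$, $\|\boldsymbol{\Psi}_{1:L}^{-1}(\mathbf{\hat{h}}-\mathbf{h})\|=O_p(\sqrt{\pi_h/(Nh_N^2)})$ from Lemma \ref{supp:lemma:influe_eror4}. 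Combined with the prefactor $\pi_{h\setminus 0}^{-1}$, this makes the second and third pieces $O_p(\sqrt{\pi_h^2/(Nh_N^2\pi_{h\setminus 0})})$ and the fourth $O_p(\sqrt{\pi_h/(Nh_N^2)})$.

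Finally I would collapse the four rates to the single claimed one by elementary inequalities: since $\pi_{h\setminus 0}\leq\pi_h\leq1$ eventually, one has $\pi_{h\setminus 0}^{1-\epsilon}\leq\pi_h^{2-\epsilon}/\pi_{h\setminus 0}$, $\pi_h\leq\pi_h^{2-\epsilon}/\pi_{h\setminus 0}$, and $\pi_h^2/\pi_{h\setminus 0}\leq\pi_h^{2-\epsilon}/\pi_{h\setminus 0}$, which settles the $\alpha\leq2$ case; while for $\alpha>2$ the relation $\pi_h\asymp h_N^{\alpha}$ makes both $\sqrt{\pi_h^2/(Nh_N^2\pi_{h\setminus 0})}$ and $\sqrt{\pi_h/(Nh_N^2)}$ of order $o_p(\sqrt{1/N})$. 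The analytic content is essentially a transcription of the $T=p$ argument (Lemma \ref{lemma:Sigma_rate}); the one place demanding care is precisely this final bookkeeping across the regimes $\alpha\leq1$ versus $\alpha>1$ (for $\mathbf{Q}$) and $\alpha\leq2$ versus $\alpha>2$ (for $\mathbf{\hat{Q}}-\mathbf{Q}$) — keeping straight which of $\pi_{h\setminus 0}$ and $\pi_h$ enters each factor and checking that each of the several $\alpha$-dependent bounds really does dominate into the single stated rate. That, rather than any individual estimate, is the main obstacle.
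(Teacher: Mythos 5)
Your proposal is correct and takes essentially the same route as the paper's proof: fix a unit vector $\mathbf{c}$, expand $\mathbf{Q}\mathbf{c}$ via the normalized identity $\mathbf{D}_{1:L}'E[\mathbf{D}_{1:L}\mathbf{D}_{1:L}']^{-1}\mathbf{h}=\pi_{h\setminus 0}^{-1}\mathbf{D}_{1:L}'\boldsymbol{\Psi}_{1:L}^{-1}\boldsymbol{\Pi}^{-1}\boldsymbol{\Psi}_{1:L}^{-1}\mathbf{h}$, bound the trimmed-mean term with Lemma \ref{supp:lemma:Karamata's2} and the local-polynomial term with Lemmas \ref{supp:lemma:Karamata's1}, \ref{supp:lemma:influe_eror4}, and \ref{supp:lemma:Omega_property2}, and then telescope $(\mathbf{\hat{Q}}-\mathbf{Q})\mathbf{c}$ into an empirical-process piece plus plug-in pieces, with exactly the regime bookkeeping ($\pi_{h\setminus 0}\leq\pi_h$, $\pi_{h\setminus 0}\asymp h_N^{\alpha}$, and $\pi_h=\pi_{h\setminus 0}$ when $\alpha>1$) that the paper uses. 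The only difference worth noting is that your four-term telescoping retains the piece $\pi_{h\setminus 0}^{-1}E[\mathbf{G}]\boldsymbol{\Pi}^{-1}\boldsymbol{\Psi}_{1:L}^{-1}(\mathbf{\hat{h}}-\mathbf{h})$, which the paper's displayed three-term decomposition drops even though $\mathbf{\hat{Q}}$ is built from $\mathbf{\hat{h}}$; since that piece is $O_p\bigl(\sqrt{\pi_h/(Nh_N^2)}\bigr)$ and is absorbed into the stated rate by your inequality $\pi_h\leq\pi_h^{2-\epsilon}/\pi_{h\setminus 0}$, your version is, if anything, the more complete one.
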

\begin{proof}
Let $\mathbf{c}$ be any $(T-1)p$-dimensional column vector with $\|\mathbf{c}\|=1$. 
The the first statement follows from Lemmas \ref{supp:lemma:influe_eror4}, \ref{supp:lemma:Karamata's1} and \ref{supp:lemma:Karamata's2} and 
\begin{eqnarray*}
\mathbf{Q}\mathbf{c}
&=&
\mathbf{R}\mathbf{c}+E\left[\boldsymbol{1}\{D>h_N\}D^{-1}(\mathbf{X}'\mathbf{X})^\ast\mathbf{X}'\mathbf{W}\mathbf{c}\right]+\frac{1}{\pi_{h\setminus 0}}E\left[(\mathbf{X}'\mathbf{X})^\ast\mathbf{X}'\mathbf{W}\mathbf{c}\mathbf{D}_{1:L}'\boldsymbol{\Psi}_{1:L}^{-1}\right]\boldsymbol{\Pi}^{-1}\boldsymbol{\Psi}_{1:L}^{-1}\mathbf{h}
\\
&=&
O\left(1\{\alpha\leq 1\}\frac{\pi_h^{1-\epsilon}}{h_N}+1\{\alpha>1\}\right).
\end{eqnarray*}
Assumption \ref{supp:assn:iid_plus} (i) and Lemmas \ref{supp:lemma:Karamata's1} and \ref{supp:lemma:Karamata's2} imply
$$
\|E\left[(\mathbf{X}'\mathbf{X})^\ast\mathbf{X}'\mathbf{W}\mathbf{c}\mathbf{D}_{1:L}'\boldsymbol{\Psi}_{1:L}^{-1}\right]\|
\leq
E\left[\|(\mathbf{X}'\mathbf{X})^\ast\mathbf{X}'\mathbf{W}\mathbf{c}\|\|\mathbf{D}_{1:L}'\boldsymbol{\Psi}_{1:L}^{-1}\|\right]
=
O\left(\pi_{h\setminus 0}\right),
$$
$$
E\left[\left\|(E_N-E)\left[(\mathbf{X}'\mathbf{X})^\ast\mathbf{X}'\mathbf{W}\mathbf{c}\mathbf{D}_{1:L}'\boldsymbol{\Psi}_{1:L}^{-1}\right]\right\|^2\right]^{1/2}
\leq
\frac{1}{\sqrt{N}}E\left[\left\|(\mathbf{X}'\mathbf{X})^\ast\mathbf{X}'\mathbf{W}\mathbf{c}\right\|^2\left\|\mathbf{D}_{1:L}'\boldsymbol{\Psi}_{1:L}^{-1}\right\|^2\right]^{1/2}
=
O\left(\sqrt{\frac{\pi_{h\setminus 0}}{N}}\right),
$$
and 
\begin{align*}
E\left[\left\|(E_N-E)\left[(\mathbf{X}'\mathbf{X})^\ast\mathbf{X}'\mathbf{W}\boldsymbol{1}\{D>h_N\}D^{-1}\right]\right\|^2\right]^{1/2}
\leq&
\frac{1}{\sqrt{N}}E\left[\left\|(\mathbf{X}'\mathbf{X})^\ast\mathbf{X}'\mathbf{W}\right\|^2\boldsymbol{1}\{D>h_N\}D^{-2}\right]^{1/2}
\\
=&
O\left(1\{\alpha\leq 2\}\sqrt{\frac{\pi_{h\setminus 0}^{1-\epsilon}}{Nh_N^2}}+1\{\alpha>2\}\sqrt{\frac{1}{N}}\right).
\end{align*}
Since 
\begin{align*}
(\mathbf{\hat{Q}}-\mathbf{Q})\mathbf{c}
=&
\frac{1}{\pi_{h\setminus 0}}E\left[(\mathbf{X}'\mathbf{X})^\ast\mathbf{X}'\mathbf{W}\mathbf{c}\mathbf{D}_{1:L}'\boldsymbol{\Psi}_{1:L}^{-1}\right](\boldsymbol{\hat\Pi}^{-1}-\boldsymbol{\Pi}^{-1})\boldsymbol{\Psi}_{1:L}^{-1}\mathbf{h}
\\&
+\frac{1}{\pi_{h\setminus 0}}(E_N-E)\left[(\mathbf{X}'\mathbf{X})^\ast\mathbf{X}'\mathbf{W}\mathbf{c}\mathbf{D}_{1:L}'\boldsymbol{\Psi}_{1:L}^{-1}\right]\boldsymbol{\hat\Pi}^{-1}\boldsymbol{\Psi}_{1:L}^{-1}\mathbf{h}
\\&
+(E_N-E)\left[(\mathbf{X}'\mathbf{X})^\ast\mathbf{X}'\mathbf{W}\boldsymbol{1}\{D>h_N\}D^{-1}\right]\mathbf{c},
\end{align*}
together with Lemmas \ref{supp:lemma:influe_eror4} and \ref{supp:lemma:Omega_property2}, we have
$$
(\mathbf{\hat{Q}}-\mathbf{Q})\mathbf{c}=O_p\left(1\{\alpha\leq 2\}\sqrt{\frac{\pi_h^{2-\epsilon}}{Nh_N^2\pi_{h\setminus 0}}}+1\{\alpha>2\}\sqrt{\frac{1}{N}}\right).
$$
\end{proof}

\begin{lemma}\label{supp:lemma:fourth1}
For every $\epsilon\in(0,1)$, the following statements hold: 
(i) $E[\boldsymbol{\xi}_1]=0$. 
(ii) $E[\|\boldsymbol{\xi}_1\|^2]=O(1\{\alpha\leq 2\}\frac{\pi_{h\setminus 0}^{1-\epsilon}}{h_N^2}+1\{\alpha>2\})$ and $E[\|\boldsymbol{\xi}_1\|^4]=O(1\{\alpha\leq 4\}\frac{\pi_{h\setminus 0}^{1-\epsilon}}{h_N^4}+1\{\alpha>4\})$.
(iii) $(E_N-E)[\boldsymbol{\xi}_1]=O_p(1\{\alpha\leq 2\}\sqrt{\frac{\pi_{h\setminus 0}^{1-\epsilon}}{Nh_N^2}}+1\{\alpha>2\}\sqrt{\frac{1}{N}})$.
\end{lemma}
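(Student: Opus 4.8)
The plan is to follow the proof of Lemma~\ref{lemma:fourth1} verbatim in structure, substituting the $T>p$ moment bound Lemma~\ref{supp:lemma:Karamata's2} for its $T=p$ analogue Lemma~\ref{lemma:Karamata's2}, and keeping track of the case split on the tail index $\alpha$ (which, unlike in Section~\ref{sec:main}, need not satisfy $\alpha\le1$). Part~(i) is immediate, since $\boldsymbol{\xi}_1$ is by construction $A-E[A]$ for $A=\boldsymbol{1}\{D>h_N\}D^{-1}(\mathbf{X}'\mathbf{X})^\ast\mathbf{X}'(\mathbf{Y}-\mathbf{W}\boldsymbol{\delta})$; Lemma~\ref{supp:lemma:Karamata's2}~(ii) with $l=1$ guarantees $E[A]$ is finite so that the centering is legitimate.

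For part~(ii), I would use $E[\|A-E[A]\|^2]\le E[\|A\|^2]$ and, by the triangle and Jensen inequalities, $E[\|A-E[A]\|^4]\le C\,E[\|A\|^4]$, reducing the problem to bounding $E[\boldsymbol{1}\{D>h_N\}D^{-2}\|(\mathbf{X}'\mathbf{X})^\ast\mathbf{X}'(\mathbf{Y}-\mathbf{W}\boldsymbol{\delta})\|^2]$ and $E[\boldsymbol{1}\{D>h_N\}D^{-4}\|(\mathbf{X}'\mathbf{X})^\ast\mathbf{X}'(\mathbf{Y}-\mathbf{W}\boldsymbol{\delta})\|^4]$. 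Writing $Z=\|(\mathbf{X}'\mathbf{X})^\ast\mathbf{X}'(\mathbf{Y}-\mathbf{W}\boldsymbol{\delta})\|^4$, the triangle inequality gives $Z\le C(\|(\mathbf{X}'\mathbf{X})^\ast\mathbf{X}'\mathbf{Y}\|^4+\|(\mathbf{X}'\mathbf{X})^\ast\mathbf{X}'\mathbf{W}\|^4)$, so by Assumption~\ref{supp:assn:iid_plus}~(ii) $Z$ has finite mean and $E[Z\mid D=u]$ bounded near $0$, i.e.\ $Z$ is admissible in Lemma~\ref{supp:lemma:Karamata's2}~(ii). The fourth moment then follows from applying that lemma with $l=4$. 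For the second moment I would first bound $\|(\mathbf{X}'\mathbf{X})^\ast\mathbf{X}'(\mathbf{Y}-\mathbf{W}\boldsymbol{\delta})\|^2\le 1+Z$ (using $t^2\le 1+t^4$), then split $E[\boldsymbol{1}\{D>h_N\}D^{-2}(1+Z)]$ into $E[\boldsymbol{1}\{D>h_N\}D^{-2}]$, controlled by Lemma~\ref{supp:lemma:Karamata's2}~(i) with $l=2$, plus $E[\boldsymbol{1}\{D>h_N\}D^{-2}Z]$, controlled by Lemma~\ref{supp:lemma:Karamata's2}~(ii) with $l=2$; both contributions are $O(1\{\alpha\le2\}\pi_{h\setminus0}^{1-\epsilon}/h_N^2+1\{\alpha>2\})$, where the $\pi_{h\setminus0}^{1-\epsilon}$ rather than $\pi_{h\setminus0}$ absorbs the boundary case $\alpha=2$ (and $\pi_{h\setminus0}\le\pi_{h\setminus0}^{1-\epsilon}$ since $\pi_{h\setminus0}\to0$).

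For part~(iii), since $E[\boldsymbol{\xi}_1]=0$ and the sample is i.i.d.\ by Assumption~\ref{supp:assn:iid_plus}~(i), one has $E[\|(E_N-E)[\boldsymbol{\xi}_1]\|^2]=N^{-1}E[\|\boldsymbol{\xi}_1\|^2]$; combining this with the rate for $E[\|\boldsymbol{\xi}_1\|^2]$ from part~(ii) and Markov's inequality yields $(E_N-E)[\boldsymbol{\xi}_1]=O_p(1\{\alpha\le2\}\sqrt{\pi_{h\setminus0}^{1-\epsilon}/(Nh_N^2)}+1\{\alpha>2\}\sqrt{1/N})$, as claimed.

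I do not anticipate a genuine difficulty: once Lemma~\ref{supp:lemma:Karamata's2} is in hand the argument is essentially bookkeeping. The only two points requiring care are verifying that $\|(\mathbf{X}'\mathbf{X})^\ast\mathbf{X}'(\mathbf{Y}-\mathbf{W}\boldsymbol{\delta})\|^4$ is of the form covered by Assumption~\ref{supp:assn:iid_plus}~(ii) (handled by the triangle inequality together with finiteness of the fixed vector $\boldsymbol{\delta}$), and correctly tracking the boundary indices $\alpha\in\{1,2,4\}$, where the relevant truncated moments are only slowly varying; these are precisely the cases where the $\epsilon$-slack inherited from Hölder's inequality in Lemma~\ref{supp:lemma:Karamata's2} is needed.
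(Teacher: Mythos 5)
Your proposal is correct and follows essentially the same route as the paper, which simply cites Lemma \ref{supp:lemma:Karamata's2} for part (ii) and the i.i.d.\ assumption for part (iii); you have merely made explicit the bookkeeping the paper leaves implicit (reducing $\|(\mathbf{X}'\mathbf{X})^\ast\mathbf{X}'(\mathbf{Y}-\mathbf{W}\boldsymbol{\delta})\|^4$ to the variables listed in Assumption \ref{supp:assn:iid_plus}~(ii) via the triangle inequality, and handling the second moment and the boundary values of $\alpha$). No gaps.
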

\begin{proof}
The statement in (i) follows from the definition of $\boldsymbol{\xi}_1$. 
The statement in (ii) follows from Lemma \ref{supp:lemma:Karamata's2}.
The statement in (iii) follows from Assumption \ref{supp:assn:iid_plus} (i) and the statement in (ii).
\end{proof}

\begin{lemma}\label{supp:lemma:bias_bound1}
(i) $E[\boldsymbol{\xi}_2]=O(h_N^{L+1}\pi_{h\setminus 0})$.
(ii) $E[\|\boldsymbol{\xi}_2\|^2]=O\left(\pi_{h\setminus 0}\right)$ and $
E[\|\boldsymbol{\xi}_2\|^4]=O\left(\pi_{h\setminus 0}\right)$. 
(iii) $(E_N-E)[\boldsymbol{\xi}_2]=O(\sqrt{\frac{\pi_{h\setminus 0}}{N}})$.
\end{lemma}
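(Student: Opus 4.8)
The plan is to mirror, essentially line for line, the proof of Lemma~\ref{lemma:bias_bound1}, replacing the adjoint operator $\mathbf{X}^\ast$ by $(\mathbf{X}'\mathbf{X})^\ast\mathbf{X}'$ and invoking the $T>p$ versions of the auxiliary results (Lemma~\ref{supp:lemma:Karamata's1}, Assumptions~\ref{supp:assn:more_primitiv_higher} and~\ref{supp:assn:iid_plus}) in place of their $T=p$ analogues.

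For part~(i), I would first record that $\mathbf{m}(0)=E[(\mathbf{X}'\mathbf{X})^\ast\mathbf{X}'(\mathbf{Y}-\mathbf{W}\boldsymbol{\delta})\mid D=0]=E[D\boldsymbol{\beta}(\mathbf{X})\mid D=0]=0$ by~\eqref{supp:eq:base_ID}, and that by Assumption~\ref{supp:assn:more_primitiv_higher}(i) there is a constant $C<\infty$ with $\big\|\mathbf{m}(u)-\sum_{l=1}^{L}\mathbf{m}^{(l)}(0)u^{l}/l!\big\|\le C|u|^{L+1}$ for all $u$ in a neighborhood of $0$; since $D=\det(\mathbf{X}'\mathbf{X})\ge 0$, for $h_N$ small this bound applies on $[0,h_N]$. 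Conditioning on $D$ gives $E[\boldsymbol{\xi}_2]=E\big[\big(\mathbf{m}(D)-\sum_{l=1}^{L}\mathbf{m}^{(l)}(0)D^{l}/l!\big)\mathbf{D}_{1:L}'\boldsymbol{\Psi}_{1:L}^{-1}\big]$, because $\mathbf{D}_{1:L}'\boldsymbol{\Psi}_{1:L}^{-1}$ is supported on $\{D\le h_N\}$ and there $\boldsymbol{\gamma}\mathbf{D}_{1:L}=\sum_{l=1}^{L}\mathbf{m}^{(l)}(0)D^{l}/l!$. Hence $\|E[\boldsymbol{\xi}_2]\|\le C\,E\big[D^{L+1}\|\mathbf{D}_{1:L}'\boldsymbol{\Psi}_{1:L}^{-1}\|\big]=O(h_N^{L+1}\pi_{h\setminus 0})$ by Lemma~\ref{supp:lemma:Karamata's1}(ii) with $l_1=L+1$, $l_2=1$.

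For part~(ii), I would use the crude bound $\|\boldsymbol{\xi}_2\|\le\big(\|(\mathbf{X}'\mathbf{X})^\ast\mathbf{X}'(\mathbf{Y}-\mathbf{W}\boldsymbol{\delta})\|+\|\boldsymbol{\gamma}\|\,\|\mathbf{D}_{1:L}\|\big)\,\|\mathbf{D}_{1:L}'\boldsymbol{\Psi}_{1:L}^{-1}\|$, where $\|\boldsymbol{\gamma}\|$ is a finite constant (Assumption~\ref{supp:assn:more_primitiv_higher}(i)) and, for $h_N\le 1$, both $\|\mathbf{D}_{1:L}\|$ and $\|\mathbf{D}_{1:L}'\boldsymbol{\Psi}_{1:L}^{-1}\|$ are bounded by $\sqrt{L}\,\boldsymbol{1}\{0<D\le h_N\}$. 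Expanding $\mathbf{Y}-\mathbf{W}\boldsymbol{\delta}$ with the deterministic vector $\boldsymbol{\delta}$ and using the $c_r$-inequality, Assumption~\ref{supp:assn:iid_plus}(ii)--(iii) gives that $Z:=\|(\mathbf{X}'\mathbf{X})^\ast\mathbf{X}'(\mathbf{Y}-\mathbf{W}\boldsymbol{\delta})\|^4$ has finite expectation and bounded conditional expectation near $D=0$; therefore, for $l\in\{2,4\}$, $E[\|\boldsymbol{\xi}_2\|^l]\le C\,E\big[(1+\|Z\|)\,\|\mathbf{D}_{1:L}'\boldsymbol{\Psi}_{1:L}^{-1}\|^{l}\big]=O(\pi_{h\setminus 0})$ by Lemma~\ref{supp:lemma:Karamata's1}(iii). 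Part~(iii) then follows directly: by Assumption~\ref{supp:assn:iid_plus}(i) and part~(ii), $E\big[\|(E_N-E)[\boldsymbol{\xi}_2]\|^2\big]=N^{-1}\mathrm{Var}(\boldsymbol{\xi}_2)\le N^{-1}E[\|\boldsymbol{\xi}_2\|^2]=O(\pi_{h\setminus 0}/N)$, so Markov's inequality yields $(E_N-E)[\boldsymbol{\xi}_2]=O_p\big(\sqrt{\pi_{h\setminus 0}/N}\big)$.

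The only step requiring a little care --- and the place I would expect a referee to look --- is the transfer in part~(ii) of the moment hypotheses on $(\mathbf{X}'\mathbf{X})^\ast\mathbf{X}'\mathbf{Y}$ and $(\mathbf{X}'\mathbf{X})^\ast\mathbf{X}'\mathbf{W}$ (and of the $\mathbf{W}$-moments in Assumption~\ref{supp:assn:iid_plus}(iii)) to $(\mathbf{X}'\mathbf{X})^\ast\mathbf{X}'(\mathbf{Y}-\mathbf{W}\boldsymbol{\delta})$, both unconditionally and conditionally on $D$ in a neighborhood of zero; since $\boldsymbol{\delta}$ is a fixed finite vector this is a routine triangle/$c_r$-inequality computation. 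Everything else is a verbatim transcription of the $T=p$ argument, so no genuine obstacle arises.
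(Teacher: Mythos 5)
Your proof is correct and follows essentially the same route as the paper's: the Taylor-remainder bound from Assumption \ref{supp:assn:more_primitiv_higher}(i) combined with $\mathbf{m}(0)=0$ and Lemma \ref{supp:lemma:Karamata's1}(ii) for part (i), Lemma \ref{supp:lemma:Karamata's1}(iii) with the moment conditions for part (ii), and the i.i.d. variance bound plus Markov for part (iii). The only difference is that you spell out the transfer of the fourth-moment hypotheses to $(\mathbf{X}'\mathbf{X})^\ast\mathbf{X}'(\mathbf{Y}-\mathbf{W}\boldsymbol{\delta})$ via the triangle/$c_r$ inequality, a routine step the paper leaves implicit.
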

\begin{proof}
By Assumption \ref{supp:assn:more_primitiv_higher} (i) and $\mathbf{m}(0)=0$, there is a constant ${C}$ such that $\left\|\mathbf{m}(u)-\sum_{l=1}^{L}\mathbf{m}^{(l)}(0)\frac{u^{l}}{l!}\right\|\leq {C}|u|^{L+1}$ for every $u\in[-h_N,h_N]$ when $h_N$ is sufficiently small. 
Therefore, for every $u\in[-h_N,h_N]$, we have
\begin{align*}
\left\|E\left[\boldsymbol{\xi}_2\right]\right\|
=&
\left\|E\left[\left(\mathbf{m}(D)-\sum_{l=1}^{L}\mathbf{m}^{(l)}(0)\frac{D^{l}}{l!}\right)\mathbf{D}_{1:L}'\boldsymbol{\Psi}_{1:L}^{-1}\right]\right\|
\\
\leq&
E\left[\left\|\mathbf{m}(D)-\sum_{l=1}^{L}\mathbf{m}^{(l)}(0)\frac{D^{l}}{l!}\right\|\left\|\mathbf{D}_{1:L}'\boldsymbol{\Psi}_{1:L}^{-1}\right\|\right]
\\
\leq&
{C}E\left[D^{L+1}\left\|\mathbf{D}_{1:L}'\boldsymbol{\Psi}_{1:L}^{-1}\right\|\right]
\\
=&
O(\pi_{h\setminus 0}h_N^{L+1}),
\end{align*}
where the last equality follows from Lemma \ref{supp:lemma:Karamata's1}.
This shows the statement in (i).
The statement in (ii) follows from Lemma \ref{supp:lemma:Karamata's1}. 
The statement in (iii) follows from Assumption \ref{supp:assn:iid_plus} (i) and the statement in (ii).
\end{proof}

\begin{lemma}\label{supp:lemma:bias_bound2}
For every $\epsilon\in(0,1)$, the following statements hold: 
(i) $E\left[\boldsymbol{\xi}_3\right]=0$.
(ii) $E[\|\boldsymbol{\xi}_3\|^2]=O\left(1\right)$ and $E[\|\boldsymbol{\xi}_3\|^4]=O\left(1\right)$.
(iii) $(E_N-E)\left[\boldsymbol{\xi}_3\right]=O_p\left(\sqrt{\frac{1}{N}}\right)$.
\end{lemma}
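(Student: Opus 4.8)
The plan is to verify the three assertions separately, each being a short consequence of the identifying moment equation \eqref{supp:eq:base_ID1}, the operator-norm bound $\|\mathbf{M}_{\mathbf{X}}\|\le 1$ (valid since $\mathbf{M}_{\mathbf{X}}$ is an orthogonal projection), and the moment/i.i.d.\ conditions in Assumption \ref{supp:assn:iid_plus}. The argument mirrors that of its $T=p$ counterpart, Lemma \ref{lemma:bias_bound2}, but is considerably shorter: in the present $T>p$ construction $\boldsymbol{\xi}_3=\boldsymbol{1}\{D>h_N\}\mathbf{W}\mathbf{M}_{\mathbf{X}}(\mathbf{Y}-\mathbf{W}\boldsymbol{\delta})$ carries no local-polynomial weighting factor $\mathbf{D}_{0:L}'\boldsymbol{\Psi}_{0:L}^{-1}$, so the target orders are $O(1)$ rather than $O(\pi_h)$ and there is no need to invoke Karamata's theorem.

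For part (i), I would simply read off $E[\boldsymbol{\xi}_3]=0$ from \eqref{supp:eq:base_ID1}, which is exactly the statement that $E[\boldsymbol{1}\{D>h_N\}\mathbf{W}\mathbf{M}_{\mathbf{X}}(\mathbf{Y}-\mathbf{W}\boldsymbol{\delta})]=0$. For part (ii), using $\|\mathbf{M}_{\mathbf{X}}\|\le 1$ and dropping the indicator,
$$
\|\boldsymbol{\xi}_3\|\le \|\mathbf{W}\|\bigl(\|\mathbf{Y}\|+\|\boldsymbol{\delta}\|\,\|\mathbf{W}\|\bigr),
$$
so, raising to the fourth power and using $(a+b)^4\le 8(a^4+b^4)$, $\|\boldsymbol{\xi}_3\|^4\le C\bigl(\|\mathbf{W}\|^4\|\mathbf{Y}\|^4+\|\mathbf{W}\|^8\bigr)$ for a constant $C$ depending only on $\|\boldsymbol{\delta}\|$ (a fixed parameter, hence finite). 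Taking expectations and invoking the moment conditions in Assumption \ref{supp:assn:iid_plus}(iii) yields $E[\|\boldsymbol{\xi}_3\|^4]=O(1)$; then the Lyapunov / Cauchy--Schwarz inequality gives $E[\|\boldsymbol{\xi}_3\|^2]\le E[\|\boldsymbol{\xi}_3\|^4]^{1/2}=O(1)$.

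For part (iii), i.i.d.\ sampling (Assumption \ref{supp:assn:iid_plus}(i)) gives $E\bigl[\|(E_N-E)[\boldsymbol{\xi}_3]\|^2\bigr]=N^{-1}\bigl(E[\|\boldsymbol{\xi}_3\|^2]-\|E[\boldsymbol{\xi}_3]\|^2\bigr)\le N^{-1}E[\|\boldsymbol{\xi}_3\|^2]=O(1/N)$ by part (ii), whence $(E_N-E)[\boldsymbol{\xi}_3]=O_p(\sqrt{1/N})$ by Markov's inequality. I do not anticipate a genuine obstacle in any of the three steps; the only place that warrants a moment's care is part (i), where one should note that the trimming indicator $\boldsymbol{1}\{D>h_N\}$ is a function of $\mathbf{X}$ alone, so the pointwise-in-$\mathbf{X}$ moment restriction behind \eqref{supp:eq:base_ID1} survives averaging over the region $\{D>h_N\}$ for every $h_N$ --- which is precisely what \eqref{supp:eq:base_ID1} records.
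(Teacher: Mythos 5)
Your proposal is correct and follows essentially the same route as the paper's proof: part (i) is read off from \eqref{supp:eq:base_ID1}, part (ii) uses $\|\mathbf{M}_{\mathbf{X}}\|\le 1$ together with the fourth-moment conditions in Assumption \ref{supp:assn:iid_plus}(iii), and part (iii) follows from i.i.d.\ sampling and Chebyshev/Markov. The only cosmetic difference is that the paper bounds $E[\|\mathbf{W}\|^l\|\mathbf{Y}-\mathbf{W}\boldsymbol{\delta}\|^l]$ directly rather than first expanding $\|\mathbf{Y}-\mathbf{W}\boldsymbol{\delta}\|\le\|\mathbf{Y}\|+\|\boldsymbol{\delta}\|\|\mathbf{W}\|$, which is immaterial.
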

\begin{proof}
The first statement follows from \eqref{supp:eq:base_ID1}.
Since $\|\boldsymbol{\xi}_3\|\leq\|\mathbf{W}\|\|\mathbf{M}_{\mathbf{X}}\|\|\mathbf{Y}-\mathbf{W}\boldsymbol{\delta}\|$ and $\|\mathbf{M}_{\mathbf{X}}\|\leq 1$,
we have 
$$
E[\|\boldsymbol{\xi}_3\|^l]^{1/l}\leq E[\|\mathbf{W}\|^l\|\mathbf{Y}-\mathbf{W}\boldsymbol{\delta}\|^l]^{1/l}
$$
for $l=2,4$, which implies the statement in (ii).
The statement in (iii) follows from Assumption \ref{supp:assn:iid_plus} (i) and the statement in (i). 
\end{proof}

\begin{lemma}\label{supp:lemma:zeta_moments}
For every $\epsilon\in(0,1)$, the following statements hold: 
(i) $E[\boldsymbol{\zeta}]=O(h_N^{L}\pi_h)$.
(ii) $E[\|\boldsymbol{\zeta}\|^2]=O(1\{\alpha\leq 1\}\frac{\pi_h^{2-\epsilon}}{h_N^2\pi_{h\setminus 0}}+1\{1<\alpha\leq 2\}\frac{\pi_{h\setminus 0}^{1-\epsilon}}{h_N^2}+1\{\alpha>2\})$ and 
$
E[\|\boldsymbol{\zeta}\|^4]=O(1\{\alpha\leq 1\}\frac{\pi_h^{4-\epsilon}}{h_N^4\pi_{h\setminus 0}^3}+1\{1<\alpha\leq 4\}\frac{\pi_{h\setminus 0}^{1-\epsilon}}{h_N^4}+1\{\alpha>4\})$.
(iii) 
$(E_N-E)[\boldsymbol{\zeta}\boldsymbol{\zeta}']=O_p\left(1\{\alpha\leq 1\}\sqrt{\frac{\pi_h^{4-\epsilon}}{Nh_N^4\pi_{h\setminus 0}^3}}+1\{1<\alpha\leq 4\}\sqrt{\frac{\pi_{h\setminus 0}^{1-\epsilon}}{Nh_N^4}}+1\{\alpha>4\}\sqrt{\frac{1}{N}}\right)$.
\end{lemma}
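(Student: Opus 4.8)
The plan is to treat $\boldsymbol{\zeta}=\boldsymbol{\xi}_1+\frac{1}{\pi_{h\setminus 0}}\boldsymbol{\xi}_2\boldsymbol{\Pi}^{-1}\boldsymbol{\Psi}_{1:L}^{-1}\mathbf{h}+\mathbf{Q}\mathbf{{V}}^{-1}\boldsymbol{\xi}_3$ as a sum of three terms, each a deterministic coefficient (whose norm is already controlled) times one of the summands $\boldsymbol{\xi}_1,\boldsymbol{\xi}_2,\boldsymbol{\xi}_3$ (whose first, second, and fourth moments are already controlled). The deterministic ingredients I will invoke are $\|\boldsymbol{\Psi}_{1:L}^{-1}\mathbf{h}\|=O(\pi_h/h_N)$ from Lemma \ref{supp:lemma:influe_eror4}, $\|\boldsymbol{\Pi}^{-1}\|=O(1)$ from Lemma \ref{supp:lemma:Omega_property2}, $\|\mathbf{V}^{-1}\|=O(1)$ from Lemma \ref{supp:lemma:M_inve}, and $\|\mathbf{Q}\|=O(1\{\alpha\leq 1\}\pi_h^{1-\epsilon}/h_N+1\{\alpha>1\})$ from Lemma \ref{supp:lemma:Sigma_rate}; the stochastic ingredients are the moment bounds of Lemmas \ref{supp:lemma:fourth1}, \ref{supp:lemma:bias_bound1}, and \ref{supp:lemma:bias_bound2} for $\boldsymbol{\xi}_1$, $\boldsymbol{\xi}_2$, and $\boldsymbol{\xi}_3$ respectively. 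Throughout I will also use, from the remarks preceding Lemma \ref{supp:lemma:influe_eror4}, that $\pi_{h\setminus 0}=\pi_h$ whenever $\alpha>1$ and that $\pi_{h\setminus 0}/h_N^{\alpha}$ is bounded away from zero and infinity, as well as the trivial $\pi_{h\setminus 0}\leq\pi_h\leq 1$.

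For part (i), $E[\boldsymbol{\xi}_1]=0$ by Lemma \ref{supp:lemma:fourth1}(i) and $E[\boldsymbol{\xi}_3]=0$ by Lemma \ref{supp:lemma:bias_bound2}(i), so only the middle term contributes in expectation and $E[\boldsymbol{\zeta}]=\pi_{h\setminus 0}^{-1}E[\boldsymbol{\xi}_2]\boldsymbol{\Pi}^{-1}\boldsymbol{\Psi}_{1:L}^{-1}\mathbf{h}$. Substituting $\|E[\boldsymbol{\xi}_2]\|=O(h_N^{L+1}\pi_{h\setminus 0})$ from Lemma \ref{supp:lemma:bias_bound1}(i) together with the coefficient bounds above gives $E[\boldsymbol{\zeta}]=O\big(h_N^{L+1}\pi_{h\setminus 0}\cdot\pi_{h\setminus 0}^{-1}\cdot\pi_h/h_N\big)=O(h_N^{L}\pi_h)$.

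For part (ii), I will apply Minkowski's inequality to write, for $\ell=2$ and $\ell=4$,
$$
E[\|\boldsymbol{\zeta}\|^{\ell}]^{1/\ell}\leq E[\|\boldsymbol{\xi}_1\|^{\ell}]^{1/\ell}+\frac{1}{\pi_{h\setminus 0}}E[\|\boldsymbol{\xi}_2\|^{\ell}]^{1/\ell}\|\boldsymbol{\Pi}^{-1}\|\|\boldsymbol{\Psi}_{1:L}^{-1}\mathbf{h}\|+\|\mathbf{Q}\|\|\mathbf{V}^{-1}\|E[\|\boldsymbol{\xi}_3\|^{\ell}]^{1/\ell},
$$
substitute the three groups of bounds, raise to the $\ell$-th power, and collect regime by regime. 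In the range $\alpha\leq 1$ the $\boldsymbol{\xi}_2$ term contributes exactly $\pi_h^2/(h_N^2\pi_{h\setminus 0})$ (resp.\ $\pi_h^4/(h_N^4\pi_{h\setminus 0}^3)$ for $\ell=4$), which is bounded by the claimed rate since $\pi_h\leq 1$, and using $\pi_{h\setminus 0}\leq\pi_h\leq 1$ one checks that the $\boldsymbol{\xi}_1$ and $\mathbf{Q}\boldsymbol{\xi}_3$ contributions are also absorbed once the $\epsilon$ slack is inserted. For $\alpha>1$ one replaces $\pi_{h\setminus 0}$ by $\pi_h$, so $\|\mathbf{Q}\|=O(1)$ and the $\boldsymbol{\xi}_2$ term is $O(\pi_h/h_N^2)$; the $\boldsymbol{\xi}_1$ term, of order $\pi_{h\setminus 0}^{1-\epsilon}/h_N^2$ (resp.\ $\pi_{h\setminus 0}^{1-\epsilon}/h_N^4$), then governs the intermediate range, and every term is $O(1)$ once $\alpha>2$ (resp.\ $\alpha>4$). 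Part (iii) follows from part (ii): by Assumption \ref{supp:assn:iid_plus}(i), $E[\|(E_N-E)[\boldsymbol{\zeta}\boldsymbol{\zeta}']\|^2]\leq N^{-1}E[\|\boldsymbol{\zeta}\|^4]$, and substituting the fourth-moment bound and applying Markov's inequality delivers the stated $O_p$ rate.

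The main obstacle is the bookkeeping inside part (ii): one must track the several distinct rates produced by Lemmas \ref{supp:lemma:influe_eror4}--\ref{supp:lemma:bias_bound2} and verify, separately in each of the regimes $\alpha\leq 1$, $1<\alpha\leq 2$, $2<\alpha\leq 4$, and $\alpha>4$, that every contributing term is absorbed by the asserted bound. The delicate points are keeping the indicator-function case-splits in $\|\mathbf{Q}\|$ and in the $\boldsymbol{\xi}_1$ moments aligned with those in the conclusion, checking that a $\pi_h^{1-\epsilon}/h_N$-type coefficient multiplied against an $O(1)$ moment cannot exceed $\pi_h^{2-\epsilon}/(h_N^2\pi_{h\setminus 0})$ (which holds precisely because $\pi_{h\setminus 0}\leq\pi_h\leq 1$), and consolidating the independent $\epsilon$'s appearing in the component lemmas into a single $\epsilon$. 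None of these steps is deep, but the regime combinatorics is where care is required.
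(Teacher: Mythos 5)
Your proposal is correct and follows essentially the same route as the paper: the same three-term decomposition of $\boldsymbol{\zeta}$, the same Minkowski bound combined with Lemmas \ref{supp:lemma:influe_eror4}, \ref{supp:lemma:Omega_property2}, \ref{supp:lemma:M_inve}, \ref{supp:lemma:Sigma_rate}, \ref{supp:lemma:fourth1}, \ref{supp:lemma:bias_bound1}, and \ref{supp:lemma:bias_bound2} for part (ii), and the same i.i.d.\ variance bound plus Markov for part (iii). Your regime-by-regime bookkeeping (using $\pi_{h\setminus 0}\leq\pi_h\leq 1$ and $\pi_{h\setminus 0}=\pi_h$ for $\alpha>1$) is exactly the calculation the paper leaves implicit.
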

\begin{proof}
By  Lemmas \ref{supp:lemma:influe_eror4}, \ref{supp:lemma:Omega_property2}, and \ref{supp:lemma:bias_bound1}, we have
$$
E[\boldsymbol{\zeta}]
=
\frac{1}{\pi_{h\setminus 0}}E[\boldsymbol{\xi}_2]\boldsymbol{\Pi}^{-1}\boldsymbol{\Psi}_{1:L}^{-1}\mathbf{h}
=
O(h_N^{L}\pi_h),
$$
which proves the statement in (i). 
Note that 
$$
E[\|\boldsymbol{\zeta}\|^l]^{1/l}\leq E[\|\boldsymbol{\xi}_1\|^l]^{1/l}+\frac{1}{\pi_{h\setminus 0}}E[\|\boldsymbol{\xi}_2\|^l]^{1/l}\|\boldsymbol{\Pi}^{-1}\|\|\boldsymbol{\Psi}_{1:L}^{-1}\mathbf{h}\|+\|\mathbf{Q}\|\|\mathbf{{V}}^{-1}\|E[\|\boldsymbol{\xi}_3\|^l]^{1/l}
$$
for $l=2,4$. 
By Lemmas \ref{supp:lemma:Sigma_rate}, \ref{supp:lemma:fourth1}, \ref{supp:lemma:bias_bound1}, and \ref{supp:lemma:bias_bound2}, we have 
$$
E[\|\boldsymbol{\zeta}\|^l]=O\left(1\{\alpha\leq 1\}\frac{\pi_h^{l-\epsilon}}{h_N^l\pi_{h\setminus 0}^{l-1}}+1\{1<\alpha\leq l\}\frac{\pi_{h\setminus 0}^{1-\epsilon}}{h_N^l}+1\{\alpha>l\}\right),
$$
which leads to the statement in (ii). 
The statement in (iii) follows from Assumption \ref{supp:assn:iid_plus} (i) and the statement in (ii). 
\end{proof}

\begin{lemma}\label{supp:lemma:lowerconrate}
The minimum eigenvalue of $\left(1\{\alpha\leq 2\}\frac{h_N^2\pi_{h\setminus 0}}{\pi_h^2}+1\{\alpha>2\}\right)Var(\boldsymbol{\zeta})$ is bounded away from zero.
\end{lemma}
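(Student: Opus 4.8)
The plan is to mirror the proof of Lemma~\ref{lemma:lowerconrate}, splitting on the tail index $\alpha$. Fix a $p$-dimensional unit vector $\mathbf{c}$ and write $\boldsymbol{\zeta}$ as the sum of a \emph{mover block} $\boldsymbol{\xi}_1+\mathbf{Q}\mathbf{V}^{-1}\boldsymbol{\xi}_3$ and a \emph{slow-mover block} $\tfrac{1}{\pi_{h\setminus 0}}\boldsymbol{\xi}_2\boldsymbol{\Pi}^{-1}\boldsymbol{\Psi}_{1:L}^{-1}\mathbf{h}$. Because $D=\mathrm{det}(\mathbf{X}'\mathbf{X})\geq 0$ and the entries of $\mathbf{D}_{1:L}$ vanish at $D=0$, the mover block is constant on $\{D\leq h_N\}$, while the slow-mover block is identically zero on $\{D>h_N\}\cup\{D=0\}$; hence given $D$ at most one of the two blocks is non-degenerate, and the conditional variance identity yields
\begin{align*}
Var(\mathbf{c}'\boldsymbol{\zeta})
&\geq E\!\left[Var(\mathbf{c}'\boldsymbol{\zeta}\mid D)\right]\\
&= E\!\left[\boldsymbol{1}\{D>h_N\}\,Var\!\big(\mathbf{c}'(\boldsymbol{\xi}_1+\mathbf{Q}\mathbf{V}^{-1}\boldsymbol{\xi}_3)\mid D\big)\right]
+E\!\left[\boldsymbol{1}\{0<D\leq h_N\}\,Var\!\big(\tfrac{1}{\pi_{h\setminus 0}}\mathbf{c}'\boldsymbol{\xi}_2\boldsymbol{\Pi}^{-1}\boldsymbol{\Psi}_{1:L}^{-1}\mathbf{h}\mid D\big)\right].
\end{align*}
So it suffices to lower-bound whichever summand suits the regime at hand. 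Throughout I take $h_N\leq c_1$, where $c_1,C>0$ are chosen so that $\mathbf{c}'Var((\mathbf{X}'\mathbf{X})^\ast\mathbf{X}'(\mathbf{Y}-\mathbf{W}\boldsymbol{\delta})\mid D=u)\mathbf{c}\geq C$ for all $|u|\leq c_1$ and all unit $\mathbf{c}$; such $c_1,C$ exist by combining the continuity at $0$ from Assumption~\ref{supp:assn:more_primitiv_higher}(ii) with the invertibility at $D=0$ from Assumption~\ref{supp:assn:iid_plus}(iv).

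When $\alpha\leq 2$ the normalizer is $\tfrac{h_N^2\pi_{h\setminus 0}}{\pi_h^2}$ and I keep the slow-mover summand. On $\{0<D\leq h_N\}$, conditionally on $D$ the factors $\mathbf{D}_{1:L}$ and $\boldsymbol{\gamma}$ are deterministic, so $Var(\tfrac{1}{\pi_{h\setminus 0}}\mathbf{c}'\boldsymbol{\xi}_2\boldsymbol{\Pi}^{-1}\boldsymbol{\Psi}_{1:L}^{-1}\mathbf{h}\mid D)=\tfrac{1}{\pi_{h\setminus 0}^2}\,s(D)^2\,Var(\mathbf{c}'(\mathbf{X}'\mathbf{X})^\ast\mathbf{X}'(\mathbf{Y}-\mathbf{W}\boldsymbol{\delta})\mid D)\geq \tfrac{C}{\pi_{h\setminus 0}^2}\,s(D)^2$, with $s(D)=\mathbf{D}_{1:L}'\boldsymbol{\Psi}_{1:L}^{-1}\boldsymbol{\Pi}^{-1}\boldsymbol{\Psi}_{1:L}^{-1}\mathbf{h}$. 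Taking expectations and using $E[s(D)^2]=\pi_{h\setminus 0}\,(\boldsymbol{\Psi}_{1:L}^{-1}\mathbf{h})'\boldsymbol{\Pi}^{-1}\boldsymbol{\Psi}_{1:L}^{-1}\mathbf{h}$, which follows from $E[\mathbf{D}_{1:L}\mathbf{D}_{1:L}']=\pi_{h\setminus 0}\boldsymbol{\Psi}_{1:L}\boldsymbol{\Pi}\boldsymbol{\Psi}_{1:L}$, bounds the slow-mover summand below by $\tfrac{C}{\pi_{h\setminus 0}}(\boldsymbol{\Psi}_{1:L}^{-1}\mathbf{h})'\boldsymbol{\Pi}^{-1}\boldsymbol{\Psi}_{1:L}^{-1}\mathbf{h}$. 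By Lemma~\ref{supp:lemma:Omega_property2} (maximum eigenvalue of $\boldsymbol{\Pi}$ bounded, hence $\boldsymbol{\Pi}^{-1}$ has minimum eigenvalue bounded away from zero) and Lemma~\ref{supp:lemma:influe_eror4} ($\|\tfrac{h_N}{\pi_h}\boldsymbol{\Psi}_{1:L}^{-1}\mathbf{h}\|$ bounded away from zero), this is bounded below by a constant times $\tfrac{1}{\pi_{h\setminus 0}}\cdot\tfrac{\pi_h^2}{h_N^2}$; multiplying by the normalizer leaves a positive constant, uniformly in $\mathbf{c}$ and $h_N$. This estimate holds throughout $\alpha\leq 2$, whether or not $P(D=0)>0$, so no further case split is needed here.

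When $\alpha>2$, Assumption~\ref{supp:assn:GP_around0}(iv) forces $P(D=0)=0$ and $\pi_{h\setminus 0}=\pi_h$, the normalizer is $1$, and I instead use Minkowski's inequality on $L^2$: $Var(\mathbf{c}'\boldsymbol{\zeta})^{1/2}\geq Var(\mathbf{c}'(\boldsymbol{\xi}_1+\mathbf{Q}\mathbf{V}^{-1}\boldsymbol{\xi}_3))^{1/2}-Var(\tfrac{1}{\pi_{h\setminus 0}}\mathbf{c}'\boldsymbol{\xi}_2\boldsymbol{\Pi}^{-1}\boldsymbol{\Psi}_{1:L}^{-1}\mathbf{h})^{1/2}$, where the subtracted term is $O(\sqrt{\pi_h/h_N^2})=o(1)$ by Lemmas~\ref{supp:lemma:bias_bound1}, \ref{supp:lemma:Omega_property2}, and \ref{supp:lemma:influe_eror4}. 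For the leading term I would note that $\mathbf{Q}$ agrees with the matrix appearing inside the mover-only influence function $\boldsymbol{\eta}$ of Assumption~\ref{supp:assn:iid_plus}(iv) up to the term $E[\mathbf{D}_{1:L}'E[\mathbf{D}_{1:L}\mathbf{D}_{1:L}']^{-1}\mathbf{h}\,(\mathbf{X}'\mathbf{X})^\ast\mathbf{X}'\mathbf{W}]$, which is $O(\pi_h/h_N)=o(1)$ for $\alpha>1$ by Lemmas~\ref{supp:lemma:influe_eror4} and \ref{supp:lemma:Karamata's1}, and that $E[\boldsymbol{\xi}_1]=0$ and $E[\boldsymbol{\xi}_3]=0$ (Lemmas~\ref{supp:lemma:fourth1} and \ref{supp:lemma:bias_bound2}), so that $\boldsymbol{\xi}_1+\mathbf{Q}\mathbf{V}^{-1}\boldsymbol{\xi}_3$ equals $\boldsymbol{\eta}-E[\boldsymbol{\eta}]$ plus a term whose $L^2$-norm is $o(1)$, using also $\mathbf{V}^{-1}=O(1)$ (Lemma~\ref{supp:lemma:M_inve}) and $E[\|\boldsymbol{\xi}_3\|^2]=O(1)$ (Lemma~\ref{supp:lemma:bias_bound2}). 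A second application of Minkowski then gives $Var(\mathbf{c}'(\boldsymbol{\xi}_1+\mathbf{Q}\mathbf{V}^{-1}\boldsymbol{\xi}_3))^{1/2}\geq Var(\mathbf{c}'\boldsymbol{\eta})^{1/2}-o(1)$, which is bounded away from zero by Assumption~\ref{supp:assn:iid_plus}(iv); hence so is $Var(\mathbf{c}'\boldsymbol{\zeta})$.

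The main obstacle I anticipate is the regime $\alpha>2$. For $\alpha\leq 2$ the bound is delivered directly by the slow-mover block together with Lemmas~\ref{supp:lemma:influe_eror4} and \ref{supp:lemma:Omega_property2}, in a way insensitive to the fine behaviour of $\phi$ near zero. When $\alpha>2$, however, the slow-mover block is of strictly smaller order than the normalizer tolerates, so one must lower-bound the mover block; since the available moment bounds are unconditional (Assumption~\ref{supp:assn:iid_plus}(iii)) rather than conditional on $D$, the natural route is to recognize that, up to the asymptotically negligible local-polynomial debiasing correction, the mover block is the centered influence function $\boldsymbol{\eta}$ of the conventional mover-only estimator---which is precisely why Assumption~\ref{supp:assn:iid_plus}(iv) imposes invertibility of $Var(\boldsymbol{\eta})$. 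A secondary point to verify is that the split at $\alpha=2$ is exhaustive: the slow-mover argument covers the closed range $\alpha\leq 2$, including the boundary, so there is no gap.
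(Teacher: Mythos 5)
Your proof is correct and follows essentially the same route as the paper's: the same split at $\alpha=2$, the same conditional-variance lower bound $Var(\mathbf{c}'\boldsymbol{\xi}_2\boldsymbol{\Pi}^{-1}\boldsymbol{\Psi}_{1:L}^{-1}\mathbf{h})\geq\pi_{h\setminus 0}C\left(\boldsymbol{\Psi}_{1:L}^{-1}\mathbf{h}\right)'\boldsymbol{\Pi}^{-1}\boldsymbol{\Psi}_{1:L}^{-1}\mathbf{h}$ combined with Lemmas \ref{supp:lemma:influe_eror4} and \ref{supp:lemma:Omega_property2} for $\alpha\leq 2$, and the same comparison of the mover block with $\boldsymbol{\eta}$ from Assumption \ref{supp:assn:iid_plus}(iv) for $\alpha>2$. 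Your only departure is cosmetic: where the paper asserts independence of the two blocks to obtain a max-of-variances bound, you control the cross term via the disjoint conditional supports and a Minkowski inequality in $L^2$, which is if anything slightly more careful.
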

\begin{proof}
Let $\mathbf{c}$ be any $p$-dimensional column vector with $\|\mathbf{c}\|=1$. 
In $\boldsymbol{\zeta}$, the terms $\left(\boldsymbol{\xi}_1+\mathbf{Q}\mathbf{{V}}^{-1}\boldsymbol{\xi}_3\right)$ and $\frac{1}{\pi_{h\setminus 0}}\boldsymbol{\xi}_2\boldsymbol{\Pi}^{-1}\boldsymbol{\Psi}_{1:L}^{-1}\mathbf{h}$ are independent. 
Therefore, 
$$
Var(\mathbf{c}'\boldsymbol{\zeta})\geq\max\left\{Var(\mathbf{c}'\left(\boldsymbol{\xi}_1+\mathbf{Q}\mathbf{{V}}^{-1}\boldsymbol{\xi}_3\right)),\frac{1}{\pi_{h\setminus 0}^2}Var(\mathbf{c}'\boldsymbol{\xi}_2\boldsymbol{\Pi}^{-1}\boldsymbol{\Psi}_{1:L}^{-1}\mathbf{h})\right\}.
$$

First, we are going to show that $\frac{h_N^2}{\pi_h^2\pi_{h\setminus 0}}Var(\mathbf{c}'\boldsymbol{\xi}_2\boldsymbol{\Pi}^{-1}\boldsymbol{\Psi}_{1:L}^{-1}\mathbf{h})$ is bounded away from zero when $\alpha\leq 2$. 
By Assumptions \ref{supp:assn:more_primitiv_higher} (ii) and \ref{supp:assn:iid_plus} (iv),
there is a positive constant $C$ such that $\mathbf{c}'Var\left((\mathbf{X}'\mathbf{X})^\ast\mathbf{X}'(\mathbf{Y}-\mathbf{W}\boldsymbol{\delta})\mid D=u\right)\mathbf{c}\geq C$ for every $u\in[-h_N,h_N]$ for sufficiently small $h_N$.
By the conditional variance identity, 
$$
Var(\mathbf{c}'\boldsymbol{\xi}_2\boldsymbol{\Pi}^{-1}\boldsymbol{\Psi}_{1:L}^{-1}\mathbf{h})\geq E[Var((\boldsymbol{\xi}_2\boldsymbol{\Pi}^{-1}\boldsymbol{\Psi}_{1:L}^{-1}\mathbf{h})'\mathbf{c}\mid D)].
$$
Since 
\begin{align*}
Var((\boldsymbol{\xi}_2\boldsymbol{\Pi}^{-1}\boldsymbol{\Psi}_{1:L}^{-1}\mathbf{h})'\mathbf{c}\mid D)
=&
\left(\boldsymbol{\Psi}_{1:L}^{-1}\mathbf{h}\right)'\boldsymbol{\Pi}^{-1}(\mathbf{c}'Var((\mathbf{X}'\mathbf{X})^\ast\mathbf{X}'(\mathbf{Y}-\mathbf{W}\boldsymbol{\delta})\mid D)\mathbf{c})\boldsymbol{\Psi}_{1:L}^{-1}\mathbf{D}_{1:L}\mathbf{D}_{1:L}'\boldsymbol{\Psi}_{1:L}^{-1}\boldsymbol{\Pi}^{-1}\boldsymbol{\Psi}_{1:L}^{-1}\mathbf{h}
\\
\geq&
C\left(\boldsymbol{\Psi}_{1:L}^{-1}\mathbf{h}\right)'\boldsymbol{\Pi}^{-1}\boldsymbol{\Psi}_{1:L}^{-1}\mathbf{D}_{1:L}\mathbf{D}_{1:L}'\boldsymbol{\Psi}_{1:L}^{-1}\boldsymbol{\Pi}^{-1}\boldsymbol{\Psi}_{1:L}^{-1}\mathbf{h},
\end{align*}
we have 
$$
Var(\mathbf{c}'\boldsymbol{\xi}_2\boldsymbol{\Pi}^{-1}\boldsymbol{\Psi}_{1:L}^{-1}\mathbf{h})\geq \pi_{h\setminus 0}C\left(\boldsymbol{\Psi}_{1:L}^{-1}\mathbf{h}\right)'\boldsymbol{\Pi}^{-1}\boldsymbol{\Psi}_{1:L}^{-1}\mathbf{h}.
$$
By Lemmas \ref{supp:lemma:influe_eror4} and \ref{supp:lemma:Omega_property2}, $\frac{h_N^2}{\pi_h^2\pi_{h\setminus 0}}Var(\mathbf{c}'\boldsymbol{\xi}_2\boldsymbol{\Pi}^{-1}\boldsymbol{\Psi}_{1:L}^{-1}\mathbf{h})$ is bounded away from zero. 

Second, we are going to show that $Var(\mathbf{c}'(\boldsymbol{\xi}_1+\mathbf{Q}\mathbf{{V}}^{-1}\boldsymbol{\xi}_3))$ is bounded away from zero when $\alpha>2$. 
By Assumption \ref{supp:assn:iid_plus} (iv), $Var(\mathbf{c}'\boldsymbol{\eta})$ is bounded away from zero. 
Note that 
\begin{align*}
E\left[\left\|\mathbf{c}'\left(\boldsymbol{\eta}-(\boldsymbol{\xi}_1+\mathbf{Q}\mathbf{{V}}^{-1}\boldsymbol{\xi}_3)\right)\right\|^2\right]^{1/2}
=&
E\left[\left\|\frac{1}{\pi_{h\setminus 0}}\boldsymbol{\xi}_3'\mathbf{{V}}^{-1}E\left[\mathbf{W}'\mathbf{X}(\mathbf{X}'\mathbf{X})^\ast\mathbf{c}\mathbf{D}_{1:L}'\boldsymbol{\Psi}_{1:L}^{-1}\right]\boldsymbol{\Pi}^{-1}\boldsymbol{\Psi}_{1:L}^{-1}\mathbf{h}\right\|^2\right]^{1/2}
\\
\leq&
\frac{1}{\pi_{h\setminus 0}}E\left[\left\|\boldsymbol{\xi}_3\right\|^2\right]^{1/2}
\|\mathbf{{V}}^{-1}\|E\left[\|\mathbf{W}'\mathbf{X}(\mathbf{X}'\mathbf{X})^\ast\|\|\mathbf{D}_{1:L}'\boldsymbol{\Psi}_{1:L}^{-1}\|\right]\|\boldsymbol{\Pi}^{-1}\|\|\boldsymbol{\Psi}_{1:L}^{-1}\mathbf{h}\|
\\
=&
O\left(\frac{\pi_h}{h_N}\right)
\\
=&
o(1),
\end{align*}
where the last equality uses $\alpha>1$.
Therefore, $Var(\mathbf{c}'\left(\boldsymbol{\xi}_1+\mathbf{Q}\mathbf{{V}}^{-1}\boldsymbol{\xi}_3\right))$ is bounded away from zero. 
\end{proof}

\begin{lemma}\label{supp:lemma:xi_estimates}
$E_N[\|\boldsymbol{\hat\zeta}-\boldsymbol{\zeta}\|^2]^{1/2}=o(1)+O_p\left(\sqrt{\frac{\pi_h^2}{Nh_N^2\pi_{h\setminus 0}^2}}\right)$.
\end{lemma}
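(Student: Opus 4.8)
The plan is to mirror the proof of Lemma~\ref{lemma:xi_estimates} from the $T=p$ case, exploiting the fact that here $\boldsymbol{\zeta}=\boldsymbol{\xi}_1+\frac{1}{\pi_{h\setminus 0}}\boldsymbol{\xi}_2\boldsymbol{\Pi}^{-1}\boldsymbol{\Psi}_{1:L}^{-1}\mathbf{h}+\mathbf{Q}\mathbf{V}^{-1}\boldsymbol{\xi}_3$ has a simpler third block (no $\boldsymbol{\Omega}$, no $\mathbf{e}_1$, and $\boldsymbol{\xi}_3=\boldsymbol{1}\{D>h_N\}\mathbf{W}\mathbf{M}_{\mathbf{X}}(\mathbf{Y}-\mathbf{W}\boldsymbol{\delta})$ is mean-zero with bounded fourth moment). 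First I would pin down the rate of $\boldsymbol{\hat\delta}$: since $\boldsymbol{\hat\delta}-\boldsymbol{\delta}=\mathbf{\hat V}^{-1}E_N[\boldsymbol{\xi}_3]$ by the definition of $\boldsymbol{\hat\delta}$ and $E[\boldsymbol{\xi}_3]=0$ by \eqref{supp:eq:base_ID1}, Lemmas~\ref{supp:lemma:M_inve} and~\ref{supp:lemma:bias_bound2} give $\|\boldsymbol{\hat\delta}-\boldsymbol{\delta}\|=O_p(N^{-1/2})$. Next I would decompose $(\boldsymbol{\hat\gamma}-\boldsymbol{\gamma})\boldsymbol{\Psi}_{1:L}$ into a term driven by $E_N[\boldsymbol{\xi}_2]$, a term driven by $(E_N-E)[\boldsymbol{\xi}_2]$, and a term driven by $\boldsymbol{\hat\delta}-\boldsymbol{\delta}$, each post-multiplied by $\tfrac1{\pi_{h\setminus 0}}\boldsymbol{\hat\Pi}^{-1}$, and invoke Lemmas~\ref{supp:lemma:Karamata's1}, \ref{supp:lemma:Omega_property2}, \ref{supp:lemma:bias_bound1} and Assumption~\ref{supp:assn:iid_plus} to obtain $\|(\boldsymbol{\hat\gamma}-\boldsymbol{\gamma})\boldsymbol{\Psi}_{1:L}\|=O(h_N^{L+1})+O_p\big(\sqrt{1/(N\pi_{h\setminus 0})}\big)$.

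With these two rates in hand, the second step is to bound $E_N[\|\boldsymbol{\hat\xi}_j-\boldsymbol{\xi}_j\|^2]^{1/2}$ for $j=1,2,3$. Each difference splits into a piece proportional to $\boldsymbol{\hat\delta}-\boldsymbol{\delta}$ (weighted respectively by $\boldsymbol{1}\{D>h_N\}D^{-1}(\mathbf{X}'\mathbf{X})^\ast\mathbf{X}'\mathbf{W}$, by $\mathbf{D}_{1:L}'\boldsymbol{\Psi}_{1:L}^{-1}$ times $(\mathbf{X}'\mathbf{X})^\ast\mathbf{X}'\mathbf{W}$, and by $\boldsymbol{1}\{D>h_N\}\mathbf{W}\mathbf{M}_{\mathbf{X}}\mathbf{W}$), an $(E_N-E)$ piece, a recentering piece, and — only for $j=2$ — an extra piece proportional to $(\boldsymbol{\hat\gamma}-\boldsymbol{\gamma})\boldsymbol{\Psi}_{1:L}$. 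Applying Lemmas~\ref{supp:lemma:Karamata's1}, \ref{supp:lemma:Karamata's2}, \ref{supp:lemma:fourth1}, \ref{supp:lemma:bias_bound1}, \ref{supp:lemma:bias_bound2} and Assumption~\ref{supp:assn:iid_plus} (ii) gives $E_N[\|\boldsymbol{\hat\xi}_1-\boldsymbol{\xi}_1\|^2]^{1/2}=o(1)+O_p(\cdot)$, $E_N[\|\boldsymbol{\hat\xi}_2-\boldsymbol{\xi}_2\|^2]^{1/2}=O(h_N^{L+1}\sqrt{\pi_{h\setminus 0}})+O_p(N^{-1/2})$, and $E_N[\|\boldsymbol{\hat\xi}_3-\boldsymbol{\xi}_3\|^2]^{1/2}=O_p(N^{-1/2})$.

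The final step is to assemble. Using $\boldsymbol{\hat\zeta}=\boldsymbol{\hat\xi}_1+\frac{1}{\pi_{h\setminus 0}}\boldsymbol{\hat\xi}_2\boldsymbol{\hat\Pi}^{-1}\boldsymbol{\Psi}_{1:L}^{-1}\mathbf{\hat h}+\mathbf{\hat Q}\mathbf{\hat V}^{-1}\boldsymbol{\hat\xi}_3$ and the analogous identity for $\boldsymbol{\zeta}$, I would add and subtract intermediate expressions so that $\boldsymbol{\hat\zeta}-\boldsymbol{\zeta}$ becomes a finite sum of products of one ``difference'' factor ($\boldsymbol{\hat\xi}_j-\boldsymbol{\xi}_j$, $\boldsymbol{\hat\Pi}^{-1}-\boldsymbol{\Pi}^{-1}$, $\boldsymbol{\Psi}_{1:L}^{-1}(\mathbf{\hat h}-\mathbf{h})$, $\mathbf{\hat Q}-\mathbf{Q}$, $\mathbf{\hat V}^{-1}-\mathbf{V}^{-1}$) with $O_p(1)$-controlled factors ($\boldsymbol{\hat\Pi}^{-1}$, $\mathbf{\hat V}^{-1}$, $\mathbf{\hat Q}$, $\boldsymbol{\Pi}^{-1}$, $\mathbf{V}^{-1}$, and $\|\boldsymbol{\Psi}_{1:L}^{-1}\mathbf{\hat h}\|$, $E_N[\|\boldsymbol{\xi}_j\|^2]^{1/2}$, for which I would cite Lemmas~\ref{supp:lemma:influe_eror4}, \ref{supp:lemma:Omega_property2}, \ref{supp:lemma:M_inve}, \ref{supp:lemma:Sigma_rate}, \ref{supp:lemma:bias_bound1}, \ref{supp:lemma:bias_bound2}, \ref{supp:lemma:zeta_moments}). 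Collecting the rates, the bias-type contributions are $O(h_N^{L})$, hence $o(1)$ under \eqref{supp:assn:bandwidth}, while the dominant stochastic contribution — coming from $\frac{1}{\pi_{h\setminus 0}}E_N[\|\boldsymbol{\xi}_2\|^2]^{1/2}\|\boldsymbol{\hat\Pi}^{-1}-\boldsymbol{\Pi}^{-1}\|\,\|\boldsymbol{\Psi}_{1:L}^{-1}\mathbf{\hat h}\|$ — is $O_p\big(\sqrt{\pi_h^2/(Nh_N^2\pi_{h\setminus 0}^2)}\big)$; every other stochastic term is of smaller order because $\pi_{h\setminus 0}\le\pi_h\le 1$, yielding the claimed bound.

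The main obstacle I anticipate is the bookkeeping in the last step: one must verify that the $\boldsymbol{\hat\delta}$-error terms weighted by the diverging factors $\boldsymbol{1}\{D>h_N\}D^{-1}$ (whose $L^2$ size is governed by Lemma~\ref{supp:lemma:Karamata's2} and therefore behaves differently for $\alpha\le 2$ and $\alpha>2$, with $\pi_{h\setminus 0}\asymp h_N^{\alpha}$ since $\phi$ is no longer assumed bounded away from zero) are genuinely dominated by the stated $O_p$ rate or by $o(1)$, and that the amplifying factor $\frac{1}{\pi_{h\setminus 0}}\|\boldsymbol{\Psi}_{1:L}^{-1}\mathbf{\hat h}\|=O_p(\pi_h/(h_N\pi_{h\setminus 0}))$ does not push any of the $(E_N-E)$ noise terms above the target. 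Checking that all the inverse-matrix factors remain $O_p(1)$ (via the eigenvalue bounds already established) is routine once the decomposition is written out cleanly.
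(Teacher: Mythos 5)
Your proposal follows essentially the same route as the paper's proof: the same preliminary rates for $\boldsymbol{\hat\delta}-\boldsymbol{\delta}$ (via $\mathbf{\hat V}^{-1}E_N[\boldsymbol{\xi}_3]$ and $E[\boldsymbol{\xi}_3]=0$) and for $(\boldsymbol{\hat\gamma}-\boldsymbol{\gamma})\boldsymbol{\Psi}_{1:L}$, the same bounds on $E_N[\|\boldsymbol{\hat\xi}_j-\boldsymbol{\xi}_j\|^2]^{1/2}$, and the same telescoping decomposition of $\boldsymbol{\hat\zeta}-\boldsymbol{\zeta}$ into difference factors times $O_p(1)$ factors, with the dominant stochastic term correctly identified as $O_p\bigl(\sqrt{\pi_h^2/(Nh_N^2\pi_{h\setminus 0}^2)}\bigr)$. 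The argument and the lemmas invoked match the paper's, so the proposal is correct.
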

\begin{proof}
By Lemmas  \ref{supp:lemma:influe_eror4}, \ref{supp:lemma:Karamata's1}, \ref{supp:lemma:Karamata's2}, \ref{supp:lemma:Omega_property2}, \ref{supp:lemma:M_inve},  \ref{supp:lemma:Sigma_rate},  \ref{supp:lemma:bias_bound1}, \ref{supp:lemma:bias_bound2},  \ref{supp:lemma:zeta_moments}, and \ref{supp:lemma:xi_estimates}, we have 
\begin{align*}
\|\boldsymbol{\hat\delta}-\boldsymbol{\delta}\|
\leq&
\|\mathbf{\hat{V}}^{-1}\|\|E[\boldsymbol{\xi}_3]\|+\|\mathbf{\hat{V}}^{-1}\|\|(E_N-E)[\boldsymbol{\xi}_3]\|
=
O_p\left(\sqrt{\frac{1}{N}}\right)
\\
\|(\boldsymbol{\hat\gamma}-\boldsymbol{\gamma})\boldsymbol{\Psi}_{1:L}\|
\leq&
\frac{1}{\pi_{h\setminus 0}}\|E\left[\boldsymbol{\xi}_2\right]\|\|\boldsymbol{\hat\Pi}^{-1}\|+\frac{1}{\pi_{h\setminus 0}}\|(E_N-E)\left[\boldsymbol{\xi}_2\right]\|\|\boldsymbol{\hat\Pi}^{-1}\|
\\&
+\frac{1}{\pi_{h\setminus 0}}\|\boldsymbol{\hat\delta}-\boldsymbol{\delta}\|E_N\left[\|(\mathbf{X}'\mathbf{X})^\ast\mathbf{X}'\mathbf{W}\|\|\mathbf{D}_{1:L}'\boldsymbol{\Psi}_{1:L}^{-1}\|\right]\|\boldsymbol{\hat\Pi}^{-1}\|
\\
=&
O(h_N^{L+1})+O_p\left(\sqrt{\frac{1}{N\pi_{h\setminus 0}}}\right)
\\
E_N[\|\boldsymbol{\hat\xi}_1-\boldsymbol{\xi}_1\|^2]^{1/2}
\leq&
\|E_N[\boldsymbol{\xi}_1]\|+E_N[\boldsymbol{1}\{D>h_N\}D^{-2}\|(\mathbf{X}'\mathbf{X})^\ast\mathbf{X}'\mathbf{W}\|^2]^{1/2}\|\boldsymbol{\hat\delta}-\boldsymbol{\delta}\|
=
O_p\left(\sqrt{\frac{\pi_h^2}{Nh_N^2\pi_{h\setminus 0}^2}}\right),
\\
E_N[\|\boldsymbol{\hat\xi}_2-\boldsymbol{\xi}_2\|^2]^{1/2}
\leq&
E_N[\|(\mathbf{X}'\mathbf{X})^\ast\mathbf{X}'\mathbf{W}\|^2\|\mathbf{D}_{1:L}'\boldsymbol{\Psi}_{1:L}^{-1}\|^2]^{1/2}\|\boldsymbol{\hat\delta}-\boldsymbol{\delta}\|+\|(\boldsymbol{\hat\gamma}-\boldsymbol{\gamma})\boldsymbol{\Psi}_{1:L}\|E_N[\|\mathbf{D}_{1:L}'\boldsymbol{\Psi}_{1:L}^{-1}\|^4]^{1/2}
\\
=&
o(h_N^{L+1})+O_p\left(\frac{1}{\sqrt{N}}\right),
\\
E_N[\|\boldsymbol{\hat\xi}_3-\boldsymbol{\xi}_3\|^2]^{1/2}
\leq&
E_N[\|\mathbf{W}\|^4\|\mathbf{M}_{\mathbf{X}}\|^2]^{1/2}\|\boldsymbol{\hat\delta}-\boldsymbol{\delta}\|
=
O_p\left(\frac{1}{\sqrt{N}}\right),\mbox{ and } 
\\
E_N[\|\boldsymbol{\hat\zeta}-\boldsymbol{\zeta}\|^2]^{1/2}
\leq&
E_N[\|\boldsymbol{\hat\xi}_1-\boldsymbol{\xi}_1\|^2]^{1/2}+\frac{1}{\pi_{h\setminus 0}}E_N[\|(\boldsymbol{\hat\xi}_2-\boldsymbol{\xi}_2)\|^2]^{1/2}\|\boldsymbol{\hat\Pi}^{-1}\|\|\boldsymbol{\Psi}_{1:L}^{-1}\mathbf{\hat{h}}\|
\\&
+\left\|\mathbf{\hat{Q}}\right\|\|\mathbf{\hat{V}}^{-1}\|E_N[\|\boldsymbol{\hat\xi}_3-\boldsymbol{\xi}_3\|^2]^{1/2}
\\&
+\frac{1}{\pi_{h\setminus 0}}\left(\|\boldsymbol{\hat\Pi}^{-1}-\boldsymbol{\Pi}^{-1}\|\|\boldsymbol{\Psi}_{1:L}^{-1}\mathbf{\hat{h}}\|+\|\boldsymbol{\Pi}^{-1}\|\|\boldsymbol{\Psi}_{1:L}^{-1}(\mathbf{\hat{h}}-\mathbf{h})\|\right)E_N[\|\boldsymbol{\xi}_2\|^2]^{1/2}
\\&
+\left(\left\|\mathbf{\hat{Q}}\right\|\|\mathbf{\hat{V}}^{-1}-\mathbf{{V}}^{-1}\|+\left\|\mathbf{\hat{Q}}-\mathbf{Q}\right\|\|\mathbf{{V}}^{-1}\|\right)E_N[\|\boldsymbol{\xi}_3\|^2]^{1/2}
\\
=&
o(1)+O_p\left(\sqrt{\frac{\pi_h^2}{Nh_N^2\pi_{h\setminus 0}^2}}\right).
\end{align*}
Therefore, the statement of the lemma follows.
\end{proof}

\end{document}